\newlength{\rowsep}
\newlength{\originsep}
\newlength{\prepfix}
\def\ps@pprintTitle{
	\def\@oddfoot{\hfill{\footnotesize\itshape\@date}}
}
\numberwithin{equation}{section}
\theoremstyle{plain}
\newtheorem{proposition}{Proposition}[section]
\newtheorem{theorem}[proposition]{Theorem}
\newtheorem{lemma}[proposition]{Lemma}
\newtheorem{corollary}[proposition]{Corollary}
\theoremstyle{remark}
\newtheorem{example}[proposition]{Example}
\theoremstyle{definition}
\newtheorem{definition}[proposition]{Definition}
\begin{document}

\title{General probabilistic theories: An introduction}

\author{Martin Pl\'{a}vala}
\ead{martin.plavala@uni-siegen.de}
\address{Naturwissenschaftlich-Technische  Fakult\"{a}t, Universit\"{a}t Siegen, 57068 Siegen, Germany}

\begin{abstract}
We introduce the framework of general probabilistic theories (GPTs for short). GPTs are a class of operational theories that generalize both finite-dimensional classical and quantum theory, but they also include other, more exotic theories, such as the boxworld theory containing Popescu-Rohrlich boxes. We provide in-depth explanations of the basic concepts and elements of the framework of GPTs, and we also prove several well-known results. The review is self-contained and it is meant to provide the reader with consistent introduction to GPTs. Our tools mainly include convex geometry, but we also introduce diagrammatic notation and we often express equations via diagrams.
\end{abstract}




\maketitle

{
	\hypersetup{linkcolor=black}
	\tableofcontents
}

\section{Introduction} \label{sec:intro}

%
%
General probabilistic theories (GPTs for short) are a framework developed within the foundations of physics in many different forms and flavors. The main goals of GPTs was to answer the question: what is a physical theory? This question usually appeared in the context of axiomatizations of quantum theory, as many researchers were attempting to derive quantum theory from a set of reasonably motivated axioms. In the current days, the aim of the research is no longer only to search for axiomatizations of quantum theory, the current research in GPTs is oriented towards operational properties of GPTs. It is often investigated what structure is needed to realize certain protocols or constructions known from quantum information theory or classical information theory. For example:
\begin{itemize}
\item The nonlocal features of quantum theory, such as  entanglement, steering and Bell inequality violations we investigated in GPTs \cite{Barrett-GPTinformation,BarnumGaeblerWilce-steering,Banik-steering,KarGhoshChoudharyBanik-uncertainty,Plavala-channels,PlavalaZiman-PRbox,JencovaPlavala-PRbox,BhattacharyaSahaGuhaBanik-nonlocality,CzekajHorodeckiTylec-bipartiteEffects}. One can, for example, construct theories which maximally violate the CHSH inequality and hence provide an implementation of the Popescu-Rohrlich (PR) boxes \cite{PopescuRohrlich-PRbox}. Other research directions include investigating and generalizing steering in GPTs, using the general definition of steering provided in \cite{WisemanDohertyJones-nonlocal}.

\item Uncertainty relations \cite{DahlstenGarnerVedral-uncertainty,KarGhoshChoudharyBanik-uncertainty,SahaOszmaniecCzekajHorodeckiHorodecki-uncertainty,TakakuraMiyadera-uncertainty,TakakuraMyiadera-entropicUncertainty, SunLiangZhouKwekYu=uncertaintyRelations} and incompatibility of measurements and channels \cite{BarnumBarrettLeiferWilce-noBroadcasting,BarnumBarrettLeiferWilce-noBroadcastingPRL,BuschHeinosaariSchultzStevens-compatibility,Plavala-simplex,JencovaPlavala-maxInc,FilippovHeinosaariLeppajarvi-compatibility,Plavala-channels,HeinosaariLeppajarviPlavala-noFreeInformation,Jencova-incomaptibility,Kuramochi-simplex,CzekajSainzSelbyHorodecki-compositeMeasurements,BluhmJencovaNechita-spectrahedra} were investigated in GPTs. Incompatibility of measurements is a generalization of the non-commuting observables in quantum theory to an operational framework. One can then easily extend the definition of incompatibility of measurements to channels. Most of the results show that the existence of certain type of incompatibility or uncertainty is a consequence of some non-classical features of the theory.

\item Noncontextuality \cite{Spekkens-contextuality,Spekkens-toyTheory,ChiribellaYuan-contextuality,SchmidSelbyWolfeKunjwalSpekkens-noncontextuality,SchmidSelbyPuseySpekkens-nonconModels} of GPTs was investigated. Note that some of the works on noncontextuality in operational theories use more general framework, e.g. they do not assume the no-restriction hypothesis.

\item Causal structures \cite{WeilenmannColbeck-causalStructures,ScandoloSalazarKorbiczHorodecki-objectivity}, dynamics \cite{GrossMullerColbeckDahlsten-boxworldDynamics,AlSafiShort-boxworldDynamics,AlSafiRichens-reversibleDynamics,BranfordDahlstenGarner-dynamicsInGPTs,GalleyMasanes-dynamics}, physical properties \cite{GarnerDahlstenNakataMurao-phaseInGPTs,DahlstenGarnerThompsonGuVedral-particleExchange,GalleyGiacominiSelby-gravity}, double-slit and multiple-slit interference \cite{UdudecBarnumEmerson-interference,BarnumLeeScandoloSelby-higherOrderInterference,Kleinmann-multipleSlit,DakicPaterekBrukner-densityCubes,LeeSelby-multipleSlit,LeeSelby-phaseKickback,BarnumMullerUdudec-interferenceDerivatonQT,HorvatDakic-interference} were investigated. Most of the research into physical properties and interference tries to capture some underlying phenomena and investigate them either in terms of black boxes, or in terms of information-theoretic applications, almost always using finite-dimensional effective theories.

\item Computation \cite{BarnumLeeSelby-oraclesComputation,LeeHoban-proofs,LeeSelby-Grover,LeeBarrett-computation,Garner-computation,KrummMuller-computation,BarrettbeaudrapHobanLee-computation}, resource theories \cite{TakagiRegula-resourceTheories,LamiRegulaTakagiFerrari-resourceTheories}, cryptography and other information-theoretic tasks \cite{BarnumWilce-informationPrcessing,BarnumBarrettLeiferWilce-teleportation,MullerUdudec-computation,MullerDahlstenVedral-coinTossing,Chiribella-dilation,BarnumDahlstenLeiferToner-bitCommitment,CzekajHorodeckiHorodeckiHorodecki-informationContent,FilippovHeinosaariLeppajarvi-simulability,BaeKimKwek-discrimination,SelbySikora-money,SikoraSelby-bitCommitment,SikoraSelby-coinFlipping,LamiPalazuelosWinter-dataHiding,YoshidaAraiHayashi-discrimination,BanikSahaGuhaAgrawalBhattachryaRoyMajumdar-informationSymmetry,SahaBhattacharyaGuhaHalderBanik-communication,SahaGuhaBhattacharyaBanik-distributedComputation} were investigated in GPTs. Large part of the protocols used in quantum information theory do not rely on the formalism of Hilbert spaces, but they can be realized using only limited amount of states and measurements. It is then natural to use convex geometry to characterize the required relations between the states and measurements and then one can characterize all theories where certain protocol can be performed.

\item Different notions of entropy \cite{ShortWehner-entropy,KimuraNuidaImai-entropyDistiguishability,BarnumBarrettClarkLeiferSpekkensStepanikWilceWilke-entropy,KimuraIshiguroFukui-entropyHolevo,Takakura-entropy} and foundational aspects of thermodynamics \cite{ChiribellaScandolo-thermodynamics,ChiribellaScandolo-microcanonicalThermodynamics,KrummBarnumBarrettMuller-thermodynamics} were investigated in GPTs. There are several different possible operational constructions that one can use to define entropy in GPTs. Therefore some proofs that are immediate in quantum information theory may, in the framework of GPTs, depend on the chosen definition of entropy and on the properties of the state space.

\item Diagonalization and existence of spectral decompositions were investigated \cite{ChiribellaScandolo-diagonalization,BarnumHilgert-spectral,Gudder-spectralEA,JencovaPlavala-spectralEA}. It seems that some form of spectral decompositions is crucial for singling out quantum and quantum-like theories among other GPTs. Moreover, existence of suitable spectral decompositions would unify the different definitions of entropy in GPTs.

\item The original motivation for GPTs is still an active field of research to this day. There are many papers on what axioms we need to add to GPTs to single-out quantum theory, or how to test whether a theory is quantum \cite{Hardy-derivationQT,ChiribellaDArianoPerinotti-derivationQT,PfisterWehner-discreteGPTs,Kleinmann-emergenceQT,RichensSelbyAlSafi-entanglement,Wilce-derivationQT,MasanesMuller-derivatonQT,LeeSelby-decoherenceToQT,vandeWetering-derivatonQT,vandeWetering-sequential,MazurekPuseyReschSpekkens-experiment,WeilenmannColbeck-selfTesting,GarnerMuller-retractsToQT}. The derivations of quantum theory are usually done in the finite-dimensional framework and they often prove that all GPTs that satisfy certain axioms are connected to Euclidean Jordan algebras. Jordan algebras \cite{Mccrimmon-Jordan} are vector spaces that contain the generalization of the symmetric operator product $\frac{1}{2}(AB + BA)$ and it is known that a class of Jordan algebras, called Euclidean Jordan algebras, contains only quantum and quantum-like theories \cite{JordanNeumannWigner-algebras}.
\end{itemize}

There are also several practical reasons to use GPTs, to name a few: one can often use GPTs to get better understanding of what makes many things in quantum information theory work, or why they give us advantage compared to classical theory. In GPTs, ensembles of objects, conditional probabilities, conditional states and even joint systems of the aforementioned object can be represented by their respective state spaces and so we can treat them as any other state space and we can use known results, instead of having to prove them from scratch, often by mimicking known proofs from other scenarios. Representing all transformations by channels allows us to use the constructions from frameworks based on category theory, such as operational probabilistic theories \cite{ChiribellaDArianoPerinotti-GPTpurification,BisioPerinotti-higherOrder,Perinotti-cellularAutomata} and effectus theory \cite{ChoJacobsWesterbaanWesterbann-effectus}, since one can interpret state spaces as objects and channels as morphisms.

There are several different approaches to GPTs, but all of them are either equivalent or only marginally different. The approach we will use is to start with an abstract definition of state space. We find this approach the easiest to explain and easy to work with, compared to the other options. An equivalent approach would be to start with a table of all possible probabilities that can be generated in an experiment, conditioned on preparation and measurement procedures. One can show that this is the same as starting with an abstract state space, but instead of using vectors we would be describing states in terms of all of the probabilities they can produce. A different approach would be to start with an ordered vector space with order unit, or, equivalently, with a convex effect algebra. We will see that state spaces and effect algebras are dual objects and starting from either one, we can reconstruct the other. Therefore we can freely choose whether we start with state spaces, convex effect algebras, or order unit spaces; we will choose state spaces as our starting point.

Before we proceed further, we must comment on the name of the framework of GPTs. There are different names for the same (or very similar) framework and they usually follow the formula
\begin{equation*}
\begin{gathered}
\text{General} \\
\text{Generalized}
\end{gathered}
\quad
+
\quad
\begin{gathered}
\text{Probability} \\
\text{Probabilistic} \\
\text{Physical}
\end{gathered}
\quad
+
\quad
\begin{gathered}
\text{Theory} \\
\text{Theories}
\end{gathered}
\quad
=
\quad
\text{GPTs}.
\end{equation*}
All possible combinations are frequently and interchangeably used by many authors and the common understanding is that the name can be used as long as the acronym is GPT or GPTs. Some authors do differentiate between the singular GPT and plural GPTs, as follows: a GPT is a concrete theory, with specified state spaces and tensor product, while GPTs is a collective name for the whole class of such theories. We will use the following nomenclature: by general probabilistic theories, shortened to GPTs, we will mean the whole framework including all possible state spaces.

\subsection{Organization of the review}
The review is organized as follows: in Section \ref{sec:what} we explain the basic concepts and operational motivations of preparations, `yes'-`no' questions and transformations. These will later on correspond to states, effects and channels. Note that we will start our construction from the state space (represented by a compact convex set), but we will later show that without the loss of generality one can start from an effect algebra or from order unit space and construct the same framework.

In Section \ref{sec:basic} we introduce the state spaces and effect algebras and we prove some basic results. This includes the duality between the state space and effect algebra, norms on state space and effect algebra and its connection to discrimination tasks. We also discuss the connection of GPTs to abstract effect algebras and order unit spaces and we show that these approaches are essentially equivalent to the one that we develop. At the end of Section \ref{sec:basic} we will introduce the diagrammatic notation, that we will use in subsequent calculations.

In Section \ref{sec:CT} we present the first example: classical theory. This is because classical theory will play an important role in later constructions, mainly in the definition of measurements in Section \ref{sec:channels}. Further examples of quantum theory and boxworld theory will be constructed in Sections \ref{sec:QT} and \ref{sec:boxworld} respectively.

In Section \ref{sec:tensor} we introduce bipartite state spaces and tensor products, as well as partial traces and result on monogamy of entanglement. We introduce tensor products before transformations (i.e., before channels and measurements), because tensor products are helpful when working with channels, due to the isomorphism between linear maps and elements of tensor products of vector spaces, that is reviewed in \ref{appendix:bilinear}.

Then in Section \ref{sec:channels} we introduce channels as transformations between state spaces and we define measurements as special case of channels. We do this to promote the use of channels instead of measurements whenever possible, since the formalism of channels is more suitable for the use of diagrammatic notation, which simplifies certain constructions and allows for easier use of ideas coming from category theory in quantum foundations.

In Section \ref{sec:compatibility} we investigate the concept of compatibility of channels and measurements. We will show, that several well known results in quantum theory and GPTs can be formulated as problems related to compatibility of channels. We also use compatibility to prove results about structure of channels and measurements.

In Sections \ref{sec:QT} and \ref{sec:boxworld} we construct two examples of GPTs: quantum theory and boxworld theory. We postpone the examples to these sections, because we want to present them as clear and concise theories, rater than as different constructions sprinkled in the other sections. But we encourage the reader to skip ahead and look up examples of some of the concepts when reading earlier sections.

In order to make the review as much self-contained as possible, we introduce some mathematical concepts that we need in the appendices. In \ref{appendix:cones} we review the notions of convex cones and ordered vector spaces, in \ref{appendix:duals} we introduce the concept of functionals and the hyperplane separation theorems and in \ref{appendix:bilinear} we introduce the isomorphism between tensor products of vector spaces, vector spaces of bilinear forms and vector spaces of linear maps.

\section{What are GPTs?} \label{sec:what}

%
%
The main objects that we will work with are going to be state spaces, effect algebras and channels. A state space of a theory is going to be identified with a set of equivalence classes of preparation procedures, and an effect is the equivalence class of `yes'-`no' questions that can be answered in an experiment. Before we explain what we mean by the equivalence classes, we will first introduce preparations procedures and `yes'-`no' questions.

A preparation procedure is a list of instructions that one performs to prepare a system in question at the beginning of an experiment. Note that the list of instructions may be conditioned by random events, e.g., the preparation procedure may differ on whether it rains or not. This might seem strange at first, but consider preparation of an experiment testing the tensile strength of a paper, that is to be performed outdoors. If it rains, the paper gets wet and we observe a different outcome of the experiment compared to if it did not rain. A `yes'-`no' question is a list of instructions that we perform after preparing a state to get either the answer `yes' or `no'. It is intuitive that more complex experiments can be build from `yes'-`no' questions, as for example the `yes' and `no' can be interpreted as 1 and 0 and we can reformulate a measurement that outputs a number as a series of `yes'-`no' questions determining the digits of the binary representation of the measured number.

We will say that two preparations are equivalent if the results of all possible `yes'-`no' questions are the same after the two preparations. Analogically, we will say that two `yes'-`no' questions are equivalent if they produce the same answer with respect to all possible (equivalence classes of) preparations.

Channels are going to be the (equivalence class of) list of instructions that we can either append at the end of preparation or, equivalently, prepend to the beginning of a `yes'-`no' question. This already yields a well-known duality: appending instructions to the preparation performs a transformation of the state of the system and it corresponds to the Schr\"{o}dinger picture of quantum theory. Prepending instructions to the beginning of an effect performs a transformation of the measurement and it corresponds to the Heisenberg picture of quantum theory.

Since GPTs are traditionally motivated as a framework for developing axiomatizations of quantum theory, we will provide five postulates about the properties of state spaces in the framework of GPTs. Four of these postulates are intuitive and hard to argue against, while the fifth will limit us to mathematically simpler, but still interesting scenarios.
\begin{definition} \label{def:what-stateSpace}
\emph{State space} is:
\begin{enumerate}[label = (S\arabic*), leftmargin=*]
\item\label{item:what-stateSpace-set} set of points,
\item\label{item:what-stateSpace-convex} convex,
\item\label{item:what-stateSpace-closed} closed in some physically motivated topology,
\item\label{item:what-stateSpace-bounded} bounded,
\item\label{item:what-stateSpace-Euclid} subset of a real, finite-dimensional vector space with Euclidean topology.
\end{enumerate}
\end{definition}
We will use these five postulates to construct the framework of GPTs for single state spaces. We will add additional postulates for bipartite and multipartite state spaces in Section \ref{sec:tensor}. \ref{item:what-stateSpace-Euclid} is the one postulate that simplifies the mathematics used, e.g., we can avoid using abstract notion of convexity in \ref{item:what-stateSpace-convex}. Let $K$ denote a state space, we postulate in \ref{item:what-stateSpace-convex} that the state space is convex, because if $x, y \in K$ are two states and $p \in [0,1]$ we want to be able to describe a scenario where we prepare $x$ with probability $p$ and $y$ with probability $1-p$. We use the convex combination $p x + (1-p) y$ to describe such scenario and we say that $K$ is convex if $px + (1-p)y \in K$ for all $x,y \in K$, $p \in [0,1]$. By requiring $K$ to be convex, we require that $px + (1-p)y$ is a well-defined state. We postulate in \ref{item:what-stateSpace-closed} that the state space is closed, because we assume that if we can prepare a state arbitrary close to some $x$ then we can also prepare $x$. \ref{item:what-stateSpace-bounded} is not necessarily needed, because if the state space would not be bounded, then there would be states that can not be distinguished by any effect and so we would have to factorize the state space to a bounded set. The following results connects Definition \ref{def:what-stateSpace} to the standard introduction of a state space in GPTs:
\begin{proposition} \label{prop:what-stateSpace-compact}
Every state space is a compact convex subset of a real, finite-dimensional vector space.
\end{proposition}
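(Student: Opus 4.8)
The plan is to read off most of the conclusion directly from Definition \ref{def:what-stateSpace} and then to upgrade ``closed and bounded'' to ``compact'' by an appeal to the Heine--Borel theorem. Concretely, let $K$ denote a state space. By \ref{item:what-stateSpace-convex} the set $K$ is convex, and by \ref{item:what-stateSpace-Euclid} it is a subset of a real, finite-dimensional vector space $V$; so the only substantive claim remaining is that $K$ is compact. In this sense the proposition is essentially a bookkeeping statement that repackages postulates \ref{item:what-stateSpace-convex}--\ref{item:what-stateSpace-Euclid}, and no deep machinery should be needed.

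For compactness, the first step is to reconcile \ref{item:what-stateSpace-closed} with \ref{item:what-stateSpace-Euclid}. A ``physically motivated topology'' on $V$ should at the very least be a Hausdorff vector-space topology, and on a finite-dimensional real vector space there is exactly one such topology, namely the Euclidean one; equivalently, all norms on $V$ are equivalent, so they all induce the same notion of closedness and boundedness. Hence the closedness asserted in \ref{item:what-stateSpace-closed} may be taken to be closedness with respect to the Euclidean topology referred to in \ref{item:what-stateSpace-Euclid}.

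The second step is then immediate: by \ref{item:what-stateSpace-closed} the set $K$ is closed and by \ref{item:what-stateSpace-bounded} it is bounded, both with respect to the Euclidean topology on the finite-dimensional space $V$. The Heine--Borel theorem now yields that $K$ is compact, and together with convexity this gives that $K$ is a compact convex subset of a real, finite-dimensional vector space, as claimed.

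I expect the only point that needs genuine care to be the reconciliation of \ref{item:what-stateSpace-closed} with \ref{item:what-stateSpace-Euclid}: one has to argue that the ``physically motivated'' topology is not an independent degree of freedom but is already pinned down by finite-dimensionality, so that the word ``closed'' is unambiguous. Everything after that is a direct application of Heine--Borel together with the definitions.
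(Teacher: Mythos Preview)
Your proof is correct and follows essentially the same approach as the paper: read off convexity and the ambient finite-dimensional real vector space from \ref{item:what-stateSpace-convex} and \ref{item:what-stateSpace-Euclid}, then invoke Heine--Borel on the closed and bounded set given by \ref{item:what-stateSpace-closed} and \ref{item:what-stateSpace-bounded}. Your additional remark that finite-dimensionality pins down the topology uniquely (so that ``closed in some physically motivated topology'' is unambiguous) is a nice clarification that the paper leaves implicit.
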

\begin{proof}
We only need to prove that every state space is compact, but this follows from \ref{item:what-stateSpace-closed}, \ref{item:what-stateSpace-bounded} and \ref{item:what-stateSpace-Euclid} since every closed and bounded subset of real finite-dimensional vector space is compact \cite[Theorem 27.3.]{Munkres-topology}.
\end{proof}

In quantum theory, the state space is the set of density operators, that is the set of positive semi-definite operators with trace normalized to one. It is common knowledge that the set of density operators is convex and compact. The underlying vector space is the Hilbert-Schmidt space of self-adjoint operators, which is real and finite-dimensional, given that the underlying Hilbert space is finite-dimensional. We will present quantum theory as an example of a GPT in Section \ref{sec:QT}, but we invite the reader to skip ahead and look up examples of the concepts presented in later sections.

\section{State spaces and effect algebras} \label{sec:basic}

%
%
We have already characterized all state spaces in Proposition \ref{prop:what-stateSpace-compact}. In this section, we will construct the effect algebra and the connection between a state space and its effect algebra. We will also investigate connections to other formalisms: abstract convex effect algebras and order-unit spaces. We will denote by $V$ a real, finite-dimensional vector space and we will denote by $K \subset V$ a compact, convex set, i.e., a state space. Let $X \subset V$, then we will denote $\linspan(X)$ the span of $X$, by $\aff(X)$ the affine hull of $X$, by $\conv(X)$ the convex hull of $X$ and $\cone(X)$ be the smallest cone containing $X$, for definition of cone see Definition \ref{def:cones-cone} in \ref{appendix:cones}. $\dim(V)$ will denote the dimension of $V$. Let $a,b \in \RR$, then we will use $(a,b)$ and $[a,b]$ to denote the open and closed intervals; $\Rp$ will denote the set of non-negative real numbers.

\subsection{State space} \label{subsec:basic-stateSpace}

\begin{definition}
Let $K$ be a state space and let $x \in K$. We say that $x$ is an \emph{extreme point} of $K$, or equivalently that $x$ is a \emph{pure state}, if for every $y,z \in K$ and $\lambda \in (0,1)$ such that $x = \lambda y + (1-\lambda)z$ we have $x = y = z$.
\end{definition}
Pure states are the states that can not be prepared by randomizing preparations of other states. It follows that a pure state $x$ must be preparable by a deterministic and non-randomized preparation procedure. Mixed states are the counterpart to pure states.
\begin{definition}
Let $x \in K$, then we say that $x$ is a \emph{mixed state} if $x$ is not a pure state. We say that $x$ is a \emph{mixture} of $y,z \in K$ if there is $\lambda \in [0,1]$ such that $x = \lambda y + (1-\lambda)z$.
\end{definition}
One way of constructing a mixture $\lambda x + (1-\lambda)y$ of states $x,y \in K$ is to run the experiment $N$ times and to prepare $x$ in $\lambda N$ of the runs and to prepare $y$ in $(1-\lambda)N$ of the runs (assuming $\lambda N$ and $(1-\lambda) N$ are whole numbers). Then the average state that was prepared is exactly the mixture. One can in principle object to constructing the mixture in this way as it was pointed out that knowing how a mixture was prepared is a non-trivial information about the system \cite{Popescu-mixedStates}, but one can bypass this by representing the information about preparation of the mixture as some additional classical information about the system; the classical information can be represented using classical theory that will be introduced in Section \ref{sec:CT}. From now on we will assume that one can prepare a mixture using the aforementioned construction.

The concept of pure state is generalized by the concept of face: a face $F \subset K$ is a convex set of states, such that every state from $F$ can be prepared only by randomizing preparations that are contained in $F$. In other words:
\begin{definition}
Let $F \subset K$ be a convex set such that if for $x, y \in K$ and $\lambda \in (0,1)$ we have $\lambda x + (1-\lambda)y$, then $x, y \in F$. Then we say that $F$ is a \emph{face} of $K$.
\end{definition}
Note that $K$ itself is a face of $K$. Pure states and faces play important roles in the geometry of state spaces, as demonstrated by the following results.

\begin{theorem}[Carath\'{e}odory] \label{thm:basic-stateSpace-Caratheodory}
Let $K \subset V$ be a state space and let $B \subset K$ be a set such that $K = \conv(B)$. Then any $x \in K$ can be expressed as a convex combination of at most $\dim(V) + 1$ points from $B$.
\end{theorem}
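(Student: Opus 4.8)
The plan is to prove Carath\'{e}odory's theorem by a standard dimension-reduction argument. Since $K = \conv(B)$, any $x \in K$ can be written as a convex combination $x = \sum_{i=1}^{n} \lambda_i b_i$ with $b_i \in B$, $\lambda_i > 0$ and $\sum_i \lambda_i = 1$ for some finite $n$ (we may assume all coefficients are strictly positive, discarding any zero terms). If $n \leq \dim(V) + 1$ we are done, so suppose $n \geq \dim(V) + 2$ and we show that the representation can be shortened by one point.

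The key step is to exploit affine dependence. Since $n \geq \dim(V) + 2$, the $n$ points $b_1, \dots, b_n$ are affinely dependent: there exist real numbers $\mu_1, \dots, \mu_n$, not all zero, with $\sum_{i=1}^{n} \mu_i b_i = 0$ and $\sum_{i=1}^{n} \mu_i = 0$. Because the $\mu_i$ sum to zero and are not all zero, at least one is strictly positive. For any $t \in \RR$ we then have $x = \sum_{i=1}^{n} (\lambda_i - t \mu_i) b_i$ with the coefficients still summing to $1$. Now choose
\begin{equation*}
t = \min\left\{ \frac{\lambda_i}{\mu_i} : \mu_i > 0 \right\},
\end{equation*}
and let $i_0$ be an index attaining this minimum. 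Then $\lambda_i - t\mu_i \geq 0$ for all $i$ (for indices with $\mu_i > 0$ by the choice of $t$, and trivially for indices with $\mu_i \leq 0$ since $t > 0$), and $\lambda_{i_0} - t\mu_{i_0} = 0$. Hence $x$ is a convex combination of the at most $n-1$ points $\{b_i : i \neq i_0\}$ from $B$.

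Iterating this reduction, after finitely many steps we obtain a representation of $x$ as a convex combination of at most $\dim(V) + 1$ points of $B$, which completes the proof. I do not anticipate a serious obstacle here; the only point requiring a little care is the justification that $n \geq \dim(V) + 2$ forces affine dependence, which follows because affine independence of $b_1,\dots,b_n$ would make $b_2 - b_1, \dots, b_n - b_1$ a linearly independent set of $n-1 \geq \dim(V) + 1$ vectors in $V$, a contradiction.
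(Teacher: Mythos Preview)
Your proof is correct and is the standard dimension-reduction argument for Carath\'{e}odory's theorem. The paper itself does not give a proof but simply cites Rockafellar; your argument is precisely the classical one found there, so there is nothing to compare.
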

\begin{proof}
See \cite[Theorem 17.1]{Rockafellar-convex}.
\end{proof}

\begin{theorem} \label{thm:basic-stateSpace-convHull}
Let $K$ be a state space and let $\ext(K)$ be the set of pure states of $K$ then $K = \conv( \ext(K) )$.
\end{theorem}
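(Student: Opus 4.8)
The statement to prove is the Krein--Milman type result: $K = \conv(\ext(K))$ for a state space $K$. The plan is to prove the two inclusions separately. The inclusion $\conv(\ext(K)) \subseteq K$ is immediate, since $\ext(K) \subseteq K$ and $K$ is convex by \ref{item:what-stateSpace-convex}, so any convex combination of pure states stays in $K$. The nontrivial inclusion is $K \subseteq \conv(\ext(K))$.

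For the hard direction, I would argue by induction on $d = \dim(\aff(K))$, the affine dimension of $K$. In the base case $d = 0$, $K$ is a single point, which is its own unique extreme point, so the claim is trivial. For the inductive step, take an arbitrary $x \in K$. If $x$ is itself extreme we are done, so assume $x$ is not extreme; then $x$ lies in the relative interior of a line segment contained in $K$. Extend this segment maximally inside the compact set $K$: since $K$ is compact (Proposition \ref{prop:what-stateSpace-compact}) and lives in a finite-dimensional Euclidean space by \ref{item:what-stateSpace-Euclid}, the line through $x$ in the chosen direction meets $K$ in a closed bounded segment $[y,z]$ with $y,z \in K$ and $x \in [y,z]$, so $x = \lambda y + (1-\lambda) z$ for some $\lambda \in [0,1]$. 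Both $y$ and $z$ lie on the relative boundary of $K$ (otherwise the segment could be extended further), so each lies in a proper exposed face of $K$: by the hyperplane separation results of \ref{appendix:duals}, there is a supporting hyperplane to $K$ at $y$, and $F_y := K \cap H_y$ is a compact convex set with $\dim(\aff(F_y)) < d$. Crucially, $F_y$ is a face of $K$, so the extreme points of $F_y$ are extreme points of $K$, i.e.\ $\ext(F_y) \subseteq \ext(K)$. By the induction hypothesis applied to $F_y$ (and likewise to $F_z$), $y \in \conv(\ext(F_y)) \subseteq \conv(\ext(K))$ and $z \in \conv(\ext(K))$, hence $x = \lambda y + (1-\lambda) z \in \conv(\ext(K))$.

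I expect the main obstacle to be the careful handling of the geometric facts in the inductive step: namely, that a boundary point of a compact convex set in $\RR^d$ admits a supporting hyperplane (an application of the separation theorem from \ref{appendix:duals}), that the intersection with that hyperplane is a face of strictly smaller affine dimension, and that extreme points of a face of $K$ are extreme points of $K$. This last transitivity fact — if $F$ is a face of $K$ and $G$ is a face of $F$ then $G$ is a face of $K$, applied with $G$ a single extreme point — follows directly from the definition of face, but is worth stating explicitly. An alternative, more streamlined route would be to invoke the Krein--Milman theorem (or Minkowski's theorem in finite dimensions) directly from a standard reference such as \cite{Rockafellar-convex}; I would likely do this and relegate the inductive argument to a remark, but the inductive proof above is self-contained and only uses tools already developed in the appendices.
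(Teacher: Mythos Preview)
Your inductive argument is correct and is essentially the standard proof of Minkowski's theorem in finite dimensions. The paper, however, does not give any argument at all: its entire proof is a reference to \cite[Theorem 18.5]{Rockafellar-convex}, which is precisely the ``alternative, more streamlined route'' you mention at the end. So your secondary suggestion coincides exactly with what the paper does, while your primary self-contained argument goes beyond it.

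One small remark on the self-contained proof: the supporting-hyperplane step is not literally stated in \ref{appendix:duals}; Theorems \ref{thm:duals-hyperplaneSeparation} and \ref{thm:duals-strictHyperplaneSeparation} separate disjoint sets, so to obtain a supporting hyperplane at a relative-boundary point $y$ you would still need the usual limiting argument (strictly separate $K$ from a sequence of exterior points converging to $y$ and pass to a convergent subsequence of normals). This is routine in finite dimensions, and you correctly flagged it as the main technical point to watch, but if you keep the self-contained version it is worth spelling out rather than pointing to the appendix alone.
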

\begin{proof}
See \cite[Theorem 18.5]{Rockafellar-convex}.
\end{proof}

Let $X \subset V$ be a convex set, then relative interior of $X$, denoted $\ri(X)$, is the topological interior of $X$ when considered as a subset of $\aff(X)$. For example, the relative interior of an interval $[0, 1] \subset \RR$ is $(0, 1)$, but also the relative interior of a line segment $L$ in arbitrary vector space is the open line segment contained in $L$. A relative interior of a set $\{ v \} \subset V$ is again $\{ v \}$. For finite-dimensional vector space $V$ equipped with the standard Euclidean topology, relative interior of a convex set $X \subset V$ can be defined purely using the convex structure of $X$ as follows:
\begin{equation}
\ri(X) = \{ x \in X : \forall y \in X, \exists \mu > 1, (1-\mu) y + \mu x \in X \}.
\end{equation}
see \cite[Section 6]{Rockafellar-convex}. We will introduce two additional classes of state spaces: polytopes and strictly convex state spaces. Both of them are a good source of examples and counter-examples in many calculations.
\begin{definition}
We say that a state space $K$ is a \emph{polytope} if $\ext(K)$ contains finitely many points, i.e., $K$ has finitely many pure states.
\end{definition}

\begin{definition}
We say that a state space $K$ is \emph{strictly convex} if for any $x, y \in K$ and $\lambda \in (0, 1)$ we have $\lambda x + (1-\lambda) y \in \ri(K)$. Equivalently, $K$ is \emph{strictly convex} if for every face $F$ of $K$ it holds that either $F = \{x\}$, or $F = K$.
\end{definition}

\subsection{Effect algebra}
An effect algebra is a list of all possible (equivalence classes of) `yes'-`no' questions. Moreover we also want to allow the case, when the answer is not only either `yes' or `no', but also a probability of the `yes' answer. This is quite natural, for example consider an unbiased coin. Will it land on the head if we flip it? The standard answer is yes, with probability 50\%. Another reason why we want to allow for probabilities is that we often only predict probabilities in quantum theory and we want our formalism to include quantum theory.

The answer to every `yes'-`no' question can be encoded by a number from the interval $[0,1]$, where we will interpret $p \in [0,1]$ as the probability of the outcome `yes'. As we will see later on, `yes'-`no' questions correspond to two-outcome (also called dichotomic) measurements, but for the time being, the description using only a single number will be sufficient for us, since it is straightforward to see that the probability of `no' is $1-p$. It follows that we can reduce the whole `yes'-`no' question to a single function $f: K \to [0,1]$.

We will require that we get the same result whether we mix the state or whether we mix the probabilities. This is a consistency requirement, as we have already assumed that we can prepare a mixture $\lambda x + (1-\lambda)y$, where $x,y \in K$ and $\lambda \in [0,1]$ by simply running the experiment several times and preparing either $x$ or $y$. Therefore we get
\begin{definition} \label{def:basic-EA-onK}
Let $K$ be a state space, then the \emph{effect algebra} over $K$ will be denoted $E(K)$. $E(K)$ is the set of all affine functions $f: K \to [0,1]$, i.e.,
\begin{equation}
0 \leq f(x) \leq 1
\end{equation}
for all $x \in K$ and
\begin{equation}
f(\lambda x + (1-\lambda) y) = \lambda f(x) + (1-\lambda) f(y)
\end{equation}
for all $x,y \in K$ and $\lambda \in [0,1]$.
\end{definition}

There is one very special element of $E(K)$ that will frequently appear in our calculation:
\begin{definition}
$1_K \in E(K)$ is the constant function given as $1_K(x) = 1$ for all $x \in K$.
\end{definition}

In the following we will construct several auxiliary notions that will appear in further calculations. More specifically, we will construct the cone generated by the effect algebra, the vector space spanned by the effect algebra and we will review some of their properties. For a short introduction to the theory of convex cones see \ref{appendix:cones}.

\begin{proposition} \label{prop:basic-EA-genAK}
Let $A(K)$ be the vector space of affine functions on $K$ and let $A(K)^+$ be the cone of positive affine functions on $K$, i.e.,
\begin{align}
A(K) &= \{ f: K \to \RR : f(\lambda x + (1-\lambda)y) = \lambda f(x) + (1-\lambda)f(y), \forall x,y \in K, \forall \lambda \in [0,1] \}, \\
A(K)^+ &= \{ f \in A(K) : f(x) \geq 0, \forall x \in K \}.
\end{align}
Then $A(K) = \linspan(E(K))$ and $A(K)^+ = \cone(E(K))$, i.e., $A(K)$ is the smallest vector space containing $E(K)$ and $A(K)^+$ is the smallest cone in $A(K)$ that contains $E(K)$.
\end{proposition}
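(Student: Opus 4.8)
The plan is to prove the two equalities $A(K) = \linspan(E(K))$ and $A(K)^+ = \cone(E(K))$ separately, and in each case to show the two inclusions. The inclusions $\linspan(E(K)) \subseteq A(K)$ and $\cone(E(K)) \subseteq A(K)^+$ are immediate: $E(K) \subseteq A(K)^+ \subseteq A(K)$ by the very definitions, $A(K)$ is a vector space and $A(K)^+$ is a cone (closed under non-negative scaling and addition, since sums and non-negative multiples of positive affine functions are again positive affine functions), so each contains the span, respectively the cone, generated by $E(K)$. The content is in the reverse inclusions, and the key observation driving both is a rescaling trick: if $g \in A(K)$ is any affine function on the compact set $K$, then $g$ is bounded (it attains its extrema, e.g.\ on $\ext(K)$ by Theorem~\ref{thm:basic-stateSpace-convHull} together with affineness, or simply by continuity on a compact set), so there is a constant $c > 0$ with $c\,g(x) + d \in [0,1]$ for all $x \in K$ for a suitable real $d$; more precisely one can arrange $\tfrac{1}{c}(g - m) \in E(K)$ where $m = \min_{x\in K} g(x)$ and $c = \max_{x\in K} g(x) - m$ (handling the constant case $c = 0$ separately).

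For $A(K) \subseteq \linspan(E(K))$: given $g \in A(K)$, write $g = c\,f + m\cdot 1_K$ where $f \in E(K)$ is the rescaled-and-shifted version of $g$ described above and $c, m \in \RR$. Since $1_K \in E(K)$ and $f \in E(K)$, this exhibits $g$ as a linear combination of elements of $E(K)$, so $g \in \linspan(E(K))$. The degenerate case where $g$ is constant is handled directly, since every constant function is a scalar multiple of $1_K \in E(K)$.

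For $A(K)^+ \subseteq \cone(E(K))$: given $g \in A(K)^+$, so $g(x) \geq 0$ for all $x \in K$, we have $m = \min_{x\in K} g(x) \geq 0$. If $g$ is the zero function it lies in $\cone(E(K))$ trivially. Otherwise let $M = \max_{x \in K} g(x) > 0$; then $\tfrac{1}{M} g \in E(K)$ because $0 \leq \tfrac{1}{M} g(x) \leq 1$ for all $x \in K$ and it is affine, hence $g = M \cdot (\tfrac{1}{M} g) \in \cone(E(K))$. Note that here, unlike in the linear-span case, no additive shift by $1_K$ is needed precisely because positivity of $g$ already places its range in $[0, M]$, so a pure rescaling suffices.

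The main obstacle, such as it is, is organizational rather than deep: one must be careful that the relevant maxima and minima exist (this uses compactness of $K$ from Proposition~\ref{prop:what-stateSpace-compact}, together with the fact that an affine function on a finite-dimensional compact convex set is continuous), and one must dispose of the degenerate constant/zero cases so that the divisions by $c$ or $M$ are legitimate. No hyperplane separation is required — this is a much softer statement than the duality results that come later.
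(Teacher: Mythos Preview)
Your proposal is correct and follows essentially the same approach as the paper: both use compactness of $K$ to bound an affine function, then rescale by the maximum to land in $E(K)$ for the cone statement, and shift by a multiple of $1_K$ for the span statement. The only cosmetic difference is that the paper first establishes $A(K)^+ = \cone(E(K))$ and then derives $A(K) = \linspan(E(K))$ via $A(K) \subset \linspan(A(K)^+)$, whereas you handle the two inclusions independently and are more explicit about the degenerate constant/zero cases.
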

\begin{proof}
We clearly have $E(K) \subset A(K)^+ \subset A(K)$ and so we only need to show that $A(K)$ is contained in $\linspan(E(K))$ and that $A(K)^+$ is contained in $\cone(E(K))$. Let $f \in A(K)^+$, note that since $K$ is compact we must have $M = \max_{x \in K} f(x) < \infty$ and so we can define $f' = \frac{1}{M} f$. Then $f' \in E(K)$ since $f'$ is affine function by construction and for any $x \in K$ we have $0 \leq f'(x) \leq 1$, thus we have $0 \leq f(x) \leq M$. It follows that any $f \in A(K)^+$ can be written as $f = M f'$ where $f' \in E(K)$ and $M \in \Rp$ and so $A(K)^+ \subset \cone(E(K))$. Now let $f \in A(K)$, then we must have $m = \min_{x \in K}f(x) > -\infty$ and we can write $f = (f + \abs{m} 1_K) - \abs{m} 1_K$. Note that $f + \abs{m} 1_K \in A(K)^+$ and $\abs{m} 1_K \in A(K)^+$ and so we have
\begin{equation}
A(K) \subset \linspan(A(K)^+) = \linspan( \cone( E(K) ) ) = \linspan( E(K) ),
\end{equation}
where we have used Lemma \ref{lemma:cones-coneVsSpan} from \ref{appendix:cones}.
\end{proof}
In the following lemma we will use the concepts of pointed and generating cones, for definitions see \ref{appendix:cones}.
\begin{lemma} \label{lemma:basic-EA-AKprop}
$A(K)^+$ is a convex, closed (in the Euclidean topology), pointed and generating cone.
\end{lemma}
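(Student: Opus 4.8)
The plan is to verify the four properties of $A(K)^+$ one at a time, leaning on the previous proposition that identifies $A(K)^+ = \cone(E(K))$ and $A(K) = \linspan(E(K))$, and on the fact (Definition \ref{def:what-stateSpace} and Proposition \ref{prop:what-stateSpace-compact}) that $K$ is a compact convex subset of the finite-dimensional space $V$. Convexity is immediate: if $f, g \in A(K)^+$ and $\lambda \in [0,1]$, then $\lambda f + (1-\lambda) g$ is again affine on $K$ and pointwise non-negative, so it lies in $A(K)^+$; moreover $A(K)^+$ is closed under arbitrary non-negative scalings, so it is indeed a cone. Generating is exactly the content of the decomposition $f = (f + \abs{m} 1_K) - \abs{m} 1_K$ already used in the proof of Proposition \ref{prop:basic-EA-genAK}: since $\linspan(A(K)^+) = A(K)$, the cone $A(K)^+$ generates the whole vector space $A(K)$.

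For pointedness I would show $A(K)^+ \cap (-A(K)^+) = \{0\}$. If $f \in A(K)^+$ and $-f \in A(K)^+$, then $f(x) \geq 0$ and $-f(x) \geq 0$ for all $x \in K$, hence $f(x) = 0$ for every $x \in K$. An affine function vanishing on all of $K$ vanishes on $\aff(K)$, and since we work with functions defined on $K$ (equivalently, on $\aff(K)$), this means $f = 0$ as an element of $A(K)$; thus the cone is pointed. (Here one uses that $A(K)$ really is the space of affine functions on $K$, so two affine functions agreeing on $K$ are the same vector.)

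For closedness in the Euclidean topology on $A(K)$, I would argue as follows. Fix a finite affine spanning set for $\aff(K)$, say points $x_0, \ldots, x_k \in K$ with $k \le \dim V$ whose affine hull is $\aff(K)$; an affine function on $K$ is determined by its values $(f(x_0), \ldots, f(x_k))$, and this evaluation map is a linear isomorphism of $A(K)$ onto a subspace of $\RR^{k+1}$, hence a homeomorphism for the Euclidean topologies. Under this identification, if $(f_n)$ is a sequence in $A(K)^+$ converging to $f \in A(K)$, then $f_n \to f$ pointwise on $K$ (evaluation at any point of $K$ is a fixed affine, hence continuous, combination of the coordinates), so $f(x) = \lim_n f_n(x) \ge 0$ for every $x \in K$, giving $f \in A(K)^+$. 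Thus $A(K)^+$ is closed. The only mild subtlety — and the step I would be most careful about — is making precise the identification of $A(K)$ with a finite-dimensional coordinate space and checking that convergence in that space forces pointwise convergence on $K$; everything else is routine.
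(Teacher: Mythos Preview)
Your proof is correct and follows essentially the same approach as the paper: convexity and pointedness are verified directly from the pointwise definition, generating is pulled from the proof of Proposition~\ref{prop:basic-EA-genAK}, and closedness is obtained by observing that convergence in $A(K)$ forces pointwise convergence on $K$, so limits of non-negative affine functions remain non-negative. The only difference is that you are more explicit than the paper about why Euclidean convergence in $A(K)$ implies pointwise convergence (via the coordinate identification), whereas the paper simply asserts it; your extra care is justified but not a different strategy.
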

\begin{proof}
Let $f,g \in A(K)^+$, then for $\lambda \in [0,1]$ also $\lambda f + (1-\lambda) g$ must be a positive function and so $\lambda f + (1-\lambda) g \in A(K)^+$. Let $\{f_n\}_{n=1}^\infty \subset A(K)^+$ be a Cauchy sequence, then for every $x \in K$ we must have $\lim_{n \to \infty}f_n(x) \geq 0$ simply because $f_n(x) \geq 0$. It follows that $\lim_{n \to \infty} f_n \in A(K)^+$. To see that $A(K)^+$ is pointed, simply observe that if $f \in A(K)^+ \cap (-A(K)^+)$, then for every $x \in K$ we have $0 \leq f(x) \leq 0$, so $f(x) = 0$ and we get $f = 0$. We have already showed in the proof of Proposition \ref{prop:basic-EA-genAK} that $\linspan(A(K)^+) = A(K)$ and so the cone $A(K)^+$ is generating.
\end{proof}

One can introduce a natural partial order to $A(K)$ in an intuitive manner: let $f \in A(K)$, then $f \geq 0$ if and only if $f(x) \geq 0$ for all $x \in K$. For $f,g \in A(K)$, we have $f \geq g$ whenever $f - g \geq 0$, i.e. whenever $f(x) \geq g(x)$ for all $x \in K$. Also we write $f \leq g$ whenever $g \geq f$. One can easily show that this ordering turns $A(K)$ into an ordered vector space.

\begin{lemma}
$A(K)$ with the ordering $\geq$ is a ordered vector space, i.e., for $f,g,h \in A(K)$ and $\lambda \in \Rp$ we have
\begin{itemize}
\item $f \geq g$ implies $f + h \geq g + h$;
\item $f \geq g$ implies $\lambda f \geq \lambda g$;
\item $f \geq f$;
\item $f \geq g$ and $g \geq f$ implies $f = g$;
\item $f \geq g$ and $g \geq h$ implies $f \geq h$.
\end{itemize}
\end{lemma}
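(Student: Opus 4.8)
The plan is to verify each of the five order-axioms directly from the definition $f \geq g \iff f - g \in A(K)^+$, where $A(K)^+$ is the cone of positive affine functions established in Lemma \ref{lemma:basic-EA-AKprop}. Each property reduces to a pointwise statement about the affine functions evaluated at states $x \in K$, so the proof is essentially a translation between the cone language and the pointwise inequalities $f(x) \geq g(x)$.

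First I would record the translations. For translation invariance: if $f \geq g$, then $f - g \in A(K)^+$, and since $(f+h) - (g+h) = f - g$, we get $f + h \geq g + h$ immediately. For positive scaling: if $f \geq g$ and $\lambda \geq 0$, then $\lambda f - \lambda g = \lambda(f-g)$, which lies in $A(K)^+$ because $A(K)^+$ is a cone (closed under multiplication by nonnegative scalars, by Definition \ref{def:cones-cone}); note one should also observe $\lambda f, \lambda g \in A(K)$, which holds since $A(K)$ is a vector space. For reflexivity: $f - f = 0 \in A(K)^+$. For antisymmetry: if $f \geq g$ and $g \geq f$, then both $f - g$ and $-(f-g) = g - f$ lie in $A(K)^+$, so $f - g \in A(K)^+ \cap (-A(K)^+) = \{0\}$ because $A(K)^+$ is pointed (Lemma \ref{lemma:basic-EA-AKprop}); hence $f = g$. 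For transitivity: if $f \geq g$ and $g \geq h$, then $f - h = (f - g) + (g - h)$ is a sum of two elements of $A(K)^+$, which lies in $A(K)^+$ by convexity of the cone (again Lemma \ref{lemma:basic-EA-AKprop}, or directly since a sum of two positive affine functions is positive and affine).

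There is no real obstacle here; the only thing that requires a moment's care is making sure that each claimed element actually lies in $A(K)$ in the first place (so that the ordering relation is even well-defined on it), which follows from $A(K)$ being a vector space, and that one uses exactly the right property of the cone for each axiom: the cone property for scaling, pointedness for antisymmetry, and convexity (closure under addition) for transitivity. I would present the argument as a short bulleted or inline list mirroring the five items in the statement, citing Lemma \ref{lemma:basic-EA-AKprop} once at the start for the structural properties of $A(K)^+$ and then dispatching each item in one line.
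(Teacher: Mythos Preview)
Your proposal is correct. The paper's own proof is more abbreviated: rather than verifying the five axioms directly, it observes that the positive cone is exactly $A(K)^+$, notes that this cone is convex and pointed (by Lemma \ref{lemma:basic-EA-AKprop}), and then invokes the general Proposition \ref{prop:cones-orderFromCone} from \ref{appendix:cones}, which shows that any convex pointed cone induces an ordered-vector-space structure. Your direct verification is precisely what the paper calls ``a good exercise'' and is in fact the content of Proposition \ref{prop:cones-orderFromCone} specialized to $A(K)^+$, so the two approaches differ only in packaging: the paper outsources the five one-line checks to the appendix, while you carry them out in place.
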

\begin{proof}
It is a good exercise to prove the lemma directly. But observe that the the positive cone is exactly $A(K)^+$, which, as we know, is convex, pointed cone. And so it follows from Proposition \ref{prop:cones-orderFromCone} in \ref{appendix:cones} that the order generated by $A(K)^+$, which coincides with the order $\geq$ endows $A(K)$ with the structure of ordered vector space.
\end{proof}

We have introduced the effect algebra, one of the two main building blocks of every GPT, as a set of affine functions with outcomes from the set $[0,1]$, i.e. as functions $f \in A(K)$ such that $0 \leq f(x) \leq 1$. Clearly one can use the ordering $\geq$ on $A(K)$ to characterize the effects as follows:
\begin{proposition} \label{prop:basic-EA-interval}
We have
\begin{equation}
E(K) = \{ f \in A(K) : 0 \leq f \leq 1_K \}.
\end{equation}
\end{proposition}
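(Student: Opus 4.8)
The plan is to prove the equality simply by unwinding the definition of the order $\geq$ on $A(K)$ and matching it against Definition \ref{def:basic-EA-onK}; there is no real analytic content here, only a translation between two equivalent ways of writing the same pointwise bounds. First I would observe that, by Definition \ref{def:basic-EA-onK}, membership $f \in E(K)$ means precisely that $f$ is affine on $K$ and $0 \leq f(x) \leq 1$ for every $x \in K$. Since being affine on $K$ is exactly the condition defining $A(K)$, the set $E(K)$ consists of those $f \in A(K)$ satisfying $0 \leq f(x) \leq 1$ for all $x \in K$.

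Next I would rewrite the pointwise double bound in terms of the ordering. By the definition of $\geq$ on $A(K)$, we have $0 \leq f$ iff $f(x) \geq 0$ for all $x \in K$, and $f \leq 1_K$ iff $1_K - f \geq 0$, i.e. iff $1 - f(x) \geq 0$ for all $x \in K$, i.e. iff $f(x) \leq 1$ for all $x \in K$. Hence the conjunction $0 \leq f \leq 1_K$ is equivalent to $0 \leq f(x) \leq 1$ for every $x \in K$. Combining this with the previous paragraph gives $f \in E(K)$ iff $f \in A(K)$ and $0 \leq f \leq 1_K$, which is exactly the asserted description of the right-hand set, proving both inclusions at once.

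The only point requiring a word of care is that the order symbols on the right-hand side make sense only because both sides are regarded as subsets of the ambient ordered vector space $A(K)$; this is automatic since $E(K) \subset A(K)$ by construction, and $1_K \in A(K)$. Accordingly, I do not expect any genuine obstacle: the statement is a reformulation of the definition of $E(K)$, and the proof is a one-line double inclusion once the order relation is expanded pointwise.
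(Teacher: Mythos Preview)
Your proposal is correct and follows essentially the same approach as the paper: both arguments simply unfold the definition of the order $\leq$ on $A(K)$ pointwise to see that $0 \leq f \leq 1_K$ is equivalent to $0 \leq f(x) \leq 1$ for all $x \in K$, which is the defining condition for $E(K)$. The paper's proof is a single sentence stating this equivalence; your version spells out each step more carefully but adds no new ideas.
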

\begin{proof}
The result follows from the definition: let $f \in A(K)$, then we have $0 \leq f \leq 1_K$ if and only if $0 \leq f(x) \leq 1$ for all $x \in K$.
\end{proof}
Let $f,g \in E(K)$ and define $h_M: K \to \RR$ as $h_M(x) = \max(f(x), g(x))$ for $x \in K$. Then we clearly have $0 \leq h_M(x) \leq 1$, but nevertheless in general $h_M \notin E(K)$. This is because in general $h_M$ is not an affine function; example when $h_M$ is not an affine function can easily be constructed in the boxworld theory presented in Section \ref{sec:boxworld}. In some cases, for example if $f \geq g$ or in the case of the classical theory presented in Section \ref{sec:CT}, $h_M$ is an affine function. Analogical results follow for the function $h_m$ defined as $h_m(x) = \min(f(x),g(x))$. There is one more easy to see result that follows from the definition of $\geq$.
\begin{lemma} \label{lemma:basic-EA-geqDecomp}
Let $f,g \in A(K)$. If $f \geq g$, then there is $h \in A(K)^+$, i.e., $h \geq 0$, such that $f = g + h$.
\end{lemma}
\begin{proof}
Let $f \geq g$, let $x \in K$ and define $h(x) = f(x) - g(x)$. Then $h \in A(K)$ as it is immediate that $h: K \to \RR$ is an affine function. Moreover we have $h(x) \geq 0$ because $f(x) \geq g(x)$ as a result of $f \geq g$, and so $h \in A(K)^+$. The result follows as we have $f = g + h$.
\end{proof}
Proposition \ref{prop:basic-EA-interval} and Lemma \ref{lemma:basic-EA-geqDecomp} represent the two use cases of the ordering $\geq$: we will use Proposition \ref{prop:basic-EA-interval} to express a condition that some element $f \in A(K)$ is an effect, i.e., that $f \in E(K)$, and we will use Lemma \ref{lemma:basic-EA-geqDecomp} to express the decomposition $f = g + h$ in a different way. This can be demonstrated by the following lemma:
\begin{lemma} \label{lemma:basic-EA-fPerp}
For every $f \in E(K)$ there is $f^\perp \in E(K)$ such that $f + f^\perp = 1_K$.
\end{lemma}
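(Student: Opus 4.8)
The plan is to exhibit $f^\perp$ explicitly as $1_K - f$ and then verify that it lies in $E(K)$ using the order-theoretic characterization of effects from Proposition~\ref{prop:basic-EA-interval}. Since $f \in E(K)$ we have $f \leq 1_K$, so Lemma~\ref{lemma:basic-EA-geqDecomp} (applied to the pair $1_K \geq f$) produces an $h \in A(K)^+$ with $1_K = f + h$; I would simply set $f^\perp := h$. This is the intended reading of the remark following Lemma~\ref{lemma:basic-EA-geqDecomp}, namely that the lemma repackages the trivial identity $f + (1_K - f) = 1_K$ into the statement that the complement is a genuine positive affine function.

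The remaining work is to check $f^\perp \in E(K)$, i.e.\ $0 \leq f^\perp \leq 1_K$ in the ordering on $A(K)$. By construction $f^\perp = h \geq 0$. For the upper bound, note $f^\perp = 1_K - f$ and $f \geq 0$ (again because $f \in E(K)$), so $1_K - f^\perp = f \geq 0$, which is exactly $f^\perp \leq 1_K$. Hence $0 \leq f^\perp \leq 1_K$, and Proposition~\ref{prop:basic-EA-interval} gives $f^\perp \in E(K)$. Finally $f + f^\perp = f + h = 1_K$ by the defining property of $h$.

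There is essentially no obstacle here: the statement is immediate once one has the identification $E(K) = \{g \in A(K) : 0 \leq g \leq 1_K\}$ and the closure of $A(K)$ under subtraction (it is a vector space). The only point worth spelling out, and the one a careful reader expects to see, is that $1_K - f$ is still affine and still takes values in $[0,1]$; both follow because $A(K)$ is a vector space containing $1_K$ and $f$, and because the pointwise bounds $0 \leq f(x) \leq 1$ transform into $0 \leq 1 - f(x) \leq 1$ for every $x \in K$. I would present the argument in three or four lines, citing Proposition~\ref{prop:basic-EA-interval} (and optionally Lemma~\ref{lemma:basic-EA-geqDecomp}) rather than re-deriving the pointwise estimates by hand.
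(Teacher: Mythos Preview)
Your proposal is correct and matches the paper's proof almost verbatim: the paper also invokes Proposition~\ref{prop:basic-EA-interval} to get $f \leq 1_K$, then Lemma~\ref{lemma:basic-EA-geqDecomp} to write $1_K = f + f^\perp$ with $f^\perp \in A(K)^+$, and finally observes $1_K \geq f^\perp$ to conclude $f^\perp \in E(K)$. The paper even notes the same alternative you mention, namely setting $f^\perp = 1_K - f$ directly.
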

\begin{proof}
Since $f \in E(K)$, according to Proposition \ref{prop:basic-EA-interval} we have $f \leq 1_K$, from which according to Lemma \ref{lemma:basic-EA-geqDecomp} we have $1_K = f + f^\perp$ for some $f^\perp \in A(K)^+$. $1_K \geq f^\perp$ and $f^\perp \in E(K)$ follows. Another way to prove the statement is to show that $1_K - f \geq 0$, then take $f^\perp = 1_K - f$.
\end{proof}

The unit effect $1_K$ has one additional property, that we have already used in the proof of Proposition \ref{prop:basic-EA-genAK}: a suitable multiple of $1_K$ can be used to bound any element of $A(K)$ from above. This property can be generalized as follows:
\begin{definition}
Let $f \in A(K)^+$ be an element such that for every $g \in A(K)$ there is $\mu \in \Rp$ such that $g \leq \mu f$. Then we say that $f$ is the \emph{order unit} of $A(K)^+$.
\end{definition}

\begin{lemma}
$1_K$ is the order unit of $A(K)^+$.
\end{lemma}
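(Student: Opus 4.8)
The statement to prove is that $1_K$ is an order unit of $A(K)^+$, i.e., for every $g \in A(K)$ there exists $\mu \in \Rp$ with $g \leq \mu 1_K$. The plan is to extract exactly the bound that was already used implicitly in the proof of Proposition~\ref{prop:basic-EA-genAK}: since $K$ is compact and $g$ is affine (hence continuous on $K$ with respect to the Euclidean topology inherited from $V$), the maximum $M = \max_{x \in K} g(x)$ is finite.

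**Key steps.** First I would fix an arbitrary $g \in A(K)$ and invoke compactness of $K$ (Proposition~\ref{prop:what-stateSpace-compact}) together with continuity of affine functions on a finite-dimensional space to obtain $M = \max_{x \in K} g(x) < \infty$. Second, I would set $\mu = \max(M, 0) \in \Rp$ so that $g(x) \leq M \leq \mu = \mu \cdot 1_K(x)$ for every $x \in K$; taking $\mu$ to be the nonnegative part of $M$ is needed only to respect the constraint $\mu \in \Rp$ in the definition of order unit. Third, I would conclude that $\mu 1_K - g \geq 0$ by the definition of the ordering $\geq$ on $A(K)$, which is precisely the statement $g \leq \mu 1_K$. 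Finally I would note that $1_K \in A(K)^+$ trivially (it is the constant function $1$, which is nonnegative), so all the hypotheses in the definition of order unit are met.

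**Main obstacle.** There is essentially no obstacle here: the only slightly non-formal point is the assertion that an affine function on a compact convex subset of a finite-dimensional Euclidean space attains its maximum, which rests on continuity of affine maps on $V$ and the extreme value theorem. This was already tacitly used once in the proof of Proposition~\ref{prop:basic-EA-genAK} (where $M = \max_{x \in K} f(x) < \infty$ appears), so it is safe to take for granted. The proof is therefore a one- or two-line argument reusing that compactness bound.

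\begin{proof}
Let $g \in A(K)$. Since $g$ is affine on a subset of a finite-dimensional Euclidean vector space, it is continuous, and since $K$ is compact by Proposition~\ref{prop:what-stateSpace-compact}, the maximum $M = \max_{x \in K} g(x)$ is finite (this is the same bound used in the proof of Proposition~\ref{prop:basic-EA-genAK}). Put $\mu = \max(M, 0) \in \Rp$. Then for every $x \in K$ we have $g(x) \leq M \leq \mu = \mu 1_K(x)$, hence $g \leq \mu 1_K$ by the definition of the ordering on $A(K)$. As $g \in A(K)$ was arbitrary and $1_K \in A(K)^+$, we conclude that $1_K$ is the order unit of $A(K)^+$.
\end{proof}
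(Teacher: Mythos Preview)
Your proof is correct and follows essentially the same approach as the paper: take $M = \max_{x \in K} g(x)$ and use it to bound $g$ by a multiple of $1_K$. Your version is in fact slightly more careful than the paper's, since you explicitly replace $M$ by $\mu = \max(M,0)$ to guarantee $\mu \in \Rp$ as required by the definition of order unit, whereas the paper simply writes $g \leq M 1_K$ without addressing the case $M < 0$.
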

\begin{proof}
Let $g \in A(K)$ and let $M = \max_{x \in K} g(x)$. Then we have $g \leq M 1_K$.
\end{proof}

\subsection{Duality between the state space and the effect algebra}
We have already showed how one can start with the state space and construct the effect algebra $E(K)$. Now we will show that given an effect algebra $E(K)$, we can reconstruct the state space $K$ from $E(K)$. In the following we will rely on the notions of linear functionals and dual vector space, see \ref{appendix:duals} for a short introduction to linear functionals, dual vector spaces and dual cones.

Let $A(K)^*$ denote the dual vector space to $A(K)$, that is, let $A(K)^*$ be the vector space of all linear functionals on $A(K)$. Let $\psi \in A(K)^*$ and $f \in A(K)$, then we will denote by $\< \psi, f \>$ the value that $\psi$ assigns to $f$, i.e., $\psi: f \mapsto \< \psi, f \>$. The presented notation is similar to the bra-ket notation for the inner product in quantum theory, but remember that $\< \psi, f \>$ is not an inner product of two vectors, because $f \in A(K)$ and $\psi \in A(K)^*$, i.e., $f$ and $\psi$ belong to different vector spaces. That being said, since we are working only in finite-dimensional spaces, the structure is very similar to an inner product and one can think of $\< \psi, f \>$ as a special type of inner product, which works only for a pair of $\psi \in A(K)^*$ and $f \in A(K)$, but does not work for a pair $\psi, \varphi \in A(K)^*$, neither for a pair $f,g \in A(K)$. This type of inner product-like structure is usually called pairing or duality in linear algebra textbooks.

The dual cone to $A(K)^+$ is
\begin{equation}
A(K)^{*+} = \{ \psi \in A(K)^* : \<\psi, f\> \geq 0, \forall f \in A(K)^+ \},
\end{equation}
i.e., $A(K)^{*+}$ is the cone of functionals that are positive on the cone of positive functions $A(K)^+$. In terms of \ref{appendix:duals}, $A(K)^{*+} = (A(K)^+)^*$.
\begin{lemma} \label{lemma:basic-dual-AKstarPointedGenerating}
$A(K)^{*+}$ is a convex, closed, pointed and generating cone.
\end{lemma}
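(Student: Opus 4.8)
The plan is to establish the four properties of $A(K)^{*+}$ separately, mostly by dualizing the corresponding facts about $A(K)^+$ from Lemma~\ref{lemma:basic-EA-AKprop}. Recall that $A(K)^+$ is convex, closed, pointed and generating; in finite dimensions there is a clean duality exchanging ``pointed'' and ``generating'' under taking dual cones, and convexity and closedness of a dual cone are automatic. So the proof is essentially an application of the general cone-duality machinery referenced in \ref{appendix:duals}.

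First I would note that $A(K)^{*+} = (A(K)^+)^*$ is convex and closed regardless of any property of $A(K)^+$: it is defined as an intersection of the half-spaces $\{\psi : \<\psi, f\> \ge 0\}$ over all $f \in A(K)^+$, and each such half-space is closed and convex, so their intersection is too. It is a cone because scaling $\psi$ by a non-negative factor preserves the defining inequalities. This handles two of the four properties with no real work.

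Next, for pointedness and generation I would invoke the bipolar-type duality for closed convex cones in finite-dimensional spaces: for a closed convex cone $C$ in a finite-dimensional space, $C^*$ is pointed if and only if $C$ is generating, and $C^*$ is generating if and only if $C$ is pointed (equivalently, $C^{**} = C$ and the pointed/generating dichotomy is swapped by $*$). Since Lemma~\ref{lemma:basic-EA-AKprop} tells us $A(K)^+$ is both pointed and generating, applying this duality gives that $A(K)^{*+}$ is both generating and pointed. Concretely: $A(K)^+$ generating forces $A(K)^{*+}$ pointed, because a nonzero $\psi \in A(K)^{*+} \cap (-A(K)^{*+})$ would vanish on a spanning set $A(K)^+$ and hence be $0$; and $A(K)^+$ pointed forces $A(K)^{*+}$ generating, which is the slightly less trivial direction, typically proved via a separation argument showing that $\linspan(A(K)^{*+})$ cannot be a proper subspace. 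I expect the paper has packaged exactly these implications as a proposition in \ref{appendix:duals}, so I would cite it.

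The main obstacle, if one does not want to lean on an appendix result, is precisely the implication ``$A(K)^+$ pointed $\Rightarrow$ $A(K)^{*+}$ generating.'' This is the one spot requiring a genuine argument rather than an unwinding of definitions: one shows that if $\linspan(A(K)^{*+}) \ne A(K)^*$, then there is a nonzero element of $A(K)$ annihilated by all of $A(K)^{*+}$, i.e.\ lying in $A(K)^+ \cap (-A(K)^+)$ by the bipolar theorem, contradicting pointedness of $A(K)^+$. Everything else is routine. Given the style of the excerpt, I would keep the proof short: dispatch convex/closed/cone directly, then cite the pointed$\leftrightarrow$generating duality from \ref{appendix:duals} together with Lemma~\ref{lemma:basic-EA-AKprop}.
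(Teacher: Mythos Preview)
Your proposal is correct and matches the paper's proof essentially line for line: the paper simply cites Lemma~\ref{lemma:basic-EA-AKprop} together with Propositions~\ref{prop:duals-dualConvexClosed}, \ref{prop:duals-generatingToPointed} and \ref{prop:duals-pointedToGenerating} from \ref{appendix:duals}, which are exactly the ``dual cones are convex and closed'' and ``pointed $\leftrightarrow$ generating under duality'' statements you anticipated.
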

\begin{proof}
The result follows from Lemma \ref{lemma:basic-EA-AKprop} and Propositions \ref{prop:duals-dualConvexClosed}, \ref{prop:duals-generatingToPointed} and \ref{prop:duals-pointedToGenerating} in \ref{appendix:duals}.
\end{proof}

We can naturally embed $K$ into $A(K)^{*+}$ using the following construction: let $x \in K$, $f,g \in A(K)$ and $\alpha, \beta \in \RR$, then we have
\begin{equation}
(\alpha f + \beta g)(x) = \alpha f(x) + \beta g(x)
\end{equation}
and so the expression $f(x)$ is linear in $f$. It follows that we can define a linear functional $x \in A(K)^{*+}$ by
\begin{equation}
\< x, f \> = f(x).
\end{equation}
We are abusing the notation by using the same symbol for the point $x \in K$ and the functional $x \in A(K)^{*+}$, but as we will shortly see, they are in fact isomorphic to each other. Also note that the functional $x \in A(K)^{*+}$ is positive by construction, since for every $f \in A(K)^+$ we must have $\< x, f \> = f(x) \geq 0$. Moreover note that $\< x, 1_K \> = 1$. We have, up to an isomorphism, $K \subset \{ \varphi \in A(K)^{*+} : \< \varphi, 1_K \> = 1 \}$. We will now show that also the other inclusion holds.

\begin{theorem} \label{thm:basic-dual-stateSpace}
For every state space $K$ we have
\begin{equation}
K = \{ \varphi \in A(K)^{*+} : \< \varphi, 1_K \> = 1 \}
\end{equation}
up to an isomorphism.
\end{theorem}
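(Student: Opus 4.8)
The plan is to prove the nontrivial inclusion $\{ \varphi \in A(K)^{*+} : \< \varphi, 1_K \> = 1 \} \subseteq K$; the reverse inclusion was already established in the discussion preceding the statement, by the canonical embedding $x \mapsto (f \mapsto f(x))$, which lands in $A(K)^{*+}$ and satisfies $\< x, 1_K \> = 1$. So fix $\varphi \in A(K)^{*+}$ with $\< \varphi, 1_K \> = 1$ and suppose, for contradiction, that $\varphi \notin K$ (viewing $K$ inside $A(K)^*$ via the embedding). The first thing I would do is identify the right ambient structure: $K$ is a compact convex subset of the hyperplane $H = \{ \varphi \in A(K)^* : \< \varphi, 1_K \> = 1 \}$, which is a finite-dimensional affine space, so standard separation results apply on the nose (this is exactly what \ref{item:what-stateSpace-Euclid} buys us).

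Next I would invoke the hyperplane separation theorem (the version for a compact convex set and a point outside it, which should be available from \ref{appendix:duals}): since $\varphi \in H \setminus K$ and $K$ is compact and convex, there is a linear functional $\ell$ on $A(K)^*$ and a real number $c$ such that $\< \psi, \ell \> < c < \< \varphi, \ell \>$ for all $\psi \in K$ — here I am using reflexivity of finite-dimensional spaces to identify $\ell$ with an element of $A(K)$, so $\< \psi, \ell \> = \ell(\psi)$ for $\psi \in K$. The key step is then to massage $\ell$ into an actual effect and derive a contradiction with positivity of $\varphi$. Concretely: restricted to $K$, $\ell$ is an affine function bounded above by $c$ and bounded below (by compactness), so after an affine rescaling $f = \alpha \ell + \beta 1_K$ with $\alpha > 0$ chosen appropriately we get $f \in E(K) \subseteq A(K)^+$ with $\sup_{x\in K} f(x) < \alpha c + \beta =: t \le 1$, i.e. $t 1_K - f \in A(K)^+$ as well (again using that $1_K$ is the order unit and Proposition \ref{prop:basic-EA-interval}).

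Now I would compute $\< \varphi, f \>$ two ways. On one hand $f \in A(K)^+$ and $\varphi \in A(K)^{*+}$ give $\< \varphi, f \> \ge 0$, which is automatic and not yet the contradiction; the useful direction is $t 1_K - f \in A(K)^+$, which gives $\< \varphi, t 1_K - f \> \ge 0$, i.e. $\< \varphi, f \> \le t \< \varphi, 1_K \> = t$. On the other hand, by construction $\ell(\varphi) > c$ means $\< \varphi, f \> = \alpha \< \varphi, \ell \> + \beta \< \varphi, 1_K \> > \alpha c + \beta = t$, contradicting the previous inequality. Hence no such separating $\ell$ exists, so $\varphi \in K$, completing the proof. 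The main obstacle — and the only place requiring genuine care — is the bookkeeping in the rescaling step: one must check that $\ell$ restricted to $K$ is genuinely bounded (this needs compactness of $K$, i.e. Proposition \ref{prop:what-stateSpace-compact}) and that the affine transformation can be chosen so that both $f$ and $t1_K - f$ are positive functions with $t \le 1$; after that the positivity/normalization of $\varphi$ does all the work. I would also remark that the argument shows more: it is precisely the order-unit property of $1_K$ together with compactness of $K$ that makes the duality $K \leftrightarrow E(K)$ tight, and this is the finite-dimensional analogue of the fact that the state space of an order-unit space is its normalized positive part.
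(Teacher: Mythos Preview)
Your proof is correct and follows essentially the same route as the paper: separate $\varphi$ from $K$ by an affine function and convert the separator into an element of $A(K)^+$ on which $\varphi$ takes the wrong sign. The paper is more direct—it simply takes the separating affine function $f$ (with the sign chosen so that $f>0$ on $K$ and $f(\varphi)<0$), observes $f\in A(K)^+$, and reads off the contradiction $\<\varphi,f\><0$; your rescaling into $E(K)$ and the condition $t\le 1$ are harmless but unnecessary, since only $c\,1_K-\ell\in A(K)^+$ is actually used.
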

\begin{proof}
We will omit the isomorphism in the proof. Let $\psi \in \{ \varphi \in A(K)^{*+} : \< \varphi, 1_K \> = 1 \}$ and assume that $\psi \notin K$. Then according to the strict hyperplane separation theorem \ref{thm:duals-strictHyperplaneSeparation} there is an affine function $f \in A(K)$ such that
\begin{equation}
\< \psi, f \> < 0 < \< x, f \>
\end{equation}
for all $x \in K$; note that we have used that $A(K)^{**} = A(K)$ as proved in Proposition \ref{prop:duals-doubleDual}, since $f$ should actually be a functional on $A(K)^*$. Also note that $f$ is affine on $V = \{ \varphi \in A(K)^*: \< \varphi, 1_K \> = 1 \} = \aff(K)$. Since $\<x, f\> = f(x) > 0$ for all $x \in K$, we must have $f \in A(K)^+$. Then $\< \psi, f \> < 0$ is a contradiction with $\psi \in A(K)^{*+}$, so we must have $\psi \in K$.
\end{proof}

One has to be careful when working with a concrete theory, because although $K \subset A(K)^{*+}$ up to an isomorphism, one has to take this isomorphism into account when describing states from $K$ by vectors from $V = \aff(K)$ or by vectors from $A(K)^*$. It is usually preferred and more useful to describe states as vectors from $A(K)^*$, but note that $\dim(V) + 1 = \dim(A(K)^*)$ which has to be taken into account.

A standard approach is to do the following: first find a suitable representation of $K \subset V$, this is actually where we started to build the framework. Then for every $x \in K$, apply the map
\begin{equation} \label{eq:basic-dual-mapToK'}
x \mapsto
\begin{pmatrix}
x \\
1
\end{pmatrix}
\end{equation}
which simply adds the $1$ at the end of the vector representation of every state. Then the set
\begin{equation} \label{eq:basic-dual-K'}
K' = \left\lbrace
\begin{pmatrix}
x \\
1
\end{pmatrix}
: x \in K \right\rbrace
\end{equation}
is clearly isomorphic to $K$ and moreover
\begin{equation}
K' = \{ \varphi \in A(K)^{*+} : \< \varphi, 1_K \> = 1 \}
\end{equation}
even without the isomorphism. We can formalize this as follows:
\begin{proposition} \label{prop:basic-dual-rep}
Let $K \subset V$ be a state space, where $V = \aff(K)$ then:
\begin{enumerate}[label=(R\arabic*), leftmargin=*]
\item\label{item:basic-dual-rep-dim} $\dim(V) + 1 = \dim(A(K)^*)$.
\item\label{item:basic-dual-rep-iso} Let $K'$ be as given in \eqref{eq:basic-dual-K'}, then $K'$ is isomorphic to $K$.
\item\label{item:basic-dual-rep-functionals} $K' \subset A(K)^*$.
\item\label{item:basic-dual-rep-stateSpace} $K' = \{ \varphi \in A(K)^{*+} : \< \varphi, 1_K \> = 1 \}$.
\end{enumerate}
\end{proposition}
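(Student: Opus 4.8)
The plan is to prove the four items essentially by unpacking the constructions already set up, relying on Theorem \ref{thm:basic-dual-stateSpace} for the content-bearing part. For \ref{item:basic-dual-rep-dim} I would argue as follows: since $V = \aff(K)$, any element of $A(K)$ is (the restriction to $K$ of) an affine function on $V$, and the space of affine functions on an $n$-dimensional affine space is $(n+1)$-dimensional — it is spanned by the coordinate functions together with the constant $1_K$. Hence $\dim(A(K)) = \dim(V) + 1$, and by the dual-space dimension identity (finite dimensions) $\dim(A(K)^*) = \dim(A(K)) = \dim(V)+1$. One subtlety to address: one must check that $A(K)$ really is all affine functions on $V$ and not some proper subspace, which holds because $K$ affinely spans $V$, so an affine function on $V$ is determined by its values on $K$ and conversely every affine function on $K$ extends uniquely to $V$.

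For \ref{item:basic-dual-rep-iso} the map in \eqref{eq:basic-dual-mapToK'} is the restriction to $K$ of the affine injection $V \to V \oplus \RR$, $v \mapsto (v,1)$; it is affine and injective, and $K'$ is by definition its image, so $K \to K'$ is an affine bijection, i.e. an isomorphism of convex sets. For \ref{item:basic-dual-rep-functionals}: an element $(x,1) \in K'$ defines a linear functional on $A(K)$ by pairing — writing a generic $f \in A(K)$ in the coordinates-plus-constant basis, $f(\cdot) = \ell(\cdot) + c\cdot 1_K$ with $\ell$ linear on $V$, the assignment $f \mapsto f(x)$ is linear in $f$, so it lies in $A(K)^*$; this is exactly the embedding $x \mapsto (\<x,\cdot\>)$ described just before Theorem \ref{thm:basic-dual-stateSpace}, under the identification that sends $(x,1)$ to that functional. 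Concretely, the affine chart identifies $A(K)^*$ with $V \oplus \RR$ in such a way that the vector $(x,1)$ represents the functional $x$ with $\<x, f\> = f(x)$ and in particular $\<x, 1_K\> = 1$.

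Finally, \ref{item:basic-dual-rep-stateSpace} is where the real work sits, but it is immediate from Theorem \ref{thm:basic-dual-stateSpace}: that theorem gives $K = \{\varphi \in A(K)^{*+} : \<\varphi, 1_K\> = 1\}$ up to isomorphism, and by \ref{item:basic-dual-rep-iso} the isomorphism in question is precisely the one carrying $K$ to $K'$, so composing removes the qualifier "up to an isomorphism" and yields $K' = \{\varphi \in A(K)^{*+} : \<\varphi, 1_K\> = 1\}$ on the nose. I expect the main obstacle to be purely bookkeeping: being careful and explicit about the two identifications in play — the chart $V \oplus \RR \cong A(K)^*$ and the affine embedding $K \to K'$ — and verifying they are compatible, i.e. that the functional attached to $(x,1)$ under the chart is the same as the functional $x$ defined by $\<x,f\> = f(x)$. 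Once that compatibility is pinned down, every claim reduces to a one-line invocation of an earlier result (the affine-function dimension count, and Theorem \ref{thm:basic-dual-stateSpace}).
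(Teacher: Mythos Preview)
Your proposal is correct and follows essentially the same route as the paper: the $(\varphi,c)$ decomposition of affine functions for \ref{item:basic-dual-rep-dim}, the obvious affine bijection for \ref{item:basic-dual-rep-iso}, the explicit pairing $\langle (x,1),(\varphi,c)\rangle = \varphi(x)+c$ for \ref{item:basic-dual-rep-functionals}, and the separation/Theorem~\ref{thm:basic-dual-stateSpace} argument for \ref{item:basic-dual-rep-stateSpace}. The only cosmetic difference is that for \ref{item:basic-dual-rep-stateSpace} the paper re-invokes the strict hyperplane separation theorem directly rather than appealing to Theorem~\ref{thm:basic-dual-stateSpace}, but your shortcut via that theorem is equally valid once the identification of the two isomorphisms is checked, which you correctly flag as the main bookkeeping point.
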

\begin{proof}
To prove \ref{item:basic-dual-rep-dim}, note that $\dim(A(K)^*) = \dim(A(K))$, we will show that $\dim(V) + 1 = \dim(A(K))$. Let $0 \in V$ be the zero vector and let $f \in A(K)$, then since $f$ is only affine function, not linear, we can have $f(0) \neq 0$. One can see that this is the only difference between affine and linear functions and any affine function such that $f(0) = 0$ is actually linear, i.e. if $f(0) = 0$ then $f \in V^*$. It follows that $f - f(0) 1_K \in V^*$ and so every $f \in A(K)$ can be described as an ordered pair $(\varphi, c)$, where $\varphi \in V^*$, $\varphi = f - f(0) 1_K$ and $c \in \RR$, $c = f(0)$. For any arbitrary ordered pair $(\varphi, c)$, let $x \in K$ and define
\begin{equation}
(\varphi, c)(x) = \varphi(x) + c 1_{K}(x).
\end{equation}
We get that $(\varphi, c) \in A(K)$, and so the vector space of ordered pairs $(\varphi, c)$ is isomorphic to $A(K)$. Note that the vector space of ordered pairs $(\varphi, c)$ has exactly one more dimension that $V$ and so \ref{item:basic-dual-rep-dim} follows. \ref{item:basic-dual-rep-iso} is straightforward, the map described in \eqref{eq:basic-dual-mapToK'} is affine and invertible. To prove \ref{item:basic-dual-rep-functionals}, let $x \in K$ and let $f \in A(K)$ correspond to the ordered pair $(\varphi, c)$, where $\varphi \in V^*$ and $c \in \RR$, then let
\begin{equation}
\left\langle
\begin{pmatrix}
x \\
1
\end{pmatrix},
(\varphi, c)
\right\rangle
= \varphi(x) + c.
\end{equation}
Notice that this expression is linear in $(\varphi, c)$ and so elements of $K'$ are functionals on $A(K)$, so $K' \subset A(K)^*$. Finally, to prove \ref{item:basic-dual-rep-stateSpace}, note that $K' \subset \{ \varphi \in A(K)^{*+} : \< \varphi, 1_K \> = 1 \}$ is immediate. We can now use the strict hyperplane separation theorem \ref{thm:duals-strictHyperplaneSeparation} in the same way as in the proof of Theorem \ref{thm:basic-dual-stateSpace} to show that $K' = \{ \varphi \in A(K)^{*+} : \< \varphi, 1_K \> = 1 \}$.
\end{proof}

\subsection{Connection to abstract convex effect algebras and order unit spaces}
One does not have to start building the formalism of GPTs from the state space $K$, but a different approach is to start with the abstract definition of convex effect algebra. We will show that starting from abstract effect algebras leads to the order unit spaces. Then, we will then show that the formalism of order unit spaces is equivalent to our framework.

\begin{definition}
An \emph{effect algebra} is a system $(E, 0, 1, +)$ where $E$ is a set, $0, 1 \in E$ and $+$ is a partially defined binary operation. Let $a, b \in E$ then we write $a \perp b$ if $a + b$ is defined. Let $a, b, c \in E$, then it must hold that:
\begin{enumerate}[label=(EA\arabic*), leftmargin=*]
\item If $a \perp b$ then also $b \perp a$ and $a +  b = b + a$.
\item If $a \perp b$ and $(a + b) \perp c$ then $b \perp c$, $a \perp (b + c)$ and $(a + b) + c = a + (b + c)$.
\item For every $a \in E$ there is $a' \in E$ such that $a \perp a'$ and $a + a' = 1$.
\item If $a \perp 1$ then $a = 0$.
\end{enumerate}
\end{definition}
Effect algebras were defined by Foulis and Bennet in 1994 \cite{FoulisBennet-EA}, but equivalent structure of D-posets was already presented by K\^{o}pka in 1992 \cite{Kopka-Dposets}. Effect algebras were introduced as a generalization of the projectors in quantum theory and they were heavily investigated, see \cite{DvurecenskijPulmannova-structures} for a review. Special class of effect algebras are convex effect algebras.
\begin{definition}
Let $E$ be an effect  algebra. $E$ is \emph{convex effect algebra} if for every $a \in E$ and $\lambda \in [0, 1]$ there is an element $\lambda a \in E$ such that for all $\lambda, \mu \in [0, 1]$ and $a, b \in E$ we have
\begin{enumerate}[label=(CEA\arabic*), leftmargin=*]
\item $\mu(\lambda a) = \lambda (\mu a)$.
\item If $\lambda + \mu \leq 1$, then $\lambda a \perp \mu a$ and $\lambda a + \mu a = (\lambda + \mu) a$.
\item If $a \perp b$, then $\lambda a \perp \lambda b$ and $\lambda a + \lambda b = \lambda (a + b)$.
\item $1a = a$.
\end{enumerate}
\end{definition}
Special class of convex effect algebras are effect algebras which are intervals in real ordered vector spaces. As we will see, these effect algebras are closely related to our definition of $E(K)$ as in Definition \ref{def:basic-EA-onK}.
\begin{proposition} \label{prop:basic-connection-OVStoEA}
Let $V$ be a real vector space with a pointed cone $C$, that is $C \cap -C = \{ 0 \}$. Let $v, w \in V$ and define the partial order $\geq$ as $v \geq w$ if and only if $v - w \in C$. This gives $V$ the structure of ordered vector space, see Proposition \ref{prop:cones-orderFromCone}. Let $u \in C$ then the interval
\begin{equation}
[0, u] = \{ v \in V: 0 \leq v \leq u \}
\end{equation}
is a convex effect algebra with the partially defined binary operation of sum of vectors and $1 = u$. The convex structure is given by multiplication of vectors by scalars.
\end{proposition}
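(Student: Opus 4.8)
The plan is to verify the axioms of a convex effect algebra one by one, reducing everything to the order-theoretic properties of the cone $C$; by Proposition \ref{prop:cones-orderFromCone} the relation $\geq$ is a genuine partial order compatible with the vector space structure, so manipulations like adding a positive element to both sides of an inequality are legitimate. First I would pin down the partial operation: for $a,b \in [0,u]$ declare $a \perp b$ precisely when $a + b \leq u$. Since $C$ is closed under addition, $a + b \geq 0$ holds automatically, so $a \perp b$ is equivalent to $a + b \in [0,u]$, and $+$ restricted to orthogonal pairs lands back in $[0,u]$ as required. With this convention (EA1) is immediate from commutativity of vector addition (and $a \perp b \iff b \perp a$ is symmetric). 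For (EA2), suppose $a \perp b$ and $(a+b)\perp c$, i.e.\ $a+b+c \leq u$; then from $0 \leq a$ we get $b + c \leq a+b+c \leq u$, so $b \perp c$, while $a + (b+c) = (a+b)+c \leq u$ gives $a \perp (b+c)$, and the equality $(a+b)+c = a+(b+c)$ is just associativity in $V$. For (EA3) take $a' = u - a$: then $a' \geq 0$ because $a \leq u$, and $a' \leq u$ because $a \geq 0$, and $a + a' = u = 1$. Axiom (EA4) is the only place pointedness of $C$ is used: if $a \perp 1$ then $a + u \leq u$, so $-a \in C$, while $a \in [0,u]$ gives $a \in C$, hence $a \in C \cap (-C) = \{0\}$ and $a = 0$.

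Next I would handle the convex structure. For $\lambda \in [0,1]$ and $a \in [0,u]$, let $\lambda a$ be ordinary scalar multiplication in $V$, and check $\lambda a \in [0,u]$: $\lambda a \geq 0$ since $C$ is closed under multiplication by nonnegative scalars, and $\lambda a \leq a \leq u$ since $(1-\lambda)a \geq 0$. Then (CEA1) and (CEA4) are just properties of scalar multiplication in $V$. For (CEA2), if $\lambda + \mu \leq 1$ then $\lambda a + \mu a = (\lambda+\mu)a \leq a \leq u$, so $\lambda a \perp \mu a$, and the stated equality is distributivity. For (CEA3), if $a \perp b$ then $\lambda a + \lambda b = \lambda(a+b) \leq \lambda u \leq u$ (using $\lambda \leq 1$ and $u \geq 0$), so $\lambda a \perp \lambda b$, again with the equality coming from distributivity.

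I do not expect a genuine obstacle: the statement is essentially bookkeeping over the ordered-vector-space axioms. The one point worth flagging in the write-up is that pointedness of $C$ enters exactly once, in (EA4), and is precisely what is needed there; axioms (EA1)--(EA3) and (CEA1)--(CEA4) hold for the interval $[0,u]$ in \emph{any} ordered vector space, whether or not the cone is pointed.
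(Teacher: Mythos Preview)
Your proposal is correct and follows essentially the same approach as the paper's proof: both verify the effect-algebra axioms (EA1)--(EA4) and the convexity axioms (CEA1)--(CEA4) directly from the ordered-vector-space structure, using $a' = u - a$ for the complement and pointedness of $C$ precisely for (EA4). Your write-up is slightly more explicit (in particular your remark that pointedness enters only in (EA4) is a nice addition), and for (CEA3) you bound $\lambda(a+b) \leq \lambda u \leq u$ whereas the paper uses $\lambda(a+b) \leq a+b \leq u$, but these are cosmetic differences.
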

\begin{proof}
The proof is straightforward. Let $a,b \in [0,u]$, then $a \perp b$ if and only if $a+b \in [0,u]$. Let $a,b,c \in [0, u]$, then $a+b = b+a$, and $a+b \leq u$ if and only if $b+a \leq u$. If $a+b \leq u$ and $(a+b)+c \leq u$ then we have $b+c \leq a+b+c \leq u$. We define $a' = u-a$ as the unique element such that $a + a' = u$ and $a' \in [0, u]$ if and only if $a \in [0, u]$. At last if $a+u \leq u$ then $a \leq 0$ but also $a \geq 0$ so we must have $a = 0$. This shows that $[0, u]$ is an effect algebra.

Keep $a,b \in [0, u]$ and let $\lambda, \mu \in [0, 1]$. Clearly we have $\lambda (\mu a) = \lambda \mu a = \mu (\lambda a)$. If $\lambda + \mu \leq 1$, then $\lambda a + \mu a = (\lambda + \mu) a \leq a \leq u$. If $a+b \leq u$, then also $\lambda a + \lambda b = \lambda (a+b) \leq a+b \leq u$. At last, $1a = a$ is trivial. This shows that $[0, u]$ is convex effect algebra.
\end{proof}
\begin{definition} \label{def:basic-connection-linEA}
Let $V$ be a real vector space with a convex, pointed cone $C$ and let $u \in C$, then we call $[0, u]$ \emph{linear effect algebra}.
\end{definition}
The following justifies why we used the term effect algebra in Definition \ref{def:basic-EA-onK}.
\begin{corollary} \label{coro:basic-connection-EKisLin}
Let $K$ be a state space and let $E(K)$ be the effect algebra over $K$ as given in Definition \ref{def:basic-EA-onK}. Then $E(K)$ is a linear effect algebra with $u = 1_K$, $C = A(K)^+$ and $V = A(K)$.
\end{corollary}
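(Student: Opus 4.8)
The plan is to recognize $E(K)$ as an instance of the linear effect algebra construction of Proposition \ref{prop:basic-connection-OVStoEA}, instantiated with $V = A(K)$, $C = A(K)^+$ and $u = 1_K$. The whole argument then reduces to two things: checking that the hypotheses of that proposition are satisfied, and checking that the resulting interval $[0, 1_K]$ coincides with $E(K)$ not just as a set but together with all of its operations.

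First I would invoke Lemma \ref{lemma:basic-EA-AKprop}, which already establishes that $A(K)^+$ is a convex, pointed cone in the real vector space $A(K)$; this is precisely the input that Proposition \ref{prop:basic-connection-OVStoEA} demands, and it is what endows $A(K)$ with the partial order $\geq$ used throughout this subsection. Next I would observe that $1_K \in A(K)^+$, so $u = 1_K$ is an admissible choice, and hence Proposition \ref{prop:basic-connection-OVStoEA} tells us that $[0, 1_K] = \{ f \in A(K) : 0 \leq f \leq 1_K \}$ is a convex effect algebra, with partially defined sum given by addition of affine functions, distinguished element $1 = 1_K$, and convex structure given by multiplication of affine functions by scalars in $[0,1]$.

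Then I would identify this interval with $E(K)$ as a convex effect algebra. Proposition \ref{prop:basic-EA-interval} gives exactly $E(K) = \{ f \in A(K) : 0 \leq f \leq 1_K \} = [0, 1_K]$, and the operations match on the nose: the effect-algebra sum $f + g$ of two effects is declared defined precisely when $f + g \leq 1_K$, i.e.\ when $f + g \in E(K)$; the complement $f^\perp = 1_K - f$ is the same element produced by Lemma \ref{lemma:basic-EA-fPerp}; and $\lambda f$ for $\lambda \in [0,1]$ is the same affine function whether one views it through the vector-space scalar multiplication or through the convex structure of the effect algebra. By Definition \ref{def:basic-connection-linEA}, a convex effect algebra of the form $[0, u]$ with $u$ in a convex pointed cone is by definition a linear effect algebra, so the claimed identification with $u = 1_K$, $C = A(K)^+$, $V = A(K)$ follows.

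Since every ingredient is already in place, I do not expect a genuine obstacle here; the only point requiring care is bookkeeping — making explicit that the abstract effect-algebra sum, the complementation, and the scalar action on $[0,1_K]$ are literally the restrictions of the operations on $A(K)$, so that $E(K) = [0, 1_K]$ is an isomorphism of convex effect algebras and not merely an equality of underlying sets.
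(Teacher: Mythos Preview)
Your proposal is correct and follows the same route as the paper, which simply invokes Proposition \ref{prop:basic-EA-interval} to identify $E(K)$ with the interval $[0,1_K]$ in $A(K)$. You have just been more explicit about verifying the hypotheses of Proposition \ref{prop:basic-connection-OVStoEA} and about the matching of operations, which the paper leaves implicit.
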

\begin{proof}
It follows from Proposition \ref{prop:basic-EA-interval} that we have $E(K) = [0, 1_K] \subset A(K)$.
\end{proof}
As we have seen, every linear effect algebra is convex effect algebra. The other implication holds as well.
\begin{theorem}
Every convex effect algebra is affinely isomorphic to a linear effect algebra.
\end{theorem}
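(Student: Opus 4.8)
The plan is to follow the representation-theorem route for convex effect algebras: equip $E$ with a total convex structure, show that this structure is \emph{cancellative}, use this to embed $E$ affinely and injectively into a real vector space $V$, and then recognise the image as a full order interval $[0,u]$, which by Definition~\ref{def:basic-connection-linEA} is a linear effect algebra.

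\emph{Step 1: a total convex structure on $E$.} First I would check that $\lambda a \perp (1-\lambda) b$ for all $a,b \in E$ and $\lambda \in [0,1]$, so that $a \oplus_\lambda b := \lambda a + (1-\lambda) b$ is everywhere defined; this follows from (CEA3) (which yields the monotonicity $x \le x' \Rightarrow \mu x \le \mu x'$), applied to $\lambda a \le \lambda 1$ and $(1-\lambda) b \le (1-\lambda) 1$, together with $\lambda 1 \perp (1-\lambda) 1$ (from $\lambda 1 + (1-\lambda) 1 = 1$ via (CEA2)) and the elementary effect-algebra fact that $x \perp w$ whenever $x \perp y$ and $w \le y$. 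One then verifies from (CEA1)--(CEA4) and the effect-algebra axioms that $(E,\{\oplus_\lambda\})$ obeys the abstract-convex-set axioms --- idempotence $a \oplus_\lambda a = a$, the unit law $a \oplus_1 b = a$, commutativity $a \oplus_\lambda b = b \oplus_{1-\lambda} a$, and the reparametrisation identity for iterated combinations --- with $0 \in E$ serving as a distinguished basepoint.

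\emph{Step 2: cancellation, and why it is the hard part.} The cancellation law --- if $a \oplus_\lambda c = b \oplus_\lambda c$ for some $\lambda \in (0,1)$ then $a = b$ --- is the technical core. Every effect algebra already satisfies $x + c = y + c \Rightarrow x = y$ (a standard derived identity), so cancelling the summand $(1-\lambda) c$ reduces the claim to \emph{scalar cancellation}: $\lambda a = \lambda b$ with $\lambda \in (0,1]$ implies $a = b$. One can immediately reduce to $\lambda \in (\frac{1}{2},1]$, since (CEA2) gives $\lambda a = \lambda b \Rightarrow 2\lambda\,a = 2\lambda\,b$ whenever $2\lambda \le 1$; the residual case would then be attacked through the complement $a'$ (satisfying $a + a' = 1$), the identity $\lambda 1 + (1-\lambda) a = a + \lambda a'$, and effect-algebra cancellation. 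This residual step is where essentially all the difficulty lies, and it is the part of the proof I expect to be the main obstacle: with only partially-defined, purely algebraic operations in play the reductions are delicate and tend to turn circular.

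\emph{Step 3: the vector space and the interval.} Once $E$ is known to be a cancellative abstract convex set with basepoint $0$, I would either invoke the classical embedding of such a set as a convex subset of a real vector space, or construct one by hand: let $C$ be the commutative monoid of formal positive multiples $t \cdot a$ ($t>0$, $a \in E$), with a zero adjoined, equipped with $t\cdot a + s\cdot b := (t+s)\cdot(a \oplus_{t/(t+s)} b)$ and the $\Rp$-action $r\,(t\cdot a) := (rt)\cdot a$, modulo the evident identification of rescalings that land on the same effect; cancellation makes $C$ cancellative, so its Grothendieck group $V$ is a real vector space in which $C$ sits as a pointed convex cone and $\phi\colon a \mapsto 1\cdot a$ is affine and injective with $\phi(0)=0$. (No finite-dimensionality is available or needed --- Definition~\ref{def:basic-connection-linEA} does not ask for it.) Ordering $V$ by $C$ and setting $u := \phi(1)$, the relation $a + a' = 1$ and affinity of $\phi$ give $\phi(a) + \phi(a') = u$, hence $0 \le \phi(a) \le u$ and $\phi(E) \subseteq [0,u]$. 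For the converse inclusion, writing $v \in [0,u]$ as $v = t\cdot a$ and $u - v = s\cdot b$ yields $(t+s)\cdot(a \oplus_{t/(t+s)} b) = 1\cdot 1$, and the defining relation of $C$ --- together with the fact that $\mu\cdot 1 \neq 0$ for $\mu>0$ (otherwise repeated halving and summing would force $1 = 0$) --- forces $v = \phi(c)$ for some $c \in E$. Thus $\phi(E) = [0,u]$; since $\phi$ is an affine bijection onto $[0,u]$ with $\phi(0)=0$, $\phi(1)=u$, and $\phi(a+b)=\phi(a)+\phi(b)$ precisely when $a \perp b$, it is the required affine isomorphism of $E$ onto a linear effect algebra.
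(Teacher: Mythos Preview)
The paper does not actually supply a proof here; it just defers to Gudder--Pulmannov\'a. Your outline is precisely their strategy (totalise the convex structure, prove cancellation, embed as an interval in an ordered vector space), so on the level of approach you are aligned with the reference the paper cites.

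The real gap is Step~2, which you leave unfinished and somewhat overdramatise. Scalar cancellation is short once you see the trick, and you do not need the reduction to $\lambda>\tfrac12$ or the complement identity at all. Reading (CEA1) in its standard associative form $\mu(\lambda a)=(\mu\lambda)a$: given $\lambda a=\lambda b$ with $\lambda\in(0,1]$, pick $n\in\mathbb N$ with $n\lambda\ge1$, so $1/(n\lambda)\in(0,1]$; then
\[
\tfrac1n\,a \;=\; \tfrac1{n\lambda}\,(\lambda a) \;=\; \tfrac1{n\lambda}\,(\lambda b) \;=\; \tfrac1n\,b,
\]
and summing $n$ copies via (CEA2) and (CEA4) gives $a=b$. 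That is the whole argument.

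Step~3 also hides a nontrivial lemma. Your converse inclusion $[0,u]\subseteq\phi(E)$ reduces, after unwinding, to the case $v=t\cdot a$ with $t>1$, where one must produce $d\in E$ with $v=\phi(d)$; this needs a \emph{divisibility} statement of the form ``$a\le\mu 1$ implies $a=\mu c$ for some $c\in E$''. This does not come for free from ``$\mu\cdot 1\ne 0$''. It is, however, provable by the same device as above: with $n\ge 1/\mu$ one has $\tfrac1{n\mu}a\le\tfrac1n 1$, the $n$-fold sum $c:=\sum_{k=1}^n\tfrac1{n\mu}a$ exists in $E$ (using that $x\le p$, $y\le q$, $p\perp q$ forces $x\perp y$ and $x+y\le p+q$), and (CEA3) plus associativity give $\mu c=a$. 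With this lemma in hand your interval identification goes through; without it the sentence ``the defining relation of $C$ \ldots\ forces $v=\phi(c)$'' is not justified.
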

\begin{proof}
See \cite{GudderPulmannova-convexEA} for the proof.
\end{proof}
Thus we see that convex effect algebras are isomorphic to linear effect algebras. We will now proceed to explain how linear effect algebras are isomorphic to order unit spaces. For a definition of ordered vector space, see Definition \ref{def:cones-OVS}.
\begin{definition}
\emph{Order unit space} $(V, \leq, u)$ is an ordered vector space $(V, \leq)$ with an order unit $u \in V$, which is an element such that for any $v \in V$ there is $\lambda \in \Rp$ such that $v \leq \lambda u$.
\end{definition}

\begin{proposition}
There is a one-to-one correspondence between order unit spaces and linear effect algebras.
\end{proposition}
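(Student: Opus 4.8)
The plan is to establish the correspondence in both directions and check that the two constructions are mutually inverse up to the natural notion of isomorphism (order isomorphism preserving the unit on one side, affine effect-algebra isomorphism on the other).

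\emph{From an order unit space to a linear effect algebra.} Given an order unit space $(V, \leq, u)$, let $C = \{v \in V : v \geq 0\}$ be its positive cone. Since the order comes from a genuine partial order, $C$ is a convex, pointed cone (pointedness is exactly the antisymmetry axiom $v \geq w$ and $w \geq v$ implies $v = w$, applied to $v, -v$), and $u \in C$ by the definition of an order unit (take $v = 0$ in $v \leq \lambda u$ to see $u \geq 0$, or note $-u \leq \lambda u$ for suitable $\lambda$; more directly $u \leq 1 \cdot u$ is vacuous, so one argues $0 \leq u$ from $-u \leq \lambda u$ giving $(1+\lambda)u \geq 0$ hence $u \geq 0$). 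Then $[0,u] \subset V$ is a linear effect algebra by Definition \ref{def:basic-connection-linEA} and Proposition \ref{prop:basic-connection-OVStoEA}. This assignment sends order unit spaces to linear effect algebras.

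\emph{From a linear effect algebra to an order unit space.} Given a linear effect algebra $[0,u]$ sitting inside a real vector space $W$ with convex pointed cone $C$, we first pass to the vector space $V = \linspan([0,u])$, equipped with the cone $C' = \cone([0,u]) = C \cap V$ and the order it generates; by Proposition \ref{prop:cones-orderFromCone} this makes $V$ an ordered vector space, and $u$ is an order unit for it because every $v \in V$ is a (real) linear combination of elements of $[0,u]$, hence bounded above by $\mu u$ for large enough $\mu$ — this is the same bounding argument used in the proof of Proposition \ref{prop:basic-EA-genAK}. Thus $(V, \leq, u)$ is an order unit space, and its positive-cone-interval $[0,u]$ recovers the effect algebra we started from. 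The point here is that a linear effect algebra only "sees" the part of the ambient space spanned by the interval, so there is no loss in restricting to $V = \linspan([0,u])$, and doing so makes the correspondence well-defined at the level of isomorphism classes.

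\emph{The two constructions are mutually inverse.} Starting from $(V, \leq, u)$, forming $[0,u]$, and then re-spanning returns $\linspan([0,u])$ with the induced order; the content to check is that $\linspan([0,u]) = V$ and that the induced cone equals the original positive cone. Both follow from $u$ being an order unit: every $v \in V$ lies in $\linspan([0,u])$ because $v + \mu u \in C$ and is $\leq$ some multiple of $u$, so both $v + \mu u$ and $\mu u$ are (up to scaling) effects, whence $v$ is a difference of effects; and $v \geq 0$ in $V$ iff $v \in \cone([0,u])$ because any $v \geq 0$ can be rescaled into $[0,u]$. Conversely, starting from $[0,u] \subset W$, passing to the order unit space $(V,\leq,u)$ and taking its interval returns $\{v \in V : 0 \leq v \leq u\}$, which is exactly $[0,u]$ again since the order on $V$ is the restriction of the order on $W$. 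Finally one notes that isomorphic order unit spaces give affinely isomorphic intervals and, conversely, an affine effect-algebra isomorphism between $[0,u_1]$ and $[0,u_2]$ extends uniquely to a unital order isomorphism of the spanned order unit spaces — uniqueness because the span is generated by the interval, and order-preservation because positivity is detected inside the interval after rescaling.

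The main obstacle is bookkeeping rather than depth: one must be careful that "linear effect algebra" as defined allows an arbitrary ambient $W$, so the correspondence is genuinely only a bijection after quotienting by isomorphism (or after the canonical normalization $V = \linspan([0,u])$), and one must verify that the order unit axiom is strong enough to force $\linspan([0,u])$ to be all of $V$ and to force the generated cone to coincide with the original positive cone — this last coincidence uses pointedness to rule out the cone being too large. Once those two identifications are in hand, the "one-to-one" claim is immediate.
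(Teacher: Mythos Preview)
Your argument is correct and follows essentially the same route as the paper: both directions of the correspondence are built the same way (pass to the interval $[0,u]$ in one direction, pass to $\linspan([0,u])$ with cone generated by the effects in the other), and you then check the two constructions are mutually inverse, which the paper only indicates. One small remark: in the passage from a linear effect algebra to an order unit space you write $C' = \cone([0,u]) = C \cap V$; the second equality is asserted but not argued, and it is not actually needed for your proof to go through---taking $C' = \cone([0,u])$ already yields an order unit space whose interval coincides with the original $[0,u]$, which is all the inverse check requires. The paper makes the same assertion (``One can also prove that $C = \cone(E)$'') without proof, so you are not missing anything the paper supplies.
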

\begin{proof}
We already know that if $(V, \leq, u)$ is an ordered vector space, then
\begin{equation}
E = [0, u] = \{ v \in V: 0 \leq v \leq u \}
\end{equation}
is a linear effect algebra, see Proposition \ref{prop:basic-connection-OVStoEA}. Let now $V$ be a vector space, $C \subset V$ a pointed cone, $u \in C$ and consider the linear effect algebra $E = [0, u]$. Without the loss of generality, we can assume that $V = \linspan(E)$; if $V \neq \linspan(E)$, then we can replace $V$ and $C$ with $V \cap \linspan(E)$ and $C \cap \linspan(E)$. $V = \linspan(E)$ also implies $V = \linspan(\cone(E))$, see Lemma \ref{lemma:cones-coneVsSpan}. It follows that $\cone(E)$ is generating, and so for every $v \in V$ there are $a,b \in E$ and $\lambda, \mu \in \Rp$ such that $v = \lambda a - \mu b$. We then have
\begin{equation}
v = \lambda a - \mu b \leq \lambda a \leq \lambda u,
\end{equation}
so $u$ is an order unit, and $(V, \leq, u)$ is an order unit space. One can also prove that $C = \cone(E)$.
\end{proof}
Thus we have showed that building an operational framework based on convex effect algebras, linear effect algebras and order unit spaces is equivalent. Now we will proceed to show that this is also equivalent to our framework based on state spaces. Hence one can, without the loss of generality, choose any of the possible starting points and obtain the same framework. We already know that given a state space $K$, $E(K)$ is a linear effect algebra, see Corollary \ref{coro:basic-connection-EKisLin}. We will now show that given a linear effect algebra $E$, one can construct a state space $S(E)$, such that $E$ is an effect algebra on $S(E)$, i.e., such that $E = E(S(E))$.

For simplicity, we will assume that the cone $C$ used to construct a linear effect algebra $[0,u]$ is closed and generating. If $C$ would not be generating, then we can without the loss of generality restrict to $\linspan(C)$. We assume that $C$ is closed for the sake of simplicity, but this assumption can also be operationally motivated as in Definition \ref{def:what-stateSpace}.
\begin{definition}
Let $V$ be real, finite-dimensional vector space, $C \subset V$ be convex, closed, pointed, and generating cone. Let $u \in C$ be an order unit and let $E = [0, u]$ be a linear effect algebra. Let $C^*$ be the dual cone to $C$, see Definition \ref{def:duals-dualCone}, and let
\begin{equation}
S(E) = \{ \psi \in C^* : \< \psi, u \> = 1 \}.
\end{equation}
We call $S(E)$ the \emph{state space} of $E$.
\end{definition}
Now we can treat $S(E)$ as a state space, we can construct $E(S(E))$, that is an effect algebra on $S(E)$. Using that $C^{**} = C$, see Proposition \ref{prop:duals-doubleDualCone}, one can show that
\begin{equation} \label{eq:basic-connection-ESE}
E(S(E)) = \{ a \in C : 0 \leq a \leq u \} = E.
\end{equation}
And so given a linear effect algebra $E$, we can construct the state space $S(E)$, but we can also use $S(E)$ to reconstruct $E$ using \eqref{eq:basic-connection-ESE}. Hence the framework one would get using linear effect algebras is equivalent to the framework we get using state spaces.

\subsection{Some useful results}
In this subsection we will present collection of simple results about state spaces, effect algebras and the underlying cones. All of these results are easy to prove and we invite the reader to try and do the proofs themselves.

\begin{lemma} \label{lemma:basic-results-subsets}
Let $V$ be a real, finite-dimensional vector space and let $K_A \subset V$ be a state space. Let $K_B \subset V$ be another state space such that $\linspan(K_B) = V$ and
\begin{equation}
K_B \subset K_A.
\end{equation}
Then
\begin{equation}
E(K_A) \subset E(K_B).
\end{equation}
\end{lemma}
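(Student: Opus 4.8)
The plan is to realise the claimed inclusion through restriction of affine functions. Since $K_B \subseteq K_A$, every affine function $f\colon K_A \to \RR$ restricts to an affine function $f|_{K_B}\colon K_B \to \RR$, and this defines a linear map $\rho\colon A(K_A) \to A(K_B)$, $f \mapsto f|_{K_B}$. The two things to prove are then: (i) $\rho$ sends $E(K_A)$ into $E(K_B)$, and (ii) $\rho$ is injective on $E(K_A)$; together these identify $E(K_A)$ (via $\rho$) with a subset of $E(K_B)$, which is exactly the meaning of $E(K_A) \subseteq E(K_B)$.

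Step (i) is immediate from Definition~\ref{def:basic-EA-onK}: if $f\colon K_A \to [0,1]$ is affine, then $0 \le f(x) \le 1$ holds for every $x \in K_B \subseteq K_A$ as well, and $f|_{K_B}$ is still affine, so $f|_{K_B} \in E(K_B)$. Step (ii) is the real content, and the only place the hypothesis $\linspan(K_B) = V$ enters. Because $A(K_A) = \linspan(E(K_A))$ by Proposition~\ref{prop:basic-EA-genAK} and $E(K_A)$ has nonempty interior in $A(K_A)$ (it contains a ball around the constant function $\frac{1}{2} 1_{K_A}$), injectivity of $\rho$ on $E(K_A)$ is equivalent to injectivity of $\rho$ on all of $A(K_A)$, i.e.\ to the statement that the only affine function on $K_A$ that vanishes on $K_B$ is the zero function. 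To prove this I would take $f \in A(K_A)$ with $f|_{K_B} = 0$, note that $\linspan(K_A) \supseteq \linspan(K_B) = V$ forces $\linspan(K_A) = V$ as well, extend $f$ to an affine function $\varphi + c$ on all of $V$ with $\varphi \in V^*$ and $c \in \RR$, so that $\varphi(x) + c = 0$ for every $x \in K_B$, and then argue that $\linspan(K_B) = V$ forces $\varphi + c$ to vanish identically on $V$ — equivalently, that $\aff(K_B) = \aff(K_A)$, so that $f$, vanishing on the affinely spanning subset $K_B$ of $\aff(K_A)$, vanishes on all of $\aff(K_A) \supseteq K_A$.

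Granting (i) and (ii), $\rho$ restricts to an injective, unit-preserving ($1_{K_A} \mapsto 1_{K_B}$), order-preserving affine embedding of $E(K_A)$ into $E(K_B)$, and identifying $E(K_A)$ with its image gives $E(K_A) \subseteq E(K_B)$. The main obstacle is step (ii): one must carefully separate the role of $\linspan(K_B)$ from that of $\aff(K_B)$ and verify that the spanning hypothesis genuinely pins down every affine function on $K_A$ — equivalently, that $K_A$ and $K_B$ have the same affine hull; everything else is routine.
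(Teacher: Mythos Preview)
Your approach --- restrict, check the $[0,1]$ bounds, verify injectivity --- is exactly the paper's, which compresses all of it to one line: ``$\linspan(K_B)=\linspan(K_A)$ implies $A(K_A)=A(K_B)$'', after which $x\in K_B\subset K_A$ immediately gives $0\le\langle x,f\rangle\le 1$. The $\aff$-versus-$\linspan$ worry you flag is real but the paper silently relies on its convention that state spaces sit in a hyperplane off the origin (the $\{1_K=1\}$ slice of $A(K)^*$), which together with $K_B\subset K_A$ forces $\aff(K_B)=\aff(K_A)$ once the linear spans agree; your interior-ball detour is unnecessary, since injectivity of the \emph{linear} map $\rho$ on $A(K_A)$ already gives injectivity on the subset $E(K_A)$.
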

\begin{proof}
Note that $\linspan(K_B) = \linspan(K_A)$ implies $A(K_A) = A(K_B)$. Let $f \in E(K_A)$ and let $x \in K_B$. From $K_B \subset K_A$ we get $x \in K_A$ and it follows that $0 \leq \<x, f\> \leq 1$ and so $f \in E(K_B)$.
\end{proof}

\begin{lemma}
Let $f, g \in A(K)^+$, then $f \geq g$ if and only if $1_K - g \geq 1_K - f$.
\end{lemma}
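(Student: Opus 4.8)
The plan is to reduce both sides of the claimed equivalence to the single condition $f - g \in A(K)^+$, using only the definition of the partial order $\geq$ on $A(K)$: recall that for $p, q \in A(K)$ one has $p \geq q$ exactly when $p - q \in A(K)^+$, equivalently when $p(x) \geq q(x)$ for all $x \in K$.

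First I would observe the algebraic identity $(1_K - g) - (1_K - f) = f - g$ in the vector space $A(K)$, where the $1_K$ terms simply cancel. Hence, by the definition of $\geq$, the inequality $1_K - g \geq 1_K - f$ holds if and only if $f - g \in A(K)^+$, which is precisely the statement $f \geq g$; this yields both implications at once. If one prefers the pointwise description, the same argument runs as follows: for each fixed $x \in K$ the real-number inequality $f(x) \geq g(x)$ is equivalent to $1 - g(x) \geq 1 - f(x)$, and since $1_K(x) = 1$ this reads $(1_K - g)(x) \geq (1_K - f)(x)$; quantifying over all $x \in K$ gives the equivalence.

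There is no real obstacle here: the lemma is an immediate consequence of the fact that $\geq$ is a vector space order on $A(K)$ (so that adding or subtracting the fixed element $1_K$ on both sides preserves the inequality) together with the trivial cancellation of $1_K$. I would also note in passing that the hypothesis $f, g \in A(K)^+$ is not actually used for the equivalence itself; it merely guarantees that $f$ and $g$ are of the form typically of interest, and when additionally $f, g \leq 1_K$ that $1_K - f$ and $1_K - g$ are again effects, cf.\ Lemma \ref{lemma:basic-EA-fPerp}.
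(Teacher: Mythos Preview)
Your proof is correct and follows essentially the same idea as the paper: both arguments exploit translation invariance of the order on $A(K)$ to reduce the equivalence to the identity $(1_K - g) - (1_K - f) = f - g$. The paper phrases this as ``$f \geq g$ iff $-g \geq -f$ iff $1_K - g \geq 1_K - f$'', which is the same computation in two steps, and your remark that the hypothesis $f,g \in A(K)^+$ is not actually used is a valid observation.
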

\begin{proof}
We have $f \geq g$ if and only if $-g \geq -f$ if and only if $1_K - g \geq 1_K - f$.
\end{proof}

\begin{lemma}
Let $f,g \in A(K)^+$ be such that $f \geq g$. If $\<x,f\> = 0$ for some $x \in K$, then also $\<x,g\> = 0$.
\end{lemma}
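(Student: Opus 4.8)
The plan is to use the two facts that (i) $x\in K$, viewed as a functional in $A(K)^{*+}$, is nonnegative on $A(K)^+$, and (ii) $f\ge g$ means precisely $f-g\in A(K)^+$. First I would invoke Lemma~\ref{lemma:basic-EA-geqDecomp} (or the definition of $\ge$ directly) to write $f = g + h$ with $h\in A(K)^+$. Applying the functional $x$ and using linearity of the pairing gives $\<x,f\> = \<x,g\> + \<x,h\>$.

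Next, since $x$ is a positive functional (for every $e\in A(K)^+$ we have $\<x,e\>=e(x)\ge 0$), both $\<x,g\>\ge 0$ and $\<x,h\>\ge 0$. Combined with the hypothesis $\<x,f\>=0$, the identity $0 = \<x,g\> + \<x,h\>$ forces $\<x,g\>=\<x,h\>=0$; in particular $\<x,g\>=0$, which is the claim.

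There is no real obstacle here: the only ingredients are the decomposition from $f\ge g$ and the positivity of the evaluation functional $x$, both already established earlier in the section. The one thing to be slightly careful about is to phrase it using the pairing notation $\<x,\cdot\>$ consistently (equivalently one may just write $g(x)\le f(x)=0$ and $g(x)\ge 0$), but either formulation is a one-line argument.
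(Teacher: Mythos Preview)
Your proof is correct and essentially identical to the paper's: both use that $f-g\in A(K)^+$ together with $g\in A(K)^+$ and the positivity of the evaluation at $x$ to sandwich $\<x,g\>$ between $0$ and $\<x,f\>=0$. The only cosmetic difference is that you name the difference $h=f-g$ via Lemma~\ref{lemma:basic-EA-geqDecomp}, while the paper works directly with $f-g\geq 0$.
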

\begin{proof}
From $f \geq g$ we get $f - g \geq 0$ and we must have $\< x, f-g \> \geq 0$. Using also $g \geq 0$ we get $\<x, f\> \geq \<x, g\> \geq 0$. Since $\<x,f\> = 0$, we get $0 \geq \<x,g\> \geq 0$ and so $\<x,g\> = 0$.
\end{proof}

\begin{lemma}
Let $f,g \in E(K)$ be such that $f \geq g$. If $\<x, g\> = 1$ for some $x \in K$, then $\<x, f\> = 1$.
\end{lemma}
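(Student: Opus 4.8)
The plan is to chain the order relation with the upper bound defining effects. First I would use $f \geq g$, which by definition of the ordering on $A(K)$ means $f(y) \geq g(y)$ for all $y \in K$; evaluating at the specific point $x$ and using the dual pairing notation $\langle x, f \rangle = f(x)$, this gives $\langle x, f \rangle \geq \langle x, g \rangle = 1$.

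Next I would invoke that $f \in E(K)$. By Proposition \ref{prop:basic-EA-interval} (equivalently, straight from Definition \ref{def:basic-EA-onK}), $f \leq 1_K$, so $\langle x, f \rangle = f(x) \leq 1_K(x) = 1$. Combining the two inequalities yields $1 \leq \langle x, f \rangle \leq 1$, hence $\langle x, f \rangle = 1$.

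There is essentially no obstacle here: the only thing to be careful about is keeping the dual-pairing notation consistent with the pointwise evaluation $f(x)$, but these coincide by the embedding $K \hookrightarrow A(K)^{*+}$ established before Theorem \ref{thm:basic-dual-stateSpace}. A one-line proof of the form ``$1 = \langle x, g \rangle \leq \langle x, f \rangle \leq \langle x, 1_K \rangle = 1$'' suffices, where the first inequality is $f \geq g$ and the second is $f \leq 1_K$.
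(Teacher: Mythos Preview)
Your proof is correct and follows exactly the same approach as the paper's: sandwich $\langle x,f\rangle$ between $\langle x,g\rangle = 1$ (from $f\geq g$) and $1$ (from $f\leq 1_K$ since $f\in E(K)$).
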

\begin{proof}
Let $x \in K$, then from $f \geq g$ and $f \in E(K)$ we get $1 \geq \<x, f\> \geq \<x, g\>$. If $\<x,g\> = 1$, then we have $1 \geq \<x, f\> \geq 1$ so we get $\<x,f\> = 1$.
\end{proof}

\begin{lemma} \label{lemma:basic-results-base}
For every $\psi \in A(K)^{*+}$ there is some $x \in K$ and $\lambda \in \Rp$ such that $\psi = \lambda x$.
\end{lemma}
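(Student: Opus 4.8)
The plan is to use the order-unit property of $1_K$ together with the definition of $K$ as the base of the cone $A(K)^{*+}$ given in Theorem \ref{thm:basic-dual-stateSpace}. Given $\psi \in A(K)^{*+}$, the natural candidate is $\lambda = \<\psi, 1_K\>$ and $x = \frac{1}{\lambda}\psi$, so the whole argument reduces to two things: checking $\lambda \geq 0$, and handling the degenerate case $\lambda = 0$ separately.

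First I would observe that $\lambda = \<\psi, 1_K\> \geq 0$ simply because $1_K \in A(K)^+$ and $\psi$ is by definition non-negative on $A(K)^+$. If $\lambda > 0$, set $x = \frac{1}{\lambda}\psi$; then $x \in A(K)^{*+}$ since $A(K)^{*+}$ is a cone (Lemma \ref{lemma:basic-dual-AKstarPointedGenerating}), and $\<x, 1_K\> = \frac{1}{\lambda}\<\psi, 1_K\> = 1$, so $x \in \{\varphi \in A(K)^{*+} : \<\varphi, 1_K\> = 1\} = K$ by Theorem \ref{thm:basic-dual-stateSpace}. Then $\psi = \lambda x$ with $\lambda \in \Rp$, as required.

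The remaining case is $\lambda = \<\psi, 1_K\> = 0$. Here I would argue that $\psi = 0$, so that one may take any $x \in K$ and $\lambda = 0$. The key point is that $1_K$ is an order unit of $A(K)^+$: for any $f \in A(K)$ there is $\mu \in \Rp$ with $-\mu 1_K \leq f \leq \mu 1_K$, hence $\mu 1_K - f \in A(K)^+$ and $\mu 1_K + f \in A(K)^+$. Applying $\psi$ (which is positive on $A(K)^+$) gives $\mu\<\psi,1_K\> - \<\psi,f\> \geq 0$ and $\mu\<\psi,1_K\> + \<\psi,f\> \geq 0$; since $\<\psi,1_K\> = 0$ this forces $\<\psi, f\> = 0$ for every $f \in A(K)$, i.e.\ $\psi = 0$.

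The only mild subtlety — not really an obstacle — is making sure $K$ is nonempty in the degenerate case so that "take any $x \in K$" is legitimate; this is immediate since $A(K)^{*+}$ is generating and contains nonzero elements with positive pairing against the order unit $1_K$, or one can simply note that any state space is nonempty by construction. Everything else is a direct application of the cone structure and Theorem \ref{thm:basic-dual-stateSpace}, so there is no hard step here.
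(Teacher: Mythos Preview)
Your proof is correct and follows essentially the same approach as the paper: set $\lambda = \langle \psi, 1_K\rangle$ and $x = \psi/\lambda$, then invoke Theorem~\ref{thm:basic-dual-stateSpace}. In fact you are more careful than the paper, which simply writes $x = \frac{1}{\langle\psi,1_K\rangle}\psi$ without separately treating the degenerate case $\langle\psi,1_K\rangle = 0$; your order-unit argument showing that $\psi = 0$ in that case fills this small gap cleanly.
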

\begin{proof}
Let $\psi \in A(K)^{*+}$ and let $x = \frac{1}{\<\psi,1_K\>} \psi$. Then we have $x \in A(K)^{*+}$ and $\<x, 1_K\>$ and so according to Theorem \ref{thm:basic-dual-stateSpace} we have $x \in K$. The result follows from $\psi = \<\psi, 1_K\> x$.
\end{proof}
The interpretation of Lemma \ref{lemma:basic-results-base} is that $K$ is a base of the cone $A(K)^{*+}$. Let $C$ be a cone, then a base of $C$ is a convex set $B \subset C$ such that for every $v \in C$ there are unique $x \in B$ and $\lambda \in \Rp$ such that $v = \lambda x$.

\begin{lemma} \label{lemma:basic-results-span}
For every $\psi \in A(K)^*$ there are $x,y \in K$ and $\lambda, \mu \in \Rp$ such that $\psi = \lambda x - \mu y$.
\end{lemma}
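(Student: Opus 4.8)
The statement is a direct consequence of two facts already established: that $A(K)^{*+}$ is a \emph{generating} cone (Lemma~\ref{lemma:basic-dual-AKstarPointedGenerating}), and that $K$ is a base of $A(K)^{*+}$ (Lemma~\ref{lemma:basic-results-base}). The plan is to first decompose an arbitrary $\psi \in A(K)^*$ as a difference of two positive functionals, and then normalize each summand using the base property.

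First I would invoke that $A(K)^{*+}$ is generating, which by definition means $\linspan(A(K)^{*+}) = A(K)^*$. Applying Lemma~\ref{lemma:cones-coneVsSpan} from \ref{appendix:cones} (the span of a convex cone $C$ equals $C - C$), it follows that every $\psi \in A(K)^*$ can be written as $\psi = \varphi_1 - \varphi_2$ for some $\varphi_1, \varphi_2 \in A(K)^{*+}$.

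Next I would apply Lemma~\ref{lemma:basic-results-base} separately to $\varphi_1$ and $\varphi_2$: there exist $x, y \in K$ and $\lambda, \mu \in \Rp$ such that $\varphi_1 = \lambda x$ and $\varphi_2 = \mu y$. (In the degenerate case $\varphi_i = 0$ one simply picks an arbitrary state and scalar $0$; this is the only edge case to mention.) Substituting gives $\psi = \lambda x - \mu y$, as required.

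There is no genuine obstacle here — the result is essentially a restatement of ``$A(K)^{*+}$ is generating with base $K$,'' and every ingredient has already been proven. The only thing worth being careful about is keeping the two uses of Lemma~\ref{lemma:basic-results-base} independent and handling the trivial $\varphi_i = 0$ case so that $x, y$ are always honest elements of $K$.
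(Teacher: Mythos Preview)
Your proposal is correct and essentially identical to the paper's proof: the paper also invokes Lemma~\ref{lemma:basic-dual-AKstarPointedGenerating} to write $\psi = \varphi - \xi$ with $\varphi,\xi \in A(K)^{*+}$, then applies Lemma~\ref{lemma:basic-results-base} to each piece. The only cosmetic difference is that you explicitly flag the $\varphi_i = 0$ edge case, which the paper leaves implicit.
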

\begin{proof}
We know that $A(K)^{*+}$ is generating, see Lemma \ref{lemma:basic-dual-AKstarPointedGenerating}, so there are $\varphi, \xi \in A(K)^{*+}$ such that $\psi = \varphi - \xi$. According to Lemma \ref{lemma:basic-results-base} there are $x,y \in K$ and $\lambda, \mu \in \Rp$ such that $\varphi = \lambda x$, $\xi = \mu y$. The result follows.
\end{proof}

\begin{lemma}
Let $f \in A(K)$, then $f \in A(K)^+$ if and only if $\< \psi, f \> \geq 0$ for every $\psi \in A(K)^{*+}$.
\end{lemma}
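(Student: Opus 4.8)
The plan is to read this statement as the specialization to our setting of the elementary fact that a vector lies in a cone iff it pairs nonnegatively with everything in the dual cone; one implication is immediate and the other follows from the embedding of $K$ into $A(K)^{*+}$ that was set up earlier in this subsection.

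First I would dispatch the forward implication, which is nothing more than the definition of the dual cone: if $f \in A(K)^+$, then since $A(K)^{*+} = (A(K)^+)^*$ we have $\< \psi, f \> \geq 0$ for every $\psi \in A(K)^{*+}$ by construction.

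For the converse, assume $\< \psi, f \> \geq 0$ for all $\psi \in A(K)^{*+}$. I would recall the embedding constructed just before Theorem \ref{thm:basic-dual-stateSpace}: every $x \in K$ defines an element of $A(K)^{*+}$ via $\< x, g \> = g(x)$. Applying the hypothesis to $\psi = x$ gives $f(x) = \< x, f \> \geq 0$ for every $x \in K$, which is exactly the assertion $f \in A(K)^+$. Equivalently, one may invoke Lemma \ref{lemma:basic-results-base}, by which every $\psi \in A(K)^{*+}$ has the form $\lambda x$ with $x \in K$ and $\lambda \in \Rp$, so that the hypothesis collapses to $\lambda f(x) \geq 0$ for all such pairs, hence to $f \geq 0$ on $K$.

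There is essentially no obstacle; the only thing requiring a little care is the standing identification $A(K)^{**} = A(K)$ from Proposition \ref{prop:duals-doubleDual}, which is what lets us view $f$ at once as an element of $A(K)$ and as a linear functional on $A(K)^*$, together with the fact that $K \hookrightarrow A(K)^{*+}$ is the embedding used throughout this subsection. (If one preferred not to mention the embedding explicitly, the converse could instead be deduced from the bipolar theorem $A(K)^+ = (A(K)^{*+})^*$ using that $A(K)^+$ is closed by Lemma \ref{lemma:basic-EA-AKprop}, but the direct argument above is shorter.)
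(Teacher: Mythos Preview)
Your proof is correct and matches the paper's own argument essentially verbatim: the paper offers exactly the two routes you describe, namely the bipolar identity $(A(K)^{*+})^* = A(K)^+$ from Proposition~\ref{prop:duals-doubleDualCone}, and the reduction via Lemma~\ref{lemma:basic-results-base} to $\langle x, f\rangle \geq 0$ for all $x \in K$. Your forward implication and your careful note about the identification $A(K)^{**} = A(K)$ are, if anything, slightly more explicit than the paper's version.
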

\begin{proof}
One can either use the result of Proposition \ref{prop:duals-doubleDualCone} that $(A(K)^{*+})^* = A(K)^+$, i.e., that the dual cone of $A(K)^{*+}$ is again $A(K)^+$. Alternatively, using Lemma \ref{lemma:basic-results-base} we get that $\<\psi, f\> \geq 0$ for all $\psi \in A(K)^{*+}$ if and only if $\<x, f\> \geq 0$ for all $x \in K$. But if $\<x, f\> \geq 0$ for all $x \in K$, then $f \in A(K)^+$ by definition.
\end{proof}

\subsection{Order unit norm, base norm and discrimination tasks}
We have been so far avoiding specifying the topology on $K$, $A(K)$ and $A(K)^*$. This was not a burning issue as we are working with only finite-dimensional vector spaces. We will now introduce the topology by introducing norms to $A(K)$ and $A(K)^*$. We will assume that the reader is familiar with some basic facts about normed vector spaces, if not, then we recommend \cite{NaylorSell-operators}. We will start by introducing the norm to $A(K)$ since it is just the supremum norm for the functions.
\begin{proposition}
Let $K$ be a state space, $f \in A(K)$ and define
\begin{equation}
\norm{f} = \sup_{x \in K} \abs{\<x, f\>},
\end{equation}
then $\norm{\cdot}$ is a norm on $A(K)$.
\end{proposition}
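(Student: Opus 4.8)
The plan is to verify the three defining properties of a norm — finiteness and positive definiteness, absolute homogeneity, and the triangle inequality — directly from the definition, the linearity of the pairing $\<x,f\>=f(x)$ in $f$, and the compactness of $K$. The only property that requires more than a one-line manipulation is the fact that $\norm{f}$ is actually finite (so that it is well defined as a real number), and even this is immediate once one recalls that every $f\in A(K)$ is an affine, hence continuous, function on the compact set $K$.

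First I would note that for fixed $f\in A(K)$ the map $x\mapsto\abs{\<x,f\>}=\abs{f(x)}$ is continuous on $K$; since $K$ is compact by Proposition~\ref{prop:what-stateSpace-compact}, this map attains a maximum, so $\norm{f}=\sup_{x\in K}\abs{\<x,f\>}$ is a finite non-negative number (indeed the supremum is a maximum). Clearly $\norm{0}=0$. Conversely, if $\norm{f}=0$ then $\abs{\<x,f\>}=0$, i.e. $f(x)=0$, for every $x\in K$; since elements of $A(K)$ are by definition functions on $K$, this forces $f=0$. This gives positive definiteness.

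For absolute homogeneity, let $\lambda\in\RR$; then $\norm{\lambda f}=\sup_{x\in K}\abs{\<x,\lambda f\>}=\sup_{x\in K}\abs{\lambda}\,\abs{\<x,f\>}=\abs{\lambda}\sup_{x\in K}\abs{\<x,f\>}=\abs{\lambda}\norm{f}$, using that $\<x,\cdot\>$ is linear and that a non-negative constant factors out of a supremum. For the triangle inequality, fix $f,g\in A(K)$; for every $x\in K$ we have $\abs{\<x,f+g\>}=\abs{\<x,f\>+\<x,g\>}\le\abs{\<x,f\>}+\abs{\<x,g\>}\le\norm{f}+\norm{g}$, and taking the supremum over $x\in K$ yields $\norm{f+g}\le\norm{f}+\norm{g}$.

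Since all three axioms hold, $\norm{\cdot}$ is a norm on $A(K)$. The main (and essentially only) obstacle worth flagging is the finiteness of the supremum, which is handled by compactness of $K$ together with continuity of affine functions on a finite-dimensional space; everything else is a routine consequence of the linearity of the pairing and elementary properties of suprema.
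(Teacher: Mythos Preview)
Your proof is correct and follows essentially the same approach as the paper: you verify positive definiteness, absolute homogeneity, and the triangle inequality directly from the definition and the linearity of the pairing. The only addition is your explicit remark on finiteness of the supremum via compactness of $K$, which the paper leaves implicit.
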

\begin{proof}
Assume that $\norm{f} = 0$, then we have $\<x,f\> = 0$ for all $x \in K$ since $0 \leq \abs{\<x,f\>} \leq \norm{f} = 0$ and $f = 0$ follows. Let $\alpha \in \RR$, then clearly
\begin{equation}
\norm{\alpha f} = \sup_{x \in K} \abs{\alpha \<x, f\>} = \abs{\alpha} \sup_{x \in K} \abs{\<x, f\>} = \abs{\alpha} \norm{f}
\end{equation}
and so $\norm{\cdot}$ is homogeneous. Finally let $f,g \in A(K)$, then we have
\begin{equation}
\norm{f+g} = \sup_{x \in K} \abs{\<x, f\> + \<x, g\>} \leq \sup_{x \in K} \abs{\<x, f\>} + \sup_{y \in K} \abs{\<y, g\>} = \norm{f} + \norm{g}
\end{equation}
and so also triangle inequality holds.
\end{proof}
The following gives an equivalent expression for the norm on $A(K)$.
\begin{proposition} \label{prop:basic-norms-OUnormfromOU}
Let $f \in A(K)$, then
\begin{equation} \label{eq:basic-norms-OUnorm}
\norm{f} = \inf \{\lambda \in \Rp : -\lambda 1_K \leq f \leq \lambda 1_K \}.
\end{equation}
\end{proposition}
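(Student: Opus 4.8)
The plan is to show that the set over which the infimum is taken in \eqref{eq:basic-norms-OUnorm} is exactly the ray $[\norm{f}, \infty)$, so that its infimum is $\norm{f}$. First I would unpack the order condition: by the definition of $\geq$ on $A(K)$, the inequality $-\lambda 1_K \leq f \leq \lambda 1_K$ holds if and only if $(f + \lambda 1_K)(x) \geq 0$ and $(\lambda 1_K - f)(x) \geq 0$ for all $x \in K$, i.e. if and only if $-\lambda \leq \<x, f\> \leq \lambda$ for every $x \in K$. Since $\lambda \in \Rp$, this is equivalent to $\abs{\<x, f\>} \leq \lambda$ for all $x \in K$, which by definition of the supremum norm is equivalent to $\norm{f} = \sup_{x \in K} \abs{\<x, f\>} \leq \lambda$.

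Consequently $\{ \lambda \in \Rp : -\lambda 1_K \leq f \leq \lambda 1_K \} = \{ \lambda \in \Rp : \lambda \geq \norm{f} \} = [\norm{f}, \infty)$, using that $\norm{f} \geq 0$ so this ray lies in $\Rp$. The infimum of this set is $\norm{f}$, which is the claimed identity. I would also note at the outset that $\norm{f} < \infty$ because $K$ is compact and $\<\cdot, f\>$ is continuous, so the set in question is nonempty and the infimum is well defined (and in fact attained).

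I do not expect any real obstacle here; the statement is essentially a restatement of the definition of the supremum norm in terms of the order unit $1_K$, and the only point requiring a moment's care is the translation of the two-sided order inequality into the absolute-value bound, together with the remark that compactness of $K$ guarantees finiteness of $\norm{f}$.
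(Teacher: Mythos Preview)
Your proof is correct and follows essentially the same approach as the paper: both unpack the two-sided order inequality $-\lambda 1_K \leq f \leq \lambda 1_K$ pointwise as $\abs{\<x,f\>} \leq \lambda$ for all $x \in K$, and then identify this with $\norm{f} \leq \lambda$. The paper presents the argument as two separate inequalities ($\norm{f} \leq \inf$ and $\inf \leq \norm{f}$), whereas you establish the biconditional directly and read off the set as $[\norm{f},\infty)$; the content is the same.
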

\begin{proof}
Let $f \in A(K)$ and $\lambda \in \Rp$ be such that $-\lambda 1_K \leq f \leq \lambda 1_K$. Then for every $x \in K$ we have $-\lambda \leq \<x,f\> \leq \lambda$, which implies $\abs{\<x,f\>} \leq \lambda$ and it follows that $\norm{f} \leq \lambda$, so we get $\norm{f} \leq \inf \{\lambda \in \Rp : -\lambda 1_K \leq f \leq \lambda 1_K \}$. Let $x \in K$, then by definition of the norm $\norm{\cdot}$ we have $\abs{\<x,f\>} \leq \norm{f}$, which is equivalent to
\begin{equation}
- \norm{f} \<x,1_K\> \leq \<x,f\> \leq \norm{f} \<x, 1_K\>.
\end{equation}
Since this holds for all $x \in K$, we get $- \norm{f} 1_K \leq f \leq \norm{f} 1_K$ and so we must have $\inf \{\lambda \in \Rp : -\lambda 1_K \leq f \leq \lambda 1_K \} \leq \norm{f}$.
\end{proof}
In every ordered vector space with an order unit one can use the expression on the right hand side of \eqref{eq:basic-norms-OUnorm} to define an order unit norm. Observe that the order unit norm is defined only using the geometry of the ordered vector space. Equation \eqref{eq:basic-norms-OUnorm} shows that the supremum norm $\norm{\cdot}$ coincides with the order unit norm, further strengthening the connection between geometry of $A(K)$ and its relation to $K$.

The following is an immediate result.
\begin{lemma}
Let $f \in A(K)^+$, then $f \in E(K)$ if and only if $\norm{f} \leq 1$.
\end{lemma}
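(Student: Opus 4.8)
The plan is to prove the two directions of the equivalence separately, relying on Proposition \ref{prop:basic-EA-interval}, which identifies $E(K)$ with the order interval $[0, 1_K]$ in $A(K)$, and on Proposition \ref{prop:basic-norms-OUnormfromOU}, which expresses $\norm{f}$ as the order unit norm $\inf\{\lambda \in \Rp : -\lambda 1_K \leq f \leq \lambda 1_K\}$. The hypothesis $f \in A(K)^+$ is what makes the statement work, since it gives us the lower bound $f \geq 0$ for free; the content of the lemma is really about the upper bound.

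First I would assume $f \in E(K)$. By Proposition \ref{prop:basic-EA-interval} this means $0 \leq f \leq 1_K$, hence in particular $-1_K \leq 0 \leq f \leq 1_K$, so $\lambda = 1$ belongs to the set $\{\lambda \in \Rp : -\lambda 1_K \leq f \leq \lambda 1_K\}$ and therefore $\norm{f} \leq 1$ by Proposition \ref{prop:basic-norms-OUnormfromOU}. Conversely, assume $f \in A(K)^+$ with $\norm{f} \leq 1$. Then $0 \leq f$ already holds by the assumption $f \in A(K)^+$. For the upper bound, Proposition \ref{prop:basic-norms-OUnormfromOU} gives $-\norm{f} 1_K \leq f \leq \norm{f} 1_K$ (this is exactly the inequality established at the end of that proof, valid since the infimum is attained in the sense that $\norm{f}$ itself satisfies the defining inequality), and since $\norm{f} \leq 1$ we get $f \leq \norm{f} 1_K \leq 1_K$. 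Thus $0 \leq f \leq 1_K$, so $f \in E(K)$ again by Proposition \ref{prop:basic-EA-interval}.

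There is no real obstacle here — the statement is essentially immediate once the order unit norm characterization is in hand. The only point requiring a moment's care is invoking that $-\norm{f} 1_K \leq f \leq \norm{f} 1_K$ actually holds (i.e. that the infimum defining the norm is realized by $\norm{f}$); but this was shown inside the proof of Proposition \ref{prop:basic-norms-OUnormfromOU}, so I would simply cite it. Alternatively, one could argue pointwise: $\norm{f} \leq 1$ means $\abs{\<x,f\>} \leq 1$ for all $x \in K$, hence $\<x,f\> \leq 1$ for all $x \in K$, which is precisely $f \leq 1_K$, and combined with $f \geq 0$ we are done. I would present the short pointwise version as the main line and perhaps remark on the order-theoretic phrasing.
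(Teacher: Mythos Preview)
Your proposal is correct, and the pointwise argument you say you would present as the main line is exactly the paper's proof: from $f \in E(K)$ one has $0 \leq \<x,f\> \leq 1$ for all $x \in K$ so $\norm{f} \leq 1$, and conversely $\norm{f} \leq 1$ gives $\<x,f\> \leq 1$ for all $x$, which together with $f \in A(K)^+$ yields $0 \leq \<x,f\> \leq 1$ and hence $f \in E(K)$. The order-theoretic phrasing via Proposition~\ref{prop:basic-norms-OUnormfromOU} that you also sketch is equally valid and amounts to the same argument rewritten in the language of the order unit norm.
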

\begin{proof}
If $f \in E(K)$, then $0 \leq \<x,f\> \leq 1$ for all $x \in K$ and so $\norm{f} \leq 1$. If $\norm{f} \leq 1$, then for every $x \in K$ we have $-1 \leq \<f,x\> \leq 1$, but since $f \in A(K)^+$ we have $0 \leq \<x,f\>$ and so we get $0 \leq \<x,f\> \leq 1$, which implies $f \in E(K)$.
\end{proof}

Since $A(K)$ is now a normed vector space, we can simply introduce the norm to $A(K)^*$ by using the standard norm for linear functionals.
\begin{proposition} \label{prop:basic-norms-baseNormDef}
Let $K$ be a state space, $\psi \in A(K)^*$ and define
\begin{equation}
\norm{\psi} = \sup_{f \in A(K), \norm{f} \leq 1} \abs{\<\psi, f\>},
\end{equation}
then $\norm{\cdot}$ is a norm on $A(K)^*$.
\end{proposition}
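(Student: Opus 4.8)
The plan is to verify directly the three defining properties of a norm --- homogeneity, the triangle inequality, and positive definiteness --- after first checking that the supremum defining $\norm{\psi}$ is finite, so that $\norm{\cdot}$ is genuinely real-valued. This is the standard construction of the dual norm of a bounded linear functional, so none of the steps should be deep; the only point requiring a little care is the finiteness of the supremum, and, to a lesser extent, the implication $\norm{\psi} = 0 \Rightarrow \psi = 0$.

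First I would establish finiteness. By Lemma~\ref{lemma:basic-results-span} we may write any $\psi \in A(K)^*$ as $\psi = \lambda x - \mu y$ with $x, y \in K$ and $\lambda, \mu \in \Rp$. For any $f \in A(K)$ with $\norm{f} \leq 1$, the definition of the norm on $A(K)$ gives $\abs{\<x,f\>} \leq \norm{f} \leq 1$ and likewise $\abs{\<y,f\>} \leq 1$, hence $\abs{\<\psi,f\>} \leq \lambda + \mu$; taking the supremum over such $f$ yields $\norm{\psi} \leq \lambda + \mu < \infty$. (Alternatively one can note that $\{ f \in A(K) : \norm{f} \leq 1 \}$ is the closed unit ball of a finite-dimensional normed space, hence compact, and $f \mapsto \abs{\<\psi, f\>}$ is continuous, so the supremum is in fact attained.)

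Next, homogeneity: for $\alpha \in \RR$ one has $\abs{\<\alpha\psi, f\>} = \abs{\alpha}\,\abs{\<\psi,f\>}$, and pulling the constant $\abs{\alpha}$ out of the supremum gives $\norm{\alpha\psi} = \abs{\alpha}\norm{\psi}$. For the triangle inequality, given $\psi, \varphi \in A(K)^*$ and $f \in A(K)$ with $\norm{f} \leq 1$, linearity of the pairing and the ordinary triangle inequality on $\RR$ give $\abs{\<\psi + \varphi, f\>} \leq \abs{\<\psi,f\>} + \abs{\<\varphi,f\>} \leq \norm{\psi} + \norm{\varphi}$, and taking the supremum over $f$ yields $\norm{\psi+\varphi} \leq \norm{\psi} + \norm{\varphi}$.

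Finally, positive definiteness. Clearly $\norm{\psi} \geq 0$. Suppose $\norm{\psi} = 0$; then $\<\psi, f\> = 0$ for every $f$ with $\norm{f} \leq 1$. Given an arbitrary nonzero $g \in A(K)$, the element $g / \norm{g}$ has norm $1$, so by homogeneity $\<\psi, g\> = \norm{g} \<\psi, g/\norm{g}\> = 0$; since $g$ was arbitrary, $\psi$ vanishes on all of $A(K)$ and therefore $\psi = 0$. The main obstacle, such as it is, is just the bookkeeping needed to see that the supremum is finite --- everything else is a routine transcription of the standard dual-norm argument.
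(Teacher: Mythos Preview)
Your proof is correct and follows essentially the same route as the paper's own proof: a direct verification of homogeneity, subadditivity, and positive definiteness via the standard dual-norm argument. The only difference is that you additionally verify finiteness of the supremum (which the paper's proof leaves implicit) and spell out the scaling step in the positive-definiteness argument a bit more carefully.
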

\begin{proof}
Let $\psi \in A(K)^*$ and let $\norm{\psi} = 0$, then we have $\<\psi, f\> = 0$ for all $f \in A(K)$ and so $\psi = 0$. Let $\alpha \in \RR$, then we have
\begin{equation}
\norm{\alpha \psi} = \sup_{f \in A(K), \norm{f} \leq 1} \abs{\alpha \<\psi,f\>} = \abs{\alpha} \sup_{f \in A(K), \norm{f} \leq 1} \abs{\<\psi,f\>} = \abs{\alpha} \norm{\psi}
\end{equation}
and so $\norm{\cdot}$ is homogeneous. Finally let $\psi, \varphi \in A(K)^*$, then we have
\begin{equation}
\norm{\psi + \varphi} = \sup_{f \in A(K), \norm{f} \leq 1} \abs{\<\psi+\varphi,f\>} \leq \sup_{f \in A(K), \norm{f} \leq 1} \abs{\<\psi,f\>} + \sup_{f \in A(K), \norm{f} \leq 1} \abs{\<\varphi,f\>} = \norm{\psi} + \norm{\varphi}
\end{equation}
and so $\norm{\cdot}$ is subadditive.
\end{proof}

\begin{lemma} \label{lemma:basic-norms-baseNormEAsup}
Let $\psi \in A(K)^*$, then we have
\begin{equation}
\norm{\psi} = \sup_{f \in E(K)} \abs{\<\psi, 2f - 1\>}.
\end{equation}
\end{lemma}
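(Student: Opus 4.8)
The plan is to express the norm $\norm{\psi}$ from Proposition~\ref{prop:basic-norms-baseNormDef} by replacing the supremum over the unit ball of $A(K)$ with a supremum over effects, using the bijection between the unit ball of the order unit norm and the set $\{2f - 1_K : f \in E(K)\}$. The key observation is that by Proposition~\ref{prop:basic-norms-OUnormfromOU} we have $\norm{g} \leq 1$ if and only if $-1_K \leq g \leq 1_K$, and this interval of elements is precisely the affine image of the effect algebra interval $[0, 1_K]$ under the map $f \mapsto 2f - 1_K$.

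First I would show that $\{ g \in A(K) : \norm{g} \leq 1 \} = \{ 2f - 1_K : f \in E(K) \}$. For the inclusion ``$\supseteq$'': if $f \in E(K)$ then $0 \leq f \leq 1_K$ by Proposition~\ref{prop:basic-EA-interval}, hence $-1_K \leq 2f - 1_K \leq 1_K$, which by Proposition~\ref{prop:basic-norms-OUnormfromOU} (applied with the order unit norm, which coincides with $\norm{\cdot}$) gives $\norm{2f - 1_K} \leq 1$. For ``$\subseteq$'': if $\norm{g} \leq 1$ then $-1_K \leq g \leq 1_K$, so setting $f = \frac{1}{2}(g + 1_K)$ we get $0 \leq f \leq 1_K$, i.e. $f \in E(K)$, and $g = 2f - 1_K$.

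Once this set equality is established, the result is immediate: substituting into the definition of $\norm{\psi}$ gives
\begin{equation*}
\norm{\psi} = \sup_{g \in A(K), \norm{g} \leq 1} \abs{\<\psi, g\>} = \sup_{f \in E(K)} \abs{\<\psi, 2f - 1_K\>},
\end{equation*}
where I would write $1$ for $1_K$ if the statement's notation demands it, matching the displayed claim. I do not anticipate a genuine obstacle here; the only point requiring mild care is making sure that the characterization $\norm{g} \leq 1 \iff -1_K \leq g \leq 1_K$ is exactly what Proposition~\ref{prop:basic-norms-OUnormfromOU} delivers (it expresses $\norm{g}$ as the infimum of $\lambda$ with $-\lambda 1_K \leq g \leq \lambda 1_K$, so $\norm{g} \leq 1$ follows from feasibility of $\lambda = 1$, and conversely $\norm{g} \leq 1$ forces $-1_K \leq g \leq 1_K$ since the bound $-\norm{g}1_K \leq g \leq \norm{g}1_K$ was shown in that proof). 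The argument is otherwise a one-line change of variables.
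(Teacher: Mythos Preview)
Your proposal is correct and follows essentially the same approach as the paper: both use Proposition~\ref{prop:basic-norms-OUnormfromOU} to identify the unit ball $\{g \in A(K) : \norm{g} \leq 1\}$ with $\{2f - 1_K : f \in E(K)\}$ via the change of variables $f = \tfrac{1}{2}(g + 1_K)$, and then substitute directly into the defining supremum for $\norm{\psi}$. If anything, your version is slightly more complete in that you spell out both inclusions of the set equality, whereas the paper only writes out the ``$\subseteq$'' direction explicitly.
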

\begin{proof}
Let $g \in A(K)$ be such that $\norm{g} \leq 1$. According to Proposition \ref{prop:basic-norms-OUnormfromOU} we have
\begin{equation} \label{eq:basic-norms-baseNormEAsup-gBounds}
-1_K \leq g \leq 1_K
\end{equation}
Let now $f = \dfrac{1}{2} (1_K + g)$, then clearly $f \in A(K)$. Moreover from \eqref{eq:basic-norms-baseNormEAsup-gBounds} we get
\begin{equation}
0 \leq \dfrac{1}{2} (1_K + g) \leq 1_K
\end{equation}
and so $f \in E(K)$. Therefore every $g \in A(K)$ such that $\norm{g} \leq 1$ can be written as $g = 2f - 1_K$ where $f \in E(K)$. We get
\begin{equation}
\norm{\psi} = \sup_{g \in A(K), \norm{g} \leq 1} \abs{\<\psi, g\>} = \sup_{f \in E(K)} \abs{\<\psi, 2f - 1_K\>}
\end{equation}
which is the desired result.
\end{proof}

Also the norm on $A(K)^*$ has an equivalent geometrical expression.
\begin{proposition}
Let $\psi \in A(K)^*$, then
\begin{equation} \label{eq:basic-norms-baseNorm}
\norm{\psi} = \inf_{\lambda, \mu \in \Rp} \{ \lambda + \mu : \psi = \lambda x - \mu y, \, x,y \in K \}
\end{equation}
\end{proposition}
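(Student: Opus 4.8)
The plan is to prove the two inequalities between $\norm{\psi}$ and the quantity $r(\psi) := \inf\{\lambda+\mu : \psi = \lambda x - \mu y,\ x,y\in K,\ \lambda,\mu\in\Rp\}$ separately. First note that Lemma~\ref{lemma:basic-results-span} guarantees that decompositions of the required form exist, so $r(\psi)$ is a well-defined nonnegative real number.

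For the inequality $\norm{\psi}\le r(\psi)$ I would argue directly. Fix any admissible decomposition $\psi=\lambda x-\mu y$ and any $g\in A(K)$ with $\norm{g}\le1$. By Proposition~\ref{prop:basic-norms-OUnormfromOU} this is equivalent to $-1_K\le g\le1_K$, so pairing with $x,y\in K$ and using $\<x,1_K\>=\<y,1_K\>=1$ gives $\abs{\<x,g\>}\le1$ and $\abs{\<y,g\>}\le1$; hence $\abs{\<\psi,g\>}\le\lambda\abs{\<x,g\>}+\mu\abs{\<y,g\>}\le\lambda+\mu$. Taking the supremum over such $g$ gives $\norm{\psi}\le\lambda+\mu$, and taking the infimum over decompositions gives $\norm{\psi}\le r(\psi)$.

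For the reverse inequality $r(\psi)\le\norm{\psi}$ I would pass to a geometric picture: let $D:=\conv(K\cup(-K))$. Any point of $tD$ (for $t\ge0$) can be written as $\lambda x-\mu y$ with $x,y\in K$ and $\lambda+\mu=t$ — collapse the convex combination of points of $K$ into a single point of $K$, likewise for $-K$, using convexity of $K$ — and conversely every such $\lambda x - \mu y$ lies in $tD$. So it suffices to show $\psi\in\norm{\psi}\cdot D$. Suppose not; the case $\psi=0$ is trivial since $0\in D$, so assume $\psi\ne0$. The set $\norm{\psi}\cdot D$ is convex and compact (the convex hull of a compact set in a finite-dimensional space, cf.\ Theorem~\ref{thm:basic-stateSpace-Caratheodory}), so by the strict hyperplane separation theorem~\ref{thm:duals-strictHyperplaneSeparation} there is a linear functional on $A(K)^*$ — which, via $A(K)^{**}=A(K)$ (Proposition~\ref{prop:duals-doubleDual}), we identify with some $f\in A(K)$ — such that $\<\psi,f\> > \sup_{\phi\in\norm{\psi}\cdot D}\<\phi,f\>$. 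Since a linear functional attains the same supremum over a convex hull as over the generating set, $\sup_{\phi\in D}\<\phi,f\>=\sup_{\eta\in K\cup(-K)}\<\eta,f\>=\max\big(\sup_{x\in K}\<x,f\>,\ \sup_{x\in K}(-\<x,f\>)\big)=\sup_{x\in K}\abs{\<x,f\>}=\norm{f}$, so the separating inequality reads $\<\psi,f\> > \norm{\psi}\,\norm{f}$. But the definition of $\norm{\psi}$ together with homogeneity gives $\abs{\<\psi,f\>}\le\norm{\psi}\,\norm{f}$, a contradiction. Hence $\psi\in\norm{\psi}\cdot D$, which yields $r(\psi)\le\norm{\psi}$ and moreover shows the infimum defining $r(\psi)$ is attained.

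The step I expect to be the main obstacle is the reverse inequality: it cannot be obtained by a direct estimate and instead requires exhibiting an (optimal) decomposition via separation. Within that argument the only computation needing a little care is the evaluation of $\sup_{\phi\in D}\<\phi,f\>$, which rests on the fact that the supremum of a linear functional over a convex hull equals its supremum over the generating set, together with the elementary identity $\max\big(\sup_{x\in K}\<x,f\>,\ -\inf_{x\in K}\<x,f\>\big)=\sup_{x\in K}\abs{\<x,f\>}$.
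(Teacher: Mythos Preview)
Your proof is correct and takes a different route from the paper's sketch. The paper argues abstractly: verify that $r(\psi)$ is a norm, compute its dual on $A(K)$ and recognise it as the order-unit norm of Proposition~\ref{prop:basic-norms-OUnormfromOU}, then invoke the double-dual (bipolar) theorem for norms to conclude $r=\norm{\cdot}$. You instead prove both inequalities directly, identifying $r(\psi)$ with the Minkowski functional of $D=\conv(K\cup(-K))$ and using compactness plus separation to show $\psi\in\norm{\psi}\cdot D$; this is more self-contained (no black-box norm duality needed) and gives attainment of the infimum as a bonus. One small point of care: Theorem~\ref{thm:duals-strictHyperplaneSeparation} as stated yields an \emph{affine} separating functional, not a linear one. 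This is harmless here because $0\in D$ (take $\tfrac12 x-\tfrac12 x$ for any $x\in K$), so subtracting the constant term produces a linear $f\in A(K)$ that still strictly separates, and your computation of $\sup_{\phi\in D}\<\phi,f\>=\norm{f}$ then goes through unchanged.
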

\begin{proof}
We will only lay out the key steps of the proof, see \cite{Jencova-baseNorms} for a complete proof. The key steps are: show that the expression $\inf_{\lambda, \mu \in \Rp} \{ \lambda + \mu : \psi = \lambda x - \mu y, \, x,y \in K \}$ defines a norm on $A(K)^*$, then show that the dual norm on $A(K)$ is the order unit norm, by using the result of Proposition \ref{prop:basic-norms-OUnormfromOU}. One then gets that $\norm{\cdot}$ is the double dual norm, and it is known that the double dual norm coincides with the original norm, see \cite[Theorem 4.3]{Rudin-functionalA}.
\end{proof}
The expression on the right hand side of \eqref{eq:basic-norms-baseNorm} is called base norm and we can introduce it in any ordered vector space with a fixed base of the positive cone. The base norm is useful, because it has operational meaning; for $x,y \in K$ the norm $\norm{x-y}$ determines the chance of discriminating the states $x$ and $y$. We will now define the discrimination task more precisely and we will show exactly how the base norm comes into play.

Let $x_0, x_1 \in K$ be two states and consider the following task: we will be given the state $x_0$ with probability (or relative frequency) $\lambda \in [0,1]$ or we will be given the state $x_1$ with probability $1-\lambda$. Our goal is to tell which of the states we were given with the highest possible accuracy, i.e., we want to maximize the probability of our answer being correct. We will call this task the discrimination of $x_0$ and $x_1$. Note that $\lambda$ and $1-\lambda$ are usually referred to as a priori probabilities for $x_0$ and $x_1$ respectively.

We are going to get the answer by performing a `yes'-`no' measurement in the following sense:  given a state $x \in \{x_0, x_1\}$ we will perform a `yes'-`no' measurement corresponding to some $f \in E(K)$. Then the probability of `yes' answer is $\<x, f\>$ and the probability of `no' answer is $1-\<x,f\>$. If we get the `yes' answer, we will predict that we were given $x = x_0$ and if we get the `no' answer, we will predict we were given $x = x_1$. Note that the assignment of `yes' answer to $x_0$ and `no' to $x_1$ is just a convention, we can also assign the `yes' answer to $x_1$ and `no' answer to $x_0$. There are four possible things that may happen: we either receive $x_0$ or $x_1$ and we either guess $x_0$ or $x_1$. Given a choice of $f \in E(K)$, we can assign probabilities to all possible outcomes, see Table \ref{table:basic-norms-discrimination}.

\begin{table}[ht]
\centering
\begin{tabular}{c|c|c}
& receive $x_0$ & receive $x_1$ \\ 
\hline 
guess $x_0$ & $\<x_0, f\>$ & $\<x_1, f\>$ \\ 
\hline 
guess $x_1$ & $1-\<x_0, f\>$ & $1-\<x_1, f\>$ \\ 
\end{tabular}
\caption{Probabilities of the four possible outcomes of the discrimination task of $x_0$ and $x_1$. \label{table:basic-norms-discrimination}}
\end{table}

We have to take into account that we will receive $x_0$ with probability $\lambda$ and $x_1$ with probability $1-\lambda$, so for a given $f \in E(K)$ the overall success probability $p_{\text{succ}}(f)$ is
\begin{equation}
p_{\text{succ}}(f) = \lambda \<x_0, f\> + (1-\lambda)(1-\<x_1,f\>).
\end{equation}
Our goal is to maximize $p_{\text{succ}}(f)$ over all possible choices of $f \in E(K)$, we get
\begin{equation} \label{eq:basic-norms-discrimination-pSucc}
p_{\text{succ}} = \sup_{f \in E(K)} p_{\text{succ}}(f) = (1-\lambda) + \sup_{f \in E(K)} \< \lambda x_0 - (1-\lambda) x_1, f\>,
\end{equation}
where we have used the bilinearity of $\< \cdot, \cdot \>$, i.e., we have used that $\lambda x_0 - (1-\lambda) x_1 \in A(K)^*$ and so $\lambda \< x_0, f\> - (1-\lambda) \<x_1, f\> = \< \lambda x_0 - (1-\lambda) x_1, f\>$. Now we want to rewrite \eqref{eq:basic-norms-discrimination-pSucc} in such way that we can use Lemma \ref{lemma:basic-norms-baseNormEAsup} to express the supremum over $f \in E(K)$ as a norm of some functional from $A(K)^*$. We get
\begin{align}
p_{\text{succ}} &= (1-\lambda) + \dfrac{1}{2} \sup_{f \in E(K)} \left( \< \lambda x_0 - (1-\lambda) x_1, 2f - 1\> + \< \lambda x_0 - (1-\lambda) x_1, 1\> \right) \\
&= (1-\lambda) + \dfrac{1}{2} \left( \< \lambda x_0 - (1-\lambda) x_1, 1\> +  \sup_{f \in E(K)} \< \lambda x_0 - (1-\lambda) x_1, 2f - 1\> \right) \\
&= \dfrac{1}{2} \left( 1 + \norm{\lambda x_0 - (1-\lambda) x_1} \right).
\end{align}
Thus we have proved the following:
\begin{theorem} \label{thm:basic-norms-discrimination}
Let $x_0, x_1 \in K$ be two states and consider the task of discriminating between $x_0$ and $x_1$. Let $\lambda \in [0,1]$ and $1-\lambda$ be the a priori probabilities for $x_0$ and $x_1$ respectively, then the maximal probability of successfully discriminating $x_0$ and $x_1$ is
\begin{equation}
p_{\text{succ}} = \dfrac{1}{2} \left( 1 + \norm{\lambda x_0 - (1-\lambda) x_1} \right).
\end{equation}
\end{theorem}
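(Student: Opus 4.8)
The plan is to reduce the discrimination task to a supremum over effects and then recognise that supremum as a base norm via Lemma~\ref{lemma:basic-norms-baseNormEAsup}. First I would fix the strategy as in the paragraph preceding the statement: a strategy is a choice of effect $f \in E(K)$, one performs the `yes'-`no' measurement $(f, 1_K - f)$ and guesses $x_0$ on `yes', $x_1$ on `no'. (This is no loss of generality: the most general strategy is such a measurement followed by a possibly randomised post-processing of the binary outcome into a guess; since $p_{\text{succ}}$ is affine in the post-processing, the optimum is deterministic, and the two non-constant deterministic post-processings are exchanged by $f \mapsto f^\perp = 1_K - f$, which exists by Lemma~\ref{lemma:basic-EA-fPerp}.) Reading off Table~\ref{table:basic-norms-discrimination} and weighting by the a priori probabilities gives
\[
p_{\text{succ}}(f) = \lambda \<x_0, f\> + (1-\lambda)\bigl(1 - \<x_1, f\>\bigr).
\]

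Next I would take the supremum over $f \in E(K)$ (attained, since $E(K)$ is compact and $f \mapsto \<\psi,f\>$ is continuous) and use bilinearity of $\<\cdot,\cdot\>$ to write $\lambda\<x_0,f\> - (1-\lambda)\<x_1,f\> = \<\psi, f\>$ with $\psi := \lambda x_0 - (1-\lambda)x_1 \in A(K)^*$, so that $p_{\text{succ}} = (1-\lambda) + \sup_{f \in E(K)} \<\psi, f\>$. To match Lemma~\ref{lemma:basic-norms-baseNormEAsup} I would substitute $f = \frac{1}{2}\bigl((2f - 1_K) + 1_K\bigr)$, pull the constant $\frac{1}{2}\<\psi, 1_K\>$ out of the supremum, and evaluate it using $\<x_0, 1_K\> = \<x_1, 1_K\> = 1$, giving $\<\psi, 1_K\> = 2\lambda - 1$. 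The free term then collapses: $(1-\lambda) + \frac{1}{2}(2\lambda - 1) = \frac{1}{2}$, leaving $p_{\text{succ}} = \frac{1}{2} + \frac{1}{2}\sup_{f\in E(K)} \<\psi, 2f - 1_K\>$.

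The single point that needs care — and the main (really the only) obstacle — is that Lemma~\ref{lemma:basic-norms-baseNormEAsup} expresses $\norm{\psi}$ as $\sup_{f\in E(K)} \abs{\<\psi, 2f - 1_K\>}$, with an absolute value, whereas the expression above has none. Here I would observe that $E(K)$ is invariant under $f \mapsto f^\perp = 1_K - f$ and that $2f^\perp - 1_K = -(2f - 1_K)$, so the set $\{\<\psi, 2f - 1_K\> : f \in E(K)\}$ is symmetric about $0$; consequently $\sup_{f\in E(K)} \<\psi, 2f - 1_K\> = \sup_{f\in E(K)} \abs{\<\psi, 2f - 1_K\>} = \norm{\psi}$ (operationally, this symmetrisation is exactly the freedom to swap which measurement outcome is associated with which state). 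Substituting back yields $p_{\text{succ}} = \frac{1}{2}\bigl(1 + \norm{\lambda x_0 - (1-\lambda)x_1}\bigr)$, which is the claim; everything else is routine arithmetic.
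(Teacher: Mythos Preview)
Your proof is correct and follows essentially the same route as the paper: write $p_{\text{succ}}(f)$, take the supremum over $E(K)$, split $f = \tfrac{1}{2}((2f-1_K)+1_K)$, and invoke Lemma~\ref{lemma:basic-norms-baseNormEAsup}. The one place where you are actually more careful than the paper is the absolute value: the paper silently identifies $\sup_{f\in E(K)}\<\psi,2f-1_K\>$ with $\norm{\psi}$, whereas you explicitly justify dropping the modulus via the symmetry $f\mapsto 1_K-f$ of $E(K)$, which is the right thing to do.
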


We see that $p_{\text{succ}}$ depends only on the norm $\norm{\lambda x_0 - (1-\lambda) x_1}$, which in turn shows that the functional norm $\norm{\cdot}$ introduced in Proposition \ref{prop:basic-norms-baseNormDef} is principle an observable quantity, not just a mathematical construct. Base norms were used to analyze discrimination tasks in the past, see e.g. \cite{Jencova-baseNorms,NuidaKimuraMiyadera-discrimination}. One can also connect the base norm of $\norm{x-y}$, where $x,y \in K$, to the task of finding the smallest $\mu \in [0,1]$ such that $(1-\mu) x + \mu x' = (1-\mu)y + \mu y'$ for some $x',y' \in K$, see \cite[proof of Theorem 3.2]{Gudder-convexStructures}.

\subsection{No-restriction hypothesis and restricted theories}
So far we have assumed that every effect from $E(K)$ corresponds to a well-defined `yes'-`no' question that, at least in principle, can be experimentally performed. This assumption is in literature called the no-restriction hypothesis. Theories without the no-restriction hypothesis were recently studied in \cite{JanottaLal-noRestriction,FilippovGudderHeinosaariLeppajarvi-restrictions} and as it was pointed out in \cite{FilippovGudderHeinosaariLeppajarvi-restrictions}, it is not trivial to consistently define a theory with restrictions.

A general approach would be to assume that there is a set of effects $E \subset E(K)$ and only `yes'-`no' questions corresponding to effects from $E$ are allowed. For this approach to be consistent, we must have $0, 1_K \in E$ and $E$ must be convex, i.e., $\conv(E) = E$. It is usually assumed that $E$ separates states, i.e., that for every $x,y \in K$ there is some $f \in E$ such that $\<x,f\> \neq \<y,f\>$.

Since $E \subset E(K)$, we also have $\cone(E) \subset A(K)^+$. Let $\cone(E)^*$ be the dual cone to $\cone(E)$ given as
\begin{equation}
\cone(E)^* = \{ \psi \in A(K)^* : \< \psi, f \> \geq 0, \forall f \in E \}.
\end{equation}
We then have $A(K)^{*+} \subset \cone(E)^*$ and we can define
\begin{equation}
S(E) = \{ \psi \in \cone(E)^*: \< \psi, 1_K \> = 1 \}.
\end{equation}
$S(E)$ is the state space corresponding to $E$ and we have $K \subset S(E)$. We can now see the restricted theory $(K, E)$ as a pair of state spaces $(K, S(E))$, such that $K \subset S(E)$. The interpretation then is that we can only prepare states from $K$, but we can only measure measurements that are well-defined on $S(E)$. In this sense, we can say that we are either restricted in the states that we can prepare or, equivalently, we can say that we are restricted in the `yes'-`no' questions we can ask about the system.

In even more general scenario, one can not only restrict the `yes'-`no' questions, but also the set of measurements can have additional restrictions, or the set of allowed transformations can be artificially restricted. Also one has to make sure that these restrictions are logically consistent in the way that every transformation is well defined in both Schr\"{o}dinger picture (as transformation on states) and Heisenberg picture (as transformation of the effects). For an in-depth treatment of restricted theories see \cite{FilippovGudderHeinosaariLeppajarvi-restrictions}.

\subsection{Diagrammatic notation}
In the upcoming sections we will work with more that one system, we will work with bipartite and tripartite systems, we will work with entangled states and entangled measurements and we will use transformations that map single systems to bipartite systems and vice-versa. We will use diagrammatic notation to make the calculations easier to understand, i.e., we will use diagrams to represent some complicated equations. Diagrammatic notation is often used in frameworks similar to GPTs, see e.g. \cite{ChiribellaDArioanoPerinotti-quantumCircuits,ChiribellaDArianoPerinotti-GPTpurification,ChiribellaDArianoPerinotti-derivationQT,BisioPerinotti-higherOrder,SelbyCoecke-leaks,BarnumLeeScandoloSelby-higherOrderInterference,WolfeSchmidSainzKunjwalSpekkens-commonCauseBoxes,SchmidSelbyPuseySpekkens-nonconModels,SchmidSelbySpekkens-causalInferentialTheories,SchmidFraserKunjwalSainzWolfeSpekkens-entanglement,SchmidHaoxingMudassarWitRossethoban-commonCauseChannels}. We are using a modified version of the \verb|quantikz| library \cite{quantikz} to typeset the diagrams.

Let $K$ be a state space and let $x \in K$, then we will use
\begin{equation}
\begin{quantikz}
& \prepareC{x} & \qw
\end{quantikz}
\end{equation}
to represent the equivalence class of preparations corresponding to $x$. If it will be needed to specify that $x$ belongs to $K$, we will use
\begin{equation}
\begin{quantikz}
& \prepareC{x} & \qw{K}
\end{quantikz}
\end{equation}
In a similar fashion, we will use
\begin{equation}
\begin{quantikz}
& \meterD{f}
\end{quantikz}
\end{equation}
and
\begin{equation}
\begin{quantikz}
& \meterD{f}{K}
\end{quantikz}
\end{equation}
to represent $f \in E(K)$. For the unit effect $1_K \in E(K)$ we will use the symbol
\begin{equation}
\begin{quantikz}
& \ground{}
\end{quantikz}
\end{equation}
since this is a generalization of the partial trace from quantum theory. We will use
\begin{equation}
\<x, f\> =
\begin{quantikz}[align equals at=1]
&[-15pt] \prepareC{x} & \meterD{f}
\end{quantikz}
\end{equation}
In other words, the closed diagram, i.e., the diagram with no free/unconnected legs, corresponds to a probability computed by pairing the corresponding state and effect. For the unit effect we have
\begin{equation}
\begin{quantikz}[align equals at=1]
& \prepareC{x} & \ground{}
\end{quantikz}
= 1.
\end{equation}
In a similar fashion, one can define equality between non-closed diagrams: let $x, y \in K$, then $x = y$ is the same as
\begin{equation}
\begin{quantikz}[align equals at=1]
&[-15pt] \prepareC{x} & \qw
\end{quantikz}
=
\begin{quantikz}[align equals at=1]
&[-15pt] \prepareC{y} & \qw
\end{quantikz}
\end{equation}
and for $f,g \in E(K)$ we have $f = g$ whenever
\begin{equation}
\begin{quantikz}[align equals at=1]
& \meterD{f}
\end{quantikz}
=
\begin{quantikz}[align equals at=1]
& \meterD{g}
\end{quantikz}
\end{equation}
One can also introduce equality of non-closed diagrams as follows: note that since $x,y \in K$ are equivalence classes of preparations, then we have $x = y$ whenever $\<x,f\> = \<y, f\>$ for all $f \in E(K)$. In diagrammatical notation, this reads:
\begin{equation}
\begin{quantikz}[align equals at=1]
&[-15pt] \prepareC{x} & \qw
\end{quantikz}
=
\begin{quantikz}[align equals at=1]
&[-15pt] \prepareC{y} & \qw
\end{quantikz}
\quad
\Leftrightarrow
\quad
\begin{quantikz}[align equals at=1]
&[-15pt] \prepareC{x} & \meterD{f}
\end{quantikz}
=
\begin{quantikz}[align equals at=1]
&[-15pt] \prepareC{y} & \meterD{f}
\end{quantikz},
\forall f \in E(K).
\end{equation}
In other words, two non-closed diagrams are equal whenever all of their possible closures are equal. This is consistent with our formalism of equivalence classes of preparations, effects and transformations. We will also use convex combinations of diagrams; for $\lambda \in [0,1]$ and $x, y \in K$ we define
\begin{equation}
\begin{quantikz}[align equals at=1]
& \lstick{$\lambda$} &[\prepfix] \prepareC{x} & \qw
\end{quantikz}
+
\begin{quantikz}[align equals at=1]
& \lstick{$(1-\lambda)$} &[\prepfix] \prepareC{y} & \qw
\end{quantikz}
=
\begin{quantikz}[align equals at=1]
&[\prepfix] \prepareC{\lambda x + (1-\lambda)y} & \qw
\end{quantikz}
\end{equation}
In analogical way, one can define convex combinations of effects and transformations and one can again understand the equality of convex combination of non-closed diagrams as equality of convex combinations of all possible closures. We will also allow an abuse of the diagrammatical notation and we will also use it for general elements of $A(K)$ or $A(K)^*$. For example, let $\varphi \in A(K)^*$ and $\alpha \in \RR$, then we can use
\begin{equation}
\begin{quantikz}
&\lstick{$\alpha$} &[\prepfix] \prepareC{\varphi} & \qw
\end{quantikz}
\end{equation}
to denote $\alpha \varphi \in A(K)^*$.

\section{Example: classical theory} \label{sec:CT}

%
%
In this section we will present the first example theory: classical theory. Classical theory provides the simplest possible example and so it is easy to grasp, but we will also need classical theory to introduce several important concepts, such as measurements and instruments. Also, there are many important results about classical theory that we will point out in subsequent sections.

A classical theory is a theory where we have $n$ independent pure states and their convex combinations. If we number the pure states $1, 2, \ldots, n$, then a classical theory is a theory where every state is a probability distribution over the set $\{1, \ldots, n\}$, i.e., every state is a set of numbers $(p_1, \ldots, p_n)$, $p_i \in \Rp$ for all $i \in \{1, \ldots, n\}$ and $\sum_{i=1}^n p_i = 1$. One can then see that a state is pure if and only if the corresponding probability distribution is concentrated at a single point, i.e., $p_j = 1$ for some $j \in \{1, \ldots, n \}$ and $p_i = 0$ for $i \neq j$. Moreover, it is easy to see that the pure states must be affinely independent and so the state space must be a simplex; a simplex $S_n$ is a convex hull of affinely independent points.

Let $V$ be a real finite-dimensional vector space, then $\{ v_0, v_1 \ldots, v_k \} \subset V$ are affinely independent if the only numbers $\alpha_i \in \RR$, $i \in \{ 0, 1, \ldots, k \}$ such that $\sum_{i=0}^k \alpha_i = 0$ and $\sum_{i=0}^k \alpha_i v_i = 0$ are $\alpha_i = 0$ for all $i \in \{ 0, \ldots, k \}$. If $\{ v_0, v_1 \ldots, v_k \}$ are affinely independent, then one can not express any of the vectors $v_i$ as an affine combination of the remaining vectors. Affine independence of $\{ v_0, v_1 \ldots, v_k \}$ is equivalent to linear independence of $\{ v_1 - v_0 \ldots, v_k - v_0 \}$. One can also see that if $\{ v_0, v_1 \ldots, v_k \}$ are affinely independent, then their affine hull is $k$-dimensional.

\begin{definition}
\emph{Classical theory} is a theory where the state space is a simplex $S_n$, $n \in \NN$. Any theory where the state space $K$ is not a simplex will be called non-classical.
\end{definition}
Let $n \in \NN$ and let $\{ s_1, s_2 \ldots, s_n \} \subset V$ be an affinely indent set of vectors, then
\begin{equation}
S_n = \conv ( \{ s_1, \ldots, s_n \} )
\end{equation}
is a simplex. For any $x \in S_n$ there are unique numbers $\lambda_i \in \RR$, $\lambda_i \geq 0$, for all $i \in \{1, \ldots, n\}$, $\sum_{i=1}^n \lambda_i = 1$ such that $x = \sum_{i=1}^n \lambda_i s_i$. The uniqueness of the numbers $\lambda_1, \ldots, \lambda_n$ follows from affine independence of $\{ s_1, \ldots, s_n \}$. The numbers $\lambda_1, \ldots, \lambda_n$ are exactly the probability distribution $(p_1, \ldots, p_n)$ corresponding to the state $x \in S_n$.

The effect algebra $E(S_n)$ is generated by the functions $b_1, \ldots, b_n$ that are given as
\begin{equation}
\< s_i, b_j \> = \delta_{ij}
\end{equation}
for all $i,j \in \{1, \ldots, n\}$, where $\delta_{ij}$ is the Kronecker delta, defined as
\begin{equation}
\delta_{ij} =
\begin{cases}
1 & i=j \\
0 & i \neq j
\end{cases}
\end{equation}
The functions $b_1, \ldots, b_n$ are well-defined, because $\{ s_1, \ldots, s_n \}$ are affinely independent. Note that for any other $f \in A(K)$ we have $f = \sum_{i=1}^n \< s_i, f \> b_i$. This is easy to see, let $x \in S_n$, $x = \sum_{i=1}^n \lambda_i s_i$, then for all $i \in \{ 1, \ldots, n \}$ we have $\lambda_i = \< x, b_i \>$
and so
\begin{equation}
\< x, f\> = \sum_{i=1}^n \lambda_i \< s_i, f \> = \sum_{i=1}^n \< x, b_i \> \< s_i, f \> = \left\langle x, \sum_{i=1}^n \<s_i, f\> b_i \right\rangle .
\end{equation}
It is obvious that $f \in A(K)^+$ only if $\< s_i, f\> \geq 0$, but it follows that then $f$ is a sum of the functions $b_1, \ldots, b_n$ with positive coefficients. Moreover $f \in E(S_n)$ if and only if it is a sum of the functions $b_1, \ldots, b_n$ with coefficients from the interval $[0,1]$. For example, one has
\begin{equation}
1_{S_n} = \sum_{i=1}^n b_i.
\end{equation}

We will now proceed with exploring the simplest cases. So let $n=1$, then $S_1 = \{ s \}$ and we have only a single state. Then $b = 1_{S_1}$ and $E(S_1) = [0,1]$ as every $f \in E(S_1)$ is of the form $\mu 1_{S_1}$ for some $\mu \in [0,1]$. So $S_1$ is the simplest possible state space as it contains only one state.

\begin{figure}[t]
\centering
\begin{subfigure}[t]{0.475\textwidth}
\centering
\includegraphics[width=\textwidth]{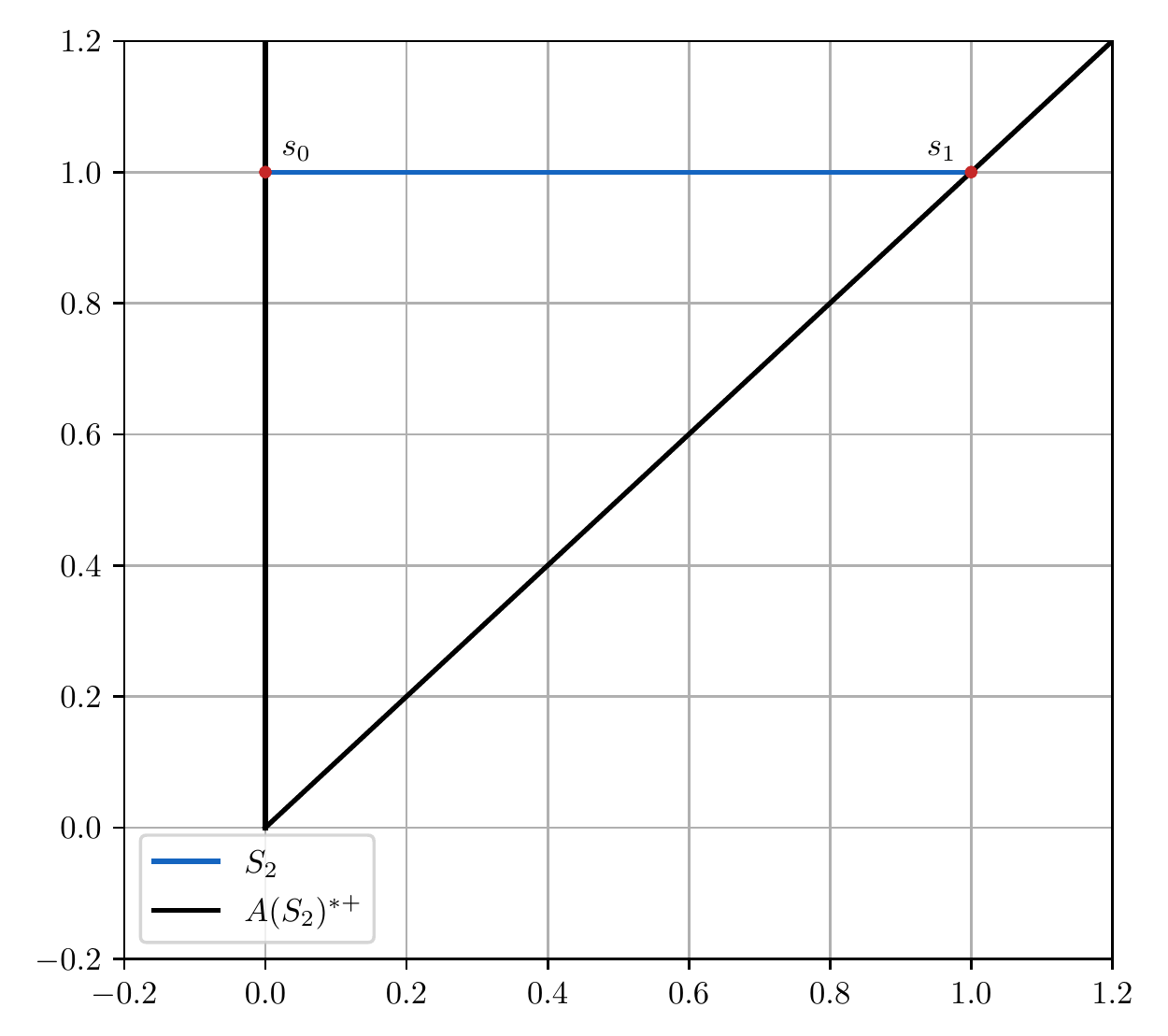}
\caption{Picture of the state space $S_2$ as subset of $A(S_2)^*$. The red points are the pure states $s_0$ and $s_1$, the blue line is the state space $S_2$, and the black lines are the boundary of the positive cone $A(S_2)^{*+}$.}
\label{fig:CT-S2-stateSpace}
\end{subfigure}
\hfill
\begin{subfigure}[t]{0.475\textwidth}
\centering
\includegraphics[width=\textwidth]{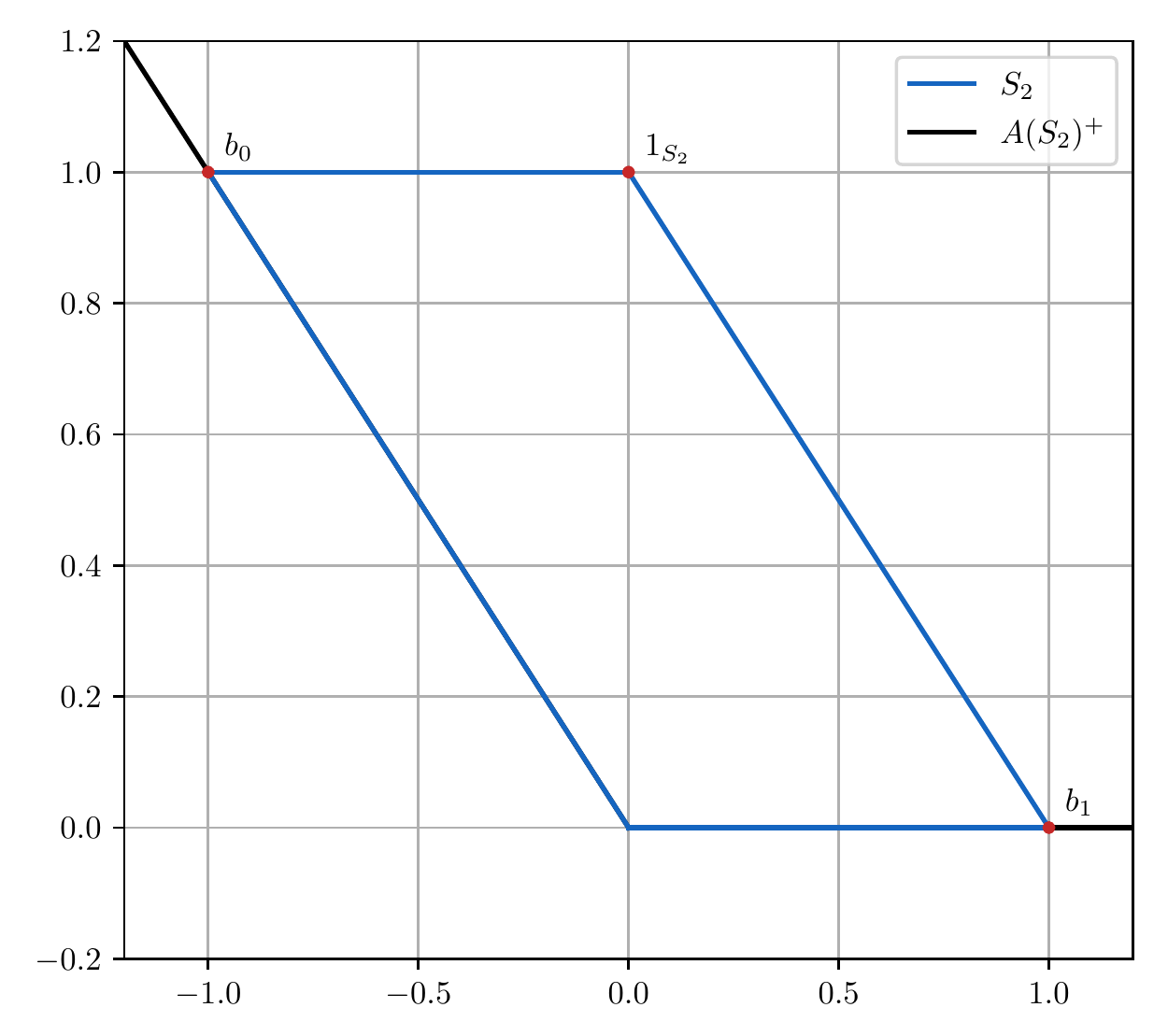}
\caption{Picture of the effect algebra $E(S_2)$ as subset of $A(S_2)$. The red points are the effects $b_0$, $b_1$, $1_{S_2}$, the blue lines are the boundary of the effect algebra $E(S_2)$ and the black lines are the boundary of the positive cone $A(S_2)^+$.}
\label{fig:CT-S2-EffectAlgebra}
\end{subfigure}
\caption{Pictures of the state space $S_2$ and effect algebra $E(S_2)$.}
\end{figure}

Let $n=2$, then the pure states are usually denoted $s_0$ and $s_1$, $S_2$ is isomorphic to the interval $[0,1]$ and the state space corresponds to a classical bit. In this case we can introduce the representation
\begin{align}
&s_0 =
\begin{pmatrix}
0 \\
1
\end{pmatrix},
&&s_1 =
\begin{pmatrix}
1 \\
1
\end{pmatrix},
\end{align}
where $s_0, s_1 \in A(S_2)^*$ are already represented as functionals. One can pick different representation; we find these most useful for calculations, but they produce skewed images. For $b_0, b_1, 1_{S_2} \in E(S_2)$ we have
\begin{align}
&b_0 = 
\begin{pmatrix}
-1 \\
1
\end{pmatrix},
&&b_1 = 
\begin{pmatrix}
1 \\
0
\end{pmatrix},
&&1_{S_2} = 
\begin{pmatrix}
0 \\
1
\end{pmatrix},
\end{align}
where the pairing is given by the usual Euclidean inner product. See Figure \ref{fig:CT-S2-stateSpace} for the picture of the state space $S_2$ and Figure \ref{fig:CT-S2-EffectAlgebra} for the picture of the effect algebra $E(S_2)$.

\begin{figure}[t]
\centering
\begin{subfigure}[t]{0.475\textwidth}
\centering
\includegraphics[width=\textwidth]{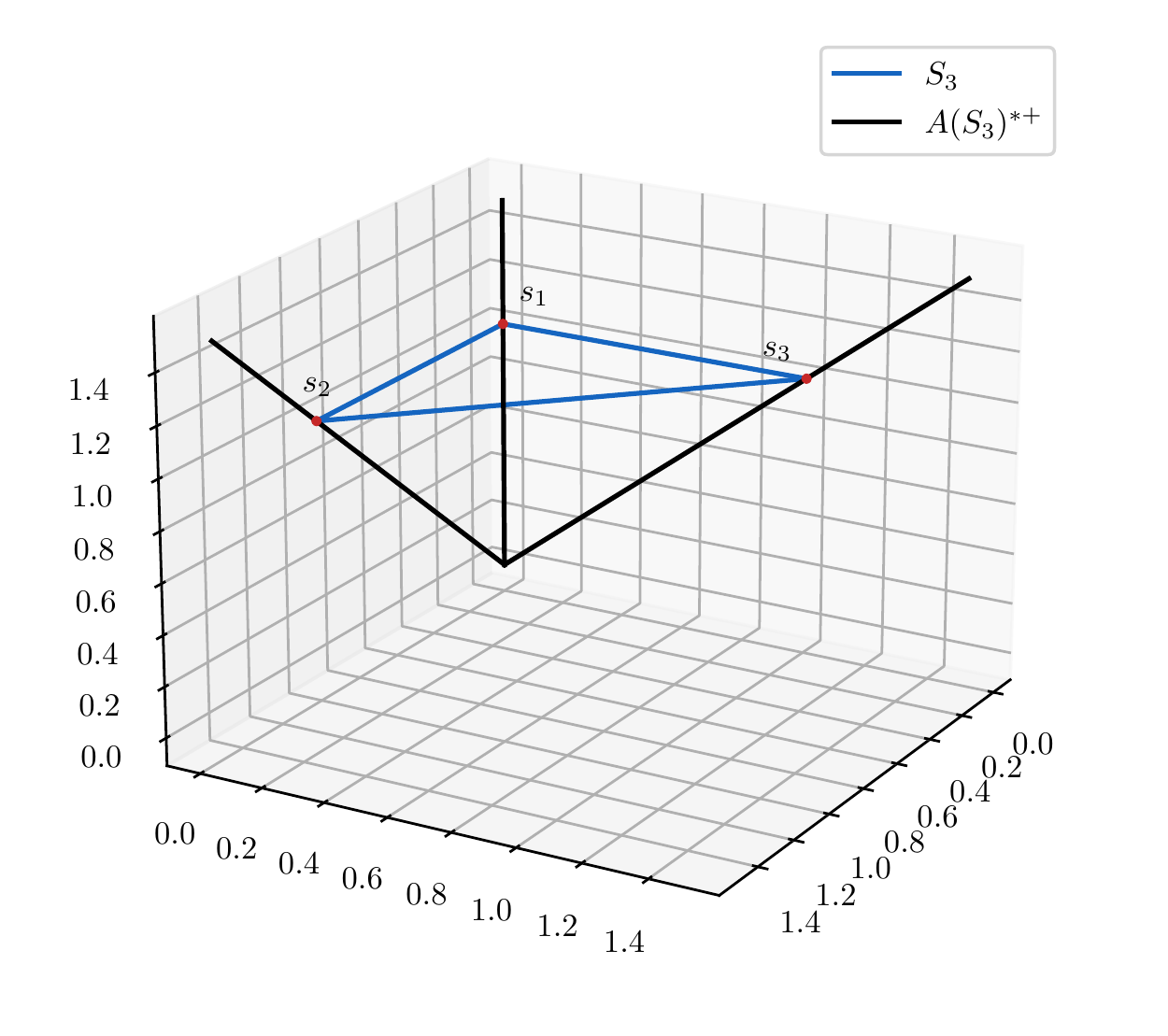}
\caption{Picture of the state space $S_3$ as subset of $A(S_3)^*$. The red points are the pure states $s_1$, $s_2$ and $s_3$, the blue lines are the edges of the state space $S_3$, and the black lines are the edges of the positive cone $A(S_3)^{*+}$.}
\label{fig:CT-S3-stateSpace}
\end{subfigure}
\hfill
\begin{subfigure}[t]{0.475\textwidth}
\centering
\includegraphics[width=\textwidth]{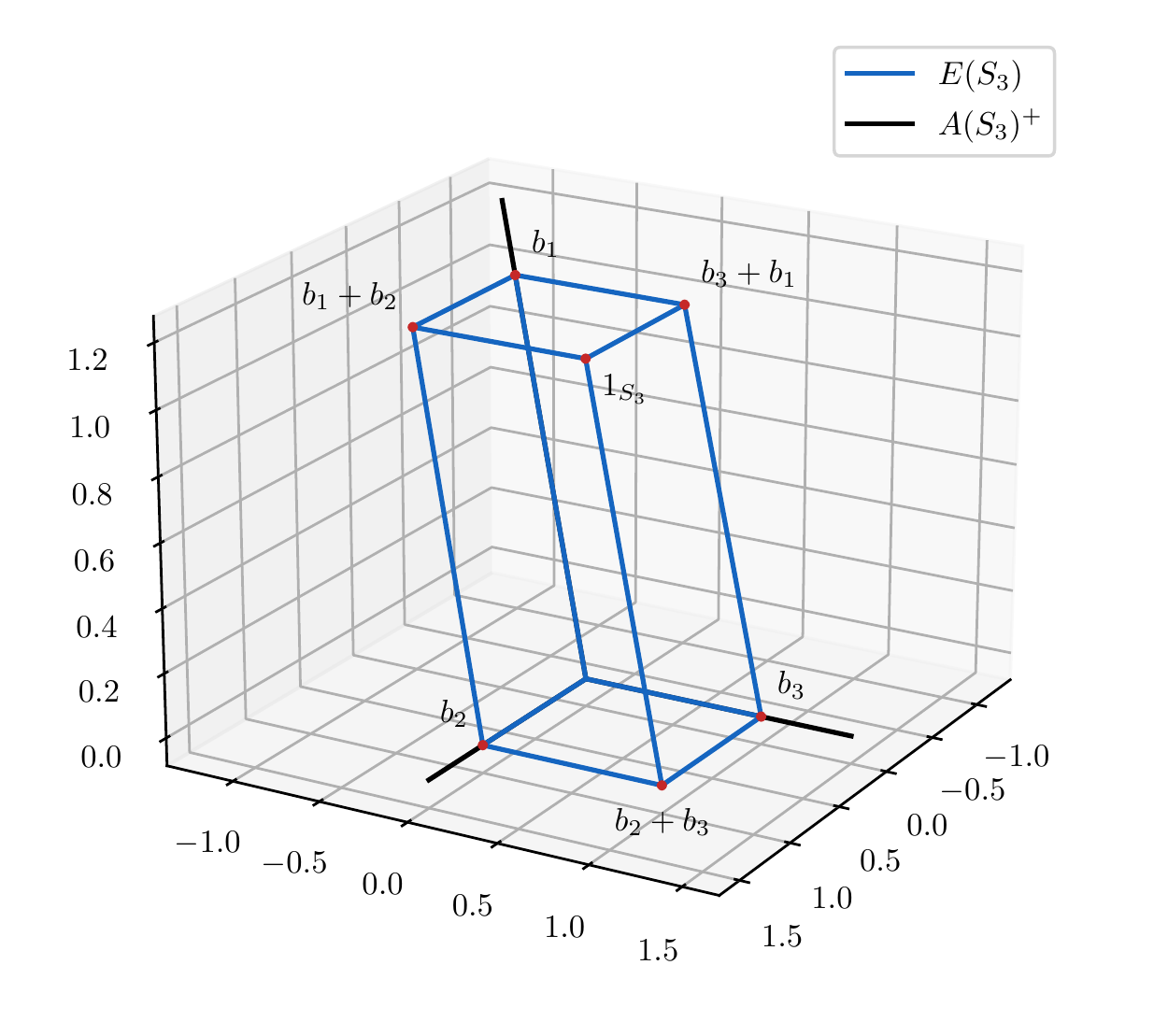}
\caption{Picture of the effect algebra $E(S_3)$ as subset of $A(S_3)$. The red points are the effects $b_1$, $b_2$, $b_3$, $b_1 + b_2$, $b_2 + b_3$, $b_3 + b_1$, $1_{S_2}$, blue lines are the edges of the effect algebra $E(S_3)$, and the black lines are the edges of the positive cone $A(S_3)^+$.}
\label{fig:CT-S3-EffectAlgebra}
\end{subfigure}
\caption{Pictures of the state space $S_3$ and effect algebra $E(S_3)$.}
\end{figure}

For $n=3$, the pure states are denoted $s_1, s_2, s_3$ and $S_3$ is a triangle. We will use the representation
\begin{align}
&s_1 =
\begin{pmatrix}
0 \\
0 \\
1
\end{pmatrix},
&&s_2 =
\begin{pmatrix}
1 \\
0 \\
1
\end{pmatrix},
&&s_3 =
\begin{pmatrix}
0 \\
1 \\
1
\end{pmatrix},
\end{align}
where $s_1, s_2, s_3 \in A(S_3)^*$ are again already represented as the corresponding functionals. For the elements of $E(S_3)$ we have
\begin{align}
&b_1 = 
\begin{pmatrix}
-1 \\
-1 \\
1
\end{pmatrix},
&&b_2 = 
\begin{pmatrix}
1 \\
0 \\
0
\end{pmatrix},
&&b_3 = 
\begin{pmatrix}
0 \\
1 \\
0
\end{pmatrix},
&&1_{S_3} = 
\begin{pmatrix}
0 \\
0 \\
1
\end{pmatrix}.
\end{align}
Note that the extreme points of $E(S_3)$ are not only $b_1, b_2, b_3, 1_{S_3}$ and $0$, but also $b_1+b_2,b_2+b_3,b_3+b_1$. See Figure \ref{fig:CT-S3-stateSpace} for the picture of the state space $S_3$ and Figure \ref{fig:CT-S3-EffectAlgebra} for the picture of the effect algebra $E(S_3)$.

\begin{figure}[t]
\centering
\includegraphics[width=0.85\textwidth]{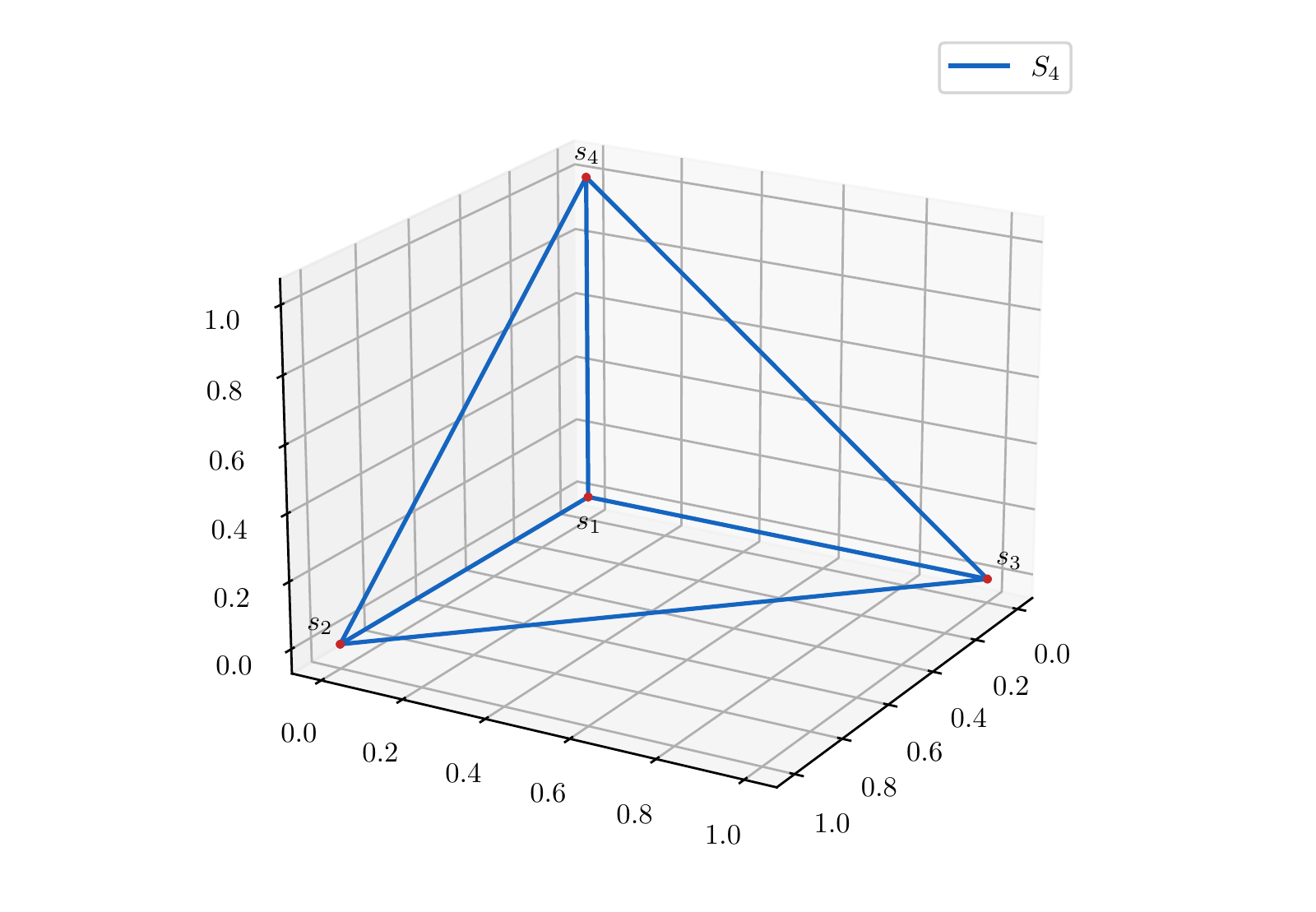}
\caption{Pictures of the state space $S_4$ as subset of $V = \aff(S_4)$. The red points are the pure states $s_1$, $s_2$, $s_3$ and $s_4$, and the blue lines are the edges of the state space $S_4$.}
\label{fig:CT-S4-stateSpace}
\end{figure}

For $n=4$, the pure states are
\begin{align}
&s_1 =
\begin{pmatrix}
0 \\
0 \\
0 \\
1
\end{pmatrix},
&&s_2 =
\begin{pmatrix}
1 \\
0 \\
0 \\
1
\end{pmatrix},
&&s_3 =
\begin{pmatrix}
0 \\
1 \\
0 \\
1
\end{pmatrix},
&&s_4 =
\begin{pmatrix}
0 \\
0 \\
1 \\
1
\end{pmatrix},
\end{align}
which are plotted in Figure \ref{fig:CT-S4-stateSpace}. The effect algebra $E(S_4)$ is generated by
\begin{align}
&b_1 =
\begin{pmatrix}
-1 \\
-1 \\
-1 \\
1
\end{pmatrix},
&&b_2 =
\begin{pmatrix}
1 \\
0 \\
0 \\
0
\end{pmatrix},
&&b_3 =
\begin{pmatrix}
0 \\
1 \\
0 \\
0
\end{pmatrix},
&&b_4 =
\begin{pmatrix}
0 \\
0 \\
1 \\
0
\end{pmatrix},
&&1_{S_4} =
\begin{pmatrix}
0 \\
0 \\
0 \\
1
\end{pmatrix}.
\end{align}
One can again see that the extreme points of $E(S_4)$ include not only $b_1, b_2, b_3, b_4, 1_{S_4}, 0$ but also $b_i + b_j$ for $i \neq j$ and $b_i + b_j + b_k$ for $i \neq j \neq k \neq i$, for $i,j,k \in \{ 1, \ldots, 4 \}$.

\section{Tensor products} \label{sec:tensor}

%
%
In this section we will introduce the concept of bipartite and multipartite systems. The idea is simple: given state spaces $K_A$ and $K_B$, we want to describe an experiment with two parties; traditionally called Alice and Bob. In the simplest scenario Alice prepares $x_A \in K_A$, applies $f_A \in E(K_A)$ and Bob prepares $x_B \in K_B$, applies $f_B \in E(K_B)$. Since both of the experiments are independent, the resulting joint probability for both Alice and Bob is a product of the respective probabilities, i.e., they both get the outcome `yes' with probability $\<x_A, f_A\> \<x_B, f_B\>$. In a more complex scenario, the preparation procedures of Alice and Bob can be correlated. Alice and Bob meet before the experiment and toss an unbiased coin, and based on the outcome prepare their states. For example if the coin lands on heads, Alice will prepare $x_A$ and Bob will prepare $x_B$, but if the coin lands on tails, Alice will prepare $y_A$ and Bob will prepare $y_B$. This is a valid preparation procedure and we have to have a way of describing it; we refer to this scenario as shared randomness, because the outcome of the coin toss is random information shared between Alice and Bob. In the most general scenario, Alice and Bob can share an entangled state, which can not be prepared using shared randomness.

\subsection{Bipartite scenarios}
Our aim si to find a state space $K_{AB}$ that will describe the bipartite scenario within our current framework. We will introduce several axioms that will fix basic properties of the bipartite state space.
\begin{definition}
A \emph{bipartite state space} $K_{AB}$ formed from state spaces $K_A$ and $K_B$ must satisfy:
\begin{enumerate}[label=(BP\arabic*), leftmargin=*]
\item\label{item:tensor-intro-stateSpace} $K_{AB}$ must be a valid state space.
\item\label{item:tensor-intro-sepStates} For every $x_A \in K_A$ and $x_B \in K_B$, there must be a bipartite state in $K_{AB}$ that describes the situation where Alice prepares $x_A$ and Bob prepares $x_B$. Moreover, the identification of the bipartite state with $x_A$ and $x_B$ is affine, meaning that if Alice (or Bob) prepares a mixture of states $x_{1,A}$ and $x_{2,A}$, then this results in a mixture of the respective bipartite states with the same coefficients.
\item\label{item:tensor-intro-sepEffects} For every $f_A \in E(K_A)$ and $f_B \in E(K_B)$, there must be a bipartite effect in $E(K_{AB})$ that describes the situation where Alice applies $f_A$ and Bob applies $f_B$. Moreover, the identification of the bipartite effect with $f_A$ and $f_B$ is linear, meaning that if Alice (or Bob) prepares a mixture or sum of effects $f_{1,A}$ and $f_{2,A}$, then this results in a mixture or sum of the respective bipartite effects.
\item\label{item:tensor-intro-unitEffect} The unit effect on $K_{AB}$ is equivalent to Alice applying $1_{K_A}$ and Bob applying $1_{K_B}$.
\item\label{item:tensor-intro-tomographicLocality} For every $x_{AB},y_{AB} \in K_{AB}$ there are $f_A \in E(K_A)$, $f_B \in E(K_B)$ such that when Alice and Bob prepare $x_{AB}$ and apply $f_A$ and $f_B$ respectively, then the resulting probability is different from the experiment where Alice and Bob prepare $y_{AB}$ and apply $f_A$ and $f_B$ respectively. In other words, applying effects locally is sufficient to distinguish all of the states in $K_{AB}$.
\end{enumerate}
\end{definition}
Note that \ref{item:tensor-intro-sepStates} together with the convexity coming from \ref{item:tensor-intro-stateSpace} implies that correlated preparations based on shared randomness between Alice and Bob are included in $K_{AB}$. Similar result follows for `yes'-	`no' questions that are performed based on shared randomness between Alice and Bob. To get more tangible results, we will use a well-know result from linear algebra that the dual of the vector space of bilinear forms is a tensor product of the original vector spaces, see \cite{Ryan-tensorProducts} or \ref{appendix:bilinear}. Consider first the scenario where Alice applies $f_A \in E(K_A)$ and Bob applies $f_B \in E(K_B)$. Mathematically speaking, $f_A$ is a linear functional on $A(K_A)^*$ and $f_B$ is a linear functional on $A(K_B)^*$, since $A(K_A)^{**} = A(K_A)$ and $A(K_B)^{**} = A(K_B)$, see Proposition \ref{prop:duals-doubleDual}. It then follows that the joint operation of Alice applying $f_A$ and Bob applying $f_B$ must behave as a bilinear functional on pairs of states.

Let $x_{AB} \in K_{AB}$, then according to Theorem \ref{thm:basic-dual-stateSpace} $x_{AB}$ is a linear functional on effects from $E(K_{AB})$. According to \ref{item:tensor-intro-sepEffects}, pairs of effects $f_A$ and $f_B$ must be included in $E(K_{AB})$. When acting on pairs of effects, $x_{AB}$ is essentially a bilinear functional and so according to Proposition \ref{prop:bilinear-isomorphisms} every $x_{AB}$ must correspond to an element of $A(K_A)^* \otimes A(K_B)^*$. According to \ref{item:tensor-intro-tomographicLocality} every $x_{AB}$ must be uniquely characterized by its action on pairs of effects, therefore every $x_{AB}$ must be equivalent to an element of $A(K_A)^* \otimes A(K_B)^*$. Thus we have proved the following:
\begin{lemma} \label{lemma:tensor-bipartite-tensorProd}
If $K_{AB}$ satisfies \ref{item:tensor-intro-stateSpace} - \ref{item:tensor-intro-tomographicLocality}, then
\begin{equation}
K_{AB} \subset A(K_A)^* \otimes A(K_B)^*.
\end{equation}
\end{lemma}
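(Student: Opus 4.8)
The plan is to make precise the informal reasoning given in the paragraph preceding the statement. Fix $x_{AB} \in K_{AB}$. By \ref{item:tensor-intro-stateSpace} the set $K_{AB}$ is a genuine state space, so Theorem \ref{thm:basic-dual-stateSpace} lets me identify $x_{AB}$ with a linear functional on $A(K_{AB})$; pairing it with effects of $E(K_{AB})$ then returns the probabilities of the corresponding experiments.

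Next I would extract from \ref{item:tensor-intro-sepEffects} the \emph{local effect} map $\tau\colon E(K_A)\times E(K_B)\to E(K_{AB})$ sending $(f_A,f_B)$ to the bipartite effect that realises ``Alice applies $f_A$ and Bob applies $f_B$''. Condition \ref{item:tensor-intro-sepEffects} says exactly that $\tau$ is separately affine and additive in each argument on the effect algebras. Since $E(K_A)$ spans $A(K_A)$ and $E(K_B)$ spans $A(K_B)$ by Proposition \ref{prop:basic-EA-genAK}, such a map extends uniquely to a bilinear map $A(K_A)\times A(K_B)\to A(K_{AB})$: every element of $A(K_A)$ is a difference of two positive affine functions, each a non-negative scalar multiple of an effect, exactly as in the proof of Proposition \ref{prop:basic-EA-genAK}, and additivity plus homogeneity of $\tau$ on the generating cones force the extension to be well defined and bilinear.

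Composing this bilinear map with the linear functional $x_{AB}$ on $A(K_{AB})$ produces a bilinear form $B_{x_{AB}}$ on $A(K_A)\times A(K_B)$. By the isomorphism between spaces of bilinear forms and tensor products recalled in \ref{appendix:bilinear} (Proposition \ref{prop:bilinear-isomorphisms}), together with $A(K_A)^{**}=A(K_A)$ and $A(K_B)^{**}=A(K_B)$ from Proposition \ref{prop:duals-doubleDual}, the form $B_{x_{AB}}$ corresponds to a unique element of $A(K_A)^*\otimes A(K_B)^*$, which I again denote $x_{AB}$. This defines a map $K_{AB}\to A(K_A)^*\otimes A(K_B)^*$, and to conclude $K_{AB}\subset A(K_A)^*\otimes A(K_B)^*$ up to this identification I must check it is injective. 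This is precisely the role of \ref{item:tensor-intro-tomographicLocality}: if $x_{AB}$ and $y_{AB}$ induce the same bilinear form, then $\<x_{AB},\tau(f_A,f_B)\>=\<y_{AB},\tau(f_A,f_B)\>$ for all $f_A\in E(K_A)$, $f_B\in E(K_B)$, and tomographic locality then forces $x_{AB}=y_{AB}$.

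The main obstacle is the well-definedness of the bilinear extension of $\tau$, and it is bookkeeping rather than a genuine difficulty: one has to be sure that \ref{item:tensor-intro-sepEffects} supplies enough structure — affinity and additivity in each slot, hence homogeneity under rational and then real scalars, plus compatibility across the two slots — to pin down a unique bilinear map rather than merely a separately affine one. Note that \ref{item:tensor-intro-sepStates} and \ref{item:tensor-intro-unitEffect} are not used here; they will enter later, when one identifies which subset of $A(K_A)^*\otimes A(K_B)^*$ a bipartite state space is allowed to be.
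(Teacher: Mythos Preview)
Your proposal is correct and follows essentially the same line as the paper: use \ref{item:tensor-intro-sepEffects} to interpret each $x_{AB}$ as a bilinear functional on $A(K_A)\times A(K_B)$, invoke Proposition \ref{prop:bilinear-isomorphisms} to land in $A(K_A)^*\otimes A(K_B)^*$, and use \ref{item:tensor-intro-tomographicLocality} for injectivity. You are simply more explicit than the paper about the intermediate map $\tau$ and its bilinear extension, which is fine.
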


We are now going to construct the possible range of bipartite state spaces $K_{AB}$. This is only a possible range, because, as we will shortly see, $K_{AB}$ in general is not uniquely specified by the choices of $K_A$ and $K_B$. Let $x_A \in K_A$ and $x_B \in K_B$, then according to \ref{item:tensor-intro-sepStates} the pair of states must be represented in $K_{AB}$. Following the result of Lemma \ref{lemma:tensor-bipartite-tensorProd}, we are going to postulate that the state corresponding to Alice preparing $x_A$ and Bob preparing $x_B$ is $x_A \otimes x_B$. By using \ref{item:tensor-intro-sepStates} we get the smallest possible candidate for $K_{AB}$.
\begin{definition}
The \emph{minimal tensor product} of state spaces $K_A$ and $K_B$ is given as
\begin{equation}
K_A \tmin K_B = \conv \left( \{ x_A \otimes x_B : x_A \in K_A, x_B \in K_B \} \right).
\end{equation}
\end{definition}
One should check that $K_A \tmin K_B$ is a valid state space, i.e., that it is a compact convex subset of a real, finite-dimensional vector space. $K_A \tmin K_B$ clearly is convex and it clearly is  a subset of a real, finite-dimensional vector space. To see that $K_A \tmin K_B$ is compact, simply note that by definition $K_A \tmin K_B$ is closed and one can easily see that $K_A \tmin K_B$ is bounded, hence $K_A \tmin K_B$ is compact.

The counterpart to minimal tensor product is the maximal tensor product, that will be the largest possible candidate for $K_{AB}$. The maximal tensor product will be introduced with the help of \ref{item:tensor-intro-sepEffects} as largest possible set of states that is positive on pairs of effects. But to do so, we must first introduce the description for pairs of effects. For $f_A \in E(K_A)$ and $f_B \in E(K_B)$ we are going to denote the effect corresponding to Alice applying $f_A$ and Bob applying $f_B$ as $f_A \otimes f_B$. Then, analogical to the minimal tensor product of states, we get the minimal tensor product of effect algebras, given as
\begin{equation}
E(K_A) \tmin E(K_B) = \conv \left( \{ f_A \otimes f_B : f_A \in E(K_A), f_B \in E(K_B) \} \right).
\end{equation}
One should in principle check whether $E(K_A) \tmin E(K_B)$ is a valid effect algebra, i.e., that it is an interval in an order unit space. One can easily do this by verifying that $E(K_A) \tmin E(K_B)$ is a linear effect algebra, as introduced in Definition \ref{def:basic-connection-linEA}.

\begin{definition} \label{def:tensor-bipartite-maxProd}
The \emph{maximal tensor product} of state spaces $K_A$ and $K_B$ is given as
\begin{equation} \label{eq:tensor-bipartite-maxProdS}
K_A \tmax K_B = S(E(K_A) \tmin E(K_B)),
\end{equation}
which the same as
\begin{equation} \label{eq:tensor-bipartite-maxProdDef}
\begin{split}
K_A \tmax K_B = \{ \varphi \in A(K_A)^* \otimes A(K_B)^*: &\< \varphi, f_A \otimes f_B \> \geq 0, \forall f_A \in E(K_A), \forall f_B \in E(K_B), \\
&\< \varphi, 1_{K_A} \otimes 1_{K_B} \> = 1 \}
\end{split}
\end{equation}
\end{definition}
$K_A \tmax K_B$ is a state space by construction, since it was defined in \eqref{eq:tensor-bipartite-maxProdS} as the state space corresponding to the effect algebra $E(K_A) \tmin E(K_B)$. Analogically to the maximal tensor product of state spaces, we can define the maximal tensor product of effect algebras as
\begin{align}
E(K_A) \tmax E(K_B) &= E(K_A \tmin K_B) \\
&= \{ \psi \in A(K_A) \otimes A(K_B): 0 \leq \< x_A \otimes x_B, \psi \> \leq 1, \forall x_A \in K_A, \forall x_B \in K_B \}.
\end{align}
One may be temped to think that $K_A \tmax K_B$ includes way more states than $K_A \tmin K_B$ and so it must be way more useful in information-theoretic tasks, but this is not the case. Even though $K_A \tmax K_B$ has a very rich structure, the corresponding effect algebra $E(K_A \tmax K_B) = E(K_A) \tmin E(K_B)$ contains only separable effects, i.e., effects of the form $f_A \otimes f_B$ for $f_A \in E(K_A)$ and $f_B \in E(K_B)$ and their sums and convex combinations. It was observed in \cite[Section VII.]{Barrett-GPTinformation} that if we have access to only separable effects, then we can not implement neither teleportation nor superdense coding protocols. It was also observed in \cite{GrossMullerColbeckDahlsten-boxworldDynamics,AlSafiShort-boxworldDynamics,AlSafiRichens-reversibleDynamics} that for certain classes of state spaces, the set of reversible transformations on $K_A \tmax K_B$ is trivial. We also want to express $K_{AB}$ as some form of tensor product of $K_A$ and $K_B$. For this reason, we will change the notation; from now on we will use $K_A \treal K_B$ instead of $K_{AB}$.
\begin{definition}
Let $K_A$ and $K_B$ be state spaces, then we will denote $K_A \treal K_B$ the state space corresponding to the bipartite scenario.
\end{definition}
We will also denote
\begin{equation}
E(K_A) \treal E(K_B) = E(K_A \treal K_B).
\end{equation}

Note that the tensor product of state spaces $\treal$ is not a single object, but it is a placeholder for a rule that has to be specified by a given theory. In practice, $\treal$ is usually not defined on all possible state spaces, but only on a selected class of state spaces that are included in a given theory. For example, in quantum theory, we use a special rule for the quantum tensor product which is strictly different from the minimal and maximal tensor products, see Section \ref{sec:QT}. The quantum tensor product is constructed with the use of the underlying Hilbert spaces. For a general state space $K$, there is no underlying Hilbert space and so it is not clear how to extend the quantum tensor product to a general $K$. But this is not a problem, because quantum theory is defined only in terms of quantum state spaces.

The tensor product $K_A \treal K_B$ does not have an established name within the framework of GPTs. That is because in most applications it is sufficient to consider only bipartite scenarios and so the notation $K_{AB}$ is often used. But, as we will see, the notation $K_A \treal K_B$ is easier to work with in multipartite scenarios.

\begin{proposition} \label{prop:tensor-bipartite-inclusions}
For any valid bipartite state we must have
\begin{equation}
K_A \tmin K_B \subset K_A \treal K_B \subset K_A \tmax K_B.
\end{equation}
\end{proposition}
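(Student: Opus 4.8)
The plan is to read off both inclusions directly from the defining axioms \ref{item:tensor-intro-stateSpace}--\ref{item:tensor-intro-tomographicLocality}, using the embedding $K_A \treal K_B \subset A(K_A)^* \otimes A(K_B)^*$ from Lemma \ref{lemma:tensor-bipartite-tensorProd} as the common ambient space. The first inclusion is forced by \ref{item:tensor-intro-sepStates} together with convexity; the second by \ref{item:tensor-intro-sepEffects} and \ref{item:tensor-intro-unitEffect}.

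For $K_A \tmin K_B \subset K_A \treal K_B$: first I would invoke \ref{item:tensor-intro-sepStates}, which says that for every $x_A \in K_A$, $x_B \in K_B$ the bipartite state describing Alice preparing $x_A$ and Bob preparing $x_B$ belongs to $K_A \treal K_B$; by the discussion preceding Lemma \ref{lemma:tensor-bipartite-tensorProd} this state is exactly $x_A \otimes x_B \in A(K_A)^* \otimes A(K_B)^*$. Since $K_A \treal K_B$ is a valid state space by \ref{item:tensor-intro-stateSpace}, it is convex, hence it contains every convex combination of such product states, i.e.\ it contains $\conv(\{x_A \otimes x_B : x_A \in K_A, x_B \in K_B\}) = K_A \tmin K_B$.

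For $K_A \treal K_B \subset K_A \tmax K_B$: by Lemma \ref{lemma:tensor-bipartite-tensorProd} we already have $K_A \treal K_B \subset A(K_A)^* \otimes A(K_B)^*$, so it remains to verify the two conditions in \eqref{eq:tensor-bipartite-maxProdDef}. Let $\varphi \in K_A \treal K_B$. By \ref{item:tensor-intro-sepEffects}, every product effect $f_A \otimes f_B$ (with $f_A \in E(K_A)$, $f_B \in E(K_B)$) lies in $E(K_A \treal K_B)$, and since $\varphi$ is a state we get $\langle \varphi, f_A \otimes f_B \rangle \in [0,1]$, in particular $\langle \varphi, f_A \otimes f_B \rangle \geq 0$. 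By \ref{item:tensor-intro-unitEffect} the unit effect of $K_A \treal K_B$ is $1_{K_A} \otimes 1_{K_B}$, so $\langle \varphi, 1_{K_A} \otimes 1_{K_B} \rangle = 1$. These are precisely the defining conditions of $K_A \tmax K_B$, so $\varphi \in K_A \tmax K_B$; equivalently, one may phrase this as $E(K_A) \tmin E(K_B) \subseteq E(K_A \treal K_B)$, whence $K_A \treal K_B \subseteq S(E(K_A \treal K_B)) \subseteq S(E(K_A) \tmin E(K_B)) = K_A \tmax K_B$ by Definition \ref{def:tensor-bipartite-maxProd}. The only genuine subtlety — not really an obstacle — is making the identifications precise, namely that the operationally defined "Alice prepares $x_A$, Bob prepares $x_B$'' state is literally the algebraic tensor $x_A \otimes x_B$ and similarly for effects; this is exactly what the bilinearity observations around Lemma \ref{lemma:tensor-bipartite-tensorProd} and the isomorphisms recalled in \ref{appendix:bilinear} provide, so once those are in place the proof is immediate.
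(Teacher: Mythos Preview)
Your proof is correct and follows essentially the same route as the paper: the first inclusion via \ref{item:tensor-intro-sepStates} plus convexity from \ref{item:tensor-intro-stateSpace}, and the second via \ref{item:tensor-intro-sepEffects} for positivity on product effects together with \ref{item:tensor-intro-unitEffect} for normalization, matching the defining conditions in \eqref{eq:tensor-bipartite-maxProdDef}. Your additional remark about the alternative formulation through $S(E(K_A) \tmin E(K_B))$ and your explicit attention to the identification of operational product states and effects with algebraic tensors are nice touches that the paper leaves implicit.
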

\begin{proof}
Note that $K_A \tmin K_B \subset K_A \treal K_B$ follows from \ref{item:tensor-intro-sepStates}, because every $x_{AB} \in K_A \tmin K_B$ can be written as $x_{AB} = \sum_{i=1}^N \lambda_i y_{i,A} \otimes y_{i,B}$, where $y_{i,A} \in K_A$, $y_{i,B} \in K_B$, $\lambda_i \in \Rp$ for all $i \in \{1, \ldots, N\}$ and $\sum_{i=1}^N \lambda_i = 1$. It follows from \ref{item:tensor-intro-sepStates} that $y_{i,A} \otimes y_{i,B} \in K_{AB}$; $x_{AB} \in K_{AB}$ follows by the convexity of $K_{AB}$.

Let $x_{AB} \in K_A \treal K_B$ and $f_A \in E(K_A)$, $f_B \in E(K_B)$. According to \ref{item:tensor-intro-sepEffects} $f_A \otimes f_B$ must be a well defined effect on $K_{AB}$, i.e., we must have $f_A \otimes f_B \in E(K_{AB})$. So we must have $\< x_{AB}, f_A \otimes f_B \> \geq 0$. It follows from \ref{item:tensor-intro-unitEffect} that $\<x_{AB}, 1_{K_A} \otimes 1_{K_B} \> = 1$ and so from \eqref{eq:tensor-bipartite-maxProdDef} we get $x_{AB} \in K_A \tmax K_B$.
\end{proof}
\begin{corollary}
We have
\begin{equation}
E(K_A) \tmin E(K_B) \subset E(K_{AB}) \subset E(K_A) \tmax E(K_B).
\end{equation}
\end{corollary}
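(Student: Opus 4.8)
The statement is the effect-algebra dual of Proposition~\ref{prop:tensor-bipartite-inclusions}, so the plan is to transport the three-way inclusion $K_A \tmin K_B \subset K_A \treal K_B \subset K_A \tmax K_B$ to the level of effect algebras, using $E(K_{AB}) = E(K_A \treal K_B)$. I would prove each of the two inclusions directly from the bipartite axioms, and then note that both also follow from the antitone behaviour of $E(\cdot)$ recorded in Lemma~\ref{lemma:basic-results-subsets}. Throughout I take the ambient vector space to be $A(K_A)^* \otimes A(K_B)^*$, inside which all three bipartite state spaces live by Lemma~\ref{lemma:tensor-bipartite-tensorProd} and the definitions of the minimal and maximal tensor products.

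For the left inclusion $E(K_A) \tmin E(K_B) \subset E(K_{AB})$: by definition $E(K_A) \tmin E(K_B) = \conv(\{ f_A \otimes f_B : f_A \in E(K_A),\ f_B \in E(K_B) \})$, and axiom~\ref{item:tensor-intro-sepEffects} says precisely that each $f_A \otimes f_B$ is a well-defined effect on $K_{AB} = K_A \treal K_B$, i.e.\ lies in $E(K_{AB})$. Since $E(K_{AB})$ is an effect algebra on a valid state space it is convex, hence it contains the convex hull of the $f_A \otimes f_B$, which is exactly $E(K_A) \tmin E(K_B)$. Equivalently one may argue via duality: by \eqref{eq:basic-connection-ESE} and \eqref{eq:tensor-bipartite-maxProdS} one has $E(K_A) \tmin E(K_B) = E(S(E(K_A) \tmin E(K_B))) = E(K_A \tmax K_B)$, and since $K_A \treal K_B \subset K_A \tmax K_B$ with equal linear spans, Lemma~\ref{lemma:basic-results-subsets} gives $E(K_A \tmax K_B) \subset E(K_A \treal K_B)$.

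For the right inclusion $E(K_{AB}) \subset E(K_A) \tmax E(K_B)$: recall $E(K_A) \tmax E(K_B) = E(K_A \tmin K_B) = \{ \psi \in A(K_A) \otimes A(K_B) : 0 \le \< x_A \otimes x_B, \psi \> \le 1,\ \forall x_A \in K_A,\ \forall x_B \in K_B \}$. Take $\psi \in E(K_{AB})$. By axiom~\ref{item:tensor-intro-sepStates} every product state $x_A \otimes x_B$ belongs to $K_{AB}$, so $0 \le \< x_A \otimes x_B, \psi \> \le 1$ for all of them; this is exactly the membership condition above, so $\psi \in E(K_A) \tmax E(K_B)$. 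As before, this also follows from $K_A \tmin K_B \subset K_A \treal K_B$ together with Lemma~\ref{lemma:basic-results-subsets}, giving $E(K_A \treal K_B) \subset E(K_A \tmin K_B) = E(K_A) \tmax E(K_B)$.

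The one point requiring a line of care — the main, though minor, obstacle — is the bookkeeping of ambient spaces: to quote the explicit description of $E(K_A) \tmax E(K_B)$ one needs $\psi \in A(K_A) \otimes A(K_B)$, that is $A(K_{AB}) = A(K_A) \otimes A(K_B)$, and to invoke Lemma~\ref{lemma:basic-results-subsets} one needs the three bipartite state spaces to have a common linear span. Both reduce to $\linspan(K_A \tmin K_B) = A(K_A)^* \otimes A(K_B)^*$, which holds because $K_A$ spans $A(K_A)^*$ and $K_B$ spans $A(K_B)^*$ by Lemma~\ref{lemma:basic-results-span} and elementary tensors of spanning sets span the tensor product; the equality $A(K_{AB}) = A(K_A) \otimes A(K_B)$ then follows from Proposition~\ref{prop:basic-dual-rep} applied to $K_A \treal K_B$ (whose affine hull coincides with that of $K_A \tmin K_B$, since it lies between the same two sets by Proposition~\ref{prop:tensor-bipartite-inclusions}) together with the finite-dimensional identification $(A(K_A)^* \otimes A(K_B)^*)^* = A(K_A) \otimes A(K_B)$ from \ref{appendix:bilinear} and Proposition~\ref{prop:duals-doubleDual}.
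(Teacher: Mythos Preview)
Your proof is correct and contains the paper's own argument as one of its two routes: the paper simply invokes Proposition~\ref{prop:tensor-bipartite-inclusions} together with Lemma~\ref{lemma:basic-results-subsets} and the identifications $E(K_A) \tmin E(K_B) = E(K_A \tmax K_B)$, $E(K_A) \tmax E(K_B) = E(K_A \tmin K_B)$, which is exactly your duality/antitonicity route. Your additional direct arguments from axioms~\ref{item:tensor-intro-sepStates} and~\ref{item:tensor-intro-sepEffects} and your explicit verification of the equal-span hypothesis of Lemma~\ref{lemma:basic-results-subsets} are more careful than the paper, which leaves that check implicit.
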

\begin{proof}
The result follows from Proposition \ref{prop:tensor-bipartite-inclusions} and Lemma \ref{lemma:basic-results-subsets}, since we have
\begin{align}
E(K_A) \tmin E(K_B) &= E(K_A \tmax K_B), \\
E(K_A) \tmax E(K_B) &= E(K_A \tmin K_B).
\end{align}
\end{proof}

At last, we will introduce the concepts of separable and entangled states and we will discuss our constructions.
\begin{definition}
The states from $K_A \tmin K_B$ are called \emph{separable states}. The states from $K_A \treal K_B \setminus K_A \tmin K_B$, i.e., the states from $K_A \treal K_B$ that are not separable, are called \emph{entangled states}.
\end{definition}

One can, of course, ask whether we actually need all of the axioms \ref{item:tensor-intro-stateSpace} - \ref{item:tensor-intro-tomographicLocality}. \ref{item:tensor-intro-stateSpace} is necessary, as it only ensures that $K_A \treal K_B$ is a state space. \ref{item:tensor-intro-sepStates} and \ref{item:tensor-intro-sepEffects} are in some sense dual to each other and their goals are only to allows the natural scenarios where Alice and Bob do not interact and are unaware of each others existence. One could in principle drop \ref{item:tensor-intro-unitEffect}, since the unit effect of $E(K_A) \treal E(K_B)$ can be fixed by its action on separable states and this yields the unit effect of $E(K_A) \treal E(K_B)$ to be $1_{K_A} \otimes 1_{K_B}$ if separable states are generated by states of the form $x_A \otimes x_B$ for $x_A \in K_A$, $x_B \in K_B$. But one can also use \ref{item:tensor-intro-unitEffect} more explicitly to start the construction from order unit spaces corresponding to the effect algebras $E(K_A)$ and $E(K_B)$, hence we keep it in the list of assumptions.

At last, one can discuss \ref{item:tensor-intro-tomographicLocality}. This assumption is often called tomographic locality or local distinguishability and it was used as an axiom for the derivation of quantum theory in \cite{Hardy-derivationQT}. Without \ref{item:tensor-intro-tomographicLocality} we can have states in $K_{AB}$ that contain information hidden to Alice and Bob and which is only available when you can manipulate the whole state. It is being discussed whether physical theories should obey tomographic locality and theories without tomographic locality are actively researched \cite{DArianoErbaPerinotti-classicalEntanglement,DArianoErbaPerinotti-entanglement}.

\subsection{Multipartite scenarios}
In this section, we will investigate what additional assumptions one needs to make to describe scenarios including more than two parties. So let $K_A$, $K_B$, $K_C$ be state spaces. How do we then define the tripartite state space $K_{ABC}$? Clearly one option is to first form the bipartite state space $K_A \treal K_B$ and then add $K_C$, so that we get $(K_A \treal K_B) \treal K_C$. Other option is to first form $K_B \treal K_C$ and then add $K_A$ to obtain $K_A \treal (K_B \treal  K_C)$. It is natural to require that both of these construction yield the same result.
\begin{definition}
The tensor product of state spaces $\treal$ must be \emph{associative}, i.e., we must have
\begin{equation} \label{eq:tensor-multi-associative}
(K_A \treal K_B) \treal K_C = K_A \treal (K_B \treal  K_C).
\end{equation}
\end{definition}
We are now going to do three things: as first, we are going to show that if the tensor product of state spaces is associative, then so is the tensor product of effect algebras. Then we will prove that both the minimal tensor product $\tmin$ and the maximal tensor product $\tmax$ are associative. Finally, we are going to show a list of five identities that follow from the associativity of $\treal$ and that correspond to the pentagon diagram in category theory.

\begin{proposition} \label{prop:tensor-multi-Eassociative}
Let $K_A$, $K_B$, $K_C$ be state spaces. If $\treal$ is an associative tensor product of state spaces, then we have
\begin{equation}
(E(K_A) \treal E(K_B)) \treal E(K_C) = E(K_A) \treal (E(K_B) \treal E(K_C)).
\end{equation}
\end{proposition}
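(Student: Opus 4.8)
The plan is to derive everything from the defining convention $E(K_A) \treal E(K_B) = E(K_A \treal K_B)$ together with the assumed associativity of $\treal$ for state spaces; essentially no new work is needed, one only has to unwind the definitions in the right order.

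First I would unfold the left-hand side. Using the defining identity once replaces $E(K_A) \treal E(K_B)$ by $E(K_A \treal K_B)$, and using it a second time (now with the perfectly valid state space $K_A \treal K_B$ occupying the first slot) gives
\[
(E(K_A) \treal E(K_B)) \treal E(K_C) = E(K_A \treal K_B) \treal E(K_C) = E\big((K_A \treal K_B) \treal K_C\big).
\]
Symmetrically, starting from the right-hand side,
\[
E(K_A) \treal (E(K_B) \treal E(K_C)) = E(K_A) \treal E(K_B \treal K_C) = E\big(K_A \treal (K_B \treal K_C)\big).
\]
It then remains to invoke the hypothesis that $\treal$ is associative on state spaces, that is $(K_A \treal K_B) \treal K_C = K_A \treal (K_B \treal K_C)$. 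Applying the operation $E(\cdot)$ to both sides of this equality of state spaces yields $E\big((K_A \treal K_B) \treal K_C\big) = E\big(K_A \treal (K_B \treal K_C)\big)$, and chaining this with the two displays above proves the proposition.

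The one point that deserves a word of care, and which I regard as the only (very minor) obstacle, is that the convention $E(X) \treal E(Y) = E(X \treal Y)$ is being applied with $X$ itself a composite state space (first $X = K_A \treal K_B$, then $X = K_B \treal K_C$); this is legitimate precisely because $K_A \treal K_B$ and $K_B \treal K_C$ are again valid state spaces, so the convention applies verbatim, and it implicitly requires that the triple products appearing above lie among the state spaces on which $\treal$ is defined, which is exactly what the associativity assumption already presupposes. I do not expect any substantial difficulty: associativity of the induced tensor product of effect algebras is a purely formal consequence of associativity of $\treal$ on state spaces, since the former is by definition the image of the latter under $E(\cdot)$.
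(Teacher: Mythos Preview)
Your proof is correct and follows exactly the same chain of equalities as the paper: unfold $E(X)\treal E(Y)=E(X\treal Y)$ twice on each side and invoke associativity of $\treal$ on state spaces in the middle. The extra remark you make about applying the convention with composite state spaces in the first slot is a fair observation, but the paper treats this as automatic.
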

\begin{proof}
We have
\begin{align}
(E(K_A) \treal E(K_B)) \treal E(K_C) &= E(K_A \treal K_B) \treal E(K_C) = E((K_A \treal K_B) \treal K_C) \\
&= E(K_A \treal (K_B \treal K_C)) = E(K_A) \treal E(K_B \treal K_C) \\
&= E(K_A) \treal (E(K_B) \treal E(K_C)).
\end{align}
\end{proof}

\begin{proposition}
The minimal tensor product $\tmin$ is associative.
\end{proposition}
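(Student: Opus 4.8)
The plan is to unwind the definition of $K_A \tmin K_B$ and verify directly that forming the minimal tensor product in either grouping yields the same convex hull of product states. First I would recall that, by definition,
\begin{equation}
(K_A \tmin K_B) \tmin K_C = \conv\left( \{ z \otimes x_C : z \in K_A \tmin K_B, \, x_C \in K_C \} \right),
\end{equation}
and similarly for $K_A \tmin (K_B \tmin K_C)$. The natural isomorphism $(A(K_A)^* \otimes A(K_B)^*) \otimes A(K_C)^* \cong A(K_A)^* \otimes (A(K_B)^* \otimes A(K_C)^*)$ from \ref{appendix:bilinear} lets us regard both objects as living in the same vector space $A(K_A)^* \otimes A(K_B)^* \otimes A(K_C)^*$, and under this identification $(x_A \otimes x_B) \otimes x_C = x_A \otimes (x_B \otimes x_C) = x_A \otimes x_B \otimes x_C$ for product states.

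The key step is the following interchange of convex hull with the product operation: I would show that
\begin{equation}
\conv\left( \{ z \otimes x_C : z \in K_A \tmin K_B, \, x_C \in K_C \} \right) = \conv\left( \{ x_A \otimes x_B \otimes x_C : x_A \in K_A, x_B \in K_B, x_C \in K_C \} \right).
\end{equation}
The inclusion $\supseteq$ is immediate since each $x_A \otimes x_B \in K_A \tmin K_B$. For $\subseteq$, take a generator $z \otimes x_C$ with $z \in K_A \tmin K_B$; write $z = \sum_i \lambda_i \, x_{i,A} \otimes x_{i,B}$ as a convex combination of product states, so that by bilinearity of the tensor product $z \otimes x_C = \sum_i \lambda_i \, x_{i,A} \otimes x_{i,B} \otimes x_C$, which is manifestly a convex combination of triple product states. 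Taking convex hulls of both sides (and using that the convex hull of a set of convex combinations of points in $S$ is contained in $\conv(S)$) gives the claim. Running the same argument with the roles grouped as $A \mid (B\,C)$ shows $K_A \tmin (K_B \tmin K_C)$ equals the same convex hull of triple product states, and associativity follows. I would also note in passing that this set is compact, as it is the continuous image of the compact set $K_A \times K_B \times K_C$ under the (multilinear, hence continuous) map $(x_A, x_B, x_C) \mapsto x_A \otimes x_B \otimes x_C$ followed by taking the convex hull, which preserves compactness in finite dimensions — so the object is a legitimate state space, consistent with Proposition \ref{prop:what-stateSpace-compact}.

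The only mild subtlety — and the closest thing to an obstacle — is bookkeeping the canonical isomorphism so that equality \eqref{eq:tensor-multi-associative} is literal and not merely "up to isomorphism"; concretely one fixes the associator $(u \otimes v) \otimes w \mapsto u \otimes (v \otimes w)$ on $A(K_A)^* \otimes A(K_B)^* \otimes A(K_C)^*$ and checks it carries the generating product states of one side bijectively onto those of the other, after which the equality of convex hulls is automatic. Everything else is routine manipulation of convex combinations and bilinearity.
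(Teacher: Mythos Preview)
Your proposal is correct and follows essentially the same route as the paper: both arguments introduce the triple convex hull $\conv(\{x_A \otimes x_B \otimes x_C\})$ and show that each of the two iterated minimal tensor products coincides with it by expanding the inner factor as a convex combination of product states. Your treatment is slightly more explicit about the associator isomorphism and compactness, but the underlying reasoning is identical.
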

\begin{proof}
Let $K_A$, $K_B$, $K_C$ be state spaces. Denote
\begin{equation}
K_A \tmin K_B \tmin K_C = \conv \left( \{ x_A \otimes x_B \otimes x_C: x_A \in K_A, x_B \in K_B, x_C \in K_C \}. \right)
\end{equation}
We clearly have
\begin{align}
(K_A \tmin K_B) \tmin K_C &\subset K_A \tmin K_B \tmin K_C, \\
K_A \tmin (K_B \tmin K_C) &\subset K_A \tmin K_B \tmin K_C,
\end{align}
which one can show by simply writing out the general element of $(K_A \tmin K_B) \tmin K_C$ and $K_A \tmin (K_B \tmin K_C)$. Let $x_{ABC} \in K_A \tmin K_B \tmin K_C$, then
\begin{equation}
x_{ABC} = \sum_{i=1}^N \lambda_i y_{i,A} \otimes y_{i,B} \otimes y_{i,C}
\end{equation}
for some $y_{i,A} \in K_A$, $y_{i,B} \in K_B$, $y_{i,C} \in K_C$, $\lambda_i \in \Rp$ for $i \in \{1, \ldots, N\}$ and $\sum_{i=1}^n \lambda_i = 1$. Then $x_{ABC} \in (K_A \tmin K_B) \tmin K_C$ since $y_{i,A} \otimes y_{i,B} \in K_A \tmin K_B$ and $y_{i,C} \in K_C$. Also $x_{ABC} \in K_A \tmin (K_B \tmin K_C)$ as $y_{i,A} \in K_A$ and $y_{i,B} \otimes y_{i,C} \in K_B \tmin K_C$. So we have
\begin{equation}
(K_A \tmin K_B) \tmin K_C = K_A \tmin K_B \tmin K_C = K_A \tmin (K_B \tmin K_C).
\end{equation}
\end{proof}

\begin{corollary}
The maximal tensor product of effect algebras is associative.
\end{corollary}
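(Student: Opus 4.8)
The plan is to derive this directly from the associativity of the minimal tensor product of state spaces just established, together with the defining relation $E(K_A) \tmax E(K_B) = E(K_A \tmin K_B)$; beyond unwinding notation there is essentially nothing to do.

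First I would note that $E(K_A) \tmax E(K_B)$ is, by definition, the effect algebra of the genuine state space $K_A \tmin K_B$ (which was checked above to be compact and convex inside a finite-dimensional vector space), so the maximal tensor product of effect algebras may legitimately be applied to it a second time. Iterating the defining relation then gives, on the one side,
\[
(E(K_A) \tmax E(K_B)) \tmax E(K_C) = E(K_A \tmin K_B) \tmax E(K_C) = E\big((K_A \tmin K_B) \tmin K_C\big),
\]
and, by the same reasoning on the other side,
\[
E(K_A) \tmax (E(K_B) \tmax E(K_C)) = E\big(K_A \tmin (K_B \tmin K_C)\big).
\]
By the preceding proposition $(K_A \tmin K_B) \tmin K_C = K_A \tmin (K_B \tmin K_C)$ (both equal the common object $K_A \tmin K_B \tmin K_C$), so applying $E(\cdot)$ to this equality yields equality of the two right-hand sides, hence of the two left-hand sides, which is exactly the claim.

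The only point worth stating explicitly — a bookkeeping remark rather than a genuine obstacle — is that all effect algebras appearing above should be viewed inside one and the same vector space: using the bilinear-forms isomorphism recalled in \ref{appendix:bilinear}, $A(K_A \tmin K_B)$ is identified with $A(K_A) \otimes A(K_B)$ and $A((K_A \tmin K_B) \tmin K_C)$ with $A(K_A) \otimes A(K_B) \otimes A(K_C)$, so that the two displays are genuine set equalities in $A(K_A) \otimes A(K_B) \otimes A(K_C)$ rather than mere isomorphisms. Once this identification is in place the corollary is immediate.
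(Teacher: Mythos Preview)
Your proof is correct and is essentially the paper's own argument spelled out in full: the paper cites the relation $E(K_A)\tmax E(K_B)=E(K_A\tmin K_B)$ together with Proposition~\ref{prop:tensor-multi-Eassociative} (applied with $\treal=\tmin$, which was just shown to be associative), and your two displayed chains of equalities are exactly the specialization of that proposition's proof to this case. The extra bookkeeping remark about identifying all effect algebras inside $A(K_A)\otimes A(K_B)\otimes A(K_C)$ is fine but not something the paper makes explicit.
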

\begin{proof}
The result follows from $E(K_A) \tmax E(K_B) = E(K_A \tmin K_B)$ and Proposition \ref{prop:tensor-multi-Eassociative}.
\end{proof}

\begin{proposition}
The maximal tensor product $\tmax$ is associative.
\end{proposition}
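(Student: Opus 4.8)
The plan is to reduce the statement to the associativity of the minimal tensor product, which was just established, using the duality between the two tensor products. First recall that by Definition~\ref{def:tensor-bipartite-maxProd} (see \eqref{eq:tensor-bipartite-maxProdS}) the maximal tensor product of any two state spaces $M$ and $N$ is $M \tmax N = S(E(M) \tmin E(N))$; combining this with the reconstruction identity \eqref{eq:basic-connection-ESE}, namely $E(S(E)) = E$ for a linear effect algebra $E$, gives the identity of linear effect algebras $E(M \tmax N) = E(M) \tmin E(N)$. Applying this with $M = K_A \tmax K_B$ and $N = K_C$ I would compute
\begin{equation}
(K_A \tmax K_B) \tmax K_C = S\big( E(K_A \tmax K_B) \tmin E(K_C) \big) = S\big( (E(K_A) \tmin E(K_B)) \tmin E(K_C) \big),
\end{equation}
where the second step uses $E(K_A \tmax K_B) = E(K_A) \tmin E(K_B)$ together with the fact, from Proposition~\ref{prop:basic-EA-genAK} applied to $K_A \tmax K_B$, that the ambient space $A(K_A \tmax K_B) = \linspan(E(K_A \tmax K_B))$ may be identified with $A(K_A) \otimes A(K_B)$, its order unit with $1_{K_A} \otimes 1_{K_B}$. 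The symmetric computation yields $K_A \tmax (K_B \tmax K_C) = S\big( E(K_A) \tmin (E(K_B) \tmin E(K_C)) \big)$. Since applying $S$ to equal linear effect algebras produces equal state spaces (the order unit of a linear effect algebra $[0,u]$ is recovered as its top element, and $S$ depends only on the underlying cone and order unit), it remains to prove
\begin{equation}
(E(K_A) \tmin E(K_B)) \tmin E(K_C) = E(K_A) \tmin (E(K_B) \tmin E(K_C))
\end{equation}
as subsets of $A(K_A) \otimes A(K_B) \otimes A(K_C)$.

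This last identity I would prove exactly as in the proof that $\tmin$ is associative on state spaces: unwinding $E \tmin F = \conv(\{ e \otimes f : e \in E,\, f \in F \})$, one checks that a generic element $\sum_i \lambda_i f_{i,A} \otimes f_{i,B}$ of $E(K_A) \tmin E(K_B)$ tensored with $f_C \in E(K_C)$ equals the convex combination $\sum_i \lambda_i\, f_{i,A} \otimes f_{i,B} \otimes f_C$, and conversely every $f_A \otimes f_B \otimes f_C$ arises this way, so both groupings equal $\conv(\{ f_A \otimes f_B \otimes f_C : f_A \in E(K_A),\, f_B \in E(K_B),\, f_C \in E(K_C) \})$. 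Here one uses that the vector-space tensor product is associative, so the two ambient spaces genuinely coincide; the argument is identical on the other side.

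I do not expect a genuine obstacle here: the essential inputs — the duality $E(M \tmax N) = E(M) \tmin E(N)$, the reconstruction $E(S(E)) = E$, and the associativity of $\tmin$ — are all already in hand, and the remainder is bookkeeping. The one point that needs care is the identification of $A(K_A \tmax K_B)$ with $A(K_A) \otimes A(K_B)$ as an \emph{order unit space}, matching $1_{K_A \tmax K_B}$ with $1_{K_A} \otimes 1_{K_B}$, so that the two iterated minimal tensor products live in the same space and carry the same unit; once this is pinned down the chain of equalities closes. As an alternative one could bypass $S$ altogether and argue directly from \eqref{eq:tensor-bipartite-maxProdDef}, showing that both $(K_A \tmax K_B) \tmax K_C$ and $K_A \tmax (K_B \tmax K_C)$ are precisely the set of $\varphi \in A(K_A)^* \otimes A(K_B)^* \otimes A(K_C)^*$ with $\< \varphi, 1_{K_A} \otimes 1_{K_B} \otimes 1_{K_C} \> = 1$ that are nonnegative on every product effect $f_A \otimes f_B \otimes f_C$, using that the effects of a maximal tensor product are exactly the sums and convex combinations of such products.
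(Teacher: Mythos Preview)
Your proposal is correct. Your primary route is, however, genuinely different from the paper's. The paper works directly from the explicit description \eqref{eq:tensor-bipartite-maxProdDef}: it unfolds $(K_A \tmax K_B) \tmax K_C$ as the set of normalized functionals $\varphi \in A(K_A)^* \otimes A(K_B)^* \otimes A(K_C)^*$ positive on all $f_{AB} \otimes f_C$ with $f_{AB} \in E(K_A) \tmin E(K_B)$, observes that such $f_{AB}$ are convex combinations of product effects $f_A \otimes f_B$, and thereby identifies both iterated products with the single ``symmetric'' set $\{\varphi : \langle \varphi, f_A \otimes f_B \otimes f_C\rangle \geq 0,\ \langle \varphi, 1_{K_A}\otimes 1_{K_B}\otimes 1_{K_C}\rangle = 1\}$. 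This is exactly the alternative you sketch in your last paragraph. Your main argument instead pushes everything through the duality $M \tmax N = S(E(M)\tmin E(N))$ and the reconstruction $E(S(E)) = E$, reducing the claim to associativity of $\tmin$ on effect algebras, which you then prove from scratch. Both arrive at the same computational core (convex hulls of triple products), but your approach packages it more categorically and makes transparent that the result is the formal dual of the already-established associativity of $\tmin$; the paper's approach is more elementary and avoids having to verify that $E(K_A)\tmin E(K_B)$ generates a closed cone so that \eqref{eq:basic-connection-ESE} applies. Note also that in the paper the associativity of $\tmin$ on effect algebras is stated as a \emph{corollary} of the present proposition, so it is important that you prove it independently rather than invoke that corollary.
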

\begin{proof}
By definition, we have
\begin{equation}
\begin{split}
(K_A \tmax K_B) \tmax K_C = \{ \varphi \in A(K_A)^* \otimes A(K_B)^* \otimes A(K_C)^*: &\< \varphi, f_{AB} \otimes f_C \> \geq 0, \\
&\forall f_{AB} \in E(K_A) \tmin E(K_B), \forall f_C \in E(K_C), \\
&\< \varphi, 1_{K_A} \otimes 1_{K_B} \otimes 1_{K_C} \> = 1 \}.
\end{split}
\end{equation}
Since $f_{AB} \in E(K_A) \tmin E(K_B)$ can be written as a convex combination of the elements of the form $f_A \otimes f_B$, where $f_A \in E(K_A)$ and $f_B \in E(K_B)$, we get
\begin{equation}
\begin{split}
(K_A \tmax K_B) \tmax K_C = \{ \varphi \in A(K_A)^* \otimes A(K_B)^* \otimes A(K_C)^*: &\< \varphi, f_A \otimes f_B \otimes f_C \> \geq 0, \\
&\forall f_A \in E(K_A), \\
&\forall f_B \in E(K_B), \\
&\forall f_C \in E(K_C), \\
&\< \varphi, 1_{K_A} \otimes 1_{K_B} \otimes 1_{K_C} \> = 1 \}.
\end{split}
\end{equation}
It follows that since $f_B \otimes f_C \in E(K_B) \tmin E(K_C)$ for all $f_B \in E(K_B)$ and $f_C \in E(K_C)$, we get
\begin{equation} \label{eq:tensor-multi-maxAssoc-last}
\begin{split}
(K_A \tmax K_B) \tmax K_C = \{ \varphi \in A(K_A)^* \otimes A(K_B)^* \otimes A(K_C)^*: &\< \varphi, f_A \otimes f_{BC} \> \geq 0, \\
&\forall f_A \in E(K_A), \forall f_{BC} \in E(K_B) \tmin E(K_B), \\
&\< \varphi, 1_{K_A} \otimes 1_{K_B} \otimes 1_{K_C} \> = 1 \}.
\end{split}
\end{equation}
One can see that the right hand side of \eqref{eq:tensor-multi-maxAssoc-last} is exactly the definition of $K_A \tmax (K_B \tmax K_C)$ and so the result follows.
\end{proof}

\begin{corollary}
The minimal tensor product of effect algebras is associative.
\end{corollary}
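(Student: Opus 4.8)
The plan is to mirror the proof that the maximal tensor product of effect algebras is associative, but with the roles of $\tmin$ and $\tmax$ interchanged. The single fact that does all the work is the identity
\begin{equation*}
E(K_A) \tmin E(K_B) = E(K_A \tmax K_B),
\end{equation*}
which is already available: by \eqref{eq:tensor-bipartite-maxProdS} we have $K_A \tmax K_B = S(E(K_A) \tmin E(K_B))$, and since $E(K_A) \tmin E(K_B)$ is a linear effect algebra, applying $E(\cdot)$ and using \eqref{eq:basic-connection-ESE} (i.e.\ $E(S(E)) = E$) gives $E(K_A \tmax K_B) = E\big(S(E(K_A) \tmin E(K_B))\big) = E(K_A) \tmin E(K_B)$.

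First I would note that the identity exhibits $E(K_A) \tmin E(K_B)$ as $E(K)$ for the genuine state space $K = K_A \tmax K_B$, so that forming a further minimal tensor product with $E(K_C)$ is well defined. Then I would simply chain equalities: using the identity twice,
\begin{equation*}
(E(K_A) \tmin E(K_B)) \tmin E(K_C) = E(K_A \tmax K_B) \tmin E(K_C) = E\big((K_A \tmax K_B) \tmax K_C\big),
\end{equation*}
and symmetrically
\begin{equation*}
E(K_A) \tmin (E(K_B) \tmin E(K_C)) = E(K_A) \tmin E(K_B \tmax K_C) = E\big(K_A \tmax (K_B \tmax K_C)\big).
\end{equation*}
Finally I would invoke the already established associativity of the maximal tensor product of state spaces, $(K_A \tmax K_B) \tmax K_C = K_A \tmax (K_B \tmax K_C)$, and apply $E(\cdot)$ to both sides, which equates the two expressions above and proves the corollary.

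There is no real obstacle here; the only thing to be careful about is purely a matter of notation, namely confirming that the symbol $\tmin$ for effect algebras is only ever being applied to effect algebras of the form $E(K)$ so that the chained equalities make sense. Once the identity $E(K_A)\tmin E(K_B) = E(K_A\tmax K_B)$ is recorded, everything reduces to a two-line computation on top of the associativity of $\tmax$ proved just above. Alternatively, one could avoid even this bookkeeping point by working at the level of state spaces throughout: observe $S(E(K_A)\tmin E(K_B)) = K_A \tmax K_B$, so associativity of $\tmin$ for effect algebras is exactly the statement $S$ of both iterated products coincides, which is again immediate from associativity of $\tmax$.
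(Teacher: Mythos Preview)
Your proof is correct and essentially identical to the paper's: the paper invokes the identity $E(K_A)\tmin E(K_B)=E(K_A\tmax K_B)$ together with Proposition~\ref{prop:tensor-multi-Eassociative} applied to $\tmax$, and you have simply unfolded that proposition's chain of equalities explicitly. Your additional care in justifying the identity via \eqref{eq:tensor-bipartite-maxProdS} and \eqref{eq:basic-connection-ESE} is a nice touch that the paper leaves implicit.
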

\begin{proof}
The result follows from $E(K_A) \tmin E(K_B) = E(K_A \tmax K_B)$ and Proposition \ref{prop:tensor-multi-Eassociative}.
\end{proof}

\begin{proposition} \label{prop:tensor-multi-pentagon}
Let $K_A$, $K_B$, $K_C$, $K_D$ be state spaces. For a tensor product of state spaces $\treal$ it holds that
\begin{align}
(K_A \treal K_B) \treal (K_C \treal K_D) &= K_A \treal (K_B \treal (K_C \treal K_D)) = K_A \treal ((K_B \treal K_C) \treal K_D) \\
&= (K_A \treal (K_B \treal K_C)) \treal K_D = ((K_A \treal K_B) \treal K_C) \treal K_D
\end{align}
which can be also written as the following diagram of equalities:
\begin{equation} \label{eq:tensor-multi-pentagon}
\begin{split}
\begin{tikzpicture}
\node(P0)at (90:4.3125cm){$(K_A \treal K_B) \treal (K_C \treal K_D)$};
\node(P1)at (90+72:3.75cm){$K_A \treal (K_B \treal (K_C \treal K_D))$};
\node(P2)at (90+2*72:3.75cm) {$K_A \treal ((K_B \treal K_C) \treal K_D)$};
\node(P3)at (90+3*72:3.75cm){$(K_A \treal (K_B \treal K_C)) \treal K_D$};
\node(P4)at (90+4*72:3.75cm){$((K_A \treal K_B) \treal K_C) \treal K_D$};
\draw[
	line width=0.15mm,
	line cap=butt,
	double=white,
	double distance=0.75mm
	]
(P0) -- (P1)
(P1) -- (P2)
(P2) -- (P3)
(P3) -- (P4)
(P4) -- (P0);
\end{tikzpicture}
\end{split}
\end{equation}
\end{proposition}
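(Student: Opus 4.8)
The plan is to derive every equality in the statement from a single application of the associativity identity \eqref{eq:tensor-multi-associative}, applied either to the whole expression or, after an obvious substitution, to one of its parenthesized sub-blocks. Two elementary remarks make this work. First, by \ref{item:tensor-intro-stateSpace} every tensor product $K_X \treal K_Y$ is again a valid state space, so it may legitimately occupy any of the three slots in \eqref{eq:tensor-multi-associative}. Second, $\treal$ is a rule assigning a state space to an ordered pair of state spaces, hence it respects equality of its arguments: if $K_X = K_Y$ then $K_Z \treal K_X = K_Z \treal K_Y$ and $K_X \treal K_Z = K_Y \treal K_Z$. Together these let us rewrite any parenthesized sub-block of a parenthesization using \eqref{eq:tensor-multi-associative}.

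Concretely, I would label the five parenthesizations $P_0, \dots, P_4$ as in the diagram \eqref{eq:tensor-multi-pentagon} and check the five edges in turn. For $P_0 = (K_A \treal K_B) \treal (K_C \treal K_D) = K_A \treal (K_B \treal (K_C \treal K_D)) = P_1$, apply \eqref{eq:tensor-multi-associative} to the triple $(K_A,\, K_B,\, K_C \treal K_D)$. For $P_1 = P_2$, apply it to $(K_B, K_C, K_D)$ inside the common outer factor $K_A \treal (-)$. For $P_2 = P_3$, apply it to $(K_A,\, K_B \treal K_C,\, K_D)$. For $P_3 = P_4$, apply it to $(K_A, K_B, K_C)$ inside the common outer factor $(-) \treal K_D$. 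Finally, for $P_4 = P_0$, apply it to $(K_A \treal K_B,\, K_C,\, K_D)$. Chaining the first four edges yields exactly the string of equalities displayed in the statement, and all five edges together give the pentagon \eqref{eq:tensor-multi-pentagon}.

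I do not expect a genuine obstacle here. Because $\treal$ is \emph{strictly} associative --- \eqref{eq:tensor-multi-associative} is a literal equality of sets, not merely a natural isomorphism --- the coherence condition that the pentagon usually encodes in category theory is automatic once the five vertices are known to be pairwise equal. The only thing requiring care is bookkeeping: in each edge one must apply associativity to the correct triple, and, when applying it to a sub-block, one must be entitled to substitute under $\treal$, which is precisely the content of the two remarks above. If one preferred to avoid substitution under $\treal$ entirely, an alternative presentation is to prove each of $P_1, \dots, P_4$ equal to $P_0$ directly by a single use of \eqref{eq:tensor-multi-associative} with the tensor-product slots chosen as needed; I would fall back on that version if a fully explicit chain of equalities were wanted.
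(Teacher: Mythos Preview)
Your proposal is correct and follows exactly the paper's approach: the paper's proof is the single line ``The result follows by repeated application of \eqref{eq:tensor-multi-associative}.'' You have simply spelled out which triple each application uses, which is fine but more detail than the paper provides.
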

\begin{proof}
The result follows by repeated application of \eqref{eq:tensor-multi-associative}.
\end{proof}
The importance of Proposition \ref{prop:tensor-multi-pentagon} is hidden in \eqref{eq:tensor-multi-pentagon}. \eqref{eq:tensor-multi-pentagon} corresponds to the pentagon diagram that is key in defining monoidal categories. Monoidal categories are very general mathematical structures that generalize tensor products of various objects. Our current framework can be formulated as a category of state spaces and \eqref{eq:tensor-multi-pentagon} shows that the tensor product of state spaces $\treal$ gives it the structure of a monoidal category. There is a slight caveat: as we have already explained, $\treal$ does not have to be defined for all state spaces, but only for selected state spaces. Therefore, to be more precise, one can show that a collection of selected state spaces for which $\treal$ is defined is a monoidal category. From now on we will assume that the tensor product $\treal$ is associative and always defined whenever needed.

\subsection{Diagrammatic notation for multipartite scenarios}
We will now explain, how to use diagrammatic notation in multipartite scenarios. So let $x_A \in K_A$ and $x_B$ in $K_B$, then we will use
\begin{equation}
\begin{quantikz}[row sep=\the\rowsep]
& \prepareC{x_A} & \qw{K_A} \\
& \prepareC{x_B} & \qw{K_B}
\end{quantikz}
\end{equation}
to denote the state $x_A \otimes x_B \in K_A \treal K_B$. We will use
\begin{equation}
\begin{quantikz}[row sep=\the\rowsep]
& \multiprepareC[2]{y_{AB}} & \qw{K_A} \\
& & \qw{K_B}
\end{quantikz}
\end{equation}
to denote a general $y_{AB} \in K_A \treal K_B$. Similarly for effects, we will use
\begin{equation}
\begin{quantikz}[row sep=\the\rowsep]
& \meterD{f_A}{K_A} \\
& \meterD{f_B}{K_B}
\end{quantikz}
\end{equation}
to denote the state $f_A \otimes f_B \in E(K_A) \treal E(K_B)$. We will use
\begin{equation}
\begin{quantikz}[row sep=\the\rowsep]
& \qw{K_A} & \multimeterD[2]{g_{AB}} \\
& \qw{K_B} &
\end{quantikz}
\end{equation}
to denote a general $g_{AB} \in E(K_A) \treal E(K_B)$. For $y_{AB} \in K_A \treal K_B$ and $g_{AB} \in E(K_A) \treal E(K_B)$ we then have
\begin{equation}
\begin{quantikz}[row sep=\the\rowsep, align equals at=1.5]
& \multiprepareC[2]{y_{AB}} & \qw{K_A} & \multimeterD[2]{g_{AB}} \\
& & \qw{K_B} &
\end{quantikz}
= \< y_{AB}, g_{AB}\>.
\end{equation}
In the future, we will mostly omit the wire labels that specify the respective state spaces.

\subsection{Partial trace and monogamy of entanglement}
Let $K_A$, $K_B$ be state spaces, let $x_{AB} \in K_A \treal K_B$ and let $f_B \in E(K_B)$. Can we define
\begin{equation} \label{eq:tensor-partial-xABfB}
\begin{quantikz}[row sep=\the\rowsep]
& \multiprepareC[2]{x_{AB}} & \qw \\
& & \meterD{f_B}
\end{quantikz}
\end{equation}
and does it have any meaning? We will first show that the object in \eqref{eq:tensor-partial-xABfB} has a valid mathematical meaning, then we will proceed with proving some of its properties as well as more general results about entanglement. Note that the inline equivalent of object in \eqref{eq:tensor-partial-xABfB} is $(\id_{K_A} \otimes f_B)(x_{AB})$, i.e.,
\begin{equation}
\begin{quantikz}[row sep=\the\rowsep, align equals at=1.5]
& \multiprepareC[2]{x_{AB}} & \qw \\
& & \meterD{f_B}
\end{quantikz}
= (\id_{K_A} \otimes f_B)(x_{AB}),
\end{equation}
where $\id_{K_A}$ denotes the identity map $\id_{K_A}: A(K_A)^* \to A(K_A)^*$ and $f_B$ si now treated as a linear map $f_B: A(K_B)^* \to \RR$. Then $\id_{K_A} \otimes f_B$ is a linear map $\id_{K_A} \otimes f_B: A(K_A)^* \otimes A(K_B)^* \to A(K_A)^*$.

The object in \eqref{eq:tensor-partial-xABfB} has an unused output wire in the $K_A$ system, so for any $g_A \in E(K_A)$ we can construct
\begin{equation}
\begin{quantikz}[row sep=\the\rowsep, align equals at=1.5]
& \multiprepareC[2]{x_{AB}} & \meterD{g_A} \\
& & \meterD{f_B}
\end{quantikz}
= \< x_{AB}, g_A \otimes f_B \>.
\end{equation}
In other words, the object in \eqref{eq:tensor-partial-xABfB} behaves as a functional on $A(K_A)$ and so we must have
\begin{equation}
\begin{quantikz}[row sep=\the\rowsep, align equals at=1.5]
& \multiprepareC[2]{x_{AB}} & \qw \\
& & \meterD{f_B}
\end{quantikz}
\in A(K_A)^*.
\end{equation}
To better demonstrate our point, assume that $x_{AB} = y_A \otimes y_B$ for some $y_A \in K_A$ and $y_B \in K_B$. We then have
\begin{equation} \label{eq:tensor-partial-yAyBfB}
\begin{quantikz}[row sep=\the\rowsep, align equals at=1.5]
& \multiprepareC[2]{x_{AB}} & \qw \\
& & \meterD{f_B}
\end{quantikz}
=
\begin{quantikz}[row sep=\the\rowsep, align equals at=1.5]
&[\prepfix] \prepareC{y_A} & \qw \\
& \prepareC{y_B} & \meterD{f_B}
\end{quantikz}
=
\begin{quantikz}[row sep=\the\rowsep, align equals at=1]
&\lstick{$\<y_B, f_B\>$} &[\prepfix] \prepareC{y_A} & \qw
\end{quantikz}.
\end{equation}
Since $A(K_A)^* \otimes A(K_B)^* = \linspan( \{ y_A \otimes y_B: y_A \in K_A, y_B \in K_B \}$, it follows that we can also define the object in \eqref{eq:tensor-partial-xABfB} by writing $x_{AB}$ as linear combination (with possibly non-positive coefficients) of product states $y_A \otimes y_B$ and using \eqref{eq:tensor-partial-yAyBfB}. Let us summarize the results so far.
\begin{proposition} \label{prop:tensor-partial-equivOneLeg}
Let $x_{AB} \in K_A \treal K_B$ and let $y_{i,A} \in K_A$, $y_{i,B} \in K_B$, $\alpha_i \in \RR$ for $i \in \{1, \ldots, N\}$ be such that $x_{AB} = \sum_{i=1}^N \alpha_i y_{i,A} \otimes y_{i,B}$. Let $f_B \in E(K_B)$ and let $\varphi_A \in A(K_A)^*$ be given for $g_A \in A(K_A)$ as
\begin{equation}
\< \varphi_A, g_A \> = \< x_{AB}, g_A \otimes f_B \>.
\end{equation}
Then we have
\begin{equation} \label{eq:tensor-partial-equivOneLeg}
\begin{quantikz}[row sep=\the\rowsep, align equals at=1.5]
& \multiprepareC[2]{x_{AB}} & \qw \\
& & \meterD{f_B}
\end{quantikz}
=
\sum_{i=1}^N \alpha_i \< y_{i,B}, f_B \>
\begin{quantikz}[row sep=\the\rowsep, align equals at=1]
&[\prepfix] \prepareC{y_{i,A}} & \qw
\end{quantikz}
=
\begin{quantikz}[row sep=\the\rowsep, align equals at=1]
&[\prepfix] \prepareC{\varphi_A} & \qw
\end{quantikz}
\end{equation}
\end{proposition}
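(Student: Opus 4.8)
The plan is to prove both equalities in \eqref{eq:tensor-partial-equivOneLeg} purely by linearity, reducing everything to the product–state computation \eqref{eq:tensor-partial-yAyBfB} that was carried out just above the statement.

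First I would recall that the object on the left of \eqref{eq:tensor-partial-equivOneLeg} is by definition $(\id_{K_A} \otimes f_B)(x_{AB})$, where $\id_{K_A} \otimes f_B$ is a genuine linear map $A(K_A)^* \otimes A(K_B)^* \to A(K_A)^*$; this is exactly what the correspondence between tensor products and linear maps reviewed in \ref{appendix:bilinear} provides, applied to $\id_{K_A}$ and to $f_B$ regarded as a linear map $A(K_B)^* \to \RR$. A decomposition $x_{AB} = \sum_{i=1}^N \alpha_i\, y_{i,A} \otimes y_{i,B}$ of the required form always exists, since by Lemma \ref{lemma:basic-results-span} the sets $K_A$ and $K_B$ span $A(K_A)^*$ and $A(K_B)^*$ respectively, so products $y_A \otimes y_B$ span $A(K_A)^* \otimes A(K_B)^*$. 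Linearity of $\id_{K_A} \otimes f_B$ then gives
\[
(\id_{K_A} \otimes f_B)(x_{AB}) = \sum_{i=1}^N \alpha_i\, (\id_{K_A} \otimes f_B)(y_{i,A} \otimes y_{i,B}),
\]
and on each product term \eqref{eq:tensor-partial-yAyBfB} evaluates to $(\id_{K_A} \otimes f_B)(y_{i,A} \otimes y_{i,B}) = \<y_{i,B}, f_B\>\, y_{i,A}$. This is the first equality in \eqref{eq:tensor-partial-equivOneLeg}.

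For the second equality I would check that $\sum_{i=1}^N \alpha_i \<y_{i,B}, f_B\>\, y_{i,A}$ and $\varphi_A$ coincide as functionals on $A(K_A)$, i.e.\ agree on every $g_A \in A(K_A)$. Using bilinearity of the pairing and of $\otimes$,
\[
\Big\langle \sum_{i=1}^N \alpha_i \<y_{i,B}, f_B\>\, y_{i,A},\ g_A \Big\rangle = \sum_{i=1}^N \alpha_i\, \<y_{i,A}, g_A\>\,\<y_{i,B}, f_B\> = \Big\langle \sum_{i=1}^N \alpha_i\, y_{i,A}\otimes y_{i,B},\ g_A \otimes f_B \Big\rangle = \<x_{AB}, g_A \otimes f_B\> = \<\varphi_A, g_A\>,
\]
the last step being the defining property of $\varphi_A$. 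Since two elements of $A(K_A)^*$ that agree on all of $A(K_A)$ are equal, the second equality follows. As a by-product this computation also shows that $g_A \mapsto \<x_{AB}, g_A \otimes f_B\>$ is linear, so $\varphi_A$ is well defined, and that the value of the diagram is independent of the chosen decomposition of $x_{AB}$.

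There is no real obstacle here; the only point that deserves attention is the justification that $\id_{K_A} \otimes f_B$ is a legitimate linear map on the tensor product — which is precisely the bilinear-forms/tensor-product correspondence of \ref{appendix:bilinear} — together with correctly invoking the product-state identity \eqref{eq:tensor-partial-yAyBfB}.
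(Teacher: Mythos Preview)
Your proof is correct and follows essentially the same route as the paper: both establish the first equality by linearity of $\id_{K_A}\otimes f_B$ together with the product-state identity \eqref{eq:tensor-partial-yAyBfB}, and the second by checking that the two elements of $A(K_A)^*$ agree on every $g_A$. Your write-up is slightly more explicit about why the decomposition exists and why $\id_{K_A}\otimes f_B$ is a well-defined linear map, but the argument is the same.
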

\begin{proof}
Let $x = \sum_{i=1}^N \alpha_i y_{i,A} \otimes y_{i,B}$, then we have
\begin{equation}
\begin{quantikz}[row sep=\the\rowsep, align equals at=1.5]
& \multiprepareC[2]{x_{AB}} & \qw \\
& & \meterD{f_B}
\end{quantikz}
=
\sum_{i=1}^N \alpha_i
\begin{quantikz}[row sep=\the\rowsep, align equals at=1.5]
&[\prepfix] \prepareC{y_{i,A}} & \qw \\
& \prepareC{y_{i,B}} & \meterD{f_B}
\end{quantikz}
=
\sum_{i=1}^N \alpha_i \< y_{i,B}, f_B \>
\begin{quantikz}[row sep=\the\rowsep, align equals at=1]
&[\prepfix] \prepareC{y_{i,A}} & \qw
\end{quantikz}
\end{equation}
and so we have proved the first equality in \eqref{eq:tensor-partial-equivOneLeg}. To prove the second equality, first note that if $\psi_1, \psi_2 \in A(K)^*$ are such that for all $f \in A(K)$ we have $\< \psi_1, f\> = \< \psi_2, f\>$ then $\psi_1 = \psi_2$. Moreover, it is sufficient to check the equality only for all $f \in E(K)$, since $A(K) = \linspan(E(K))$. So now let $g_A \in E(K_A)$, then we have
\begin{equation}
\begin{quantikz}[row sep=\the\rowsep, align equals at=1.5]
& \multiprepareC[2]{x_{AB}} & \meterD{g_A} \\
& & \meterD{f_B}
\end{quantikz}
= \< x_{AB}, g_A \otimes f_B \> = \< \varphi_A, g_A \>
\end{equation}
and the second equality in \eqref{eq:tensor-partial-equivOneLeg} follows.
\end{proof}
One can easily prove many other results similar to Proposition \ref{prop:tensor-partial-equivOneLeg}, such as:
\begin{enumerate}
\item Let $x_{ABC} \in K_A \treal K_B \treal K_C$ and $f_C \in E(K_C)$, then
\begin{equation}
\begin{quantikz}[align equals at=2]
& \multiprepareC[3]{x_{ABC}} & \qw \\
& & \qw \\
& & \meterD{f_C}
\end{quantikz}
\in A(K_A)^* \otimes A(K_B)^*.
\end{equation}
\item Let $x_A \in K_A$ and $f_{AB} \in E(K_A) \treal E(K_B)$, then
\begin{equation}
\begin{quantikz}[row sep=\the\rowsep, align equals at=1.5]
& \prepareC{x_A} & \multimeterD[2]{f_{AB}} \\
& &
\end{quantikz}
\in A(K_B).
\end{equation}
\item Let $x_{AB} \in K_A \treal K_B$ and $f_{BC} \in E(K_B) \treal E(K_C)$, then
\begin{equation}
\begin{quantikz}[align equals at=2]
& \multiprepareC[2]{x_{AB}} & \qw \\
& & \multimeterD[2]{f_{BC}} \\
& &
\end{quantikz}
\in A(K_C) \otimes A(K_A)^*.
\end{equation}
\end{enumerate}

Let again $x_{AB} \in K_A \treal K_B$ and $f_B \in E(K_B)$ and note that for $g_A \in E(K_A)$ we have
\begin{equation} \label{eq:tensor-partial-isPositve}
\begin{quantikz}[row sep=\the\rowsep,align equals at=1.5]
& \multiprepareC[2]{x_{AB}} & \meterD{g_A} \\
& & \meterD{f_B}
\end{quantikz}
= \< x_{AB}, g_A \otimes f_B \> \geq 0.
\end{equation}
So we get
\begin{equation} \label{eq:tensor-partial-inPositiveCone}
\begin{quantikz}[row sep=\the\rowsep,align equals at=1.5]
& \multiprepareC[2]{x_{AB}} & \qw \\
& & \meterD{f_B}
\end{quantikz}
\in A(K_A)^{*+}
\end{equation}
from which the next result easily follows.
\begin{proposition} \label{prop:tensor-partial-multipleOfState}
Let $x_{AB} \in K_A \treal K_B$ and $f_B \in E(K_B)$, then there is $y_A \in K_A$ such that
\begin{equation}
\begin{quantikz}[row sep=\the\rowsep,align equals at=1.5]
& \multiprepareC[2]{x_{AB}} & \qw \\
& & \meterD{f_B}
\end{quantikz}
=
\< x_{AB}, 1_{K_A} \otimes f_B \>
\begin{quantikz}[row sep=\the\rowsep,align equals at=1]
&[\prepfix] \prepareC{y_A} & \qw
\end{quantikz}
\end{equation}
\end{proposition}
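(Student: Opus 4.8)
The plan is to recognize the diagram on the left-hand side as a single functional in the positive cone $A(K_A)^{*+}$ and then to apply the fact, already established, that $K_A$ is a base of that cone.

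First I would name the object: set $\psi_A = (\id_{K_A} \otimes f_B)(x_{AB}) \in A(K_A)^*$. The work of showing $\psi_A \in A(K_A)^{*+}$ has in fact already been done in the discussion preceding the statement: for every $g_A \in E(K_A)$ the closed diagram $\< x_{AB}, g_A \otimes f_B \>$ is an honest probability, hence non-negative, which is exactly \eqref{eq:tensor-partial-isPositve}; since a functional that is non-negative on all of $E(K_A)$ is non-negative on $\cone(E(K_A)) = A(K_A)^+$, this yields $\psi_A \in A(K_A)^{*+}$, i.e.\ \eqref{eq:tensor-partial-inPositiveCone}. So this step is essentially a citation of what comes just before.

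Next I would invoke Lemma \ref{lemma:basic-results-base}: because $\psi_A \in A(K_A)^{*+}$, there are $y_A \in K_A$ and $\lambda \in \Rp$ with $\psi_A = \lambda y_A$. To pin down $\lambda$, I would pair both sides with the unit effect $1_{K_A}$ and use $\< y_A, 1_{K_A} \> = 1$ to get $\lambda = \< \psi_A, 1_{K_A} \>$; unfolding the definition of $\psi_A$ shows $\< \psi_A, 1_{K_A} \> = \< x_{AB}, 1_{K_A} \otimes f_B \>$, which is precisely the coefficient appearing in the statement. This finishes the argument, with the equality of functionals again checked by testing against all $g_A \in E(K_A)$ as in the proof of Proposition \ref{prop:tensor-partial-equivOneLeg}.

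There is no real obstacle here — the proposition is just the base property of the cone $A(K_A)^{*+}$ applied to the functional obtained by plugging $f_B$ into one leg of $x_{AB}$. The one point worth a remark is the degenerate case $\< x_{AB}, 1_{K_A} \otimes f_B \> = 0$: then $\psi_A = 0$ and the claimed equality holds for an arbitrary choice of $y_A \in K_A$ (equivalently, one reads Lemma \ref{lemma:basic-results-base} with $\lambda = 0$, noting that its stated proof normalizes by $\< \psi_A, 1_{K_A} \>$ and so tacitly assumes $\psi_A \neq 0$).
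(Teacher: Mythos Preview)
Your proof is correct and follows essentially the same route as the paper: cite \eqref{eq:tensor-partial-inPositiveCone} to place the one-legged functional in $A(K_A)^{*+}$, apply Lemma \ref{lemma:basic-results-base} to write it as $\lambda y_A$, and then apply $1_{K_A}$ to the free leg to identify $\lambda = \< x_{AB}, 1_{K_A} \otimes f_B \>$. Your added remark on the degenerate case $\lambda = 0$ is a nice clarification that the paper leaves implicit.
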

\begin{proof}
We already know that for every $\varphi \in A(K)^{*+}$ there must exist $y_A \in K_A$ and $\lambda \in \Rp$ such that $\varphi = \lambda y_A$, see Lemma \ref{lemma:basic-results-base}. So from \eqref{eq:tensor-partial-inPositiveCone} we get
\begin{equation}
\begin{quantikz}[row sep=\the\rowsep,align equals at=1.5]
& \multiprepareC[2]{x_{AB}} & \qw \\
& & \meterD{f_B}
\end{quantikz}
=
\lambda
\begin{quantikz}[row sep=\the\rowsep,align equals at=1]
&[\prepfix] \prepareC{y_A} & \qw
\end{quantikz}
\end{equation}
for some $\lambda \in \Rp$. By applying the unit effect to the free leg, we get
\begin{equation}
\begin{quantikz}[row sep=\the\rowsep,align equals at=1.5]
& \multiprepareC[2]{x_{AB}} & \ground{} \\
& & \meterD{f_B}
\end{quantikz}
=
\lambda
\begin{quantikz}[row sep=\the\rowsep,align equals at=1]
&[\prepfix] \prepareC{y_A} & \ground{}
\end{quantikz}
= \lambda
\end{equation}
which concludes the proof.
\end{proof}

\begin{corollary} \label{coro:tensor-partial-trace}
Let $x_{AB} \in K_A \treal K_B$, then
\begin{equation}
\begin{quantikz}[row sep=\the\rowsep,align equals at=1.5]
& \multiprepareC[2]{x_{AB}} & \qw \\
& & \ground{}
\end{quantikz}
\in K_A.
\end{equation}
\end{corollary}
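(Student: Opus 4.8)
The plan is to obtain this as an immediate specialization of Proposition \ref{prop:tensor-partial-multipleOfState}. Recall that the symbol $\ground{}$ attached to the $K_B$ wire denotes the unit effect $1_{K_B} \in E(K_B)$, so the object in the statement is exactly $(\id_{K_A} \otimes 1_{K_B})(x_{AB})$, i.e.\ the partial-leg construction of the previous results evaluated at $f_B = 1_{K_B}$.

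First I would apply Proposition \ref{prop:tensor-partial-multipleOfState} with $f_B = 1_{K_B}$. This produces a state $y_A \in K_A$ with
\begin{equation}
(\id_{K_A} \otimes 1_{K_B})(x_{AB}) = \< x_{AB}, 1_{K_A} \otimes 1_{K_B} \> \, y_A .
\end{equation}
Next I would evaluate the scalar coefficient: by axiom \ref{item:tensor-intro-unitEffect} the effect $1_{K_A} \otimes 1_{K_B}$ is the unit effect of the bipartite state space $K_A \treal K_B$, and since $x_{AB}$ is a state we have $\< x_{AB}, 1_{K_A} \otimes 1_{K_B} \> = 1$. Hence $(\id_{K_A} \otimes 1_{K_B})(x_{AB}) = y_A \in K_A$, which is the claim.

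Since all the real work has already been done in Proposition \ref{prop:tensor-partial-multipleOfState}, there is no genuine obstacle here; the only point worth stating explicitly is the normalization $\< x_{AB}, 1_{K_A} \otimes 1_{K_B} \> = 1$, which is exactly where axiom \ref{item:tensor-intro-unitEffect} enters. Alternatively, one can bypass Proposition \ref{prop:tensor-partial-multipleOfState} entirely: Equation \eqref{eq:tensor-partial-inPositiveCone} already gives that the partial trace lies in $A(K_A)^{*+}$, and pairing it with $1_{K_A}$ yields $\< x_{AB}, 1_{K_A} \otimes 1_{K_B} \> = 1$, so by Theorem \ref{thm:basic-dual-stateSpace} it is an element of $K_A$.
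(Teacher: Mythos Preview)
Your proof is correct and follows exactly the paper's approach: the paper's proof consists of the single sentence ``Follows from Proposition \ref{prop:tensor-partial-multipleOfState},'' and you have simply spelled out the implicit normalization step $\< x_{AB}, 1_{K_A} \otimes 1_{K_B} \> = 1$ that makes the scalar coefficient disappear. Your alternative via \eqref{eq:tensor-partial-inPositiveCone} and Theorem \ref{thm:basic-dual-stateSpace} is also valid and is in fact how Proposition \ref{prop:tensor-partial-multipleOfState} itself was proved, so the two routes coincide.
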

\begin{proof}
Follows from Proposition \ref{prop:tensor-partial-multipleOfState}.
\end{proof}
The process of applying the unit effect to one leg of a bipartite state $x_{AB} \in K_A \treal K_B$, i.e., the map
\begin{equation}
\begin{quantikz}[align equals at=1.5]
& \multiprepareC[2]{x_{AB}} & \qw \\
& & \qw
\end{quantikz}
\mapsto
\begin{quantikz}[row sep=\the\rowsep,align equals at=1.5]
&[\prepfix] \multiprepareC[2]{x_{AB}} & \qw \\
& & \ground{}
\end{quantikz}
\end{equation}
is called partial trace. The name comes from quantum theory, where this construction corresponds to the partial trace over a subspace of the Hilbert space. Partial trace is an important concept, because it describes the local state that Alice (or Bob) have at their disposal when they work with the bipartite state $x_{AB} \in K_A \treal K_B$. Partial trace is also a key concept in monogamy of entanglement, which is the following result.
\begin{theorem} \label{thm:tensor-partial-monogamyTrace}
Let $x_{AB} \in K_A \treal K_B$ be such that
\begin{equation} \label{eq:tensor-partial-monogamyTrace}
\begin{quantikz}[row sep=\the\rowsep,align equals at=1.5]
& \multiprepareC[2]{x_{AB}} & \qw \\
& & \ground{}
\end{quantikz}
=
\begin{quantikz}[row sep=\the\rowsep,align equals at=1]
&[\prepfix] \prepareC{y_A} & \qw
\end{quantikz}
\end{equation}
where $y_A$ is a pure state. Then $x_{AB} = y_A \otimes z_B$ for some $z_B \in K_B$, i.e.,
\begin{equation}
\begin{quantikz}[row sep=\the\rowsep,align equals at=1.5]
& \multiprepareC[2]{x_{AB}} & \qw \\
& & \qw
\end{quantikz}
=
\begin{quantikz}[row sep=\the\rowsep,align equals at=1.5]
&[\prepfix] \prepareC{y_A} & \qw \\
&[\prepfix] \prepareC{z_B} & \qw
\end{quantikz}
\end{equation}
\end{theorem}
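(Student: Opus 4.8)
The plan is to show that \emph{every} one-leg contraction $(\id_{K_A} \otimes f_B)(x_{AB})$ with $f_B \in E(K_B)$ is a nonnegative multiple of the \emph{same} pure state $y_A$, and then to reconstruct $x_{AB}$ from the partial trace over the $A$-system.

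First I would fix $f_B \in E(K_B)$ and set $p = \< x_{AB}, 1_{K_A} \otimes f_B \> \in [0,1]$. By Lemma~\ref{lemma:basic-EA-fPerp} there is $f_B^\perp \in E(K_B)$ with $f_B + f_B^\perp = 1_{K_B}$, and hence $\< x_{AB}, 1_{K_A} \otimes f_B^\perp \> = 1 - p$. Proposition~\ref{prop:tensor-partial-multipleOfState} then produces $z, w \in K_A$ with $(\id_{K_A} \otimes f_B)(x_{AB}) = p\, z$ and $(\id_{K_A} \otimes f_B^\perp)(x_{AB}) = (1-p)\, w$. Since the one-leg contraction is linear in the $B$-argument and $f_B + f_B^\perp = 1_{K_B}$, the hypothesis \eqref{eq:tensor-partial-monogamyTrace} gives
\[
y_A = (\id_{K_A} \otimes 1_{K_B})(x_{AB}) = p\, z + (1-p)\, w .
\]
Now I would invoke purity of $y_A$: if $p \in (0,1)$ this convex decomposition forces $z = w = y_A$; if $p = 0$ then $(\id_{K_A} \otimes f_B)(x_{AB}) = 0$; and if $p = 1$ then $(\id_{K_A} \otimes f_B)(x_{AB}) = y_A$. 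In all cases,
\[
(\id_{K_A} \otimes f_B)(x_{AB}) = \< x_{AB}, 1_{K_A} \otimes f_B \>\, y_A \qquad \text{for all } f_B \in E(K_B).
\]

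It then remains to assemble the product form. Define $z_B \in A(K_B)^*$ by $\< z_B, f_B \> = \< x_{AB}, 1_{K_A} \otimes f_B \>$ for $f_B \in A(K_B)$; this $z_B$ is the partial trace of $x_{AB}$ over the $A$-system, and $z_B \in K_B$ since $\< z_B, 1_{K_B}\> = \< x_{AB}, 1_{K_A}\otimes 1_{K_B}\> = 1$ while $\< z_B, f_B\> = \< x_{AB}, 1_{K_A}\otimes f_B\>\ge 0$ for $f_B\in E(K_B)$ (because $x_{AB} \in K_A \treal K_B \subset K_A \tmax K_B$), so Theorem~\ref{thm:basic-dual-stateSpace} applies; alternatively this is Corollary~\ref{coro:tensor-partial-trace} with the two systems interchanged. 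Since an element of $A(K_A)^* \otimes A(K_B)^*$ is determined by its values on product effects $f_A \otimes f_B$ with $f_A \in E(K_A)$, $f_B \in E(K_B)$, it is enough to check $\< x_{AB}, f_A \otimes f_B \> = \< y_A, f_A \>\< z_B, f_B \>$, and indeed
\[
\< x_{AB}, f_A \otimes f_B \> = \< (\id_{K_A} \otimes f_B)(x_{AB}), f_A \> = \< x_{AB}, 1_{K_A} \otimes f_B \>\, \< y_A, f_A \> = \< z_B, f_B \>\, \< y_A, f_A \>
\]
by the displayed identity above. Hence $x_{AB} = y_A \otimes z_B$ with $z_B \in K_B$.

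The step I expect to be the crux is the purity argument: the key realization is that complementing $f_B$ with $f_B^\perp$ rewrites the partial trace $y_A$ as a genuine convex combination $p z + (1-p) w$ of two states of $K_A$, so being an extreme point of $K_A$ pins down $z$ and $w$ (modulo the easy $p \in \{0,1\}$ boundary cases). Everything else is routine bookkeeping with the duality $A(K_A)^{**} = A(K_A)$, the one-leg contraction of Proposition~\ref{prop:tensor-partial-multipleOfState}, and the fact that states on the ambient tensor product are separated by product effects.
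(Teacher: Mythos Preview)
Your proof is correct and follows essentially the same approach as the paper: split $1_{K_B} = f_B + f_B^\perp$, apply Proposition~\ref{prop:tensor-partial-multipleOfState} to each piece, use purity of $y_A$ to force both resulting $K_A$-states to equal $y_A$, then define $z_B$ as the $A$-marginal and verify equality on product effects via tomographic locality. You are in fact slightly more careful than the paper in treating the boundary cases $p\in\{0,1\}$ explicitly.
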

\begin{proof}
The proof can be found in \cite[Lemma 3.]{BarnumBarrettLeiferWilce-noBroadcasting}. We will provide exactly the same proof, only formulated in the language presented so far. Let $f_B \in E(K_B)$, then also $1_{K_B}-f_B \in E(K_B)$, see Lemma \ref{lemma:basic-EA-fPerp}. We have
\begin{equation}
\begin{quantikz}[row sep=\the\rowsep,align equals at=1.5]
& \multiprepareC[2]{x_{AB}} & \qw \\
& & \ground{}
\end{quantikz}
=
\begin{quantikz}[row sep=\the\rowsep,align equals at=1.5]
&[\prepfix] \multiprepareC[2]{x_{AB}} & \qw \\
& & \meterD{f_B}
\end{quantikz}
+
\begin{quantikz}[row sep=\the\rowsep,align equals at=1.5]
&[\prepfix] \multiprepareC[2]{x_{AB}} & \qw \\
& & \meterD{1_{K_B} - f_B}
\end{quantikz}
\end{equation}
which one can check by applying $g_A \in E(K_A)$ to the free leg and observing, that the equality holds. According to Proposition \ref{prop:tensor-partial-multipleOfState} we must have
\begin{equation} \label{eq:tensor-partial-monogamy-fB}
\begin{quantikz}[row sep=\the\rowsep,align equals at=1.5]
& \multiprepareC[2]{x_{AB}} & \qw \\
& & \meterD{f_B}
\end{quantikz}
=
\begin{quantikz}[row sep=\the\rowsep,align equals at=1]
&\lstick{$\< x_{AB}, 1_{K_A} \otimes f_B \>$} &[\prepfix] \prepareC{z_A} & \qw
\end{quantikz}
\end{equation}
and
\begin{equation}
\begin{quantikz}[row sep=\the\rowsep,align equals at=1.5]
& \multiprepareC[2]{x_{AB}} & \qw \\
& & \meterD{1_{K_B} - f_B}
\end{quantikz}
=
\begin{quantikz}[row sep=\the\rowsep,align equals at=1]
&\lstick{$\< x_{AB}, 1_{K_A} \otimes (1_{K_B} - f_B) \>$} &[\prepfix] \prepareC{w_A} & \qw
\end{quantikz}
\end{equation}
for some $z_A, w_A \in K_A$. So we have
\begin{equation}
\begin{quantikz}[row sep=\the\rowsep,align equals at=1.5]
& \multiprepareC[2]{x_{AB}} & \qw \\
& & \ground{}
\end{quantikz}
=
\begin{quantikz}[row sep=\the\rowsep,align equals at=1]
&\lstick{$\< x_{AB}, 1_{K_A} \otimes f_B \>$} &[\prepfix] \prepareC{z_A} & \qw
\end{quantikz}
+
\begin{quantikz}[row sep=\the\rowsep,align equals at=1]
&\lstick{$\< x_{AB}, 1_{K_A} \otimes (1_{K_B} - f_B) \>$} &[\prepfix] \prepareC{w_A} & \qw
\end{quantikz}
\end{equation}
Using \eqref{eq:tensor-partial-monogamyTrace} we get
\begin{equation}
\begin{quantikz}[row sep=\the\rowsep,align equals at=1]
&[\prepfix] \prepareC{y_A} & \qw
\end{quantikz}
=
\begin{quantikz}[row sep=\the\rowsep,align equals at=1]
&\lstick{$\< x_{AB}, 1_{K_A} \otimes f_B \>$} &[\prepfix] \prepareC{z_A} & \qw
\end{quantikz}
+
\begin{quantikz}[row sep=\the\rowsep,align equals at=1]
&\lstick{$(1-\< x_{AB}, 1_{K_A} \otimes f_B \>)$} &[\prepfix] \prepareC{w_A} & \qw
\end{quantikz}
\end{equation}
Since $y_A$ is a pure state, we must have $z_A = w_A = y_A$. This is an important point, because in general $w_A$ and $z_A$ would depend on the choice of $f_B$, but since $y_A$ is a pure state, we have $z_A = w_A = y_A$, and so for all $f_B \in E(K_B)$ we get the same $z_A$ and $w_A$. Let us denote
\begin{equation}
\begin{quantikz}[row sep=\the\rowsep,align equals at=1.5]
& \multiprepareC[2]{x_{AB}} & \ground{} \\
& & \qw
\end{quantikz}
=
\begin{quantikz}[row sep=\the\rowsep,align equals at=1]
&[\prepfix] \prepareC{z_B} & \qw
\end{quantikz}
\end{equation}
where $z_B \in K_B$. Let now $g_A \in E(K_A)$, then using \eqref{eq:tensor-partial-monogamy-fB} we get
\begin{equation} \label{eq:tensor-partial-monogamy-separation}
\begin{quantikz}[row sep=\the\rowsep,align equals at=1.5]
& \multiprepareC[2]{x_{AB}} & \meterD{g_A} \\
& & \meterD{f_B}
\end{quantikz}
=
\begin{quantikz}[row sep=\the\rowsep,align equals at=1]
&\lstick{$\< x_{AB}, 1_{K_A} \otimes f_B \>$} &[\prepfix] \prepareC{y_A} & \meterD{g_A}
\end{quantikz}
=
\begin{quantikz}[row sep=\the\rowsep,align equals at=1.5]
&[\prepfix] \prepareC{y_A} & \meterD{g_A} \\
&[\prepfix] \prepareC{z_B} & \meterD{f_B}
\end{quantikz}
\end{equation}
where we have used that
\begin{equation}
\< x_{AB}, 1_{K_A} \otimes f_B \> =
\begin{quantikz}[row sep=\the\rowsep,align equals at=1.5]
&[\prepfix] \multiprepareC[2]{x_{AB}} & \ground{} \\
& & \meterD{f_B}
\end{quantikz}
=
\begin{quantikz}[row sep=\the\rowsep,align equals at=1]
&[\prepfix] \prepareC{z_B} & \meterD{f_B}
\end{quantikz}
\end{equation}
It follows from \eqref{eq:tensor-partial-monogamy-separation} that for any $g_A \in E(K_A)$ and $f_B \in E(K_B)$ we have
\begin{equation}
\< x_{AB}, g_A \otimes f_B \> = \< y_A \otimes z_B, g_A \otimes f_B \>
\end{equation}
and so we must have $x_{AB} = y_A \otimes z_B$ as a result of tomographic locality of the tensor product.
\end{proof}

\subsection{Existence of entanglement}
We have already argued that the minimal and maximal tensor products are the smallest possible and largest possible choice of the bipartite state space. In Proposition \ref{prop:tensor-bipartite-inclusions} we showed that for any two state spaces $K_A$, $K_B$, we have $K_A \tmin K_B \subset K_A \tmax K_B$. If we would have $K_A \tmin K_B = K_A \tmax K_B$, then the choice of the bipartite state space would be unique, but also all bipartite states would be separable and there would be no entangled states in $K_A \treal K_B$. It is intuitive to expect that entanglement does not exist in classical theory. One can easily prove the following, slightly more general result.
\begin{proposition} \label{prop:tensor-existence-notWithSn}
Let $K$ be a state space and let $S_n$ be a simplex, i.e., a classical state space. Then
\begin{equation}
K \tmin S_n = K \tmax S_n.
\end{equation}
and $S_n \tmin K = S_n \tmax K$.
\end{proposition}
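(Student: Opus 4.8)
The plan is to establish the one nontrivial inclusion $K \tmax S_n \subset K \tmin S_n$; together with $K \tmin S_n \subset K \tmax S_n$, which is already contained in Proposition~\ref{prop:tensor-bipartite-inclusions} (valid for \emph{any} admissible tensor product, in particular for $\tmin$ and $\tmax$ themselves), this yields the claimed equality. The identity $S_n \tmin K = S_n \tmax K$ then follows from the obvious symmetry of both constructions under swapping the two factors, so I would only treat $K \tmin S_n = K \tmax S_n$ explicitly.

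The key point is the special structure of $E(S_n)$ recalled in Section~\ref{sec:CT}: the pure states $s_1, \dots, s_n$, regarded as functionals in $A(S_n)^*$, are linearly independent (pairing with the dual effects $b_j$ gives $\langle s_i, b_j \rangle = \delta_{ij}$) and there are exactly $n = \dim A(S_n)^*$ of them, so $\{s_1, \dots, s_n\}$ is a basis of $A(S_n)^*$. Consequently, every element of $A(K)^* \otimes A(S_n)^*$ --- and by Definition~\ref{def:tensor-bipartite-maxProd} every $\varphi \in K \tmax S_n$ lies in this space --- can be written uniquely as $\varphi = \sum_{i=1}^n \varphi_i \otimes s_i$ with $\varphi_i \in A(K)^*$.

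Now I would read off the defining conditions of $K \tmax S_n$ in this decomposition. Pairing $\varphi$ with the separable effect $f_A \otimes b_j$, where $f_A \in E(K)$, yields $\langle \varphi_j, f_A \rangle = \langle \varphi, f_A \otimes b_j \rangle \geq 0$; since $E(K)$ generates the cone $A(K)^+$, this means $\varphi_j \in A(K)^{*+}$ for every $j$. By Lemma~\ref{lemma:basic-results-base} we may write $\varphi_j = \lambda_j x_j$ with $x_j \in K$ and $\lambda_j = \langle \varphi_j, 1_K \rangle \in \Rp$ (if $\lambda_j = 0$ the $j$-th term vanishes and the choice of $x_j$ is immaterial). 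Using $1_{S_n} = \sum_{j=1}^n b_j$ and the normalization $\langle \varphi, 1_K \otimes 1_{S_n} \rangle = 1$ gives $\sum_{j=1}^n \lambda_j = \sum_{j=1}^n \langle \varphi_j, 1_K \rangle = 1$. Hence $\varphi = \sum_{j=1}^n \lambda_j\, x_j \otimes s_j$ is a convex combination of product states, i.e.\ $\varphi \in K \tmin S_n$, as desired.

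I do not expect a genuine obstacle here: the argument is essentially bookkeeping, and the only thing to be careful about is confirming that the extracted coefficients $\lambda_j$ are nonnegative and sum to one --- which is exactly what positivity against all separable effects together with the single normalization constraint provide. Conceptually, the statement just says that a simplex is ``classical enough'' that demanding positivity against all \emph{separable} effects of the form $f_A \otimes f_B$ already forces a bipartite state to be separable.
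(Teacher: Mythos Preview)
Your proof is correct and follows essentially the same approach as the paper: expand $\varphi \in K \tmax S_n$ in the basis $\{s_1,\dots,s_n\}$ of $A(S_n)^*$, pair with $f_A \otimes b_j$ to force each component into $A(K)^{*+}$, and conclude separability. The paper phrases the positivity step via the partial-trace machinery (Proposition~\ref{prop:tensor-partial-multipleOfState}) rather than by direct pairing, and it leaves the verification that $\sum_j \lambda_j = 1$ implicit, whereas you spell it out; these are cosmetic differences only.
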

\begin{proof}
Clearly if $K \tmin S_n = K \tmax S_n$ then also $S_n \tmin K = S_n \tmax K$ because the definitions of minimal and maximal tensor products are symmetric. So let $S_n = \conv( \{ s_1, \ldots, s_n \} )$ be a simplex with pure states $s_1, \ldots, s_n$. Let $b_1, \ldots, b_n \in E(S_n)$ be the effects such that $\< s_i, b_j \> = \delta_{ij}$ for all $i,j \in \{1, \ldots, n\}$. Also remember that $\{ s_1, \ldots, s_n \}$ is a basis of $A(S_n)^*$ and $\{ b_1, \ldots, b_n \}$ is a basis of $A(S_n)$. Let $y \in K \tmax S_n$, then there are $\{ v_1, \ldots, v_n \} \subset A(K)^*$ such that $y = \sum_{i=1}^n v_i \otimes s_i$, see Lemma \ref{lemma:bilinear-productSum}. Since we have
\begin{equation}
\begin{quantikz}[align equals at=1.5]
& \multiprepareC[2]{y} & \qw{K} \\
& & \meterD{b_i}{S_n}
\end{quantikz}
=
\begin{quantikz}[row sep=\the\rowsep,align equals at=1]
&[\prepfix] \prepareC{v_i} & \qw{K}
\end{quantikz}
\end{equation}
it follows from Proposition \ref{prop:tensor-partial-multipleOfState} that $v_i \in A(K)^{*+}$, i.e., $v_i = \lambda_i x_i$ for some $x_i \in K$, $\lambda_i \in \Rp$ for all $i \in \{1, \ldots, n \}$. So we have
\begin{equation}
y = \sum_{i=1}^n \lambda_i x_i \otimes s_i \in K \tmin S_n.
\end{equation}
Hence we have proved that $K \tmax S_n \subset K \tmin S_n$, from which the result follows.
\end{proof}

One can now ask, whether Proposition \ref{prop:tensor-existence-notWithSn} gives also sufficient condition for non-existence of entangled states. This problem was in the context of tensor products of the underlying cones already investigated in \cite{NamiokaPhelps-cones, Barker-cones} but it was only recently solved in \cite{AubrunLamiPalazuelosPlavala-cones}.

\begin{theorem} \label{thm:tensor-existence-nonClassical}
Let $K_A$, $K_B$ be state spaces, then we have $K_A \tmin K_B = K_A \tmax K_B$ if and only if at least one of the state spaces is a simplex, i.e., if and only if we have $K_A = S_n$ or $K_B = S_n$.
\end{theorem}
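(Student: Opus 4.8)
The \textbf{if} direction is exactly Proposition \ref{prop:tensor-existence-notWithSn}, so the entire content lies in the converse: if neither $K_A$ nor $K_B$ is a simplex, then $K_A \tmin K_B$ is strictly contained in $K_A \tmax K_B$, i.e. an entangled state exists. The plan is to pass to the cone picture. Write $C_A = A(K_A)^{*+}$ and $C_B = A(K_B)^{*+}$; by Lemma \ref{lemma:basic-results-base} the state spaces $K_A$, $K_B$ are bases of these proper, closed, generating cones, and one checks directly that $K_A \tmin K_B$ and $K_A \tmax K_B$ are bases of the corresponding minimal and maximal tensor cones, where $C_A \otimes_{\min} C_B$ is the closed conic hull of $\{ u \otimes v : u \in C_A, v \in C_B\}$ and $C_A \otimes_{\max} C_B = (C_A^* \otimes_{\min} C_B^*)^*$. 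Moreover $K$ is a simplex if and only if the cone $C$ is simplicial. Hence the theorem is equivalent to the statement: for proper generating cones, $C_A \otimes_{\min} C_B = C_A \otimes_{\max} C_B$ if and only if $C_A$ or $C_B$ is simplicial.

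The next step is a reduction in dimension using faces. If $F_A$ is a closed face of $C_A$, then intersecting with $\linspan(F_A) \otimes A(K_B)^*$ sends $C_A \otimes_{\min} C_B$ to $F_A \otimes_{\min} C_B$ and $C_A \otimes_{\max} C_B$ to $F_A \otimes_{\max} C_B$ (for the minimal cone this uses that a face is conically extreme, so a product decomposition whose $A$-marginal lies in $F_A$ already has all its $A$-legs in $F_A$). Consequently, any entangled element of $F_A \otimes C_B$ remains entangled in $C_A \otimes C_B$, and likewise on the $B$ side. This lets one assume that $C_A$ and $C_B$ are \emph{irreducibly} non-simplicial, that is, non-simplicial but with all proper faces simplicial; such cones come in roughly two flavours: those with a strictly convex base (all proper faces are rays, e.g. a circular/Lorentz cone), and polyhedral ones whose proper faces are all simplices but which are not themselves simplices, the prototype being the square, i.e. a classical bit composed with a classical bit.

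It then remains to produce the entangled element. For a pair of polyhedral irreducibly-non-simplicial bases the witness comes from a Radon-type affine circuit: a polytope that is not a simplex carries a minimal affine dependence $\sum_{i \in I}\mu_i v_i = \sum_{j \in J}\mu_j v_j$ among at least four vertices with $I, J$ disjoint and coefficients summing to $1$ on each side, and combining such a circuit on the $A$ side with one on the $B$ side yields a functional positive on all product states but not a convex combination of product states --- for the square $\times$ square this recovers exactly the PR box of the boxworld theory of Section \ref{sec:boxworld}. For strictly convex bases one instead uses that the maximal tensor cone is "too big": it contains interior points forced outside the minimal cone by an extreme-ray/dimension count, equivalently there exist positive maps $C_A^* \to C_B$ that are not mixtures of rank-one positive maps. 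The genuinely hard part --- and the reason this is a recent result rather than folklore --- is the mixed and general case, where after the face reduction is exhausted one side may be round and the other polyhedral (or arbitrary): the concrete constructions above no longer apply, and one needs the full argument of \cite{AubrunLamiPalazuelosPlavala-cones}, which I would invoke rather than reproduce. So the main obstacle is precisely this combined case; everything preceding it is bookkeeping together with the two explicit constructions.
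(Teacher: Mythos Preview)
The paper does not actually prove this theorem: its entire proof reads ``See \cite{AubrunLamiPalazuelosPlavala-cones}.'' Your proposal goes further by outlining the architecture of that argument, but since you too defer to the same reference for the decisive step, the two agree on substance.

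A few remarks on your outline, since you went beyond the paper. The face-reduction step is correct in spirit, but the claim that intersecting with $\linspan(F_A)\otimes A(K_B)^*$ carries $C_A\otimes_{\max}C_B$ onto $F_A\otimes_{\max}C_B$ is where real work hides: one needs that every positive functional on the face $F_A$ (taken in its own span) extends to an element of $C_A^*$, which is true for closed faces of proper cones but is not tautological and should be flagged. The parenthetical describing the square as ``a classical bit composed with a classical bit'' is a slip --- $S_2\treal S_2$ is the tetrahedron $S_4$, whereas the square is the boxworld gbit of Section~\ref{sec:boxworld}. Finally, your dichotomy of irreducibly non-simplicial cones into ``strictly convex base'' versus ``polyhedral with simplicial proper faces'' is a helpful caricature but not the actual case split in \cite{AubrunLamiPalazuelosPlavala-cones}; as you correctly identify, the mixed and general situation is precisely why the question remained open for decades after it was raised by Namioka--Phelps and Barker.
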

\begin{proof}
See \cite{AubrunLamiPalazuelosPlavala-cones}.
\end{proof}

\section{Channels, measurements and instruments} \label{sec:channels}

%
%
We finally get to describe transformations of systems. There are in principle three different types of transformations: channels, measurements and instruments. Channels map states to states and they describe some manipulation of the system, e.g., time-evolution. Measurements map states of a given system to probability distributions over measurement outcomes, they describe the measurement process in the sense that they give us the probabilities of occurrence of the outcomes. Instruments describe the measurement process by mapping a state to weighted set of post-measurement states.

We will argue that measurements and instruments are special kinds of channels. This may appear as counter-intuitive at first, since physically channels and measurements are different object. We already know from Section \ref{sec:CT} that probability distributions correspond to classical state spaces, and so measurement as a map from states to probability distributions can be described as a channel from a given state space $K$ to classical state space $S_n$. Similarly, instruments can be described as channels from $K$ to $K \treal S_n$.

\subsection{Channels}
Channel is a transformation of a system that can be either appended to a preparation procedure, or prepended to a measurement procedure, such that mixtures are preserved. Let us unpack this statement: since channel should transform a state to something measurable, it must map states of one system to states of other system. Moreover, we require that channels preserve mixtures, which just implies that a channel is an affine map between state spaces.

\begin{definition}
Let $K_A$, $K_B$ be state spaces. \emph{Channel} $\Phi$ from $K_A$ to $K_B$ is an affine map $\Phi: K_A \to K_B$, i.e., for all $x_A, y_A \in K_A$ and $\lambda \in [0,1]$ we have
\begin{equation}
\Phi(\lambda x_A + (1-\lambda) y_A) = \lambda \Phi(x_A) + (1-\lambda) \Phi(y_A).
\end{equation}
We will denote the set of all channels $\Phi: K_A \to K_B$ by $\chan(K_A, K_B)$. We will use the shorthand $\chan(K)$ for channels $\Phi: K \to K$, i.e., $\chan(K) = \chan(K,K)$.
\end{definition}
Since $A(K_A)^* = \linspan(K_A)$ and $A(K_B)^* = \linspan(K_B)$, we can easily extend $\Phi$ to a linear map $\Phi: A(K_A)^* \to A(K_B)^*$ as follows: let $v_A \in A(K_A)^*$, then according to Lemma \ref{lemma:basic-results-span} we have $v_A = \lambda x_A - \mu y_A$ for some $x_A, y_A \in K_A$ and $\lambda, \mu \in \Rp$. Then we have $\Phi(v_A) = \lambda \Phi(x_A) - \mu \Phi(y_A)$. One can check that then $\Phi: A(K_A)^* \to A(K_B)^*$ is a linear map. Since $\Phi: A(K_A)^{*+} \to A(K_B)^{*+}$, the map $\Phi$ is called positive. We will now present examples of channels one can find in every GPT.
\begin{example}
Let $K$ be a state space and let $\id_K \in \chan(K)$ be the identity map, given as $\id_K(x) = x$ for all $x \in K$. It is straightforward to check that $\id_K$ is a channel and that the induced linear map $\id_K: A(K)^* \to A(K)^*$ is positive linear map. $\id_K$ is usually called the identity map, the identity channel, or just identity.
\end{example}

\begin{example} \label{exm:channels-channels-constant}
Let $K_A$, $K_B$ be state spaces, let $x_B \in K_B$ be a fixed state and define a channel $\tau_x \in \chan(K_A, K_B)$ as $\tau_x(y_A) = x_B$ for all $y_A \in K_A$. To see that $\tau_x$ is a channel, we need to verify that it is affine. So let $y_A, z_A \in K_A$, $\lambda \in [0,1]$, then we have
\begin{equation}
\lambda \tau_x(y_A) + (1-\lambda) \tau_x(z_A) = \lambda x_B + (1-\lambda) x_B = x_B = \tau_x(\lambda y_A + (1-\lambda) z_A)
\end{equation}
and so $\tau_x$ is affine and a channel. $\tau_x$ is usually called the constant channel. When extended to a linear map $\tau_x: A(K_A)^* \to A(K_B)^*$, we get $\tau_x(v_A) = \< v_A, 1_{K_A} \> x_B$ for $v_A \in A(K_A)^*$. This is easy to derive, for every $v_A \in A(K_A)^*$ there are $y_A, z_A \in K_A$ and $\lambda, \mu \in \Rp$ such that $v_A = \lambda y_A - \mu z_A$ and by linearity we get
\begin{equation}
\tau_x(v_A) = \tau_x (\lambda y_A - \mu z_A) = \lambda \tau_x(y_A) - \mu \tau_x(z_A) = (\lambda - \mu) x_B
\end{equation}
and the result follows from $\< v_A, 1_{K_A} \> = \lambda - \mu$.
\end{example}

\begin{example} \label{exm:channels-channels-partialTrace}
Let $K_A$, $K_B$ be state spaces and let $\id_{K_A} \otimes 1_{K_B} \in \chan(K_A \treal K_B, K_A)$ be the partial trace map, i.e., for $x_{AB} \in K_A \treal K_B$ we have
\begin{equation}
\id_{K_A} \otimes 1_{K_B}:
\begin{quantikz}[row sep=\the\rowsep,align equals at=1.5]
&[\prepfix] \multiprepareC[2]{x_{AB}} & \qw{K_A} \\
& & \qw{K_B}
\end{quantikz}
\mapsto
\begin{quantikz}[row sep=\the\rowsep,align equals at=1.5]
&[\prepfix] \multiprepareC[2]{x_{AB}} & \qw{K_A} \\
& & \ground{}{K_B}
\end{quantikz}
\end{equation}

To see that $\id_{K_A} \otimes 1_{K_B}$ is a channel note that we have already showed that $(\id_{K_A} \otimes 1_{K_B})(x_{AB}) \in K_A$ in Corollary \ref{coro:tensor-partial-trace}, we only need to argue that $\id_{K_A} \otimes 1_{K_B}$ is affine. So let $x_{AB}, y_{AB} \in K_A \treal K_B$, $\lambda \in [0,1]$ and $f_A \in E(K_A)$, then
\begin{align}
\< (\id_{K_A} \otimes 1_{K_B})(\lambda x_{AB} + (1-\lambda) y_{AB}), f_A \> &= \< \lambda x_{AB} + (1-\lambda) y_{AB}, f_A \otimes 1_{K_B} \> \\
&=  \lambda \< x_{AB}, f_A \otimes 1_{K_B} \> + (1-\lambda) \< y_{AB}, f_A \otimes 1_{K_B} \> \\
&= \lambda \< (\id_{K_A} \otimes 1_{K_B})(x_{AB}), f_A \> + (1-\lambda) \< (\id_{K_A} \otimes 1_{K_B})(y_{AB}), f_A \>
\end{align}
and so
\begin{equation}
(\id_{K_A} \otimes 1_{K_B})(\lambda x_{AB} + (1-\lambda) y_{AB}) = \lambda (\id_{K_A} \otimes 1_{K_B})(x_{AB}) + (1-\lambda) (\id_{K_A} \otimes 1_{K_B})(y_{AB})
\end{equation}
follows.
\end{example}

\begin{lemma}
Let $\Phi_1, \Phi_2 \in \chan(K_A, K_B)$ be channels and let $\lambda \in [0,1]$, then also their convex combination $\lambda \Phi_1 + (1-\lambda) \Phi_2$, given for $x_A \in K_A$ as $(\lambda \Phi_1 + (1-\lambda) \Phi_2)(x_A) = \lambda \Phi_1(x_A) + (1-\lambda) \Phi_2(x_A)$ is also a channel.
\end{lemma}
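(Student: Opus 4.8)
The plan is to verify the two defining properties of a channel directly from the definition of $\chan(K_A, K_B)$: that $\Psi := \lambda \Phi_1 + (1-\lambda) \Phi_2$ maps $K_A$ into $K_B$, and that $\Psi$ is affine.

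First I would check that the image lies in $K_B$. For any $x_A \in K_A$ we have $\Phi_1(x_A), \Phi_2(x_A) \in K_B$, since $\Phi_1$ and $\Phi_2$ are channels. As $\lambda \in [0,1]$, the point $\lambda \Phi_1(x_A) + (1-\lambda) \Phi_2(x_A)$ is a convex combination of two elements of $K_B$, hence it belongs to $K_B$ by the convexity postulate \ref{item:what-stateSpace-convex}. Therefore $\Psi(x_A) \in K_B$. This is the step where one genuinely uses that $K_B$ (not merely $K_A$) is convex and that $\lambda$ lies in the unit interval.

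Next I would check affinity. Take $x_A, y_A \in K_A$ and $\mu \in [0,1]$. Expanding $\Psi(\mu x_A + (1-\mu) y_A)$ by the definition of the convex combination of maps, then applying the affinity of $\Phi_1$ and of $\Phi_2$ separately to $\mu x_A + (1-\mu) y_A$, yields a sum of four terms which regroups, by collecting the coefficient $\mu$ versus $1-\mu$, into $\mu\bigl(\lambda \Phi_1(x_A) + (1-\lambda) \Phi_2(x_A)\bigr) + (1-\mu)\bigl(\lambda \Phi_1(y_A) + (1-\lambda) \Phi_2(y_A)\bigr) = \mu \Psi(x_A) + (1-\mu) \Psi(y_A)$, which is exactly the affinity condition.

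I do not expect any real obstacle: the computation is a one-line rearrangement of a sum of four terms, entirely analogous to the argument already carried out in Example \ref{exm:channels-channels-constant}. The only subtlety worth flagging explicitly in the write-up is that convexity of the target state space $K_B$ is precisely what guarantees that $\Psi(x_A)$ is a legitimate state rather than merely a vector in $\linspan(K_B) = A(K_B)^*$.
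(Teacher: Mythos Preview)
Your proposal is correct and follows exactly the same approach as the paper: first use convexity of $K_B$ to see that $\Psi(x_A)\in K_B$, then verify affinity by a direct rearrangement. The paper's proof is slightly terser (it simply says the affinity check is ``straightforward to verify''), but the argument is identical.
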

\begin{proof}
Let $x_A \in K_A$, since $\Phi_1, \Phi_2$ are channels, we have $\Phi_1(x_A) \in K_B$ and $\Phi_2(x_A) \in K_B$, so $\lambda \Phi_1(x_A) + (1-\lambda) \Phi_2(x_A) \in K_B$ follows by convexity of $K_B$. So $\lambda \Phi_1 + (1-\lambda) \Phi_2 \in \chan(K_A, K_B)$, it is straightforward to verify that $\lambda \Phi_1 + (1-\lambda) \Phi_2$ is also affine.
\end{proof}

\begin{example}
Let $x \in K$ be a fixed point and let $\tau_x \in \chan(K)$ be the corresponding constant channel and let $\lambda \in [0,1]$, then we have $\lambda \id_K + (1-\lambda) \tau_x \in \chan(K)$, given for $y \in K$ as $(\lambda \id_K + (1-\lambda) \tau_x)(y) = \lambda y + (1-\lambda) x$.
\end{example}

We will use
\begin{equation}
\begin{quantikz}
&\qw{K_A} &\gate{\Phi} &\qw &\qw{K_B}
\end{quantikz}
\end{equation}
to denote the channel $\Phi\in \chan(K_A, K_B)$. Let $x_A \in K_A$, then
\begin{equation}
\begin{quantikz}[align equals at=1]
&\prepareC{\Phi(x_A)} &\qw
\end{quantikz}
=
\begin{quantikz}[align equals at=1]
&[\prepfix]\prepareC{x_A} &\gate{\Phi} &\qw
\end{quantikz}
\end{equation}
denotes the state $\Phi(x_A) \in K_B$. Note that for channels $\Phi_1 \in \chan(K_A, K_B)$, $\Phi_2 \in  \chan(K_B,K_C)$ we will use
\begin{equation}
\begin{quantikz}[align equals at=1]
&\gate{\Phi_1} &\gate{\Phi_2} &\qw
\end{quantikz}
=
\begin{quantikz}[align equals at=1]
&\gate{\Phi_2 \circ \Phi_1} &\qw
\end{quantikz}
\end{equation}
where $\Phi_2 \circ \Phi_1 \in \chan(K_A, K_C)$, $(\Phi_2 \circ \Phi_1)(x_A) = \Phi_2(\Phi_1(x_A))$ for $x_A \in K_A$, i.e. we use $\circ$ to denote the composition (also called concatenation) of channels. The identity channel $\id_K \in \chan(K)$ will be represented by a plain wire, i.e.,
\begin{equation}
\begin{quantikz}[align equals at = 1]
&\gate{\id_K} &\qw{}
\end{quantikz}
=
\begin{quantikz}[align equals at = 1]
&\qw{}
\end{quantikz}
\end{equation}
Let $\Phi \in \chan(K_A, K_B)$ and $f_B \in E(K_B)$, then we can construct
\begin{equation} \label{eq:channels-channels-HeisenbergPicture}
\begin{quantikz}[align equals at=1]
&\gate{\Phi} &\meterD{f_B}
\end{quantikz}
\in E(K_A).
\end{equation}
The object in \eqref{eq:channels-channels-HeisenbergPicture} belongs to $E(K_A)$ because for every $x_A \in K_A$ we have
\begin{equation}
\begin{quantikz}[align equals at=1]
&\prepareC{x_A} &\gate{\Phi} &\meterD{f_B}
\end{quantikz}
=
\begin{quantikz}[align equals at=1]
&[\prepfix] \prepareC{\Phi(x_A)} &\meterD{f_B}
\end{quantikz}
\in [0,1]
\end{equation}
since $\Phi(x_A) \in K_B$. We will denote
\begin{equation} \label{eq:channels-channels-adjointDef}
\begin{quantikz}[align equals at=1]
&\gate{\Phi} &\meterD{f_B}
\end{quantikz}
=
\begin{quantikz}[align equals at=1]
&\meterD{\Phi^*(f_B)}
\end{quantikz}
\end{equation}
where $\Phi^*: E(K_B) \to E(K_A)$ is the induced map. One can easily check that it extends to a linear map $\Phi^*:A(K_B) \to A(K_A)$.
\begin{definition}
Let $\Phi \in \chan(K_A, K_B)$ be a channel, then the \emph{adjoint map} $\Phi^*:E(K_B) \to E(K_A)$ is a linear map defined by \eqref{eq:channels-channels-adjointDef}, or equivalently, $\Phi^*: E(K_B) \to E(K_A)$ is the unique linear map such that for all $x_A \in K_A$ and $f_B \in E(K_B)$ we have
\begin{equation}
\< \Phi(x_A), f_B \> = \< x_A, \Phi^*(f_B) \>.
\end{equation}
\end{definition}
We already said that a channel can be seen both as appending instruction to preparations, but also as prepending instructions to measurements. The original channel $\Phi \in \chan(K_A, K_B)$ was mapping states to states and so it was appending instructions to a preparation procedure; we usually refer to this as the Schr\"{o}dinger picture. The adjoint map $\Phi^*:E(K_B) \to E(K_A)$ is prepending instructions to measurement procedures; we usually refer to this as the Heisenberg picture. Both of the maps $\Phi$ and $\Phi^*$ are different descriptions of the same thing. The following is an important and often used result about the adjoint map of a channel.
\begin{proposition} \label{prop:channels-channels-unital}
Let $K_A$, $K_B$ be state spaces and let $\Phi \in \chan(K_A, K_B)$ be a channel. Then the adjoint map $\Phi^*: E(K_A) \to E(K_B)$ is unital, i.e., we have $\Phi^* (1_{K_B}) = 1_{K_A}$.
\end{proposition}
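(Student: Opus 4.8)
The plan is to exploit the defining property of the adjoint map together with the fact that every channel maps states to states, and that $1_{K_B}$ evaluates to $1$ on all of $K_B$. Concretely, I would fix an arbitrary state $x_A \in K_A$ and evaluate the candidate effect $\Phi^*(1_{K_B}) \in E(K_A)$ on it.

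First, recall from the definition of the adjoint that for all $x_A \in K_A$ and all $f_B \in E(K_B)$ we have $\< \Phi(x_A), f_B \> = \< x_A, \Phi^*(f_B) \>$. Applying this with $f_B = 1_{K_B}$ gives
\begin{equation}
\< x_A, \Phi^*(1_{K_B}) \> = \< \Phi(x_A), 1_{K_B} \>.
\end{equation}
Since $\Phi$ is a channel, $\Phi(x_A) \in K_B$, and by the definition of $1_{K_B}$ as the constant function equal to $1$ on $K_B$, the right-hand side equals $1$. Hence $\< x_A, \Phi^*(1_{K_B}) \> = 1$ for every $x_A \in K_A$, which by definition of $1_{K_A}$ means $\Phi^*(1_{K_B}) = 1_{K_A}$ (two elements of $A(K_A)$ agreeing as functions on all of $K_A$ coincide). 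In diagrammatic terms, this is just the statement that attaching the ground symbol after $\Phi$ and then feeding in any state gives $1$, i.e., $\Phi$ followed by $1_{K_B}$ is the ground on $K_A$.

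I do not expect any real obstacle here: the argument is a one-line computation once the defining property of the adjoint and the normalization $\< y, 1_{K_B} \> = 1$ for $y \in K_B$ are invoked. The only point worth a sentence is the remark that an affine function on $K_A$ is determined by its values on $K_A$, so that pointwise equality with the constant $1$ upgrades to the identity $\Phi^*(1_{K_B}) = 1_{K_A}$ in $E(K_A)$.
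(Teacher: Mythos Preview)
Your proposal is correct and is essentially identical to the paper's proof: both fix an arbitrary $x_A \in K_A$, use the defining relation $\< x_A, \Phi^*(1_{K_B}) \> = \< \Phi(x_A), 1_{K_B} \> = 1$, and conclude $\Phi^*(1_{K_B}) = 1_{K_A}$. The paper's version is just the one-line computation without the surrounding commentary.
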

\begin{proof}
Let $x_A \in K_A$, then we have
\begin{equation}
\< x_A, \Phi^*(1_{K_B}) \> = \< \Phi_A(x_A), 1_{K_B} \> = 1
\end{equation}
and so we must have $\Phi^* (1_{K_B}) = 1_{K_A}$.
\end{proof}

We will now construct a useful mathematical representation of channels. Let $\Phi \in \chan(K_A, K_B)$. Since $\Phi$ can be extended to a linear map $\Phi: A(K_A)^* \to A(K_B)^*$, it follows from Proposition \ref{prop:bilinear-isomorphisms} that this linear map corresponds to a vector from $A(K_A) \otimes A(K_B)^*$. And so, by omitting the isomorphism, we can write $\Phi \in A(K_A) \otimes A(K_B)^*$. It then follows that there are $g_{i,A} \in A(K_A)$ and $w_{i,B} \in A(K_B)^*$, $i \in \{1, \ldots, n\}$ such that $\Phi = \sum_{i=1}^n g_{i,A} \otimes w_{i,B}$. Then for $v_A \in A(K_A)^*$ and $f_B \in A(K_B)$ we have
\begin{equation}
\< \Phi(v_A), f_B \> = \sum_{i=1}^n \< v_A, g_{i,A} \> \< w_{i,B}, f_B \>.
\end{equation}
It follows that for $v_A \in A(K_A)^*$ we have
\begin{equation}
\Phi(v_A) = \sum_{i=1}^n \< v_A, g_{i,A} \> w_{i,B}.
\end{equation}
For $x_A \in K_A$ and $f_B \in E(K_B)$ we get $\< \Phi(x_A), f_B \> \geq 0$ which means that $\Phi \in A(K_A) \otimes A(K_B)^*$ must be positive in some sense. We can use this property together with Proposition \ref{prop:channels-channels-unital} to characterize all channels as s subset of $A(K_A) \otimes A(K_B)^*$.
\begin{proposition} \label{prop:channels-channels-tensorSubset}
Let $K_A$, $K_B$ be state spaces, then
\begin{equation}
\chan(K_A, K_B) = \{ \Phi \in A(K_A)^+ \tmax A(K_B)^{*+} : \Phi^*(1_{K_B}) = 1_{K_A} \},
\end{equation}
where
\begin{equation}
A(K_A)^+ \tmax A(K_B)^{*+} = \{ v \in A(K_A) \otimes A(K_B)^* : \< v, x_A \otimes f_B \> \geq 0, \forall x_A \in A(K_A)^{*+}, \forall f_B \in A(K_B)^+ \}.
\end{equation}
\end{proposition}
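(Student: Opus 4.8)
The plan is to establish the two inclusions separately and then assemble them. Write $\mathcal{P} = \{ \Phi \in A(K_A)^+ \tmax A(K_B)^{*+} : \Phi^*(1_{K_B}) = 1_{K_A} \}$ for the set on the right-hand side, where I identify a channel with its associated vector in $A(K_A) \otimes A(K_B)^*$ via Proposition \ref{prop:bilinear-isomorphisms}, exactly as in the discussion preceding the statement.

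First I would show $\chan(K_A, K_B) \subset \mathcal{P}$. Take $\Phi \in \chan(K_A, K_B)$ and view it, via the paragraph preceding the proposition, as an element of $A(K_A) \otimes A(K_B)^*$. For positivity in the stated sense, I need $\< \Phi, x_A \otimes f_B \> \geq 0$ for all $x_A \in A(K_A)^{*+}$ and $f_B \in A(K_B)^+$. Using Lemma \ref{lemma:basic-results-base}, any $x_A \in A(K_A)^{*+}$ equals $\lambda x$ for some $x \in K_A$ and $\lambda \in \Rp$, and any $f_B \in A(K_B)^+$ equals $\mu f$ for some $f \in E(K_B)$ and $\mu \in \Rp$ by Proposition \ref{prop:basic-EA-genAK}; then $\< \Phi, x_A \otimes f_B\> = \lambda \mu \<\Phi(x), f\> \geq 0$ since $\Phi(x) \in K_B$ and $f \in E(K_B)$. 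The identification of $\< \Phi, x_A \otimes f_B\>$ with $\<\Phi(x_A), f_B\>$ is precisely the bilinear-form interpretation of the isomorphism. The unitality condition $\Phi^*(1_{K_B}) = 1_{K_A}$ is Proposition \ref{prop:channels-channels-unital}. Hence $\Phi \in \mathcal{P}$.

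For the reverse inclusion $\mathcal{P} \subset \chan(K_A, K_B)$, take $\Phi \in \mathcal{P}$, regarded as a linear map $\Phi : A(K_A)^* \to A(K_B)^*$ through the isomorphism. I must check that $\Phi$ maps $K_A$ into $K_B$ (its affinity is then automatic, being the restriction of a linear map). Fix $x \in K_A$; I want $\Phi(x) \in K_B$. By Theorem \ref{thm:basic-dual-stateSpace}, it suffices to show $\Phi(x) \in A(K_B)^{*+}$ and $\<\Phi(x), 1_{K_B}\> = 1$. The second is immediate: $\<\Phi(x), 1_{K_B}\> = \<x, \Phi^*(1_{K_B})\> = \<x, 1_{K_A}\> = 1$. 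For the first, take any $f_B \in A(K_B)^+$; by the positivity hypothesis on $\Phi$ applied with $x_A = x \in K_A \subset A(K_A)^{*+}$, we get $\<\Phi(x), f_B\> = \<\Phi, x \otimes f_B\> \geq 0$, so $\Phi(x)$ is a positive functional on $A(K_B)^+$, i.e. $\Phi(x) \in A(K_B)^{*+}$. Therefore $\Phi(x) \in K_B$ and $\Phi \in \chan(K_A, K_B)$.

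The two inclusions give the claimed equality. I do not expect a serious obstacle here; the only point requiring care is being consistent about the isomorphism $A(K_A) \otimes A(K_B)^* \cong \{\text{linear maps } A(K_A)^* \to A(K_B)^*\} \cong \{\text{bilinear forms on } A(K_A)^* \times A(K_B)\}$ from Proposition \ref{prop:bilinear-isomorphisms}, so that $\<\Phi, x_A \otimes f_B\>$, $\<\Phi(x_A), f_B\>$, and $\<x_A, \Phi^*(f_B)\>$ are all the same number; once that dictionary is fixed, each step is a one-line verification using Lemma \ref{lemma:basic-results-base}, Theorem \ref{thm:basic-dual-stateSpace}, and Proposition \ref{prop:channels-channels-unital}.
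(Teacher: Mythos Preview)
The proposal is correct and follows essentially the same approach as the paper: both directions are handled by translating between the tensor element and the linear map via Proposition~\ref{prop:bilinear-isomorphisms}, then invoking Proposition~\ref{prop:channels-channels-unital} for unitality and Theorem~\ref{thm:basic-dual-stateSpace} to recognise $\Phi(x)$ as a state. Your reduction from the full cones to $K_A$ and $E(K_B)$ via Lemma~\ref{lemma:basic-results-base} is a minor elaboration; the paper simply checks positivity on $x_A\in K_A$, $f_B\in E(K_B)$ directly, which suffices by homogeneity.
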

\begin{proof}
We will first prove that if $\Phi \in \chan(K_A, K_B)$ is a channel, then $\Phi \in A(K_A)^+ \tmax A(K_B)^{*+}$. Let $x_A \in K_A$, $f_B \in E(K_B)$, then we have
\begin{equation}
\< \Phi, x_A \otimes f_B \> = \< \Phi(x_A), f_B \> \geq 0,
\end{equation}
where we have used the isomorphism between linear maps and elements of tensor product, see Proposition \ref{prop:bilinear-isomorphisms}. It follows that we have $\Phi \in A(K_A)^+ \tmax A(K_B)^{*+}$ and since we already know that $\Phi^*(1_{K_B}) = 1_{K_A}$, see Proposition \ref{prop:channels-channels-unital}, we get
\begin{equation}
\Phi \in \{ \Psi \in A(K_A)^+ \tmax A(K_B)^{*+} : \Psi^*(1_{K_B}) = 1_{K_A} \}.
\end{equation}

Now let $\Phi \in A(K_A)^+ \tmax A(K_B)^{*+}$ be such that $\Phi^*(1_{K_B}) = 1_{K_A}$, and let $x_A \in K_A$. Then we can define $v_B \in A(K_B)^*$ as the unique element such that for all $f_B \in E(K_B)$ we have
\begin{equation}
\< v_B, f_B \> = \< \Phi, x_A \otimes f_B \>.
\end{equation}
We have $\< v_B, f_B \> \geq 0$ and so $v_B \in A(K_B)^{*+}$. Moreover we also have $\< v_B, 1_{K_B} \> = 1$ and so it follows from Theorem \ref{thm:basic-dual-stateSpace} that $v_B \in K_B$. Hence we can define $\Phi(x_B) = v_B$ and so $\Phi$ corresponds to a map $K_A \to K_B$; one can easily check that $\Phi$ defined like this is affine map. So it follows that $\Phi \in  \chan(K_A, K_B)$.
\end{proof}

The result above is extremely important, because it shows that we can treat the set of channels $\chan(K_A, K_B)$ as a state space. One can easily check that $\chan(K_A, K_B)$ is a base of a positive cone $A(K_A)^+ \tmax A(K_B)^{*+} \cap \linspan(\chan(K_A,K_B))$. This is an important result, because it follows that if we would be interested in, for example, discrimination of channels, we can use the result of Theorem \ref{thm:basic-norms-discrimination}. It also follows that we do not have to develop a separate theory of channels, or a separate theory of superchannels, that is maps that map channels to channels, all of these theories are already included in our formalism.

\begin{corollary}
Let $S^A_n$, $S^B_m$ be simplexes, given by their extreme points
\begin{align}
&S^A_n = \conv ( \{ s_{1,A} , \ldots, s_{n,A}  \} ), 
&&S^B_n = \conv ( \{ s_{1,B} , \ldots, s_{m,B}  \} ).
\end{align}
Then
\begin{equation} \label{eq:channels-channels-simplexTensor}
\chan(S^A_n, S^B_m) = \{ \Phi \in A(S^A_n)^+ \tmin A(S^B_m)^{*+} : \Phi^*(1_{S^A_n}) = 1_{S^B_m} \}
\end{equation}
and for $s_A \in S^A_n$ we have
\begin{equation}
\Phi(s_A) = \sum_{i=1}^n \sum_{j=1}^m \nu_{ij} \< s_A, b_{i,A} \> s_{j,B},
\end{equation}
where $b_{i,A} \in E(S_n)$ are the functions such that $\< b_{i,A}, s_{k,A} \> = \delta_{ik}$ for $i,k \in \{ 1, \ldots, n \}$. $\nu_{ij} \in \Rp$ are such that $\sum_{j=1}^m \nu_{ij} = 1$ for all $i \in \{ 1, \ldots, n \}$.
\end{corollary}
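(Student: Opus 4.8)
The plan is to deduce the set equality from Proposition \ref{prop:channels-channels-tensorSubset} by showing that, when one factor is a simplex, the maximal and minimal tensor products of the underlying cones coincide, and then to read off the explicit action of $\Phi$ from the defining property of a simplex. First I would record that the two cones appearing here are both simplicial: by the discussion in Section \ref{sec:CT}, $\{b_{1,A},\ldots,b_{n,A}\}$ is a basis of $A(S^A_n)$ with $A(S^A_n)^+=\cone(\{b_{1,A},\ldots,b_{n,A}\})$, and $\{s_{1,B},\ldots,s_{m,B}\}$ is a basis of $A(S^B_m)^*$ with $A(S^B_m)^{*+}=\cone(\{s_{1,B},\ldots,s_{m,B}\})$, the latter because $S^B_m$ is a base of $A(S^B_m)^{*+}$ by Lemma \ref{lemma:basic-results-base}.

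Next I would run the argument of Proposition \ref{prop:tensor-existence-notWithSn} at the level of cones. Given $\Phi\in A(S^A_n)^+\tmax A(S^B_m)^{*+}\subset A(S^A_n)\otimes A(S^B_m)^*$, use the basis $\{s_{j,B}\}$ to write $\Phi=\sum_{j=1}^m w_j\otimes s_{j,B}$ with $w_j\in A(S^A_n)$, see Lemma \ref{lemma:bilinear-productSum}. Pairing with $\psi\otimes b_{j,B}$, where $\psi\in A(S^A_n)^{*+}$ and $b_{j,B}\in E(S^B_m)$ is the dual-basis effect determined by $\langle s_{k,B},b_{j,B}\rangle=\delta_{jk}$, yields $\langle w_j,\psi\rangle=\langle\Phi,\psi\otimes b_{j,B}\rangle\geq 0$, hence $w_j\in A(S^A_n)^+$ by the double-dual identity $(A(S^A_n)^{*+})^*=A(S^A_n)^+$ from Proposition \ref{prop:duals-doubleDualCone}. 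Therefore $\Phi=\sum_j w_j\otimes s_{j,B}\in A(S^A_n)^+\tmin A(S^B_m)^{*+}$, so the two tensor products agree and Proposition \ref{prop:channels-channels-tensorSubset} gives \eqref{eq:channels-channels-simplexTensor}. For the explicit formula I would use that, since the vertices of $S^A_n$ are affinely independent with $\langle s_{k,A},b_{i,A}\rangle=\delta_{ik}$, every $s_A\in S^A_n$ has the unique expansion $s_A=\sum_{i=1}^n\langle s_A,b_{i,A}\rangle\,s_{i,A}$ and an affine map on $S^A_n$ is freely and uniquely determined by its vertex images. Thus for a channel $\Phi$ each $\Phi(s_{i,A})$ lies in $S^B_m$, so $\Phi(s_{i,A})=\sum_{j=1}^m\nu_{ij}\,s_{j,B}$ with $\nu_{ij}\in\Rp$; affinity then gives $\Phi(s_A)=\sum_{i=1}^n\langle s_A,b_{i,A}\rangle\,\Phi(s_{i,A})=\sum_{i,j}\nu_{ij}\langle s_A,b_{i,A}\rangle\,s_{j,B}$, and the normalisation $\sum_{j=1}^m\nu_{ij}=1$ for each $i$ is precisely the unitality of $\Phi^*$ from Proposition \ref{prop:channels-channels-unital}, since $\sum_j\nu_{ij}=\langle\Phi(s_{i,A}),1_{S^B_m}\rangle=\langle s_{i,A},\Phi^*(1_{S^B_m})\rangle=1$. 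Conversely, any family $\{\nu_{ij}\}\subset\Rp$ with $\sum_j\nu_{ij}=1$ defines through this formula an affine map carrying every vertex into $S^B_m$, hence a channel.

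The step I expect to require the most care is the cone-level identity $A(S^A_n)^+\tmax A(S^B_m)^{*+}=A(S^A_n)^+\tmin A(S^B_m)^{*+}$ and its compatibility with the channel--tensor correspondence of Proposition \ref{prop:bilinear-isomorphisms}: one must check that $b_{j,B}$ indeed lies in the dual cone of $A(S^B_m)^{*+}$ and that the double dual is applied correctly, i.e. that this is genuinely the cone analogue of Proposition \ref{prop:tensor-existence-notWithSn}. An alternative that avoids the tensor bookkeeping is to prove \eqref{eq:channels-channels-simplexTensor} directly: since $\{b_{i,A}\otimes s_{j,B}\}$ is a basis of $A(S^A_n)\otimes A(S^B_m)^*$ and generates $A(S^A_n)^+\tmin A(S^B_m)^{*+}$, every element of that cone is a nonnegative combination $\sum_{i,j}\nu_{ij}\,b_{i,A}\otimes s_{j,B}$; unwinding Proposition \ref{prop:bilinear-isomorphisms} shows that the associated map acts by the formula above, and the simplex argument of the second paragraph then identifies it as a channel exactly when the unitality condition holds.
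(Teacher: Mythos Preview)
Your proposal is correct and follows essentially the same route as the paper: invoke Proposition~\ref{prop:channels-channels-tensorSubset}, collapse $\tmax$ to $\tmin$ using that one factor is a simplex, then read off the coefficients $\nu_{ij}$ and their normalisation from unitality. The paper is terser---it simply cites Proposition~\ref{prop:tensor-existence-notWithSn} for the tensor collapse and expands $\Phi=\sum_{i,j}\nu_{ij}\,b_{i,A}\otimes s_{j,B}$ directly in the product basis to compute $\Phi^*(1_{S^B_m})$---whereas you are more careful in re-running the $\tmax=\tmin$ argument at the cone level (a legitimate point, since Proposition~\ref{prop:tensor-existence-notWithSn} is stated for state spaces, not cones) and derive the formula via vertex images instead; your ``alternative'' at the end is in fact exactly the paper's argument.
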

\begin{proof}
\eqref{eq:channels-channels-simplexTensor} follows from Propositions \ref{prop:channels-channels-tensorSubset} and \ref{prop:tensor-existence-notWithSn}. Note that
\begin{equation}
A(S^A_n) \tmin A(S^B_m) = \conv \cone ( \{ b_{i,A} \otimes s_{j, B} : i \in \{1, \ldots, n\}, j \in \{1, \ldots, m \} \} ),
\end{equation}
so we must have
\begin{equation}
\Phi = \sum_{i=1}^n \sum_{j=1}^m \nu_{ij} b_{i,A} \otimes s_{j,B}.
\end{equation}
We then have
\begin{equation}
\Phi^*(1_{S_n^B}) = \sum_{i=1}^n \sum_{j=1}^m \nu_{ij} \< s_{j,B}, 1_{S_n^B} \> b_{i,A} = \sum_{i=1}^n \sum_{j=1}^m \nu_{ij} b_{i,A}
\end{equation}
Since we must have $\Phi^*(1_{S_n^B}) = 1_{S_m^A} = \sum_{i=1}^n b_{i,A}$, we get $\sum_{j=1}^m \nu_{ij} = 1$.
\end{proof}

\subsection{Measurements}
As we have already pointed out, measurements are maps that map states to probability distributions; we will consider only probability distributions over finitely many possible outcomes. We have already discussed in Section \ref{sec:CT} that such probability distributions are in one-to-one correspondence with states on a classical state space $S_n$. Hence a measurement is a channel from a state space $K$ to $S_n$.
\begin{definition}
$n$-outcome \emph{measurement} is a channel $m \in \chan(K, S_n)$.
\end{definition}
We will simply use the word measurement when the number of outcomes will not be important. We immediately have the following:
\begin{proposition} \label{prop:channels-measurements-minTensor}
Let $m \in \chan(K, S_n)$ be a measurement, then $m \in A(K)^+ \tmin A(S_n)^{*+}$, where
\begin{equation}
A(K)^+ \tmin A(S_n)^{*+} = \conv \cone( \{ f \otimes s: f \in E(K), s \in S_n \} ).
\end{equation}
\end{proposition}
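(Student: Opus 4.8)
The plan is to exploit the two natural bases attached to the simplex: the vertices $\{s_1,\dots,s_n\}$ form a basis of $A(S_n)^*$, and the effects $\{b_1,\dots,b_n\}$ defined by $\langle s_i,b_j\rangle=\delta_{ij}$ form the dual basis of $A(S_n)$. The key observation will be that a measurement $m\in\chan(K,S_n)$ is completely recovered from the effects $m^*(b_1),\dots,m^*(b_n)$, which already lie in $E(K)$ by definition of the adjoint map.

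Viewing $m$ as an element of $A(K)\otimes A(S_n)^*$ (by Proposition \ref{prop:channels-channels-tensorSubset}), I would first establish the identity
\begin{equation}
m \;=\; \sum_{i=1}^{n} m^*(b_i)\otimes s_i .
\end{equation}
To check it, it suffices to evaluate both sides on the elementary tensors $x\otimes b_j$ with $x\in K$ and $j\in\{1,\dots,n\}$, since such evaluations separate $A(K)\otimes A(S_n)^*$: the left-hand side gives $\langle m(x),b_j\rangle$, while the right-hand side gives $\sum_i\langle x,m^*(b_i)\rangle\langle s_i,b_j\rangle=\langle x,m^*(b_j)\rangle=\langle m(x),b_j\rangle$ by the defining property of the adjoint, so the two agree.

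Each $b_i$ lies in $E(S_n)$, so by definition of the adjoint the map $m^*$ sends it into $E(K)$; thus every coefficient $m^*(b_i)$ is an effect in $E(K)$ while every $s_i$ is a (pure) state in $S_n$. The displayed identity then exhibits $m$ as a sum of elementary tensors $f\otimes s$ with $f\in E(K)$ and $s\in S_n$, which is exactly the desired statement
\begin{equation}
m\in\conv\cone\big(\{\,f\otimes s : f\in E(K),\ s\in S_n\,\}\big) = A(K)^+ \tmin A(S_n)^{*+}.
\end{equation}
I do not expect a genuine obstacle; the only slightly delicate point is the first identity, namely that evaluating a tensor in $A(K)\otimes A(S_n)^*$ against all $x\otimes b_j$ with $x\in K$ determines it, which rests on $\linspan(K)=A(K)^*$ together with $\{b_j\}$ being a basis of $A(S_n)$. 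Alternatively one could deduce the claim by combining Proposition \ref{prop:channels-channels-tensorSubset} with the cone analogue of the argument used in Proposition \ref{prop:tensor-existence-notWithSn}, but the computation above is more transparent and self-contained.
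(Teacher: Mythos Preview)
Your proof is correct but proceeds differently from the paper. The paper's argument is a one-line appeal to two earlier results: Proposition~\ref{prop:channels-channels-tensorSubset} places every channel in $A(K)^+ \tmax A(S_n)^{*+}$, and Proposition~\ref{prop:tensor-existence-notWithSn} (the fact that a simplex forces $\tmin=\tmax$) collapses this to $A(K)^+ \tmin A(S_n)^{*+}$. You instead construct the decomposition $m=\sum_i m^*(b_i)\otimes s_i$ directly and read off membership in the minimal tensor cone from the explicit form. What you gain is self-containment and an explicit formula; indeed, the paper derives exactly this decomposition \emph{from} the proposition in the paragraph that follows, whereas you have effectively reversed the logical order and obtained both the proposition and the decomposition in one stroke. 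What the paper's route buys is brevity and a clearer conceptual link to the general $\tmin=\tmax$ phenomenon for simplices. You even note this alternative at the end of your proposal, so you are aware of both paths.
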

\begin{proof}
The result follows from Proposition \ref{prop:channels-channels-tensorSubset}, since we must have $A(K)^+ \tmax A(S_n)^{*+} = A(K)^+ \tmin A(S_n)^{*+}$ which follows from Proposition \ref{prop:tensor-existence-notWithSn}.
\end{proof}

Let $s_1, \ldots, s_n$ be the pure states in $S_n$, so that we have $S_n = \conv( \{ s_1, \ldots, s_n \} )$ and $A(S_n)^{*+} = \linspan( \{ s_1, \ldots, s_n \} )$. Let $m \in \chan(K, S_n)$ be a measurement, then according to Proposition \ref{prop:channels-measurements-minTensor} there must be $f_i \in E(K)$, $i \in \{1, \ldots, n\}$ such that
\begin{equation}
m = \sum_{i=1}^n f_i \otimes s_i.
\end{equation}
Then for $x \in K$ we have
\begin{equation}
m(x) = \sum_{i=1}^n \< x, f_i \> s_i.
\end{equation}
According to Proposition \ref{prop:channels-channels-unital} we must have $\< m(x), 1_{S_n} \> = 1$, which implies $\sum_{i=1}^n \< x, f_i \> =1$ for all $x \in K$ and so $\sum_{i=1}^n f_i = 1_{S_n}$. Thus we have proved the following
\begin{proposition} \label{prop:channels-measurements-effects}
$n$-outcome measurement $m \in \chan(K, S_n)$ is uniquely defined by effects $\{ f_1, \ldots, f_n \} \subset E(K)$ such that $\sum_{i=1}^n f_i = 1_{S_n}$.
\end{proposition}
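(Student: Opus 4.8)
The plan is to extract the effects directly from the decomposition of $m$ as an element of $A(K)^+ \tmin A(S_n)^{*+}$, then to read off the normalization condition from unitality of the adjoint, and finally to verify the converse direction, i.e.\ that any such family of effects defines a genuine measurement.

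First I would invoke Proposition \ref{prop:channels-measurements-minTensor} to place $m$ in $A(K)^+ \tmin A(S_n)^{*+} = \conv\cone(\{f \otimes s : f \in E(K), s \in S_n\})$. Since the pure states $s_1, \ldots, s_n$ form a basis of $A(S_n)^*$, every element of $A(K) \otimes A(S_n)^*$, and $m$ in particular, has a unique expansion $m = \sum_{i=1}^n g_i \otimes s_i$ with $g_i \in A(K)$; this already yields the uniqueness claim once the $g_i$ are identified with effects. To see $g_i \in E(K)$, evaluate on $x \in K$: we get $m(x) = \sum_{i=1}^n \<x, g_i\> s_i$, and since $m(x)$ is a point of the simplex $S_n$, uniqueness of the simplex decomposition forces $\<x, g_i\> \in [0,1]$ for every $x \in K$, hence $g_i \in E(K)$; I then set $f_i := g_i$. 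Equivalently, $f_i = m^*(b_i)$, where $b_i \in E(S_n)$ are the coordinate effects of $S_n$ satisfying $\<s_j, b_i\> = \delta_{ij}$, which makes uniqueness immediate.

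Next I would apply Proposition \ref{prop:channels-channels-unital}, which gives $\<m(x), 1_{S_n}\> = 1$ for all $x \in K$. Writing $1_{S_n} = \sum_{i=1}^n b_i$ and using $\<s_i, b_j\> = \delta_{ij}$, this reads $\sum_{i=1}^n \<x, f_i\> = 1$ for every $x \in K$, that is, $\sum_{i=1}^n f_i = 1_{S_n}$ as an identity between affine functions. For the converse, given effects $\{f_1, \ldots, f_n\} \subset E(K)$ with $\sum_{i=1}^n f_i = 1_{S_n}$, I would define $m(x) := \sum_{i=1}^n \<x, f_i\> s_i$; the coefficients $\<x, f_i\>$ are nonnegative and sum to $\<x, \sum_{i=1}^n f_i\> = 1$, so $m(x) \in S_n$, and $m$ is affine because each $x \mapsto \<x, f_i\>$ is affine, whence $m \in \chan(K, S_n)$.

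I do not anticipate a genuine obstacle: the argument is bookkeeping built on Propositions \ref{prop:channels-measurements-minTensor} and \ref{prop:channels-channels-unital}. The only step requiring a little care is the passage from ``$m \in A(K)^+ \tmin A(S_n)^{*+}$'' to a decomposition indexed exactly by the pure states $s_i$, rather than a longer convex--conic combination of arbitrary terms $f \otimes s$; this is handled by the observation that $\{s_1, \ldots, s_n\}$ is a basis of $A(S_n)^*$, so gathering terms is automatic, and the positivity and normalization of the resulting coefficients are precisely what the simplex structure of $S_n$ supplies.
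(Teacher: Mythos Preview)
Your proposal is correct and follows essentially the same route as the paper: decompose $m$ via Proposition~\ref{prop:channels-measurements-minTensor} along the basis $\{s_1,\ldots,s_n\}$ of $A(S_n)^*$, read off normalization from Proposition~\ref{prop:channels-channels-unital}, and obtain uniqueness from $f_i = m^*(b_i)$. You are in fact slightly more thorough than the paper, which does not spell out the converse direction nor the check that the coefficients $g_i$ land in $E(K)$; your use of the uniqueness of simplex coordinates for that step is clean.
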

\begin{proof}
The only thing that remains to be proved is the uniqueness of the set $\{ f_1, \ldots, f_n \}$. Let $\{b_1, \ldots, b_n\} \subset E(S_n)$ be the effects such that $\< s_i, b_j \> = \delta_{ij}$ for all $i,j \in \{1, \ldots, n\}$. Let $m \in \chan(K, S_n)$ and let $x \in K$, then we have $\<x, m^*(b_i) \> = \< m(x), b_i \> = \< x, f_i \>$ and so $f_i = m^*(b_i)$, where $m^*$ is adjoint map of $m$. So $f_i$ is uniquely given by $m$.
\end{proof}
The result above shows an equivalence between our operational definition of a measurement and the definition that is often used in quantum information theory, where measurements are often introduced as collections of effect. It follows from Proposition \ref{prop:channels-measurements-effects} that the two definitions are equivalent and we can without loss of generality either describe a measurement as a collection $f_1, \ldots, f_n$, where $f_i \in E(K)$, for $i \in \{1, \ldots, n\}$ and $\sum_{i=1}^n f_i = 1_K$, or as a map $m: K \to S_n$, $m(x) = \sum_{i=1}^n f_i(x) s_i$.

Let $n=2$ and consider the two-outcome measurement, i.e., channels $m_2 \in \chan(K, S_2)$. In this case, $m_2$ is uniquely specified by the two effects $\{ f, g \} \subset E(K)$. Since we must have $f+g=1_K$, we have $g = 1_K - f$ and so $m_2$ is uniquely specified by $f, 1_K - f$, or equivalently, $m_2$ is uniquely specified by a choice of $f \in E(K)$. Hence we get:
\begin{corollary}
The set of two-outcome measurements is isomorphic to $E(K)$.
\end{corollary}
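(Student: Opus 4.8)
The plan is to exhibit an explicit affine bijection between the set of two-outcome measurements $\chan(K, S_2)$ and the effect algebra $E(K)$, using Proposition \ref{prop:channels-measurements-effects} as the main input. First I would specialize that proposition to $n = 2$: every $m_2 \in \chan(K, S_2)$ is uniquely determined by a pair of effects $\{f_1, f_2\} \subset E(K)$ with $f_1 + f_2 = 1_K$. Because this constraint forces $f_2 = 1_K - f_1$, the measurement $m_2$ is already determined by the single effect $f_1$. This defines a map $\Theta \colon \chan(K, S_2) \to E(K)$ by $\Theta(m_2) = f_1 = m_2^*(b_1)$, where $b_1 \in E(S_2)$ is the effect dual to the pure state carrying the first outcome.

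Next I would check that $\Theta$ is a bijection. Injectivity is immediate from the uniqueness clause of Proposition \ref{prop:channels-measurements-effects}: if $\Theta(m_2) = \Theta(m_2')$ then $f_1 = f_1'$, hence $f_2 = 1_K - f_1 = 1_K - f_1' = f_2'$, so $m_2 = m_2'$. For surjectivity, take an arbitrary $f \in E(K)$; by Lemma \ref{lemma:basic-EA-fPerp} we have $f^\perp = 1_K - f \in E(K)$, so $\{f, f^\perp\}$ is a pair of effects summing to $1_K$ and therefore, again by Proposition \ref{prop:channels-measurements-effects}, determines a two-outcome measurement $m_2$ with $\Theta(m_2) = f$. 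The inverse map sends $f$ to the measurement built from $\{f, 1_K - f\}$.

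Finally I would verify that $\Theta$ respects the convex structure, so that it is an isomorphism of convex sets (equivalently of state spaces, recalling from the remark after Proposition \ref{prop:channels-channels-tensorSubset} that $\chan(K, S_2)$ is itself a state space). If $m_2 = \lambda m_2' + (1-\lambda) m_2''$ with $\lambda \in [0,1]$, then passing to adjoints and evaluating on the fixed effect $b_1$ gives $\Theta(m_2) = \lambda \Theta(m_2') + (1-\lambda)\Theta(m_2'')$, since the adjoint operation and evaluation on a fixed effect are both linear; the same computation shows $\Theta^{-1}$ is affine. I do not expect any genuine obstacle: the statement is essentially bookkeeping on top of Proposition \ref{prop:channels-measurements-effects} and the existence of complements in $E(K)$ (Lemma \ref{lemma:basic-EA-fPerp}). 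The only point that deserves a sentence of care is to spell out that ``isomorphic'' is meant in the affine (convex-set) sense and to record why the bijection and its inverse are both affine.
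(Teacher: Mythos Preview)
Your proposal is correct and follows the same approach as the paper: specialize Proposition~\ref{prop:channels-measurements-effects} to $n=2$, observe that the constraint $f_1 + f_2 = 1_K$ reduces the data to the single effect $f_1$, and invoke the existence of complements for surjectivity. The paper's argument is the short paragraph preceding the corollary and leaves the affine-structure verification implicit; your version is simply a more careful write-up of the same idea.
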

This was an expected result, because we have introduced $E(K)$ as the set of classes of equivalence of all possible `yes'-`no' questions. A `yes'-`no' question is nothing else than a two-outcome measurement with some labels assigned to the outcomes. And so the result above only shows that our framework is consistent.

\subsection{Instruments}
Consider the scenario where we are not only interested in the statistics of a measurement, but also in the post-measurement state, that is in the state of the system after performing the measurement. There are several things to consider: the map from input state to post-measurement state should be a channel, because the post-measurement state has to be a well-defined state. But the map from input state to post-measurement state can depend on the outcome of the measurement; for example in quantum theory it is natural that the post-measurement state is described by an eigenvector corresponding to the observed eigenvalue.

\begin{definition}
\emph{Instrument} is a channel $\iI \in \chan(K, S_n \treal K)$ that describes the measurement and the resulting post-measurement state.
\end{definition}
One can clearly generalize an instrument to the case when post-measurement state belongs to a different system, in that case it would be $\iI \in \chan(K_A,  S_n \treal K_B)$. In diagrammatic notation, an instrument $\iI$ is represented as
\begin{equation}
\begin{quantikz}[row sep={\the\originsep,between origins}]
&\gate[3,nwires={1,3}]{\iI} &\qw{S_n} \\
& & \\
& & \qw{K}
\end{quantikz}
\end{equation}
Let $x \in K$ and let $\iI: K \to S_n$ be an instrument, then  $\iI(x) \in S_n \treal K$ and so we have $\iI(x) = \sum_{j=1}^n \lambda_j s_j \otimes y_j$ where $s_1, \ldots, s_n$ are the extreme points of $S_n$, $\lambda_j \in \Rp$, $y_j \in K$ for all $j \in \{1, \ldots, n\}$ and $\sum_{j=1}^n \lambda_j = 1$. We then have
\begin{equation}
\begin{quantikz}[row sep={\the\originsep,between origins}, align equals at=2]
& &\gate[3,nwires={1,3}]{\iI} &\meterD{b_j}{S_n} \\
&\prepareC{x} & & \\
& & & \qw{K}
\end{quantikz}
=
\begin{quantikz}[align equals at=1]
&\lstick{$\lambda_j$} &[\prepfix] \prepareC{y_j} &\qw
\end{quantikz}
\end{equation}
where $b_j \in E(S_n)$ is the effect such that $\< s_k, b_j \> = \delta_{jk}$ for all $k \in \{1, \ldots, n\}$. Denote
\begin{equation}
\begin{quantikz}[row sep={\the\originsep,between origins}, align equals at=2]
&\gate[3,nwires={1,3}]{\iI} &\meterD{b_j}{S_n} \\
& & \\
& & \qw{K}
\end{quantikz}
=
\begin{quantikz}[align equals at=1]
&\gate{\iI_j} & \qw
\end{quantikz}
\end{equation}
then $\iI_j: K \to A(K)^{*+}$ is a map that maps state from $K$ to elements of $A(K)^{*+}$. The maps $\iI_j$ are exactly the maps that assign the post-measurement state to an input state $x$ and the normalization $\< \iI_j(x), 1_{K} \> = \lambda_i$ is exactly the probability of measuring the outcome $j$. Note that
\begin{equation}
\begin{quantikz}[row sep={\the\originsep,between origins}, align equals at=2]
&\gate[3,nwires={1,3}]{\iI} &\ground{}{S_n} \\
& & \\
& & \qw{K}
\end{quantikz}
=
\begin{quantikz}[align equals at=1]
&\gate{\Phi} & \qw
\end{quantikz}
\end{equation}
where $\Phi \in \chan(K)$ is a channel and we clearly have $\Phi = \sum_{j=1}^n \iI_j$. Moreover
\begin{equation}
\begin{quantikz}[row sep={\the\originsep,between origins}, align equals at=2]
&\gate[3,nwires={1,3}]{\iI} &\qw{S_n} \\
& & \\
& & \ground{}{K}
\end{quantikz}
=
\begin{quantikz}[align equals at=1]
&\gate{m} & \qw
\end{quantikz}
\end{equation}
where $m \in \chan(K, S_n)$ is a measurement and for $x \in K$ we have $m(x) = \sum_{j=1}^n \< \iI_j(x), 1_{K} \> s_j$. So it follows that we can express $\iI$ as
\begin{equation} \label{eq:channels-instruments-instrumentForm}
\iI = \sum_{j=1}^n s_j \otimes \iI_j
\end{equation}
and for $x \in K$ we have $\iI(x) = \sum_{j=1}^n s_j \otimes \iI_j(x)$. Thus we have proved the following:
\begin{proposition} \label{prop:channels-instruments-instrumentForm}
Every instrument $\iI: K \to S_n \otimes K$ is of the form given by \eqref{eq:channels-instruments-instrumentForm}, i.e., there are affine maps $\iI_j: K \to \cone(K)$ for $j \in \{1, \ldots, n \}$ such that $\iI = \sum_{j=1}^n s_j \otimes \iI_j$.
\end{proposition}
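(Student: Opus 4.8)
The plan is to unpack the definition of an instrument as a channel $\iI \in \chan(K, S_n \treal K)$ and to exploit that the second factor is a simplex. First I would recall that $\{ s_1, \ldots, s_n \}$ is a basis of $A(S_n)^*$ and that the effects $\{ b_1, \ldots, b_n \} \subset E(S_n)$, characterised by $\< s_k, b_j \> = \delta_{jk}$, form the dual basis of $A(S_n)$. For every $x \in K$ we have $\iI(x) \in S_n \treal K \subset A(S_n)^* \otimes A(K)^*$, so by the basis decomposition of elements of $A(S_n)^* \otimes A(K)^*$ (Lemma \ref{lemma:bilinear-productSum}) there are unique vectors, which I call $\iI_j(x) \in A(K)^*$, such that $\iI(x) = \sum_{j=1}^n s_j \otimes \iI_j(x)$. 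Concretely $\iI_j(x)$ is obtained by pairing the $S_n$-leg of $\iI(x)$ with $b_j$; in the diagrammatic notation used above this is $\iI_j = (b_j \otimes \id_K) \circ \iI$.

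Next I would verify the two properties claimed for the maps $\iI_j$. For affinity, let $x, y \in K$ and $\lambda \in [0,1]$; since $\iI$ is a channel it is affine, hence $\iI(\lambda x + (1-\lambda) y) = \lambda \iI(x) + (1-\lambda) \iI(y)$, and since pairing the first leg with $b_j$ is linear, the same identity passes to $\iI_j$. For positivity, note that for every $g \in E(K)$ we have $\< \iI_j(x), g \> = \< \iI(x), b_j \otimes g \> \geq 0$, because $b_j \otimes g \in E(S_n) \treal E(K)$ is a legitimate effect on $S_n \treal K$ and $\iI(x)$ is a state; this is precisely the reasoning behind \eqref{eq:tensor-partial-inPositiveCone}, so $\iI_j(x) \in A(K)^{*+} = \cone(K)$ by Lemma \ref{lemma:basic-results-base}. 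Thus each $\iI_j$ is an affine map $K \to \cone(K)$.

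Finally, to conclude $\iI = \sum_{j=1}^n s_j \otimes \iI_j$ as maps, I would observe that both sides are fixed by their action on product effects of the form $b_j \otimes g$ with $g \in E(K)$: on those they agree by construction, hence they agree on all of $E(S_n) \tmin E(K) \subset E(S_n) \treal E(K)$ by bilinearity, and tomographic locality of $\treal$ — equivalently, the fact that local effects separate the states of $S_n \treal K$ — forces the two maps to coincide. I expect the only point requiring care to be this last reconstruction step, i.e.\ checking that local effects suffice; everything else is a direct consequence of Proposition \ref{prop:tensor-partial-multipleOfState} (or \eqref{eq:tensor-partial-inPositiveCone}) together with the simplex structure already exploited in Proposition \ref{prop:tensor-existence-notWithSn}.
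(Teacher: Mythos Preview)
Your proposal is correct and follows essentially the same route as the paper: define $\iI_j$ by applying the dual-basis effect $b_j$ to the $S_n$-leg of $\iI$, then verify affinity and positivity via \eqref{eq:tensor-partial-inPositiveCone}. One remark: your final paragraph invoking tomographic locality is superfluous --- once you have established $\iI(x) = \sum_{j=1}^n s_j \otimes \iI_j(x)$ pointwise for every $x \in K$ (which you already did in the first paragraph via the basis decomposition), the equality $\iI = \sum_{j=1}^n s_j \otimes \iI_j$ as maps is immediate and there is nothing left to reconstruct.
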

Another way to prove Proposition \ref{prop:channels-instruments-instrumentForm} would be to use Propositions \ref{prop:channels-channels-tensorSubset} and \ref{prop:tensor-existence-notWithSn}. Note that some authors call the maps $\iI_j$ instruments and instead of working with $\iI$ they define a collection of instruments $\iI_1, \ldots, \iI_n$, such that $\sum_{j=1}^n \iI_j$ is a channel. These two approaches are equivalent.

\subsection{Preparations, measure-and-prepare channels}
So far, we have identified states and preparation procedures, but one can actually describe preparation procedures as channels $\Pe \in \chan(S_1, K)$. Note that $S_1 = \{ s \}$ and so the channel acts as $\Pe(s) = x$ for some $x \in K$. It immediately follows that that the set of all preparations $\chan(S_1, K)$ is isomorphic to $K$.

Extending the idea of preparations, one can define a conditional preparation as channel $\Pe \in \chan(S_n, K)$. A conditional preparation $\Pe \in \chan(S_n, K)$ is essentially a device that can prepare $n$ different states and the classical input determines, which of the $n$ states will be prepared. Analogically to the result of Proposition \ref{prop:channels-measurements-effects}, one can easily prove that for every conditional preparation $\Pe \in \chan(S_n, K)$, there are states $x_1, \ldots, x_n \in K$ such that for $s \in S_n$ we have
\begin{equation} \label{eq:channels-MnP-prepForm}
\Pe(s) = \sum_{i=1}^n \< s, b_i \> x_i.
\end{equation}
Preparations and conditional preparations are thus dual to effects and measurements, but they are often not discussed because measurements are more often used in practical applications.

Let $\Pe \in \chan(S_n, K_B)$ be a conditional preparation and let $m \in \chan(K_A, S_n)$ be a measurement, then we can construct
\begin{equation} \label{eq:channels-MnP-MnPchannel}
\begin{quantikz}[align equals at=1]
& \gate{m} \qw{K_A} & \gate{\Pe} \qw{S_n} & \qw{K_B}
\end{quantikz}
=
\begin{quantikz}[align equals at=1]
& \gate{\Phi_{\MP}} \qw{K_A} & \qw{K_B}
\end{quantikz}
\end{equation}
where $\Phi_{\MP} \in \chan(K_A, K_B)$ is a channel.
\begin{definition} \label{def:channels-MnP}
Let $\Phi_{\MP} \in \chan(K_A, K_B)$ be channel of the form given by \eqref{eq:channels-MnP-MnPchannel}, i.e., such that there are $m \in \chan(K_A, S_n)$ and $\Pe \in \chan(S_n, K_B)$ such that $\Phi_{\MP} = \Pe \circ m$, then we call $\Phi_{\MP}$ \emph{measure-and-prepare channel}.
\end{definition}
We will later see that not all channels are measure-and-prepare. One can easily prove the following structural result for measure-and-prepare channels:
\begin{proposition}
Let $\Phi_{\MP}$ be measure-and-prepare channel, then there are effects $f_1, \ldots, f_n \in E(K)$ and states $x_1, \ldots, x_n \in K$ such that for any $y \in K$ we have $\Phi_{\MP}(y) = \sum_{i=1}^n \< y, f_i \> x_i$.
\end{proposition}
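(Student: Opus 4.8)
The plan is to simply unfold the definition of a measure-and-prepare channel and substitute the structural characterizations already established for its two factors. By Definition \ref{def:channels-MnP}, $\Phi_{\MP} = \Pe \circ m$ for some measurement $m \in \chan(K, S_n)$ and some conditional preparation $\Pe \in \chan(S_n, K)$ (I write $K$ for both the input and output space here, the general case $K_A \to K_B$ being identical). First I would invoke Proposition \ref{prop:channels-measurements-effects}: there are effects $f_1, \ldots, f_n \in E(K)$ with $\sum_{i=1}^n f_i = 1_K$ such that $m(y) = \sum_{i=1}^n \<y, f_i\> s_i$ for every $y \in K$, where $s_1, \ldots, s_n$ are the pure states of $S_n$. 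Next I would invoke the structure of conditional preparations from \eqref{eq:channels-MnP-prepForm}: there are states $x_1, \ldots, x_n \in K$ such that $\Pe(s) = \sum_{j=1}^n \<s, b_j\> x_j$ for every $s \in S_n$, where $b_1, \ldots, b_n \in E(S_n)$ satisfy $\<s_k, b_j\> = \delta_{jk}$.

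The second step is to compose. For any $y \in K$,
\begin{equation}
\Phi_{\MP}(y) = \Pe(m(y)) = \Pe\Bigl( \sum_{i=1}^n \<y, f_i\> s_i \Bigr) = \sum_{i=1}^n \<y, f_i\> \Pe(s_i),
\end{equation}
where the last equality uses that $\Pe$ extends to a linear map on $A(S_n)^*$ (or, equivalently, that $\Pe$ is affine and $\sum_i \<y,f_i\> = 1$ with all coefficients nonnegative). Plugging in the form of $\Pe$ and using $\<s_i, b_j\> = \delta_{ij}$ collapses the double sum:
\begin{equation}
\Phi_{\MP}(y) = \sum_{i=1}^n \<y, f_i\> \sum_{j=1}^n \<s_i, b_j\> x_j = \sum_{i=1}^n \<y, f_i\> \sum_{j=1}^n \delta_{ij} x_j = \sum_{i=1}^n \<y, f_i\> x_i,
\end{equation}
which is exactly the claimed form.

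There is essentially no hard step here; the result is a direct corollary of the two structure theorems for measurements and for conditional preparations together with the associativity of the pairing. The only point worth a word of care is the interchange of $\Pe$ with the finite sum in the first displayed equation: since some of the coefficients $\<y, f_i\>$ may vanish, this is not literally a convex combination with strictly positive weights, so one should either pass to the canonical linear extension $\Pe \colon A(S_n)^* \to A(K)^*$ guaranteed after Definition of a channel, or note that affine maps are in particular compatible with convex combinations allowing zero weights. Either way the computation goes through verbatim, and the same argument applied to $\Phi_{\MP} \in \chan(K_A, K_B)$ yields $f_i \in E(K_A)$ and $x_i \in K_B$.
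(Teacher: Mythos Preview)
Your proof is correct and follows essentially the same approach as the paper: unfold $\Phi_{\MP} = \Pe \circ m$, apply Proposition \ref{prop:channels-measurements-effects} to $m$ and \eqref{eq:channels-MnP-prepForm} to $\Pe$, then compose. Your version is actually slightly more explicit, spelling out the $\delta_{ij}$ collapse and flagging the justification for passing $\Pe$ through the sum, both of which the paper leaves implicit.
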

\begin{proof}
Let $\Phi_{\MP}$ be given as in \eqref{eq:channels-MnP-MnPchannel}. According to Proposition \ref{prop:channels-measurements-effects} we have  $m(y) = \sum_{i=1}^n \< y, f_i \> s_i$ and according to \eqref{eq:channels-MnP-prepForm} we have $\Pe(s) = \sum_{i=1}^n \< s, b_i \> x_i$. Then for any $y \in K$ we have
\begin{equation}
\Phi_{\MP}(y) = \Pe \left( \sum_{i=1}^n \< y, f_i \> s_i \right) = \sum_{i=1}^n \< y, f_i \> x_i.
\end{equation}
\end{proof}

\begin{corollary}
Any measurement $m \in \chan(K, S_n)$ is a measure-and-prepare channel.
\end{corollary}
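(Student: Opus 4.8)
The plan is to exhibit the given measurement $m$ as a measure-and-prepare channel in the most economical way possible, namely as the composition of $m$ itself with a trivial conditional preparation. Recall from Definition~\ref{def:channels-MnP} that a channel is measure-and-prepare if it can be written as $\Pe \circ m'$ for some measurement $m'$ and some conditional preparation $\Pe$, where a conditional preparation is by definition simply any channel whose domain is a classical state space, i.e.\ an element of $\chan(S_n, K_B)$. For the statement at hand I would take $K_A = K$, $K_B = S_n$, choose the measurement part to be $m'= m$, and choose $\Pe = \id_{S_n} \in \chan(S_n, S_n)$.

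First I would confirm that $\id_{S_n}$ really is a legitimate conditional preparation. This is immediate since $\chan(S_n, S_n)$ is exactly a set of conditional preparations (the definition of conditional preparation imposes no nontriviality condition on $\Pe$). If one prefers to see it concretely in the form \eqref{eq:channels-MnP-prepForm}, one uses that every $s \in S_n$ satisfies $s = \sum_{i=1}^n \< s, b_i \> s_i$, where $b_1,\dots,b_n$ are the functionals dual to the affinely independent pure states $s_1,\dots,s_n$; hence $\id_{S_n}(s) = \sum_{i=1}^n \< s, b_i \> s_i$, which is \eqref{eq:channels-MnP-prepForm} with $x_i = s_i$.

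Finally I would simply observe that $m = \id_{S_n} \circ m$, since $\id_{S_n}$ acts as the identity on $S_n$, so that $m = \Pe \circ m'$ with $m'$ a measurement and $\Pe$ a conditional preparation. By Definition~\ref{def:channels-MnP} this means $m$ is a measure-and-prepare channel. There is essentially no obstacle here; the only point worth spelling out is that the identity channel on $S_n$ counts as an admissible conditional preparation. As an alternative, one could argue structurally: by Proposition~\ref{prop:channels-measurements-effects} the measurement $m$ is determined by effects $f_1,\dots,f_n \in E(K)$ with $\sum_{i=1}^n f_i = 1_K$, and taking $x_i = s_i$ in the structural form of a measure-and-prepare channel reproduces $m(y) = \sum_{i=1}^n \< y, f_i \> s_i$; but the one-line composition argument is cleaner and I would present that one.
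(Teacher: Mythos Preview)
Your proof is correct and follows essentially the same approach as the paper: take $\Pe = \id_{S_n}$ as the conditional preparation and observe that $m = \id_{S_n} \circ m$. The paper's own proof is a single sentence making exactly this point.
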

\begin{proof}
The only trick needed to prove the result is to take $\Pe \in \chan(S_n, S_n)$ to be the conditional preparation given as $\Pe(s_i) = s_i$ for all $i \in \{1, \ldots, n\}$, in other words $\Pe = \id_{S_n}$.
\end{proof}

\begin{corollary}
Let $\tau_x \in \chan(K_A, K_B)$ be a constant channel as in Example \ref{exm:channels-channels-constant}, i.e., for all $y_A \in K_A$ we have $\tau_x(y_A) = x_B$. Then $\tau_B$ is measure-and-prepare.
\end{corollary}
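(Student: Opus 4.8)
The plan is to realize $\tau_x$ as a composition $\Pe \circ m$ in which the measurement part factors through the trivial one-outcome classical state space $S_1$. Recall that $S_1 = \{ s \}$ is a single point, so there is exactly one map $m \colon K_A \to S_1$, namely $m(y_A) = s$ for all $y_A \in K_A$. This map is trivially affine, hence it is a (one-outcome) measurement $m \in \chan(K_A, S_1)$.

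Next I would take the preparation $\Pe \in \chan(S_1, K_B)$ defined by $\Pe(s) = x_B$. Since the domain $S_1$ is a singleton, any map out of it is automatically affine, so $\Pe$ is a valid channel; equivalently it is a conditional preparation with $n = 1$ in the sense of \eqref{eq:channels-MnP-prepForm}. Then I would simply check that this composition reproduces $\tau_x$: for every $y_A \in K_A$ we have
\begin{equation}
(\Pe \circ m)(y_A) = \Pe(m(y_A)) = \Pe(s) = x_B = \tau_x(y_A),
\end{equation}
so $\tau_x = \Pe \circ m$ with $m \in \chan(K_A, S_1)$ and $\Pe \in \chan(S_1, K_B)$. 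By Definition \ref{def:channels-MnP} this exhibits $\tau_x$ as a measure-and-prepare channel.

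There is no real obstacle here: the only points to verify are that the unique map $K_A \to S_1$ and the chosen map $S_1 \to K_B$ are affine, and both are immediate since $S_1$ consists of a single point. One could alternatively invoke the structural description of constant channels from Example \ref{exm:channels-channels-constant}, writing $\tau_x(y_A) = \< y_A, 1_{K_A} \> x_B$ and recognizing this as the measure-and-prepare form with the single effect $f_1 = 1_{K_A}$ and single output state $x_1 = x_B$, but the factorization through $S_1$ is the cleanest route.
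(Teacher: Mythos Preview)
Your proof is correct and takes essentially the same approach as the paper: factor $\tau_x$ through the trivial one-outcome measurement $m \in \chan(K_A, S_1)$ and the preparation $\Pe \in \chan(S_1, K_B)$ with $\Pe(s) = x_B$. The paper writes the measurement as $m_1(y_A) = \< y_A, 1_{K_A} \> s$, which on $K_A$ is exactly your constant map $y_A \mapsto s$, and also alludes to the $\tau_x(y_A) = \< y_A, 1_{K_A} \> x_B$ description you mention as an alternative.
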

\begin{proof}
We already know that we have $\tau_x(y_A) = \< y_A, 1_{K_A} \> x_B$ which just implies that $\tau_x = \Pe_1 \circ m_1$, where $m_1 \in \chan(K_A, S_1)$ is the trivial, single-outcome measurement given as $m_1(y_A) = \< y_A, 1_{K_A} \> s$, where $S_1 = \{s\}$, and $\Pe_1 \in \chan(S_1, K_B)$ is the preparation $\Pe_1(s) = x_B$.
\end{proof}

\subsection{Completely-positive channels}
So far we have only considered channels acting on a single system, but in principle we can also apply channels to parts of larger systems. For example let $x_{AB} \in K_A \treal K_B$ and let $\Phi \in \chan(K_B, K_C)$, then we should be allowed to construct
\begin{equation} \label{eq:channels-CP-oneLeg}
\begin{quantikz}[row sep=\the\rowsep, align equals at=1.5]
&\multiprepareC[2]{x_{AB}} & \qw & \qw{K_A} \\
& & \gate{\Phi} \qw{K_B} & \qw{K_C}
\end{quantikz}
=
(\id_{K_A} \otimes \Phi)(x_{AB}).
\end{equation}
It is rather simple to define the object in \eqref{eq:channels-CP-oneLeg} mathematically: we know that there are $y_{i,A} \in A(K_A)^{*+}$, $y_{i,B} \in A(K_B)^{*+}$ and $\alpha_i \in \RR$, $i \in \{1, \ldots, n\}$ such that $x_{AB} = \sum_{i=1}^n \alpha_i y_{i,A} \otimes y_{i,B}$. Then, by linearity, we get
\begin{equation}
(\id_{K_A} \otimes \Phi)(x_{AB}) = \sum_{i=1}^n \alpha_i y_{i,A} \otimes \Phi(y_{i,B}).
\end{equation}
We should require that $(\id_{K_A} \otimes \Phi)(x_{AB}) \in K_A \treal K_C$, which is a new requirement that we have not taken into account so far.
\begin{definition}
Let $\Phi \in \chan(K_B, K_C)$ be a channel, then we say that $\Phi$ is \emph{completely positive} (or CP for short) with respect to $K_A \treal K_B \to K_A \treal K_C$, if for all $x_{AB} \in K_A \treal K_B$ we have $(\id_{K_A} \otimes \Phi)(x_{AB}) \in K_A \treal K_C$, i.e., if $\id_{K_A} \otimes \Phi \in \chan(K_A \treal K_B, K_A \treal K_C)$.
\end{definition}
It is known that not all channels are completely positive, a well-known example is the partial transposition map in quantum theory. Also note that a channel $\Phi: K_B \to K_C$ is sometimes called positive map, because it preserves the positivity of elements of $A(K_B)^{*+}$. One has to be careful to not mistake positivity and complete positivity as these are two different notions.

One has to be careful with respect to what choice of tensor products $K_A \treal K_B$ and $K_A \treal K_C$ is the complete positivity defined. In quantum theory and in general in theories where the tensor product $\treal$ is fixed, we usually implicitly assume that complete positivity is defined with respect to the chosen tensor product $\treal$. But since in general there is no unique choice of $\treal$, we have to always specify with respect to what choice of tensor product we are defining complete positivity. We will now prove several results about positivity and complete positivity of channels.

\begin{proposition}
The identity channel $\id_{K_B} \in \chan(K_B)$ is completely positive with respect to any choice of tensor product $\treal$, i.e., with respect to any $K_A \treal K_B \to K_A \treal K_B$.
\end{proposition}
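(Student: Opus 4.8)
The plan is to observe that $\id_{K_A} \otimes \id_{K_B}$ is simply the identity map on $K_A \treal K_B$, which makes complete positivity immediate. First I would recall that the channel $\id_{K_B}$ extends to the identity linear map $\id_{K_B}: A(K_B)^* \to A(K_B)^*$, and similarly the auxiliary identity extends to $\id_{K_A}: A(K_A)^* \to A(K_A)^*$; their tensor product $\id_{K_A} \otimes \id_{K_B}$ is then a well-defined linear map on $A(K_A)^* \otimes A(K_B)^*$.

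The only thing that requires a one-line argument is that this linear map is the identity on $A(K_A)^* \otimes A(K_B)^*$. It suffices to check this on a spanning set, for instance on the product vectors $v_A \otimes v_B$ with $v_A \in A(K_A)^*$ and $v_B \in A(K_B)^*$ (equivalently, on the $x_A \otimes x_B$ with $x_A \in K_A$, $x_B \in K_B$, which span $A(K_A)^* \otimes A(K_B)^*$). On such a vector one has $(\id_{K_A} \otimes \id_{K_B})(v_A \otimes v_B) = \id_{K_A}(v_A) \otimes \id_{K_B}(v_B) = v_A \otimes v_B$, so the map fixes a spanning set and is therefore the identity map, which we may write $\id_{K_A \treal K_B}$.

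Consequently, for every $x_{AB} \in K_A \treal K_B$ we have $(\id_{K_A} \otimes \id_{K_B})(x_{AB}) = x_{AB} \in K_A \treal K_B$, hence $\id_{K_A} \otimes \id_{K_B} \in \chan(K_A \treal K_B, K_A \treal K_B)$. By the definition of complete positivity this is exactly the statement that $\id_{K_B}$ is completely positive with respect to $K_A \treal K_B \to K_A \treal K_B$, and since $K_A$ and the tensor product $\treal$ were arbitrary, the claim follows. There is essentially no obstacle here: the whole content is the bookkeeping identity that a tensor product of identity maps is again an identity map, carried out in the second paragraph.
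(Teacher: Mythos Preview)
Your proof is correct and follows essentially the same approach as the paper: both arguments observe that $(\id_{K_A} \otimes \id_{K_B})(x_{AB}) = x_{AB}$ for every $x_{AB} \in K_A \treal K_B$, so $\id_{K_A} \otimes \id_{K_B} \in \chan(K_A \treal K_B)$. You simply spell out in more detail why $\id_{K_A} \otimes \id_{K_B}$ is the identity on $A(K_A)^* \otimes A(K_B)^*$ by checking it on product vectors, whereas the paper takes this as evident.
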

\begin{proof}
Let $x_{AB} \in K_A \treal K_B$, then we have $(\id_{K_A} \otimes \id_{K_B}) \in \chan(K_A \treal K_B)$ and $(\id_{K_A} \otimes \id_{K_B})(x_{AB}) = x_{AB}$, from which the result immediately follows.
\end{proof}

\begin{proposition} \label{prop:channels-CP-measurements}
A measurement $m \in \chan(K_B, S_n)$ is completely positive with respect to any $K_A \treal K_B \to K_A \treal K_C$.
\end{proposition}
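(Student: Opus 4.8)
The plan is to use that the tensor product with a simplex is unique, so that on the output side there is nothing to choose, and then to observe that the image of any state under $\id_{K_A}\otimes m$ is manifestly separable. First I would note that by Proposition~\ref{prop:tensor-existence-notWithSn} we have $K_A \tmin S_n = K_A \tmax S_n$, so by Proposition~\ref{prop:tensor-bipartite-inclusions} every admissible tensor product satisfies $K_A \treal S_n = K_A \tmin S_n = K_A \tmax S_n$. Moreover, for any choice of $\treal$ on the input side we have $K_A \treal K_B \subset K_A \tmax K_B$. Hence it is enough to prove that $(\id_{K_A}\otimes m)(x_{AB}) \in K_A \tmin S_n$ for every $x_{AB} \in K_A \tmax K_B$; this single statement implies complete positivity with respect to all choices of tensor products.

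Next I would invoke Proposition~\ref{prop:channels-measurements-effects} to write $m = \sum_{i=1}^n f_i \otimes s_i$ with $f_i \in E(K_B)$, $\sum_{i=1}^n f_i = 1_{K_B}$, where $s_1,\dots,s_n$ are the pure states of $S_n$. By linearity,
\begin{equation}
(\id_{K_A}\otimes m)(x_{AB}) = \sum_{i=1}^n \bigl((\id_{K_A}\otimes f_i)(x_{AB})\bigr) \otimes s_i .
\end{equation}
Each term $(\id_{K_A}\otimes f_i)(x_{AB})$ is a partial application of an effect to the $K_B$ leg; the argument giving \eqref{eq:tensor-partial-inPositiveCone} uses only positivity of $x_{AB}$ on separable effects $g_A \otimes f_i$, which holds for $x_{AB} \in K_A \tmax K_B$, so Proposition~\ref{prop:tensor-partial-multipleOfState} applies and yields $y_{i,A}\in K_A$ with $(\id_{K_A}\otimes f_i)(x_{AB}) = \lambda_i\, y_{i,A}$, where $\lambda_i = \<x_{AB}, 1_{K_A}\otimes f_i\> \geq 0$.

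Substituting back, $(\id_{K_A}\otimes m)(x_{AB}) = \sum_{i=1}^n \lambda_i\, y_{i,A}\otimes s_i$, and since $\sum_{i=1}^n \lambda_i = \<x_{AB}, 1_{K_A}\otimes 1_{K_B}\> = 1$ this is a convex combination of product states $y_{i,A}\otimes s_i$, hence lies in $K_A \tmin S_n$. As $\id_{K_A}\otimes m$ is affine, being the restriction of the linear map it induces, we conclude $\id_{K_A}\otimes m \in \chan(K_A \treal K_B, K_A \treal S_n)$, that is, $m$ is completely positive. There is no genuine obstacle here: the only subtlety is to recognise that the conclusion is forced to be independent of the chosen tensor products --- uniqueness pins down the output side, and the reduction to $\tmax$ handles the input side --- so that the classical nature of the measurement outcome makes complete positivity automatic.
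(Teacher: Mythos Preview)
Your proof is correct and follows essentially the same approach as the paper: write $m=\sum_i f_i\otimes s_i$, apply Proposition~\ref{prop:tensor-partial-multipleOfState} to each $(\id_{K_A}\otimes f_i)(x_{AB})$, and observe that the result lies in $K_A\tmin S_n$. You add a bit more scaffolding than the paper does---the explicit reduction to $K_A\tmax K_B$ on the input side, the invocation of Proposition~\ref{prop:tensor-existence-notWithSn} to pin down the output side, and the verification that $\sum_i\lambda_i=1$---but the core argument is identical.
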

\begin{proof}
Let $m$ be given as $m = \sum_{i=1}^n f_i \otimes s_i$. Let  $x_{AB} \in K_A \treal K_B$, then
\begin{equation}
(\id_{K_A} \otimes m)(x_{AB}) = \sum_{i=1}^n (\id_{K_A} \otimes f_i)(x_{AB}) \otimes s_i
\end{equation}
and according to Proposition \ref{prop:tensor-partial-multipleOfState} we get
\begin{equation}
(\id_{K_A} \otimes m)(x_{AB}) = \sum_{i=1}^n \< x_{AB}, 1_{K_A} \otimes f_i \> y_i \otimes s_i \in K_A \tmin S_n
\end{equation}
where $y_i \in K_A$, $i \in \{1, \ldots, n\}$.
\end{proof}

\begin{corollary}
All measure-and-prepare channels $\Phi_{\MP} \in \chan(K_B, K_C)$ are completely positive with respect to any $K_A \treal K_B \to K_A \treal K_C$.
\end{corollary}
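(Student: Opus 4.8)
The plan is to imitate the proof of Proposition \ref{prop:channels-CP-measurements}, exploiting the explicit structural form of a measure-and-prepare channel. Recall that $\Phi_{\MP} = \Pe \circ m$ for some measurement $m \in \chan(K_B, S_n)$ and some conditional preparation $\Pe \in \chan(S_n, K_C)$, so combining Proposition \ref{prop:channels-measurements-effects} with \eqref{eq:channels-MnP-prepForm} there are effects $f_1, \dots, f_n \in E(K_B)$ with $\sum_{i=1}^n f_i = 1_{K_B}$ and states $x_1, \dots, x_n \in K_C$ such that $\Phi_{\MP}(y) = \sum_{i=1}^n \langle y, f_i \rangle x_i$ for every $y \in K_B$.

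First I would extend this identity to the bipartite setting: for $x_{AB} \in K_A \treal K_B$, linearity of $\id_{K_A} \otimes \Phi_{\MP}$ gives
\begin{equation}
(\id_{K_A} \otimes \Phi_{\MP})(x_{AB}) = \sum_{i=1}^n (\id_{K_A} \otimes f_i)(x_{AB}) \otimes x_i .
\end{equation}
Then I would invoke Proposition \ref{prop:tensor-partial-multipleOfState} to write each partial evaluation as $(\id_{K_A} \otimes f_i)(x_{AB}) = \langle x_{AB}, 1_{K_A} \otimes f_i \rangle\, y_i$ for suitable $y_i \in K_A$, so that
\begin{equation}
(\id_{K_A} \otimes \Phi_{\MP})(x_{AB}) = \sum_{i=1}^n \langle x_{AB}, 1_{K_A} \otimes f_i \rangle\, y_i \otimes x_i .
\end{equation}

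The final step is to observe that the coefficients $\langle x_{AB}, 1_{K_A} \otimes f_i \rangle$ are non-negative and, using $\sum_{i=1}^n f_i = 1_{K_B}$ together with $\langle x_{AB}, 1_{K_A} \otimes 1_{K_B} \rangle = 1$, sum to one; hence the right-hand side is a convex combination of product states $y_i \otimes x_i$ and therefore lies in $K_A \tmin K_C \subset K_A \treal K_C$. This shows $\id_{K_A} \otimes \Phi_{\MP} \in \chan(K_A \treal K_B, K_A \treal K_C)$, i.e.\ $\Phi_{\MP}$ is completely positive with respect to any choice of tensor products $K_A \treal K_B \to K_A \treal K_C$. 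I do not expect any genuine obstacle here, since the statement is a direct corollary of the preceding results; the only point requiring a moment's care is checking that the weights form a bona fide probability distribution, so that the output in fact sits in the minimal tensor product $K_A \tmin K_C$ rather than merely in the chosen $K_A \treal K_C$, but this is immediate from unitality of the $f_i$.
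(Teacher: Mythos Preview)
Your proof is correct and follows essentially the same approach as the paper. The paper's proof is a one-liner that simply invokes the factorization $\Phi_{\MP} = \Pe \circ m$ and cites Proposition~\ref{prop:channels-CP-measurements}; you unpack that citation by redoing the computation from the proof of Proposition~\ref{prop:channels-CP-measurements} with the states $x_i \in K_C$ in place of the simplex vertices $s_i$, landing the output directly in $K_A \tmin K_C$.
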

\begin{proof}
Since by definition $\Phi_{\MP} = \Pe \circ m$, where $m \in \chan(K_B, S_n)$ is a measurement and $\Pe \in \chan(S_n, K_C)$ is a conditional preparation, the result follows from Proposition \ref{prop:channels-CP-measurements}.
\end{proof}
Complete positivity of measurements and measure-and-prepare channels is not surprising, since measurements and measure-and-prepare channels are entanglement-breaking channels.
\begin{definition} \label{def:channels-CP-ENTBreaking}
Channel $\Phi \in \chan(K_B, K_C)$ is \emph{entanglement-breaking} with respect to $K_A \treal K_B \to K_A \treal K_C$ if for any $K_A$ and $x_{AB} \in K_A \treal K_B$ we have
\begin{equation}
\begin{quantikz}[row sep=\the\rowsep, align equals at=1.5]
&\multiprepareC[2]{x_{AB}} & \qw & \qw \\
& & \gate{\Phi} & \qw
\end{quantikz}
= (\id_{K_A} \otimes \Phi)(x_{AB}) \in K_A \tmin K_C,
\end{equation}
i.e., $(\id_{K_A} \otimes \Phi)(x_{AB})$ is always a separable state.
\end{definition}

\begin{lemma} \label{lemma:channels-CP-ENTBreakingCP}
Let $\Phi \in \chan(K_B, K_C)$ be entanglement-breaking channel with respect to $K_A \treal K_B \to K_A \treal K_C$. Then $\Phi$ is completely positive with respect to $K_A \treal K_B \to K_A \treal K_C$.
\end{lemma}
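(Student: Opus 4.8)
The plan is to observe that the entanglement-breaking property and complete positivity differ only in \emph{which} bipartite state space the output of $\id_{K_A} \otimes \Phi$ is required to lie in, and that these two state spaces are nested in the way established earlier. Concretely, by Definition \ref{def:channels-CP-ENTBreaking}, the hypothesis that $\Phi$ is entanglement-breaking with respect to $K_A \treal K_B \to K_A \treal K_C$ means exactly that $(\id_{K_A} \otimes \Phi)(x_{AB}) \in K_A \tmin K_C$ for every state space $K_A$ and every $x_{AB} \in K_A \treal K_B$, whereas complete positivity only asks for the weaker conclusion $(\id_{K_A} \otimes \Phi)(x_{AB}) \in K_A \treal K_C$.

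So first I would fix an arbitrary $K_A$ and $x_{AB} \in K_A \treal K_B$, apply the entanglement-breaking hypothesis to get $(\id_{K_A} \otimes \Phi)(x_{AB}) \in K_A \tmin K_C$, and then invoke Proposition \ref{prop:tensor-bipartite-inclusions}, which gives the inclusion $K_A \tmin K_C \subset K_A \treal K_C$. Chaining these two facts yields $(\id_{K_A} \otimes \Phi)(x_{AB}) \in K_A \treal K_C$ for every $x_{AB} \in K_A \treal K_B$. Since $\id_{K_A} \otimes \Phi$ is affine on $K_A \treal K_B$ (it is the restriction of the linear map $\id_{K_A} \otimes \Phi : A(K_A)^* \otimes A(K_B)^* \to A(K_A)^* \otimes A(K_C)^*$), this is precisely the statement $\id_{K_A} \otimes \Phi \in \chan(K_A \treal K_B, K_A \treal K_C)$, i.e.\ $\Phi$ is completely positive with respect to $K_A \treal K_B \to K_A \treal K_C$.

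There is essentially no obstacle here: the result is an immediate corollary of the fact that separable states form a subset of every admissible bipartite state space. The only point deserving a word of care is that the argument must be carried out for an arbitrary auxiliary system $K_A$ and for whatever tensor products $\treal$ are attached to the relevant pairs of state spaces, but the inclusion in Proposition \ref{prop:tensor-bipartite-inclusions} holds for all such choices, so the conclusion follows uniformly.
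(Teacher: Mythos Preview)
Your proof is correct and follows exactly the paper's approach: the paper's proof consists of the single sentence ``The result follows from $K_A \tmin K_C \subset K_A \treal K_C$,'' which is precisely the inclusion from Proposition~\ref{prop:tensor-bipartite-inclusions} that you invoke. Your version simply spells out the surrounding details (unpacking Definition~\ref{def:channels-CP-ENTBreaking} and noting affinity) more explicitly.
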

\begin{proof}
The result follows from $K_A \tmin K_C \subset K_A \treal K_C$.
\end{proof}

So far we have shown that some channels are completely positive for any tensor product $K_A \treal K_B$, now we will investigate complete positivity with respect to the minimal and maximal tensor products. These results will showcase that the choice of the tensor product $K_A \treal K_B$ affects complete positivity of channels.
\begin{proposition} \label{prop:channels-CP-minTensor}
Let $\Phi \in \chan(K_B, K_C)$, then $\Phi$ is completely positive with respect to $K_A \tmin K_B \to K_A \tmin K_C$.
\end{proposition}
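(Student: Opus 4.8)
The plan is to combine the explicit description of the minimal tensor product as a convex hull of product states with the linearity of $\id_{K_A} \otimes \Phi$. First I would recall that $K_A \tmin K_B = \conv(\{ x_A \otimes x_B : x_A \in K_A,\, x_B \in K_B \})$, so that an arbitrary $x_{AB} \in K_A \tmin K_B$ admits a decomposition $x_{AB} = \sum_{i=1}^N \lambda_i\, y_{i,A} \otimes y_{i,B}$ with $y_{i,A} \in K_A$, $y_{i,B} \in K_B$, $\lambda_i \in \Rp$ and $\sum_{i=1}^N \lambda_i = 1$.

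Next I would use that $\Phi$ extends to a positive linear map $A(K_B)^* \to A(K_C)^*$, so that $\id_{K_A} \otimes \Phi$ is a well-defined linear map $A(K_A)^* \otimes A(K_B)^* \to A(K_A)^* \otimes A(K_C)^*$ (the tensor product of linear maps recalled in \ref{appendix:bilinear}). Applying it to the decomposition above and using linearity yields $(\id_{K_A} \otimes \Phi)(x_{AB}) = \sum_{i=1}^N \lambda_i\, y_{i,A} \otimes \Phi(y_{i,B})$. Since $\Phi$ is a channel, $\Phi(y_{i,B}) \in K_C$, so each term $y_{i,A} \otimes \Phi(y_{i,B})$ is a product state belonging to $K_A \tmin K_C$; by convexity of the minimal tensor product the whole convex combination lies in $K_A \tmin K_C$ as well.

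Finally, to match the definition of complete positivity I would observe that $\id_{K_A} \otimes \Phi$ restricted to $K_A \tmin K_B$ is affine (being the restriction of a linear map) and, by the previous step, takes values in $K_A \tmin K_C$; hence $\id_{K_A} \otimes \Phi \in \chan(K_A \tmin K_B, K_A \tmin K_C)$, which is exactly the claim. I do not expect a genuine obstacle here: the only points needing a little care are that $\id_{K_A} \otimes \Phi$ is genuinely well-defined as a linear map and that one invokes the convexity (and, if one insists that $\id_{K_A}\otimes\Phi$ land in a bona fide state space, the compactness) of $K_A \tmin K_C$, both established when the minimal tensor product was introduced.
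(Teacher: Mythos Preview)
Your proposal is correct and follows essentially the same argument as the paper: decompose an element of $K_A \tmin K_B$ as a convex combination of product states, apply $\id_{K_A}\otimes\Phi$ termwise using linearity, and observe that each $y_{i,A}\otimes\Phi(y_{i,B})$ lies in $K_A\tmin K_C$. The paper's proof is terser but identical in substance.
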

\begin{proof}
The proof si simple. Let $y_{AB} \in K_A \tmin K_B$, then there are $\lambda_i \in [0,1]$, $x_{i,A} \in K_A$, $x_{i,B} \in K_B$ for $i \in \{1, \ldots, n\}$ and $\sum_{i=1}^n \lambda_i = 1$ such that $y_{AB} = \sum_{i=1}^n \lambda_i x_{i, A} \otimes x_{i, B}$ and we have
\begin{equation}
(\id_{K_A} \otimes \Phi)(y_{AB}) = \sum_{i=1}^n \lambda_i x_{i, A} \otimes \Phi(x_{i, B}) \in K_A \tmin K_C.
\end{equation}
\end{proof}

\begin{proposition} \label{prop:channels-CP-maxTensor}
Let $\Phi \in \chan(K_B, K_C)$, then $\Phi$ is completely positive with respect to $K_A \tmax K_B \to K_A \tmax K_C$.
\end{proposition}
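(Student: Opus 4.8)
The plan is to verify directly that $(\id_{K_A} \otimes \Phi)(x_{AB})$ lies in $K_A \tmax K_C$ using the explicit characterization \eqref{eq:tensor-bipartite-maxProdDef} of the maximal tensor product: a functional belongs to $K_A \tmax K_C$ precisely when it is nonnegative on all separable effects $f_A \otimes f_C$ and normalized against $1_{K_A} \otimes 1_{K_C}$. The one ingredient that does all the work is the adjoint map $\Phi^* : E(K_C) \to E(K_B)$ of the channel $\Phi$, which maps effects to effects and, by Proposition \ref{prop:channels-channels-unital}, satisfies $\Phi^*(1_{K_C}) = 1_{K_B}$.

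First I would establish the identity
\begin{equation}
\langle (\id_{K_A} \otimes \Phi)(x_{AB}), f_A \otimes f_C \rangle = \langle x_{AB}, f_A \otimes \Phi^*(f_C) \rangle
\end{equation}
for every $x_{AB} \in A(K_A)^* \otimes A(K_B)^*$, $f_A \in E(K_A)$ and $f_C \in E(K_C)$. This follows by writing $x_{AB} = \sum_i \alpha_i\, y_{i,A} \otimes y_{i,B}$ as a linear combination of product vectors (possible since $A(K_A)^* \otimes A(K_B)^*$ is spanned by such products), applying $\id_{K_A} \otimes \Phi$ factorwise, and using the defining property $\langle \Phi(y_{i,B}), f_C \rangle = \langle y_{i,B}, \Phi^*(f_C) \rangle$ of the adjoint together with bilinearity of the pairing. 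In other words, on separable effects $(\id_{K_A} \otimes \Phi)^*$ acts as $\id_{K_A} \otimes \Phi^*$.

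Given this identity, the two required conditions are immediate. Let $x_{AB} \in K_A \tmax K_B$. For any $f_A \in E(K_A)$ and $f_C \in E(K_C)$ we have $\Phi^*(f_C) \in E(K_B)$, so
\begin{equation}
\langle (\id_{K_A} \otimes \Phi)(x_{AB}), f_A \otimes f_C \rangle = \langle x_{AB}, f_A \otimes \Phi^*(f_C) \rangle \geq 0
\end{equation}
by the definition of $K_A \tmax K_B$. Likewise,
\begin{equation}
\langle (\id_{K_A} \otimes \Phi)(x_{AB}), 1_{K_A} \otimes 1_{K_C} \rangle = \langle x_{AB}, 1_{K_A} \otimes \Phi^*(1_{K_C}) \rangle = \langle x_{AB}, 1_{K_A} \otimes 1_{K_B} \rangle = 1,
\end{equation}
using unitality of $\Phi^*$. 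Hence $(\id_{K_A} \otimes \Phi)(x_{AB}) \in K_A \tmax K_C$ for all $x_{AB} \in K_A \tmax K_B$, i.e. $\id_{K_A} \otimes \Phi \in \chan(K_A \tmax K_B, K_A \tmax K_C)$, which is exactly complete positivity of $\Phi$ with respect to $K_A \tmax K_B \to K_A \tmax K_C$.

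This argument has no real obstacle; the only point that deserves care is the bookkeeping in the factorwise computation of $(\id_{K_A} \otimes \Phi)^*$ on separable effects, that is, checking that the adjoint of the tensor-extended map is the tensor of the adjoints when tested against product effects. Everything else is a direct appeal to the definition \eqref{eq:tensor-bipartite-maxProdDef} and to Proposition \ref{prop:channels-channels-unital}.
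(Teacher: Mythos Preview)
Your proof is correct and follows essentially the same route as the paper: both reduce the question to the identity $\langle (\id_{K_A}\otimes\Phi)(x_{AB}), f_A\otimes f_C\rangle = \langle x_{AB}, f_A\otimes\Phi^*(f_C)\rangle$ and then use that $\Phi^*$ maps effects to effects. If anything, you are slightly more explicit in also verifying the normalization condition via $\Phi^*(1_{K_C}) = 1_{K_B}$, which the paper leaves implicit.
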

\begin{proof}
Let $x_{AB} \in K_A \tmax K_B$, let $f_A \in E(K_A)$ and $f_C \in E(K_C)$, then we have
\begin{equation}
\begin{quantikz}[row sep=\the\rowsep,align equals at=1.5]
\multiprepareC[2]{x_{AB}} & \qw & \meterD{f_A} \\
& \gate{\Phi} & \meterD{f_C}
\end{quantikz}
=
\begin{quantikz}[row sep=\the\rowsep,align equals at=1.5]
\multiprepareC[2]{x_{AB}} & \meterD{f_A} \\
& \meterD{\Phi^*(f_C)}
\end{quantikz}
\geq 0,
\end{equation}
because $\Phi^*: E(K_C) \to E(K_B)$. It then follows that $(\id_{K_A} \otimes \Phi)(x_{AB}) \in K_A \tmax K_C$.
\end{proof}
The last result may look strange at first, as one would expect that the more entangled states there are, the bigger the difference between positive and completely positive maps will be. But this is not the case and the underlying reason is that a channel $\Phi \in \chan(K_B, K_C)$ is completely positive with respect to $K_A \treal K_B \to K_A \treal K_C$ if and only if the adjoint map $\Phi^*: E(K_C) \to E(K_B)$ is completely positive with respect to $E(K_A \treal K_C) \to E(K_A \treal K_B)$. In the case of complete positivity of $\Phi \in \chan(K_B, K_C)$ with respect to $K_A \tmax K_B \to K_A \tmax K_C$, the adjoint map $\Phi^*$ is needed to be completely positive with respect to $E(K_A) \tmin E(K_C) \to E(K_A) \tmin E(K_B)$, which is easy to show analogically to the proof of Proposition \ref{prop:channels-CP-minTensor}.

\subsection{Post-processing preorder of channels}
In this section we are going to introduce the post-processing preorder of quantum channels. The main idea is simple: let $K_A, K_B, K_C$ be state spaces and let $\Phi \in \chan(K_A, K_B)$, $\Psi \in \chan(K_A, K_C)$, $\Lambda \in \chan(K_B, K_C)$ be channel. Let
\begin{equation} \label{eq:channels-postProc-preorder}
\begin{quantikz}[align equals at=1]
&\gate{\Psi} &\qw{}
\end{quantikz}
=
\begin{quantikz}[align equals at=1]
&\gate{\Phi} &\gate{\Lambda} &\qw{}
\end{quantikz}
\end{equation}
Then we can say that $\Phi$ is a better channel then $\Psi$, because we can always obtain $\Psi$ from $\Phi$. We will formalize this in the following definition.
\begin{definition} \label{def:channels-postProc}
Let $K_A, K_B, K_C$ be state spaces and let $\Phi \in \chan(K_A, K_B)$, $\Psi \in \chan(K_A, K_C)$ be channels. The we say that $\Psi$ is a \emph{post-processing} of $\Phi$ and we write
\begin{equation}
\Psi \prec \Phi
\end{equation}
if there is a channel $\Lambda \in \chan(K_B, K_C)$ such that \eqref{eq:channels-postProc-preorder} holds.
\end{definition}

If we restrict only to channels $\Phi \in \chan(K)$, then the post-processing relation gives rise to a preorder.
\begin{proposition} \label{prop:channels-postProc-preorder}
The post-processing relation $\prec$ is an preorder on the set of channel mapping $K$ to $K$, i.e., $\chan(K)$.
\end{proposition}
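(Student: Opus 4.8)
The plan is to verify the two defining properties of a preorder, reflexivity and transitivity, directly from Definition \ref{def:channels-postProc}, using only that $\id_K \in \chan(K)$ and that $\chan(K)$ is closed under composition.

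For reflexivity I would show $\Phi \prec \Phi$ for every $\Phi \in \chan(K)$ by exhibiting a witness channel $\Lambda$. The natural choice is $\Lambda = \id_K \in \chan(K)$: then $\id_K \circ \Phi = \Phi$, so \eqref{eq:channels-postProc-preorder} holds with the roles $\Psi = \Phi$ and $\Lambda = \id_K$, which is exactly the statement $\Phi \prec \Phi$.

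For transitivity, suppose $\Xi \prec \Psi$ and $\Psi \prec \Phi$ for channels $\Phi, \Psi, \Xi \in \chan(K)$. By Definition \ref{def:channels-postProc} there are $\Lambda_1 \in \chan(K)$ with $\Psi = \Lambda_1 \circ \Phi$ and $\Lambda_2 \in \chan(K)$ with $\Xi = \Lambda_2 \circ \Psi$. Composing these identities gives $\Xi = \Lambda_2 \circ \Lambda_1 \circ \Phi$. It remains only to observe that $\Lambda := \Lambda_2 \circ \Lambda_1$ is again an element of $\chan(K)$: a composition of affine maps is affine, and since $\Lambda_1$ maps $K$ into $K$ and $\Lambda_2$ maps $K$ into $K$, the composite maps $K$ into $K$. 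Hence $\Lambda$ witnesses $\Xi \prec \Phi$.

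The main obstacle here is essentially nonexistent — both steps are routine — but the one point that genuinely carries the argument is closure of $\chan(K)$ under composition (already used implicitly in the diagrammatic notation), which is what makes the witness channels compose. I would also remark that antisymmetry is deliberately not asserted, so $\prec$ is only a preorder and not a partial order on $\chan(K)$.
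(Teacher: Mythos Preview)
Your proof is correct and follows essentially the same approach as the paper: reflexivity via the identity channel and transitivity by composing the two witness channels. The paper's argument is the same, presented diagrammatically.
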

\begin{proof}
We need to prove that $\prec$ is reflexive, i.e., that $\Phi \prec \Phi$ and transitive, i.e. that $\Phi_1 \prec \Phi_2$ and $\Phi_2 \prec \Phi_3$ implies $\Phi_1 \prec \Phi_3$. Let $\Phi \in \chan(K)$, then clearly $\Phi \prec \Phi$ as we have $\Phi = \Phi \circ \id_K$ and so $\prec$ is reflexive. To show that $\prec$ is transitive, let $\Phi_1, \Phi_2, \Phi_3 \in \chan(K)$ be such that $\Phi_1 \prec \Phi_2$ and $\Phi_2 \prec \Phi_3$. Then there are $\Lambda_{1}, \Lambda_{2} \in \chan(K)$ such that
\begin{equation}
\begin{quantikz}[align equals at=1]
&\gate{\Phi_1} &\qw{}
\end{quantikz}
=
\begin{quantikz}[align equals at=1]
&\gate{\Phi_2} &\gate{\Lambda_{1}} &\qw{}
\end{quantikz}
\end{equation}
and
\begin{equation}
\begin{quantikz}[align equals at=1]
&\gate{\Phi_2} &\qw{}
\end{quantikz}
=
\begin{quantikz}[align equals at=1]
&\gate{\Phi_3} &\gate{\Lambda_{2}} &\qw{}
\end{quantikz}
\end{equation}
We get
\begin{equation}
\begin{quantikz}[align equals at=1]
&\gate{\Phi_1} &\qw{}
\end{quantikz}
=
\begin{quantikz}[align equals at=1]
&\gate{\Phi_2} &\gate{\Lambda_{1}} &\qw{}
\end{quantikz}
=
\begin{quantikz}[align equals at=1]
&\gate{\Phi_3} &\gate{\Lambda_{2}} &\gate{\Lambda_{1}} &\qw{}
\end{quantikz}
\end{equation}
and $\Phi_1 \prec \Phi_3$ follows.
\end{proof}

One of the important results of the post-processing preorder is that it showcases an important difference between channels and measurements: while a post-processing greatest channel exists for every state space $K$, post-processing greatest measurement exists only if $K$ is a simplex. A post-processing greatest channel $\Phi \in \chan(K)$ is a channel such that if $\Phi \prec \Psi$ for some $\Psi \in \chan(K)$, then also $\Psi \prec \Phi$. The identity channel $\id_K \in \chan(K)$ is post-processing greatest and the proof is immediate. We will postpone the proof that post-processing greatest measurement exist if and only if $K$ is a simplex to Section \ref{sec:compatibility}.
\begin{proposition} \label{prop:channels-postProc-id}
Let $\Phi \in \chan(K_A, K_B)$, then $\Phi \prec \id_{K_A}$, and so $\id_{K_A} \in \chan(K_A)$ is a post-processing greatest channel in $\chan(K_A, K_B)$ for any state space $K_B$.
\end{proposition}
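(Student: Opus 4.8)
The plan is to unwind Definition \ref{def:channels-postProc} and produce the required connecting channel $\Lambda$ explicitly. To establish $\Phi \prec \id_{K_A}$ we must match the roles in that definition: the post-processed channel is $\Phi \in \chan(K_A, K_B)$, the reference channel is $\id_{K_A} \in \chan(K_A, K_A)$, and so the connecting map $\Lambda$ is required to lie in $\chan(K_A, K_B)$ and to satisfy the diagrammatic identity \eqref{eq:channels-postProc-preorder}, i.e.\ $\Phi = \Lambda \circ \id_{K_A}$.

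First I would simply take $\Lambda := \Phi$. This is legitimate: $\Phi \in \chan(K_A, K_B)$ by hypothesis, which is exactly the channel set Definition \ref{def:channels-postProc} demands for the connecting map in this configuration. Since $\id_{K_A}$ is the identity channel on $K_A$, we have $\Phi \circ \id_{K_A} = \Phi$ as maps $K_A \to K_B$, which is precisely the equality in \eqref{eq:channels-postProc-preorder} with $\Psi$ replaced by $\Phi$ and the first channel replaced by $\id_{K_A}$. Hence $\Phi \prec \id_{K_A}$.

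Finally I would observe that the second assertion is just the first one quantified: saying that $\id_{K_A}$ is a post-processing greatest channel in $\chan(K_A, K_B)$ means that \emph{every} $\Phi \in \chan(K_A, K_B)$ satisfies $\Phi \prec \id_{K_A}$, and the argument above was carried out for an arbitrary such $\Phi$ and an arbitrary $K_B$. There is essentially no obstacle to overcome here; the only point requiring care is bookkeeping of the roles in Definition \ref{def:channels-postProc} and checking that $\Lambda = \Phi$ sits in the correct channel set. The genuinely substantive companion statement — that a post-processing greatest \emph{measurement} exists only when $K$ is a simplex — is deferred to Section \ref{sec:compatibility} and is not part of what must be shown here.
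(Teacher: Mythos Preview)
Your proof is correct and matches the paper's approach exactly: both simply take $\Lambda = \Phi$ and use $\Phi = \Phi \circ \id_{K_A}$. The paper's one-line proof is just a terse version of what you wrote.
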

\begin{proof}
We have $\Phi = \id_{K_A} \circ \Phi$ and so the result follows.
\end{proof}

\section{Compatibility of channels} \label{sec:compatibility}

%
%
Several notable non-classical features of quantum theory are connected to the non-commutativity of operators. Compatibility is one of the possible operational generalizations of non-commutativity of operators, and it is the generalization that most frequently appears in other applications of quantum information theory, see \cite{HeinosaariMiyaderaZiman-compatibility} for a review. Compatibility is usually only introduced for measurements, but we are going to introduce compatibility of channels. As before, we will easily recover results about compatibility of measurements as a special cases of the results for channels.

Consider the following scenario: let Alice and Bob be two parties with state spaces $K_A$ and $K_B$ and imagine that Alice wants to message to Bob. Alice encodes the message into a state $x_A \in K_A$ and she uses a fixed channel $\Phi \in \chan(K_A, K_B)$ to send the message to Bob, Bob would receive $\Phi(x_A)$ and then proceed to decode the message. Our task is to intercept the message and to learn about it as much as possible. One thing that we can do is to replace the channel $\Phi \in \chan(K_A, K_B)$ with a different channel $\Psi \in \chan(K_A, K_B \treal K_C)$, where $K_C$ is a system we control. The idea is that the channel $\Psi$ is meant to extract as much information as we can get while keeping the state that Bob receives practically unchanged. So we have to require that
\begin{equation} \label{eq:compatibility-intro-Bob}
\begin{quantikz}[row sep={\the\originsep,between origins}, align equals at=2]
& \gate[3, nwires={1,3}]{\Psi} & \qw{K_B} \\
&\qw{K_A} & \\
& & \ground{}{K_C}
\end{quantikz}
=
\begin{quantikz}[align equals at = 1]
&\gate[1]{\Phi} \qw{K_A} & \qw{K_B}
\end{quantikz}
\end{equation}
so that Bob receives the intended message. The channel $\Psi_C \in \chan(K_A, K_C)$, given as
\begin{equation}
\begin{quantikz}[row sep={\the\originsep,between origins}, align equals at=2]
& \gate[3, nwires={1,3}]{\Psi} & \ground{}{K_B} \\
&\qw{K_A} & \\
& & \qw{K_C}
\end{quantikz}
=
\begin{quantikz}[align equals at = 1]
&\gate[1]{\Psi_C} \qw{K_A} & \qw{K_C}
\end{quantikz}
\end{equation}
is the information about the encoded message that we are able to extract. In principle, we would want to choose the channel $\Psi_C \in \chan(K_A, K_C)$ to give us as much information as possible about the input state, but this does not have to be always possible because of the condition \eqref{eq:compatibility-intro-Bob}. As we will see, there is a certain trade-off between how much information about the input state is encoded in the output of $\Phi \in \chan(K_A, K_B)$ and how much we can extract using $\Psi_C \in \chan(K_A, K_C)$. Notice that we are (in some intuitive sense) attempting to get the outcome of both $\Phi \in \chan(K_A, K_B)$ and $\Psi_C \in \chan(K_A, K_C)$ at the same time using the bigger channel $\Psi \in \chan(K_A, K_B \treal K_C)$.
\begin{definition} \label{def:compatibility-channels-def}
Let $K_A$, $K_B$, $K_C$ be state spaces and let $\Phi_1 \in \chan(K_A, K_B)$, $\Phi_2 \in \chan(K_A, K_C)$ be channels. We say that $\Phi_1$ and $\Phi_2$ are \emph{compatible} if and only if there is a channel $\Phi \in \chan(K_A, K_B \treal K_C)$ such that
\begin{equation} \label{eq:compatibility-def-1}
\begin{quantikz}[row sep={\the\originsep,between origins}, align equals at=2]
& \gate[3, nwires={1,3}]{\Phi} & \qw{K_B} \\
&\qw{K_A} & \\
& & \ground{}{K_C}
\end{quantikz}
=
\begin{quantikz}[align equals at = 1]
&\gate[1]{\Phi_1} \qw{K_A} & \qw{K_B}
\end{quantikz}
\end{equation}
and
\begin{equation} \label{eq:compatibility-def-2}
\begin{quantikz}[row sep={\the\originsep,between origins}, align equals at=2]
& \gate[3, nwires={1,3}]{\Phi} & \ground{}{K_B} \\
&\qw{K_A} & \\
& & \qw{K_C}
\end{quantikz}
=
\begin{quantikz}[align equals at = 1]
&\gate[1]{\Phi_2} \qw{K_A} & \qw{K_C}
\end{quantikz}
\end{equation}
hold. The channel $\Phi$ is usually called the \emph{joint channel} of $\Phi_1$ and $\Phi_2$, or the compatibilizer.
\end{definition}
The problem of deciding whether two channels are compatible or not can seem complicated at first, but it is not so. One can in principle rewrite it as a problem of conic programming and get a resource theory \cite{UolaKraftShangYuGuhne-conicResourceTheories} of compatibility of channels. The underlying conic programming problems are not easily solvable in general, but in the case of quantum theory they are equivalent to quantum marginal problems \cite{Plavala-channels,HaapasaloKraftMiklinUola-marginalProblem,GirardPlavalaSikora-jordan}, which are just semi-definite programming problems. The following is an intuitive result saying that constant channels are compatible with every other channel.
\begin{proposition}
Let $K_A$, $K_B$, $K_C$ be state spaces, let $\Phi \in \chan(K_A, K_B)$ be a channel, let $y_C \in K_C$ be a fixed state and let $\tau_{y_C} \in \chan(K_A, K_C)$ be a constant channel, i.e., for every $x_A \in K_A$ we have $\tau_{y_C}(x_A) = y_C$. Then $\Phi$ and $\tau_{y_C}$ are compatible.
\end{proposition}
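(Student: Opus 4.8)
The plan is to exhibit the joint channel explicitly. Because $\tau_{y_C}$ discards its input entirely, the obvious candidate for a compatibilizer is the channel $\Theta \in \chan(K_A, K_B \treal K_C)$ that applies $\Phi$ to the input and independently prepares the fixed state $y_C$ on a fresh $K_C$-wire, i.e.\ $\Theta(x_A) = \Phi(x_A) \otimes y_C$ for all $x_A \in K_A$.

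First I would check that $\Theta$ is a well-defined channel. For every $x_A \in K_A$ we have $\Phi(x_A) \in K_B$ and $y_C \in K_C$, so $\Phi(x_A) \otimes y_C$ is a product state and hence lies in $K_B \tmin K_C \subset K_B \treal K_C$ by Proposition \ref{prop:tensor-bipartite-inclusions}; affinity of $\Theta$ follows from affinity of $\Phi$ together with linearity of the assignment $z_B \mapsto z_B \otimes y_C$. Thus $\Theta \in \chan(K_A, K_B \treal K_C)$.

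Next I would verify the two marginal conditions of Definition \ref{def:compatibility-channels-def}. Discarding the $K_C$-leg of $\Theta$, i.e.\ applying $1_{K_C}$, gives $(\id_{K_B} \otimes 1_{K_C})(\Phi(x_A) \otimes y_C) = \<y_C, 1_{K_C}\> \Phi(x_A) = \Phi(x_A)$ since $\<y_C, 1_{K_C}\> = 1$, which is exactly \eqref{eq:compatibility-def-1}. Symmetrically, discarding the $K_B$-leg gives $\<\Phi(x_A), 1_{K_B}\>\, y_C = y_C = \tau_{y_C}(x_A)$, which is \eqref{eq:compatibility-def-2}. Hence $\Theta$ is a joint channel for $\Phi$ and $\tau_{y_C}$, so the two channels are compatible.

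There is essentially no real obstacle here: the statement holds for the structural reason that a constant channel needs no information about the input, so its output can simply be adjoined as an independent subsystem. The only step meriting a moment's attention is the observation that the product state $\Phi(x_A) \otimes y_C$ automatically belongs to whichever tensor product $\treal$ the theory happens to use, which is precisely the lower inclusion $K_B \tmin K_C \subset K_B \treal K_C$ from Proposition \ref{prop:tensor-bipartite-inclusions}.
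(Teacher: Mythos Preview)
Your proof is correct and follows exactly the same approach as the paper: both construct the joint channel $\Theta(x_A) = \Phi(x_A) \otimes y_C$ and verify the marginal conditions. Your version is in fact more explicit than the paper's, which simply declares the verifications ``straightforward''; your added remark about needing $K_B \tmin K_C \subset K_B \treal K_C$ is a nice clarification the paper leaves implicit.
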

\begin{proof}
The proof is straightforward: we will construct the joint channel $\Phi \in \chan(K_A, K_B \treal K_C)$. Let $x_A \in K_A$ and let $\Psi(x_A) = \Phi(x_A) \otimes y_C$, or in diagrammatic notation
\begin{equation}
\begin{quantikz}[row sep={\the\originsep,between origins}, align equals at=2]
& &\gate[3, nwires={1,3}]{\Psi} & \qw{K_B} \\
&\prepareC{x_A} & & \\
& & & \qw{K_C}
\end{quantikz}
=
\begin{quantikz}[row sep=\the\rowsep, align equals at=1.5]
&[\prepfix]\prepareC{x_A} &\gate{\Phi} &\qw{K_B} \\
& &\prepareC{y_C} &\qw{K_C}
\end{quantikz}
\end{equation}
It is straightforward to verify that $\Psi$ is a channel and that it is the joint channel of $\Phi$ and $\tau_{y_C}$.
\end{proof}

One special case of compatibility of channels is the self-compatibility of a channel with itself. This is the scenario where we assume $K_B = K_C$ and $\Phi_1 = \Phi_2$.
\begin{definition}
Let $K_A$, $K_B$ be state spaces and let $\Phi \in \chan(K_A, K_B)$. We say that $\Phi$ is \emph{self-compatible} if there is a channel $\Psi \in \chan(K_A, K_B \treal K_B)$ such that
\begin{equation}
\begin{quantikz}[row sep={\the\originsep,between origins}, align equals at=2]
& \gate[3, nwires={1,3}]{\Psi} & \qw{} \\
& & \\
& & \ground{}
\end{quantikz}
=
\begin{quantikz}[align equals at = 1]
&\gate[1]{\Phi} & \qw{}
\end{quantikz}
\end{equation}
and
\begin{equation}
\begin{quantikz}[row sep={\the\originsep,between origins}, align equals at=2]
& \gate[3, nwires={1,3}]{\Psi} & \ground{} \\
& & \\
& & \qw{}
\end{quantikz}
=
\begin{quantikz}[align equals at = 1]
&\gate[1]{\Phi} & \qw{}
\end{quantikz}
\end{equation}
hold.
\end{definition}
It is natural to assume that measurements are self-compatible, because the outcome of a measurement is some classical information about the system and it is intuitive that we can copy classical information. We will prove a stronger version of this result.
\begin{proposition} \label{prop:compatibility-MnP}
Let $\Phi_{\MP} \in \chan(K_A, K_B)$ be a measure-and-prepare channel, see Definition \ref{def:channels-MnP}. Then $\Phi_{\MP}$ is self-compatible.
\end{proposition}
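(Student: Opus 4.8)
The plan is to formalize the intuition that the classical outcome of the underlying measurement can be freely copied, and that two copies of it can then be fed into two independent preparation devices. Using the structural result established just before the statement, I would write the measure-and-prepare channel as $\Phi_{\MP}(y) = \sum_{i=1}^n \<y, f_i\> x_i$ for suitable effects $f_1, \dots, f_n \in E(K_A)$ with $\sum_{i=1}^n f_i = 1_{K_A}$ and states $x_1, \dots, x_n \in K_B$, and then take as candidate joint channel $\Psi : K_A \to K_B \treal K_B$ given by
\begin{equation}
\Psi(y) = \sum_{i=1}^n \<y, f_i\> \, x_i \otimes x_i .
\end{equation}
Equivalently, in the notation of Definition \ref{def:channels-MnP}, $\Psi = (\Pe \otimes \Pe) \circ \Delta \circ m$, where $m \in \chan(K_A, S_n)$ is the measurement, $\Pe \in \chan(S_n, K_B)$ the conditional preparation, and $\Delta \in \chan(S_n, S_n \treal S_n)$ the ``copy'' channel sending the $j$-th pure state $s_j$ to $s_j \otimes s_j$; this $\Delta$ is well defined because $S_n$ is a simplex and $s_j \otimes s_j \in S_n \tmin S_n = S_n \treal S_n$ by Proposition \ref{prop:tensor-existence-notWithSn}.

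The first step is to check that $\Psi$ is a legitimate channel. Affinity is immediate since each $y \mapsto \<y, f_i\>$ is affine. For every $y \in K_A$ the numbers $\<y, f_i\>$ are nonnegative and sum to $\<y, 1_{K_A}\> = 1$, so $\Psi(y)$ is a convex combination of the product states $x_i \otimes x_i \in K_B \tmin K_B$; hence $\Psi(y) \in K_B \tmin K_B \subset K_B \treal K_B$ by Proposition \ref{prop:tensor-bipartite-inclusions}, so $\Psi \in \chan(K_A, K_B \treal K_B)$.

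The second step is to verify the two marginal identities from the definition of self-compatibility. Applying the unit effect to the second leg and using $\<x_i, 1_{K_B}\> = 1$ gives, for all $y \in K_A$,
\begin{equation}
(\id_{K_B} \otimes 1_{K_B})(\Psi(y)) = \sum_{i=1}^n \<y, f_i\> \, x_i \, \<x_i, 1_{K_B}\> = \sum_{i=1}^n \<y, f_i\> \, x_i = \Phi_{\MP}(y) ,
\end{equation}
and, since $x_i \otimes x_i$ is symmetric, the same computation with the first leg traced out yields $(1_{K_B} \otimes \id_{K_B})(\Psi(y)) = \Phi_{\MP}(y)$. Translating these two equalities into the diagrammatic language of the definition completes the argument.

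I do not expect a serious obstacle here; the only points that need genuine care are that the candidate $\Psi$ really takes values in $K_B \treal K_B$ — which works because it in fact lands in the minimal tensor product, contained in every valid bipartite state space — and that the diagrammatic partial traces in the definition of self-compatibility are matched correctly with the algebraic marginals $\id_{K_B} \otimes 1_{K_B}$ and $1_{K_B} \otimes \id_{K_B}$.
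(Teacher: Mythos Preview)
Your proof is correct and essentially the same as the paper's: both construct the joint channel as $(\Pe \otimes \Pe) \circ \Delta \circ m$ with $\Delta$ the classical copy map $s_j \mapsto s_j \otimes s_j$ (the paper calls it $\Psi_D$), and both verify the two marginals reduce to $\Phi_{\MP}$. Your version spells out the explicit formula $\Psi(y) = \sum_i \<y, f_i\>\, x_i \otimes x_i$ and checks $\Psi(y) \in K_B \tmin K_B \subset K_B \treal K_B$ a bit more carefully, but the idea is identical.
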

\begin{proof}
Let $\Phi_{\MP} \in \chan(K_A, K_B)$ be a measure-and-prepare channel, then there are a measurement $m \in \chan(K_A, S_n)$ and preparation $\Pe \in \chan(S_n, K_C)$ such that
\begin{equation}
\begin{quantikz}[align equals at=1]
&\gate{\Phi_{\MP}} &\qw{}
\end{quantikz}
=
\begin{quantikz}[align equals at=1]
&\gate{m} &\gate{\Pe} &\qw{}
\end{quantikz}
\end{equation}
Define $\Psi_D \in \chan(S_n, S_n \tmin S_n)$ by $\Psi_D(s_i) = s_i \otimes s_i$. Note that $\Phi_D$ is well-defined, because $s_1, \ldots, s_n$ form a basis of $A(S_n)^*$. For any $s \in S_n$ there are numbers $\lambda_1, \ldots, \lambda_n \in \Rp$, $\sum_{i=1}^n \lambda_i = 1$, such that $s = \sum_{i=1}^n \lambda_i s_i$ and we have $\Phi_D(s) = \sum_{i=1}^n \lambda_i s_i \otimes s_i$. For any $s \in S_n$ we also have
\begin{equation} \label{eq:compatibility-MnP-D1}
\begin{quantikz}[row sep={\the\originsep,between origins}, align equals at=2]
& & \gate[3, nwires={1,3}]{\Psi_D} & \qw{} \\
&\prepareC{s} & & \\
& & & \ground{}
\end{quantikz}
=
\begin{quantikz}[align equals at = 1]
&\prepareC{s} &\qw{}
\end{quantikz}
\end{equation}
and
\begin{equation} \label{eq:compatibility-MnP-D2}
\begin{quantikz}[row sep={\the\originsep,between origins}, align equals at=2]
& & \gate[3, nwires={1,3}]{\Psi_D} & \ground{} \\
&\prepareC{s} & & \\
& & & \qw{}
\end{quantikz}
=
\begin{quantikz}[align equals at = 1]
&\prepareC{s} &\qw{}
\end{quantikz}
\end{equation}
which is straightforward to verify. We are now ready to construct the joint channel $\Psi \in \chan(K_A, K_B \treal K_B)$ as
\begin{equation}
\begin{quantikz}[row sep={\the\originsep,between origins}, align equals at=2]
&\gate[3, nwires={1,3}]{\Psi} &\qw{} \\
& & \\
& &\qw{}
\end{quantikz}
=
\begin{quantikz}[row sep={\the\originsep,between origins}, align equals at=2]
& &\gate[3, nwires={1,3}]{\Psi_D} &\gate{\Pe} &\qw{} \\
&\gate{m} & & \\
& & &\gate{\Pe} &\qw{}
\end{quantikz}
\end{equation}
It is straightforward to check that $\Psi$ is the joint channel using \eqref{eq:compatibility-MnP-D1} and \eqref{eq:compatibility-MnP-D2}.
\end{proof}

For measurements, the definition of compatibility simplifies:
\begin{proposition} \label{prop:compatibility-measuremets}
Let $K$ be a state space and let $m_1 \in \chan(K, S_{n_1})$ and $m_2 \in \chan(K, S_{n_2})$ be measurements given as
\begin{align}
&m_1 = \sum_{i=1}^{n_1} f_i \otimes s_i,
&&m_2 = \sum_{j=1}^{n_2} g_j \otimes s_j,
\end{align}
where $\{f_i\}_{i=1}^{n_1} \subset E(K)$, $\{g_j\}_{j=1}^{n_2} \subset E(K)$, see Proposition \ref{prop:channels-measurements-effects}. Then $m_1$ and $m_2$ are compatible if and only if there are effects $\{ h_{ij} \}_{i,j=1}^{n_1, n_2} \subset E(K)$ such that $\sum_{i=1}^{n_1} \sum_{j=1}^{n_2} h_{ij} = 1_K$
and
\begin{align} \label{eq:compatibility-measurements-sums}
&\sum_{j=1}^{n_2} h_{ij} = f_i,
&&\sum_{i=1}^{n_1} h_{ij} = g_j.
\end{align}
\end{proposition}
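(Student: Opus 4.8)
The plan is to prove both implications by translating Definition~\ref{def:compatibility-channels-def} into the language of effects using the structure of measurements from Proposition~\ref{prop:channels-measurements-effects} and the fact that $S_{n_1} \treal S_{n_2} = S_{n_1} \tmin S_{n_2} = S_{n_1} \tmax S_{n_2}$ is again a simplex (Proposition~\ref{prop:tensor-existence-notWithSn}), whose pure states are $s_i \otimes s_j$ for $i \in \{1,\dots,n_1\}$, $j \in \{1,\dots,n_2\}$.

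First I would observe that a joint channel $\Phi \in \chan(K, S_{n_1} \treal S_{n_2}) = \chan(K, S_{n_1 n_2})$ is itself a measurement, so by Proposition~\ref{prop:channels-measurements-effects} it is uniquely specified by a family of effects $\{h_{ij}\}_{i,j=1}^{n_1,n_2} \subset E(K)$ with $\sum_{i,j} h_{ij} = 1_K$, via $\Phi = \sum_{i,j} h_{ij} \otimes (s_i \otimes s_j)$. Then I would compute the two marginalization conditions \eqref{eq:compatibility-def-1} and \eqref{eq:compatibility-def-2} explicitly. Applying $\id_{S_{n_1}} \otimes 1_{S_{n_2}}$ (the partial trace discarding the second classical register) to $\Phi$ gives $\sum_{i,j} h_{ij} \otimes s_i = \sum_i \bigl(\sum_j h_{ij}\bigr) \otimes s_i$, which by uniqueness of the effect decomposition of the measurement $m_1 = \sum_i f_i \otimes s_i$ equals $m_1$ precisely when $\sum_{j=1}^{n_2} h_{ij} = f_i$ for all $i$; symmetrically, discarding the first register yields $\sum_{i=1}^{n_1} h_{ij} = g_j$ for all $j$. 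This establishes that the existence of a joint channel is equivalent to the existence of such an $\{h_{ij}\}$, proving both directions at once: given compatibility, extract $h_{ij}$ from the joint channel; conversely, given effects $\{h_{ij}\}$ satisfying \eqref{eq:compatibility-measurements-sums} and summing to $1_K$, define $\Phi = \sum_{i,j} h_{ij} \otimes (s_i \otimes s_j)$, check it is a legitimate channel by Proposition~\ref{prop:channels-measurements-effects}, and verify its marginals are $m_1$ and $m_2$.

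The one point requiring slight care — and the main (mild) obstacle — is making sure that the partial-trace operations appearing in the diagrams \eqref{eq:compatibility-def-1}--\eqref{eq:compatibility-def-2} act on $S_{n_1} \treal S_{n_2}$ exactly as ``summing out'' one index of the pure states $s_i \otimes s_j$; this follows because $1_{S_{n_2}} = \sum_j b_j$ with $\langle s_k, b_j\rangle = \delta_{jk}$ (Section~\ref{sec:CT}) and because the identification $S_{n_1} \treal S_{n_2} = S_{n_1 n_2}$ sends $s_i \otimes s_j$ to the pure states of the product simplex, so that Corollary~\ref{coro:tensor-partial-trace} and Example~\ref{exm:channels-channels-partialTrace} apply. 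Once this bookkeeping is in place, the equivalence is immediate from the uniqueness clause of Proposition~\ref{prop:channels-measurements-effects}, and no nontrivial convex-geometric argument is needed.
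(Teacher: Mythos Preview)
Your proposal is correct and follows essentially the same route as the paper: write the joint channel as a measurement $m=\sum_{i,j} h_{ij}\otimes s_i\otimes s_j$ on the product simplex $S_{n_1}\tmin S_{n_2}$, read off the marginal conditions via partial trace, and conversely build the joint channel from the $h_{ij}$. The paper's proof is slightly terser but identical in substance.
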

\begin{proof}
Assume that $m_1 \in \chan(K, S_{n_1})$ and $m_2 \in \chan(K, S_{n_2})$ are compatible, then the joint channel is $m \in \chan(K, S_{n_1} \tmin S_{n_2})$, given as $m = \sum_{i=1}^{n_1} \sum_{j=1}^{n_2} h_{ij} \otimes s_i \otimes s_j$. Then we have
\begin{equation}
m_1 =
\begin{quantikz}[row sep={\the\originsep,between origins}, align equals at=2]
&\gate[3, nwires={1,3}]{m} &\qw{} \\
& & \\
& &\ground{}
\end{quantikz}
=
\sum_{i=1}^{n_1} \sum_{j=1}^{n_2} h_{ij} \otimes s_i  
\end{equation}
from where we get $\sum_{j=1}^{n_2} h_{ij} = f_i$ for all $i \in \{1, \ldots, n_1\}$. $\sum_{i=1}^{n_1} h_{ij} = g_j$ follows analogically. Now assume that there are effects $\{ h_{ij} \}_{i,j=1}^{n_1, n_2} \subset E(K)$ such that $\sum_{i=1}^{n_1} \sum_{j=1}^{n_2} h_{ij} = 1_K$ and \eqref{eq:compatibility-measurements-sums} hold, then let $m \in \chan(K, S_{n_1} \tmin S_{n_2})$ be given as $m = \sum_{i=1}^{n_1} \sum_{j=1}^{n_2} h_{ij} \otimes s_i \otimes s_j$. It is easy to verify that $m$ is the joint measurement of $m_1$ and $m_2$.
\end{proof}

In the following we will investigate several aspects of incompatibility of channels and measurements, we will present the known results about existence of incompatibility in non-classical theories and we will show how incompatibility interacts with entanglement.

\subsection{No-broadcasting theorem and existence of incompatible measurements}
We are going to investigate compatibility of the identity channel $\id_K \in \chan(K)$. So let $K$ be a state space, then we want to find a channel $\Phi \in \chan(K, K \treal K)$ such that for every $x \in K$ we have
\begin{equation}
\begin{quantikz}[row sep={\the\originsep,between origins}, align equals at=2]
& & \gate[3, nwires={1,3}]{\Phi} & \qw{} \\
&\prepareC{x} & & \\
& & & \ground{}
\end{quantikz}
=
\begin{quantikz}[align equals at = 1]
&[\prepfix] \prepareC{x} &\qw{}
\end{quantikz}
\end{equation}
and
\begin{equation}
\begin{quantikz}[row sep={\the\originsep,between origins}, align equals at=2]
& & \gate[3, nwires={1,3}]{\Phi} & \ground{} \\
&\prepareC{x} & & \\
& & & \qw{}
\end{quantikz}
=
\begin{quantikz}[align equals at = 1]
&[\prepfix] \prepareC{x} &\qw{}
\end{quantikz}
\end{equation}
or, purely in terms of channels,
\begin{equation}
\begin{quantikz}[row sep={\the\originsep,between origins}, align equals at=2]
& \gate[3, nwires={1,3}]{\Phi} & \qw{} \\
& & \\
& & \ground{}
\end{quantikz}
=
\begin{quantikz}[align equals at = 1]
&\qw{}
\end{quantikz}
\end{equation}
and
\begin{equation}
\begin{quantikz}[row sep={\the\originsep,between origins}, align equals at=2]
& \gate[3, nwires={1,3}]{\Phi} & \ground{} \\
& & \\
& & \qw{}
\end{quantikz}
=
\begin{quantikz}[align equals at = 1]
&\qw{}
\end{quantikz}
\end{equation}
where the plain wire represents the identity channel, i.e.,
\begin{equation}
\begin{quantikz}[align equals at = 1]
&\qw{}
\end{quantikz}
=
\begin{quantikz}[align equals at = 1]
&\gate{\id_K} &\qw{}
\end{quantikz}
\end{equation}
Assume that $x$ is pure, then according to Theorem \ref{thm:tensor-partial-monogamyTrace} we must have
\begin{equation} \label{eq:compatibility-noBroadcast-action}
\Phi(x) = x \otimes x.
\end{equation}
It follows that $\Phi \in \chan(K, K \tmin K)$ is the universal cloning channel for pure states, also called the universal broadcasting channel. Note that the universal broadcasting channel $\Phi$ is uniquely specified by \eqref{eq:compatibility-noBroadcast-action} and by the convexity of $K$. It is known that we can copy classical information, i.e., that in classical theory, the universal cloning machine exists.
\begin{proposition} \label{prop:compatibility-noBroadcast-classical}
Let $S_n$ be a simplex, i.e., the state space of classical theory, and let $s_1, \ldots, s_n \in S_n$ be the extreme points of $S_n$. Then the identity channel $\id_{S_n} \in \chan(S_n)$ is self-compatible and the joint channel is $\Phi_D \in \chan(S_n, S_n \tmin S_n)$ given as $\Phi_D(s_i) = s_i \otimes s_i$.
\end{proposition}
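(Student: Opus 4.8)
The plan is to exhibit the joint channel explicitly as the classical copying map and then verify its two marginals by a one-line computation. First I would observe that, because $S_n = \conv(\{s_1,\dots,s_n\})$ with $\{s_1,\dots,s_n\}$ affinely independent (in fact a basis of $A(S_n)^*$), every $x\in S_n$ has a \emph{unique} representation $x=\sum_{i=1}^n \lambda_i s_i$ with $\lambda_i\in\Rp$ and $\sum_{i=1}^n\lambda_i=1$. Hence the prescription $\Phi_D(x)=\sum_{i=1}^n \lambda_i\, s_i\otimes s_i$ is unambiguous and manifestly affine, so it defines an affine map $S_n\to S_n\tmin S_n$; since each $s_i\otimes s_i\in S_n\tmin S_n$ and $S_n\tmin S_n$ is convex, the image indeed lies in $S_n\tmin S_n$, so $\Phi_D\in\chan(S_n,S_n\tmin S_n)$. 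Moreover, by Proposition~\ref{prop:tensor-existence-notWithSn} (or Theorem~\ref{thm:tensor-existence-nonClassical}) we have $S_n\tmin S_n=S_n\tmax S_n$, and since any $\treal$ is sandwiched between $\tmin$ and $\tmax$ by Proposition~\ref{prop:tensor-bipartite-inclusions}, the target space $S_n\treal S_n$ equals $S_n\tmin S_n$; so $\Phi_D$ is a legitimate candidate for the joint channel irrespective of the chosen tensor product.

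Next I would compute the marginals. Tracing out the second leg amounts to applying the partial trace $\id_{S_n}\otimes 1_{S_n}$ of Example~\ref{exm:channels-channels-partialTrace}; using $\langle s_i,1_{S_n}\rangle=1$ one gets
\begin{equation}
(\id_{S_n}\otimes 1_{S_n})(\Phi_D(x))=\sum_{i=1}^n \lambda_i\,\langle s_i,1_{S_n}\rangle\, s_i=\sum_{i=1}^n \lambda_i s_i=x,
\end{equation}
so this marginal is $\id_{S_n}$. Tracing out the first leg, i.e.\ applying $1_{S_n}\otimes\id_{S_n}$, gives the same result by the evident symmetry of $\Phi_D$ in its two output systems. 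By the definition of self-compatibility this proves that $\id_{S_n}$ is self-compatible with joint channel $\Phi_D$.

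I do not expect any genuine obstacle here: the only point requiring care is that the assignment on the extreme points extends to an honest affine map, and this is exactly the place where classicality (affine independence of the pure states of a simplex) enters — for a non-simplex $K$ the analogous prescription on $\ext(K)$ need not be well defined, which foreshadows the no-broadcasting theorem to come. As an alternative route one could note that $\id_{S_n}$ is itself a measure-and-prepare channel (measure with the effects $b_1,\dots,b_n$ and re-prepare $s_i$), so self-compatibility follows from Proposition~\ref{prop:compatibility-MnP}; unwinding that construction with $m=\sum_i b_i\otimes s_i$ and $\Pe=\id_{S_n}$ recovers precisely the joint channel $\Phi_D$ above. I would, however, keep the direct construction in the text since it makes the role of the classical cloning map transparent.
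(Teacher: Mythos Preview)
Your proof is correct. The paper, however, takes exactly the alternative route you mention at the end: it is a one-line proof invoking Proposition~\ref{prop:compatibility-MnP}, noting that $\id_{S_n}$ is measure-and-prepare (with $m=\sum_i b_i\otimes s_i$ and $\Pe=\id_{S_n}$), so self-compatibility follows immediately; the identification of the joint channel with $\Phi_D$ is left implicit in the construction of that earlier proposition. Your primary argument is the direct one---define $\Phi_D$ on the basis, check affinity via the unique barycentric decomposition on a simplex, and compute both marginals---which is more self-contained and makes explicit both where classicality enters (well-definedness of the extension) and why the marginals come out right. The paper's route is shorter but requires the reader to unwind the proof of Proposition~\ref{prop:compatibility-MnP} to see that the resulting joint channel is indeed $\Phi_D$; your route does that unwinding on the page. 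Since you already record the measure-and-prepare shortcut as an alternative, there is no gap and nothing to fix.
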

\begin{proof}
The result follows from Proposition \ref{prop:compatibility-MnP}, since the identity channel $\id_{S_n} \in \chan(S_n)$ is measure-and-prepare.
\end{proof}

We are now going to formulate a well-known result that for non-classical theories we can not construct the universal broadcasting channel, this result is known as no-broadcasting theorem \cite{BarnumBarrettLeiferWilce-noBroadcasting,BarnumBarrettLeiferWilce-noBroadcastingPRL}. Let $K$ be a non-classical state space, i.e., not a simplex. Let $x_1, \ldots, x_n \in K$ be pure states such that they form basis of $A(K)^*$, such set always exists because the set of all pure states of $K$ must be overcomplete. Let $y \in K$ be pure state, $y \neq x_i$ for all $i \in \{1, \ldots, n\}$, then there are $\alpha_1, \ldots, \alpha_n \in \RR$, $\sum_{i=1}^n \alpha_i = 1$, such that $y = \sum_{i=1}^n \alpha_i x_i$. Since $y$ and $x_1, \ldots, x_n$ are pure states, then according to \eqref{eq:compatibility-noBroadcast-action} we must have
\begin{equation}
\Phi(y) = \sum_{i=1}^n \alpha_i \Phi(x_i) = \sum_{i=1}^n \alpha_i x_i \otimes x_i
\end{equation}
and
\begin{equation}
\Phi(y) = y \otimes y = \sum_{i,j = 1}^n \alpha_i \alpha_j x_i \otimes x_j.
\end{equation}
Comparing the two terms we get
\begin{equation}
\sum_{i, j = 1}^n (\alpha_i \delta_{ij} - \alpha_i \alpha_j) x_i \otimes x_j = 0.
\end{equation}
Since $\{x_i \otimes x_j\}_{i,j=1}^n$ is a basis of $A(K)^* \otimes A(K)^*$ we get $\alpha_i \delta_{ij} - \alpha_i \alpha_j = 0$ for all $i,j \in \{1, \ldots, n\}$. For $i=j$ we get $\alpha_i = \alpha_i^2$ and so $\alpha_i \in \{0, 1 \}$ for all $i \in \{1, \ldots, n\}$. For $i \neq j$, we must have $\alpha_i \alpha_j = 0$ and so either $\alpha_i = 0$ or $\alpha_j = 0$. It follows that only one of the numbers $\alpha_i$ can be non-zero, without loss of generality we argue that we must have $\alpha_1 \neq 0$ and $\alpha_j = 0$ for all $j \in \{2, \ldots, n\}$. We then have $y = \sum_{i=1}^n \alpha_i x_i = x_1$ which is a contradiction with $y \neq x_i$ for all $i \in \{1, \ldots, n\}$. Thus we have proved:
\begin{theorem} \label{thm:compatibility-noBroadcasting-simplex}
Let $K$ be a state space and let $\id_K \in \chan(K)$ be the identity channel. The following statements are all equivalent:
\begin{enumerate}[label=(NB\arabic*), leftmargin=*]
\item $\id_K$ is a measure-and-prepare channel;
\item $\id_K$ is self-compatible channel;
\item there exists universal broadcasting channel $\Phi \in \chan(K, K \treal K)$;
\item $K = S_n$ is a simplex.
\end{enumerate}
\end{theorem}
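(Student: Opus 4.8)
The statement is a cycle of four equivalences, so the natural strategy is to prove a chain of implications that closes the loop, rather than each equivalence separately. I would organize the proof as $(4)\Rightarrow(1)\Rightarrow(2)\Rightarrow(3)\Rightarrow(4)$, since each of these arrows is either already available or is exactly the computation carried out in the discussion preceding the theorem.

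First, $(4)\Rightarrow(1)$: if $K=S_n$, then $\id_{S_n}$ is measure-and-prepare because $\id_{S_n} = \id_{S_n}\circ \id_{S_n}$ and the identity on a simplex can be written as the conditional preparation $\Pe$ with $\Pe(s_i)=s_i$ composed after the measurement $m$ with effects $b_1,\dots,b_n$; concretely $\id_{S_n}(x)=\sum_{i=1}^n \langle x, b_i\rangle s_i$, which is precisely the form of a measure-and-prepare channel. Next, $(1)\Rightarrow(2)$ is immediate from Proposition \ref{prop:compatibility-MnP}: every measure-and-prepare channel is self-compatible, and $\id_K$ being self-compatible is exactly the $\Phi_1=\Phi_2=\id_K$ case. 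Then $(2)\Rightarrow(3)$: unwinding Definition of self-compatibility for $\id_K$, the joint channel $\Psi\in\chan(K, K\treal K)$ satisfies $(\id_K\otimes 1_K)\circ\Psi = \id_K$ and $(1_K\otimes\id_K)\circ\Psi=\id_K$, which is precisely the defining property of a universal broadcasting channel; one should note that by Theorem \ref{thm:tensor-partial-monogamyTrace} applied at a pure state, the marginal conditions force $\Psi(x)=x\otimes x$ on pure states, so $\Psi$ lands in $K\tmin K$ and is the universal broadcasting channel.

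The substantive arrow is $(3)\Rightarrow(4)$, and here I would simply transcribe the argument already laid out in the paragraph before the theorem statement. Assume $\Phi\in\chan(K,K\treal K)$ is a universal broadcasting channel and suppose, for contradiction, that $K$ is not a simplex. Pick pure states $x_1,\dots,x_n$ forming a basis of $A(K)^*$ (possible because the pure states of a non-simplex are overcomplete) and a further pure state $y\neq x_i$ for all $i$, writing $y=\sum_i\alpha_i x_i$ with $\sum_i\alpha_i=1$. Using $\Phi(x)=x\otimes x$ on pure states (which follows from Theorem \ref{thm:tensor-partial-monogamyTrace} via the trace conditions), compute $\Phi(y)$ two ways: $\Phi(y)=\sum_i\alpha_i x_i\otimes x_i$ by linearity, and $\Phi(y)=y\otimes y=\sum_{i,j}\alpha_i\alpha_j x_i\otimes x_j$ directly. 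Since $\{x_i\otimes x_j\}$ is a basis of $A(K)^*\otimes A(K)^*$, matching coefficients gives $\alpha_i\delta_{ij}=\alpha_i\alpha_j$; the diagonal gives $\alpha_i\in\{0,1\}$ and the off-diagonal forces at most one $\alpha_i$ to be nonzero, hence $y=x_1$, contradicting $y\neq x_i$.

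**Main obstacle.** There is no deep difficulty here — the theorem is essentially a repackaging of material already developed — but the one point requiring care is the passage from the abstract marginal/trace conditions in the definitions of self-compatibility and broadcasting to the concrete statement $\Phi(x)=x\otimes x$ on pure states. This is where monogamy of entanglement (Theorem \ref{thm:tensor-partial-monogamyTrace}) does the real work: the condition that one partial trace of $\Phi(x)$ equals the pure state $x$ forces $\Phi(x)$ to be a product state $x\otimes z$, and symmetry of the two marginal conditions then pins down $z=x$. I would make sure to state this step explicitly rather than gloss over it, since it is the only place where the finite-dimensional convex-geometric structure is genuinely used.
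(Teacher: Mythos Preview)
Your proposal is correct and follows essentially the same route as the paper: the implications $(4)\Rightarrow(1)\Rightarrow(2)\Rightarrow(3)$ are dispatched via Proposition~\ref{prop:compatibility-MnP} and the observation that $\id_{S_n}$ is measure-and-prepare, while the substantive step $(3)\Rightarrow(4)$ is exactly the basis-and-coefficient-matching contradiction argument the paper spells out in the paragraph preceding the theorem. Your explicit flagging of Theorem~\ref{thm:tensor-partial-monogamyTrace} as the mechanism forcing $\Phi(x)=x\otimes x$ on pure states is precisely what the paper does as well.
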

\begin{proof}
See above, or \cite{BarnumBarrettLeiferWilce-noBroadcasting,BarnumBarrettLeiferWilce-noBroadcastingPRL}.
\end{proof}
This is another characterization of classical state spaces, the first one we presented was in terms of existence of entanglement, see Theorem \ref{thm:tensor-existence-nonClassical}. There are two immediate corollaries of Theorem \ref{thm:compatibility-noBroadcasting-simplex}.
\begin{corollary}
Let $K$ be a non-classical state-space, then there exist pair of incompatible channels $\Phi_1, \Phi_2 \in \chan(K)$.
\end{corollary}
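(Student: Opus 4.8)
The plan is to observe that the identity channel already supplies the required pair, so no construction is needed. Concretely, I would take $\Phi_1 = \Phi_2 = \id_K$, viewed as elements of $\chan(K) = \chan(K,K)$. With this choice, asking whether $\Phi_1$ and $\Phi_2$ are compatible in the sense of Definition \ref{def:compatibility-channels-def} (with $K_A = K_B = K_C = K$) means asking for a channel $\Psi \in \chan(K, K \treal K)$ both of whose partial traces onto a single leg recover $\id_K$; comparing this with the definition of self-compatibility, this is precisely the statement that $\id_K$ is self-compatible. So the corollary reduces immediately to the question of whether $\id_K$ is self-compatible.

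Next I would invoke Theorem \ref{thm:compatibility-noBroadcasting-simplex}, which lists self-compatibility of $\id_K$ among a collection of statements all equivalent to $K$ being a simplex $S_n$. Since $K$ is assumed to be non-classical, i.e. not a simplex, that theorem tells us $\id_K$ is not self-compatible. Therefore $\Phi_1 = \id_K$ and $\Phi_2 = \id_K$ are incompatible channels in $\chan(K)$, which is exactly what the corollary asserts (the statement requires only the existence of \emph{some} pair, and nothing forbids $\Phi_1 = \Phi_2$).

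I do not expect any genuine obstacle here: all the work is already done in Theorem \ref{thm:compatibility-noBroadcasting-simplex}, and the only thing to spell out is the routine unpacking that "compatibility of $\id_K$ with itself" is the same as "self-compatibility of $\id_K$". If one preferred a pair of manifestly distinct channels, one could instead point to a small non-classical example (e.g. a square-bit state space) and exhibit two explicit measurements, but for the general statement the one-line argument via the identity channel is cleanest.
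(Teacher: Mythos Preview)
Your proposal is correct and matches the paper's own proof exactly: the paper simply writes ``Take $\Phi_1 = \Phi_2 = \id_K$'', relying on Theorem~\ref{thm:compatibility-noBroadcasting-simplex} just as you do. Your additional unpacking of why compatibility of $\id_K$ with itself coincides with self-compatibility is fine but more than the paper bothers to say.
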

\begin{proof}
Take $\Phi_1 = \Phi_2 = \id_K$.
\end{proof}

\begin{corollary}
Let $K$ be a non-classical state-space, then not all channels in $\chan(K)$ are measure-and-prepare.
\end{corollary}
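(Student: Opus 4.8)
The plan is to exhibit a single explicit channel in $\chan(K)$ that is not measure-and-prepare, and the natural candidate is the identity channel $\id_K \in \chan(K)$. The argument is then immediate from Theorem~\ref{thm:compatibility-noBroadcasting-simplex}: among the equivalent conditions listed there, (NB1) asserts that $\id_K$ is measure-and-prepare while (NB4) asserts that $K$ is a simplex. Since $K$ is non-classical by hypothesis, (NB4) fails, hence (NB1) fails, so $\id_K$ is not a measure-and-prepare channel. This already shows that $\chan(K)$ contains a channel outside the measure-and-prepare class, which is precisely the claim.

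I would also record an alternative route that does not invoke the no-broadcasting theorem directly, but instead uses entanglement breaking together with Theorem~\ref{thm:tensor-existence-nonClassical}. Every measure-and-prepare channel is entanglement-breaking with respect to any choice of tensor product (this is the content recorded after Proposition~\ref{prop:channels-CP-measurements}, whose proof already shows $(\id_{K_A}\otimes m)(x_{AB}) \in K_A \tmin S_n$, and composing with a preparation preserves separability), in particular with respect to $K \tmax K \to K \tmax K$. If $\id_K$ were entanglement-breaking in that sense, then for every $x \in K \tmax K$ we would have $(\id_K \otimes \id_K)(x) = x \in K \tmin K$, forcing $K \tmax K = K \tmin K$; but Theorem~\ref{thm:tensor-existence-nonClassical} says this equality fails whenever $K$ is not a simplex. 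Hence $\id_K$ is not entanglement-breaking, and therefore not measure-and-prepare.

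There is essentially no obstacle here. The corollary is a pure existence statement, so producing one channel that is not of the measure-and-prepare form suffices, and all the substantive work has already been carried out — in the no-broadcasting argument preceding Theorem~\ref{thm:compatibility-noBroadcasting-simplex} for the first route, and in Theorem~\ref{thm:tensor-existence-nonClassical} for the second. The only point worth stating carefully in the write-up is simply that the failure of ``all channels are measure-and-prepare'' is witnessed by $\id_K$, so no characterisation of the measure-and-prepare channels is needed.
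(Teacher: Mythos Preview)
Your proposal is correct, and your first route is essentially the paper's own proof: the paper also takes $\id_K$ as the witness and argues that if it were measure-and-prepare then (via Proposition~\ref{prop:compatibility-MnP}) it would be self-compatible, contradicting the no-broadcasting theorem for non-simplicial $K$. Your alternative route through entanglement-breaking and Theorem~\ref{thm:tensor-existence-nonClassical} is a valid independent argument that the paper does not use here.
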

\begin{proof}
If $\id_K$ was measure-and-prepare, then according to Proposition \ref{prop:compatibility-MnP} it would be self-compatible.
\end{proof}

One can also investigate whether we can broadcast at least some convex subset $B \subset K$, i.e., whether there is a channel $\Psi \in \chan(K, K \treal K)$ such that for every $x \in B$ we have
\begin{equation} \label{eq:comaptiblity-noBroadcast-BsubK-1}
\begin{quantikz}[row sep={\the\originsep,between origins}, align equals at=2]
& & \gate[3, nwires={1,3}]{\Psi} & \qw{} \\
&\prepareC{x} & & \\
& & & \ground{}
\end{quantikz}
=
\begin{quantikz}[align equals at = 1]
&[\prepfix] \prepareC{x} &\qw{}
\end{quantikz}
\end{equation}
and
\begin{equation} \label{eq:comaptiblity-noBroadcast-BsubK-2}
\begin{quantikz}[row sep={\the\originsep,between origins}, align equals at=2]
& & \gate[3, nwires={1,3}]{\Psi} & \ground{} \\
&\prepareC{x} & & \\
& & & \qw{}
\end{quantikz}
=
\begin{quantikz}[align equals at = 1]
&[\prepfix] \prepareC{x} &\qw{}
\end{quantikz}
\end{equation}

The following is a reformulation of the main result of \cite{BarnumBarrettLeiferWilce-noBroadcasting,BarnumBarrettLeiferWilce-noBroadcastingPRL}.
\begin{theorem}
Let $B \subset K$, then there exists channel $\Psi \in \chan(K, K \treal K)$ satisfying \eqref{eq:comaptiblity-noBroadcast-BsubK-1} and \eqref{eq:comaptiblity-noBroadcast-BsubK-2} if and only if there is a measure-and-prepare channel $\Phi_{\MP} \in \chan(K)$ such that for all $x \in B$ we have $\Phi_{MP}(x) = x$.
\end{theorem}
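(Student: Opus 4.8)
\emph{Overview.} The plan is to prove both implications by an explicit construction, reading off one object from the other. One direction is immediate from a result already established, while the other is essentially a repackaging of the generalized no-broadcasting theorem of \cite{BarnumBarrettLeiferWilce-noBroadcasting,BarnumBarrettLeiferWilce-noBroadcastingPRL}, which I would state in a convenient form and invoke rather than reprove.

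\emph{From a measure-and-prepare channel to a broadcasting channel.} Suppose $\Phi_{\MP}\in\chan(K)$ is measure-and-prepare with $\Phi_{\MP}(x)=x$ for all $x\in B$. Since measure-and-prepare channels are self-compatible (Proposition \ref{prop:compatibility-MnP}), there is a joint channel $\Psi\in\chan(K,K\treal K)$ with $(\id_K\otimes 1_K)\circ\Psi=\Phi_{\MP}=(1_K\otimes\id_K)\circ\Psi$. Evaluating both marginals at an arbitrary $x\in B$ and using $\Phi_{\MP}(x)=x$ gives exactly \eqref{eq:comaptiblity-noBroadcast-BsubK-1} and \eqref{eq:comaptiblity-noBroadcast-BsubK-2}, so this $\Psi$ broadcasts $B$; there is nothing further to do here.

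\emph{From a broadcasting channel to a measure-and-prepare channel.} Now let $\Psi\in\chan(K,K\treal K)$ satisfy \eqref{eq:comaptiblity-noBroadcast-BsubK-1} and \eqref{eq:comaptiblity-noBroadcast-BsubK-2}. First I would reduce to the compact convex case: the two marginal channels $(\id_K\otimes 1_K)\circ\Psi$ and $(1_K\otimes\id_K)\circ\Psi$ are affine and continuous and agree with $\id_K$ on $B$, hence also on $\overline{\conv(B)}$; thus $\Psi$ already broadcasts $\overline{\conv(B)}$, and a measure-and-prepare channel fixing $\overline{\conv(B)}$ automatically fixes $B$. Assuming then that $B$ is a state space with $B\subseteq K$, I would invoke the generalized no-broadcasting theorem in the form: from a broadcasting map for such a $B$ one obtains a sub-state-space $S\subseteq K$ that is a simplex, $S\cong S_n$, with $B\subseteq S$, together with a channel $r\in\chan(K)$ satisfying $r(K)\subseteq S$ and $r|_S=\id_S$. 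The engine behind this is a monogamy-of-entanglement argument in the spirit of Theorem \ref{thm:tensor-partial-monogamyTrace}, applied to the broadcast states $\Psi(x)$ whose two marginals both equal $x$: it forces the faces of $K$ met by $B$ to be pairwise compatible and to generate a finite Boolean structure whose atoms span $S$, with $r$ the coarse-graining onto those atoms. Extracting $S$ and $r$ is the main obstacle, and I would cite \cite{BarnumBarrettLeiferWilce-noBroadcasting,BarnumBarrettLeiferWilce-noBroadcastingPRL} for it rather than reconstruct the argument. Finally, factor $r=\Pe\circ m$, where $m\in\chan(K,S_n)$ is $r$ regarded as a channel into the classical state space $S$ (hence a measurement, by definition) and $\Pe\in\chan(S_n,K)$ is the inclusion $S_n\cong S\hookrightarrow K$ (a conditional preparation); then $\Phi_{\MP}:=\Pe\circ m$ equals $r$, so it is measure-and-prepare (Definition \ref{def:channels-MnP}) and $\Phi_{\MP}(x)=r(x)=x$ for every $x\in B\subseteq S$, which is the required channel and completes the proof.
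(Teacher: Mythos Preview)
Your proposal is correct and takes essentially the same approach as the paper: the easy direction (measure-and-prepare $\Rightarrow$ broadcasting) is proved identically via self-compatibility of measure-and-prepare channels (Proposition \ref{prop:compatibility-MnP}), and the hard direction is deferred to \cite{BarnumBarrettLeiferWilce-noBroadcasting,BarnumBarrettLeiferWilce-noBroadcastingPRL} in both cases. Your write-up adds a helpful sketch of what the cited result provides and how to extract $\Phi_{\MP}$ from it, which the paper omits entirely, but this is elaboration rather than a different route.
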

\begin{proof}
We are only going to show that if such measure-and-prepare channel $\Phi_{\MP}$ exists, then there also exists channel $\Psi$ satisfying \eqref{eq:comaptiblity-noBroadcast-BsubK-1} and \eqref{eq:comaptiblity-noBroadcast-BsubK-2}. For the proof of the other implication see \cite{BarnumBarrettLeiferWilce-noBroadcasting,BarnumBarrettLeiferWilce-noBroadcastingPRL}. So let $\Phi_{\MP} \in \chan(K)$ be a measure-and-prepare channel such that for every $x \in B$ we have $\Phi_{\MP}(x) = x$. Then according to Proposition \ref{prop:compatibility-MnP} $\Phi_{\MP}$ is self-compatible, so there exists a channel $\Psi \in \chan(K, K \treal K)$ such that \eqref{eq:compatibility-def-1} and \eqref{eq:compatibility-def-2} are satisfied. Let $x \in B$, we then have
\begin{equation}
\begin{quantikz}[row sep={\the\originsep,between origins}, align equals at=2]
& & \gate[3, nwires={1,3}]{\Psi} & \qw{} \\
&\prepareC{x} & & \\
& & & \ground{}
\end{quantikz}
=
\begin{quantikz}[align equals at = 1]
&[\prepfix] \prepareC{x} &\gate{\Phi_{\MP}} &\qw{}
\end{quantikz}
=
\begin{quantikz}[align equals at = 1]
&[\prepfix] \prepareC{x} &\qw{}
\end{quantikz}
\end{equation}
and
\begin{equation}
\begin{quantikz}[row sep={\the\originsep,between origins}, align equals at=2]
& & \gate[3, nwires={1,3}]{\Psi} & \ground{} \\
&\prepareC{x} & & \\
& & & \qw{}
\end{quantikz}
=
\begin{quantikz}[align equals at = 1]
&[\prepfix] \prepareC{x} &\gate{\Phi_{\MP}} &\qw{}
\end{quantikz}
=
\begin{quantikz}[align equals at = 1]
&[\prepfix] \prepareC{x} &\qw{}
\end{quantikz}
\end{equation}
so \eqref{eq:comaptiblity-noBroadcast-BsubK-1} and \eqref{eq:comaptiblity-noBroadcast-BsubK-2} are satisfied.
\end{proof}

Another way to extend the results we have obtained so far is to restrict the set of channels we investigate: instead of asking whether there exist some pair of incompatible channels $\Phi_1 \in \chan(K_A, K_B)$ and $\Phi_2 \in \chan(K_A, K_C)$, we can ask whether there exists a pair of incompatible two-outcome measurements $m_1 \in \chan(K, S_2)$ and $m_2 \in \chan(K, S_2)$. It was shown in \cite{Plavala-simplex,Kuramochi-simplex} that such pair of incompatible two-outcome measurements exists whenever $K$ is not a simplex.
\begin{theorem} \label{thm:compatibility-noBroadcasting-measurementSimplex}
There exists a pair of incompatible two-outcome measurements $m_1 \in \chan(K, S_2)$ and $m_2 \in \chan(K, S_2)$ whenever $K$ is not a simplex.
\end{theorem}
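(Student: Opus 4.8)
The plan is to reduce the statement, via Proposition~\ref{prop:compatibility-measuremets}, to a purely geometric ``sandwich'' question on $K$, and then to use the failure of simplex structure to obstruct the sandwich. First I would unpack compatibility of two-outcome measurements. By the corollary following Proposition~\ref{prop:channels-measurements-effects} a two-outcome measurement is specified by a single effect, so write $m_1$ for the measurement with effects $(f,1_K-f)$ and $m_2$ for the one with effects $(g,1_K-g)$, where $f,g\in E(K)$. Specialising Proposition~\ref{prop:compatibility-measuremets} to $n_1=n_2=2$, the joint measurement is determined by $h:=h_{11}$ through $h_{12}=f-h$, $h_{21}=g-h$, $h_{22}=1_K-f-g+h$; using Proposition~\ref{prop:basic-EA-interval} and $f,g\le 1_K$, the requirement that all four lie in $E(K)$ collapses to the single condition that $h$ be an affine function on $K$ with
\begin{equation*}
\max(0,\,f+g-1_K)\;\le\; h\;\le\;\min(f,g)
\end{equation*}
pointwise on $K$. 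The right-hand side is concave, the left-hand side is convex, and one checks directly that the former always dominates the latter; hence $m_1$ and $m_2$ are \emph{incompatible} exactly when no affine function lies between these two envelopes.

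Next I would dualise this feasibility problem. Since $h$ ranges over the finite-dimensional space $A(K)$, a standard separation/linear-programming duality argument shows that an affine $h$ between the two envelopes fails to exist precisely when there are two finitely supported probability distributions $\sum_a\alpha_a x_a$ and $\sum_b\beta_b y_b$ on $K$ with the \emph{same barycenter} but with $\sum_a\alpha_a\max(0,f+g-1_K)(x_a) > \sum_b\beta_b\min(f,g)(y_b)$. So it suffices to exhibit, for any non-simplex $K$, one pair of effects $f,g\in E(K)$ together with such a pair of barycenter-matched distributions realising the strict inequality.

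This is where the hypothesis enters. If $K$ is not a simplex then, by Theorem~\ref{thm:basic-stateSpace-convHull}, $\ext(K)$ is not affinely independent, so some point of $K$ has two distinct representations as a convex combination of extreme points; making both finite by Carath\'{e}odory's theorem (Theorem~\ref{thm:basic-stateSpace-Caratheodory}) and subtracting, one extracts a genuine ``crossing'': disjoint finite sets of distinct extreme points $\{p_a\}$, $\{q_b\}$ and positive weights with $\sum_a\alpha_a p_a=\sum_b\beta_b q_b$. (By contrast, on $S_n$ no such crossing occurs, which is exactly why all two-outcome measurements are compatible there: writing $f=\sum_i\lambda_i b_i$, $g=\sum_i\mu_i b_i$ in the basis $b_1,\dots,b_n$ of $A(S_n)$, the affine function $h=\sum_i\min(\lambda_i,\mu_i)b_i$ satisfies both bounds, since $\min(\lambda_i,\mu_i)\ge 0$ and $\min(\lambda_i,\mu_i)\ge\lambda_i+\mu_i-1$.) Given the crossing, one then builds $f,g\in E(K)$ — morally two ``coordinate-type'' effects adapted to the configuration, arranged so that $f+g-1_K$ is positive at some $p_a$ while $\min(f,g)$ is small on every $q_b$ — forcing the strict inequality above. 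The explicit choice of $f,g$ is the content of \cite{Plavala-simplex,Kuramochi-simplex}.

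The main obstacle is precisely this last construction. For a polytope one can take the effects to vanish on entire faces, making $\min(f,g)$ identically zero on the $q_b$ (so the right-hand side is exactly $0$) and the inequality trivial; but for non-polytopal $K$ — the Bloch disk being the prototype — no nonzero effect vanishes on two distinct boundary extreme points, so $\min(f,g)$ is only \emph{small}, not zero, on the $q_b$, and one must argue quantitatively that the $\max(0,f+g-1_K)$ contribution at the $p_a$ still wins. Carrying this out uniformly for every non-simplex state space — equivalently, showing that the failure of the Riesz decomposition property of the order-unit space $A(K)$ already manifests at the level of a single pair of two-outcome measurements — is the delicate point, and is why the theorem is attributed to the dedicated analyses \cite{Plavala-simplex,Kuramochi-simplex}.
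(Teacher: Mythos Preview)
The paper itself does not prove this theorem: its entire proof is ``See \cite{Plavala-simplex} for a constructive proof.'' So there is no in-paper argument to compare against, only the cited reference.

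Your outline is correct and in fact goes well beyond what the review does. The reduction of compatibility of $(f,1_K-f)$ and $(g,1_K-g)$ to the existence of an affine $h\in A(K)$ with
\[
\max(0,\,f+g-1_K)\ \le\ h\ \le\ \min(f,g)
\]
is exactly the right reformulation (and your simplex computation with $h=\sum_i\min(\lambda_i,\mu_i)\,b_i$ is the clean way to see why the problem is vacuous on $S_n$). Your LP/separation dualisation is also correct: infeasibility of the sandwich is equivalent to the existence of two probability measures on $K$ with the same barycenter for which the $L$-average on one strictly exceeds the $U$-average on the other. And your extraction of a ``crossing'' $\sum_a\alpha_a p_a=\sum_b\beta_b q_b$ with disjoint extreme-point supports from the failure of simplex structure is the natural first move.

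Where your proposal is honest---and where the actual work lies---is exactly where you flag it: producing, from the crossing, explicit effects $f,g\in E(K)$ that realise the strict inequality, uniformly over all non-simplex $K$ (in particular for strictly convex $K$ where no nonzero effect vanishes on two distinct extreme points). You correctly do not claim to carry this out and defer to \cite{Plavala-simplex,Kuramochi-simplex}, which is precisely what the paper does. So your write-up is not a self-contained proof either, but as a proof \emph{sketch} it is accurate, identifies the genuine obstacle, and is more informative than the paper's one-line citation.
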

\begin{proof}
See \cite{Plavala-simplex} for a constructive proof.
\end{proof}
One can also show that existence of incompatible measurements is related to existence of entanglement between appropriate cones, see \cite{NamiokaPhelps-cones}.

\subsection{Preorder of channels and compatibility}
In this section we will explore the connection between the post-processing preorder and compatibility of channels; most of the results are inspired by \cite{HeinosaariMiyaderaZiman-compatibility}. We will also prove that post-processing greatest measurement exists only if $K$ is a simplex. The first result is immediate.
\begin{proposition} \label{prop:compatibility-postProc-compatibility}
Let $K_A, K_B, K_C, K_D$ be state spaces and let $\Phi_1 \in \chan(K_A, K_B)$, $\Phi_2 \in \chan(K_A, K_C)$ and $\Phi_3 \in \chan(K_A, K_D)$ be channels such that $\Phi_3 \prec \Phi_2$. If $\Phi_1$ and $\Phi_2$ are compatible, then also $\Phi_1$ and $\Phi_3$ are compatible.
\end{proposition}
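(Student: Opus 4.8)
The plan is to build the joint channel for $\Phi_1$ and $\Phi_3$ out of the joint channel witnessing compatibility of $\Phi_1$ and $\Phi_2$, post-processed by the channel relating $\Phi_2$ to $\Phi_3$. First I would unpack the two hypotheses. Since $\Phi_3 \prec \Phi_2$, Definition \ref{def:channels-postProc} supplies a channel $\Lambda \in \chan(K_C, K_D)$ with $\Phi_3 = \Lambda \circ \Phi_2$. Since $\Phi_1$ and $\Phi_2$ are compatible, Definition \ref{def:compatibility-channels-def} supplies a joint channel $\Phi \in \chan(K_A, K_B \treal K_C)$ whose $K_B$-marginal (discard $K_C$ with the unit effect) is $\Phi_1$, as in \eqref{eq:compatibility-def-1}, and whose $K_C$-marginal (discard $K_B$) is $\Phi_2$, as in \eqref{eq:compatibility-def-2}.

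The candidate joint channel for $\Phi_1$ and $\Phi_3$ is then $\Psi := (\id_{K_B} \otimes \Lambda) \circ \Phi \in \chan(K_A, K_B \treal K_D)$: apply $\Phi$ and post-process the second output wire by $\Lambda$, which diagrammatically is just inserting the box $\Lambda$ on the lower output leg of $\Phi$. The verification splits into the two marginal conditions. For the $K_B$-marginal of $\Psi$: discarding $K_D$ means feeding the lower leg into the ground, and since $\Lambda^*(1_{K_D}) = 1_{K_C}$ by Proposition \ref{prop:channels-channels-unital}, composing $\Lambda$ with the unit effect equals the unit effect on $K_C$; hence the $K_B$-marginal of $\Psi$ coincides with the $K_B$-marginal of $\Phi$, which is $\Phi_1$. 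For the $K_D$-marginal of $\Psi$: discarding $K_B$ leaves $\Lambda$ composed with the $K_C$-marginal of $\Phi$, i.e.\ $\Lambda \circ \Phi_2 = \Phi_3$ by the choice of $\Lambda$. This establishes \eqref{eq:compatibility-def-1} and \eqref{eq:compatibility-def-2} for $\Phi_1$, $\Phi_3$ with compatibilizer $\Psi$.

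The only step that is not pure diagram chasing is checking that $\Psi$ is a genuine channel, which reduces to $\id_{K_B} \otimes \Lambda$ mapping $K_B \treal K_C$ into $K_B \treal K_D$ — precisely complete positivity of $\Lambda$ with respect to $K_B \treal K_C \to K_B \treal K_D$. I expect this to be the main (minor) obstacle. It is covered by the standing assumption that the selected state spaces with $\treal$ form a monoidal category, so that identities may be tensored onto any channel appearing in the construction; equivalently, one works throughout with completely positive channels, which is the natural setting for compatibility. Granting this, the statement follows immediately from Definitions \ref{def:channels-postProc} and \ref{def:compatibility-channels-def} together with unitality of adjoint maps, which is why the result deserves the label ``immediate''.
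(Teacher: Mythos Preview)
Your proof is correct and follows exactly the same construction as the paper: post-compose the joint channel of $\Phi_1,\Phi_2$ with $\id_{K_B}\otimes\Lambda$ and check the two marginals via unitality. Your explicit remark on the complete-positivity requirement for $\id_{K_B}\otimes\Lambda$ is a point the paper leaves implicit in the diagrammatic manipulation.
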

\begin{proof}
Since $\Phi_1$ and $\Phi_2$ are compatible, there is a channel $\Psi \in \chan(K_A, K_B \treal K_C)$ such that
\begin{equation}
\begin{quantikz}[row sep={\the\originsep,between origins}, align equals at=2]
& \gate[3, nwires={1,3}]{\Psi} & \qw{} \\
& & \\
& & \ground{}
\end{quantikz}
=
\begin{quantikz}[align equals at = 1]
&\gate{\Phi_1} &\qw{}
\end{quantikz}
\end{equation}
and
\begin{equation}
\begin{quantikz}[row sep={\the\originsep,between origins}, align equals at=2]
& \gate[3, nwires={1,3}]{\Psi} & \ground{} \\
& & \\
& & \qw{}
\end{quantikz}
=
\begin{quantikz}[align equals at = 1]
&\gate{\Phi_2} &\qw{}
\end{quantikz}
\end{equation}
Since $\Phi_3 \prec \Phi_2$ there is a channel $\Lambda \in \chan(K_C, K_D)$ such that
\begin{equation}
\begin{quantikz}[align equals at=1]
&\gate{\Phi_3} &\qw{}
\end{quantikz}
=
\begin{quantikz}[align equals at=1]
&\gate{\Phi_2} &\gate{\Lambda} &\qw{}
\end{quantikz}
\end{equation}
We then have
\begin{equation}
\begin{quantikz}[row sep={\the\originsep,between origins}, align equals at=2]
& \gate[3, nwires={1,3}]{\Psi} &\qw{} &\qw{} \\
& & & \\
& &\gate{\Lambda} &\ground{}
\end{quantikz}
=
\begin{quantikz}[row sep={\the\originsep,between origins}, align equals at=2]
& \gate[3, nwires={1,3}]{\Psi} & \qw{} \\
& & \\
& & \ground{}
\end{quantikz}
=
\begin{quantikz}[align equals at = 1]
&\gate{\Phi_1} &\qw{}
\end{quantikz}
\end{equation}
and
\begin{equation}
\begin{quantikz}[row sep={\the\originsep,between origins}, align equals at=2]
& \gate[3, nwires={1,3}]{\Psi} &\qw{} &\ground{} \\
& & & \\
& &\gate{\Lambda} &\qw{}
\end{quantikz}
=
\begin{quantikz}[row sep={\the\originsep,between origins}, align equals at=2]
& \gate[3, nwires={1,3}]{\Psi} &\ground{} & \\
& & & \\
& &\qw{} &\gate{\Lambda} &\qw{}
\end{quantikz}
=
\begin{quantikz}[align equals at = 1]
&\gate{\Phi_2} &\gate{\Lambda} &\qw{}
\end{quantikz}
=
\begin{quantikz}[align equals at=1]
&\gate{\Phi_3} &\qw{}
\end{quantikz}
\end{equation}
and so $\Phi_1$ and $\Phi_3$ are also compatible.
\end{proof}

\begin{corollary} \label{coro:compatibility-postProc-id}
If a channel $\Phi_1 \in \chan(K_A, K_B)$ is compatible with $\id_{K_A} \in \chan(K_A)$, then it is also compatible with any other channel $\Phi_2 \in \chan(K_A, K_C)$.
\end{corollary}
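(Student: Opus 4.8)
The plan is to obtain the statement as an immediate consequence of Proposition \ref{prop:compatibility-postProc-compatibility} together with Proposition \ref{prop:channels-postProc-id}. The key observation is that every channel $\Phi_2 \in \chan(K_A, K_C)$ is a post-processing of the identity: by Proposition \ref{prop:channels-postProc-id} we have $\Phi_2 \prec \id_{K_A}$, since $\Phi_2 = \Phi_2 \circ \id_{K_A}$ exhibits the required decomposition with $\Lambda = \Phi_2 \in \chan(K_A, K_C)$.

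Having this, I would apply Proposition \ref{prop:compatibility-postProc-compatibility} with the following identification of symbols: take the ``$\Phi_1$'' of that proposition to be our $\Phi_1 \in \chan(K_A, K_B)$, take its ``$\Phi_2$'' to be $\id_{K_A} \in \chan(K_A)$, and take its ``$\Phi_3$'' to be our $\Phi_2 \in \chan(K_A, K_C)$. The hypothesis $\Phi_3 \prec \Phi_2$ of that proposition then reads $\Phi_2 \prec \id_{K_A}$, which we have just verified, and the hypothesis that ``$\Phi_1$ and $\Phi_2$'' are compatible reads that $\Phi_1$ and $\id_{K_A}$ are compatible, which is exactly our assumption. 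The conclusion of Proposition \ref{prop:compatibility-postProc-compatibility} is that ``$\Phi_1$ and $\Phi_3$'' are compatible, i.e.\ that $\Phi_1$ and $\Phi_2$ are compatible, as desired.

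There is essentially no obstacle here: the statement is a two-line corollary and the only thing to be careful about is the bookkeeping of which channel plays which role when invoking Proposition \ref{prop:compatibility-postProc-compatibility}, and confirming that the state spaces match up ($K_D$ in the proposition being instantiated as $K_C$ here). If one preferred a self-contained argument instead of citing the proposition, one could also unfold it directly: let $\Psi \in \chan(K_A, K_B \treal K_A)$ be a joint channel of $\Phi_1$ and $\id_{K_A}$, and then post-compose the second output wire with $\Phi_2 \in \chan(K_A, K_C)$ to obtain a channel in $\chan(K_A, K_B \treal K_C)$; checking that discarding the $K_C$ wire returns $\Phi_1$ and discarding the $K_B$ wire returns $\Phi_2$ is a routine diagram chase using the two marginal conditions for $\Psi$ and functoriality of $\treal$. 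I would, however, simply cite the two earlier results rather than repeat this.
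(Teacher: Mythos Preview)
Your proposal is correct and follows exactly the same approach as the paper, which simply states that the result follows from Proposition~\ref{prop:compatibility-postProc-compatibility} and Proposition~\ref{prop:channels-postProc-id}. Your additional bookkeeping on which channel plays which role, and the remark on the unfolded diagrammatic argument, are helpful elaborations but not needed for the formal proof.
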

\begin{proof}
The result follows from Proposition \ref{prop:compatibility-postProc-compatibility} and Proposition \ref{prop:channels-postProc-id}.
\end{proof}

\begin{corollary}
Let $S_n$ be a simplex, let $K_A$, $K_B$ be any state spaces, and let $\Pe_1 \in \chan(S_n, K_A)$ and $\Pe_2 \in \chan(S_n, K_B)$ be conditional preparations. Then $\Pe_1$ and $\Pe_2$ are compatible.
\end{corollary}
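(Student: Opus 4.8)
The plan is to reduce the statement to the results on the post-processing preorder just established. First recall that on a simplex the identity channel is self-compatible: by Proposition~\ref{prop:compatibility-noBroadcast-classical}, $\id_{S_n} \in \chan(S_n)$ is self-compatible, with joint channel $\Phi_D \in \chan(S_n, S_n \tmin S_n)$ acting on the extreme points $s_1, \dots, s_n$ of $S_n$ by $\Phi_D(s_i) = s_i \otimes s_i$. In particular $\id_{S_n}$ is compatible with itself in the sense of Definition~\ref{def:compatibility-channels-def}.

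Next, observe that $\Pe_1$ is a post-processing of $\id_{S_n}$, since $\Pe_1 = \Pe_1 \circ \id_{S_n}$ and hence $\Pe_1 \prec \id_{S_n}$ (this is also an instance of Proposition~\ref{prop:channels-postProc-id}). Applying Proposition~\ref{prop:compatibility-postProc-compatibility} with $\Phi_1 = \Phi_2 = \id_{S_n}$ and $\Phi_3 = \Pe_1$ — $\Phi_1$ and $\Phi_2$ are compatible and $\Phi_3 \prec \Phi_2$ — we conclude that $\Pe_1$ is compatible with $\id_{S_n}$. Finally, Corollary~\ref{coro:compatibility-postProc-id} says that a channel compatible with the identity on its input space is compatible with every channel sharing that input space; hence $\Pe_1$ is compatible with $\Pe_2 \in \chan(S_n, K_B)$, which is the claim.

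For concreteness one can also write the joint channel down explicitly: take $\Psi \in \chan(S_n, K_A \treal K_B)$ given on extreme points by $\Psi(s_i) = \Pe_1(s_i) \otimes \Pe_2(s_i)$ and extended affinely, equivalently $\Psi = (\Pe_1 \otimes \Pe_2) \circ \Phi_D$. Since each $\Pe_1(s_i) \otimes \Pe_2(s_i) \in K_A \tmin K_B$ and $K_A \tmin K_B \subset K_A \treal K_B$, convexity shows $\Psi$ maps into $K_A \treal K_B$ and is therefore a channel; here one uses that $\Phi_D$ lands in $S_n \tmin S_n = S_n \tmax S_n$ (Proposition~\ref{prop:tensor-existence-notWithSn}), which makes $\Pe_1 \otimes \Pe_2$ well-defined on its range. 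Discarding the $K_B$ leg returns $\langle \Pe_2(s_i), 1_{K_B} \rangle\, \Pe_1(s_i) = \Pe_1(s_i)$, because $\Pe_2(s_i) \in K_B$ forces $\langle \Pe_2(s_i), 1_{K_B} \rangle = 1$, and symmetrically discarding the $K_A$ leg returns $\Pe_2$; thus $\Psi$ satisfies \eqref{eq:compatibility-def-1} and \eqref{eq:compatibility-def-2}.

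The argument involves no genuine obstacle — it is essentially bookkeeping. The only points needing care are getting the direction of the post-processing relation right ($\Pe_1 \prec \id_{S_n}$, not the reverse), and, in the explicit construction, verifying that the auxiliary copy channel $\Phi_D$ and the tensor channel $\Pe_1 \otimes \Pe_2$ are legitimately defined with respect to the tensor products involved, together with the normalization identity $\langle x, 1_K \rangle = 1$ for $x \in K$ that makes the partial traces collapse to $\Pe_1$ and $\Pe_2$.
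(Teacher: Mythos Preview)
Your proof is correct and follows essentially the same route as the paper: establish that $\id_{S_n}$ is self-compatible on a simplex, then use the post-processing relation $\Pe_1 \prec \id_{S_n}$ and Proposition~\ref{prop:compatibility-postProc-compatibility} to pass to $\Pe_1$, and finally use $\Pe_2 \prec \id_{S_n}$ (the paper applies Proposition~\ref{prop:compatibility-postProc-compatibility} once more, you invoke the equivalent Corollary~\ref{coro:compatibility-postProc-id}). The explicit joint channel you append is a welcome addition not present in the paper's proof.
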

\begin{proof}
We know from Theorem \ref{thm:compatibility-noBroadcasting-simplex} that $\id_{S_n}: S_n \to S_n$ is self-compatible. Since $\Pe_1 \prec \id_{S_n}$ it follows from Proposition \ref{prop:compatibility-postProc-compatibility} that $\Pe_1$ and $\id_{S_n}$ are compatible. Repeating the same argument for $\Pe_2 \prec \id_{S_n}$ we get that $\Pe_1$ and $\Pe_2$ are compatible.
\end{proof}

The following result is a generalization of known result that two measurements are compatible if and only if they are both post-proccesings of a single measurement, see \cite{FilippovHeinosaariLeppajarvi-compatibility}.
\begin{proposition} \label{prop:compatibility-postProc-fromSelfCompat}
Let $\Phi \in \chan(K_A, K_B)$ be a self-compatible channel and let $\Phi_1 \in \chan(K_A, K_C)$ and $\Phi_2 \in \chan(K_A, K_D)$ be channels such that $\Phi_1 \prec \Phi$ and $\Phi_2 \prec \Phi$. Then $\Phi_1$ and $\Phi_2$ are compatible.
\end{proposition}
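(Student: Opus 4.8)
The plan is to build the joint channel for $\Phi_1$ and $\Phi_2$ by combining the self-compatibilizer of $\Phi$ with the two post-processing channels that witness $\Phi_1 \prec \Phi$ and $\Phi_2 \prec \Phi$. First I would unpack the hypotheses: since $\Phi$ is self-compatible there is a channel $\Psi \in \chan(K_A, K_B \treal K_B)$ such that discarding either output leg of $\Psi$ returns $\Phi$; since $\Phi_1 \prec \Phi$ there is $\Lambda_1 \in \chan(K_B, K_C)$ with $\Phi_1 = \Lambda_1 \circ \Phi$, and since $\Phi_2 \prec \Phi$ there is $\Lambda_2 \in \chan(K_B, K_D)$ with $\Phi_2 = \Lambda_2 \circ \Phi$. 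These all exist by Definition \ref{def:channels-postProc} and the definition of self-compatibility.

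**The construction.**

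The candidate joint channel is $\Theta = (\Lambda_1 \otimes \Lambda_2) \circ \Psi \in \chan(K_A, K_C \treal K_D)$, i.e. in diagrammatic notation
\begin{equation}
\begin{quantikz}[row sep={\the\originsep,between origins}, align equals at=2]
& \gate[3, nwires={1,3}]{\Theta} & \qw{K_C} \\
& & \\
& & \qw{K_D}
\end{quantikz}
=
\begin{quantikz}[row sep={\the\originsep,between origins}, align equals at=2]
& \gate[3, nwires={1,3}]{\Psi} & \gate{\Lambda_1} & \qw{K_C} \\
& & & \\
& & \gate{\Lambda_2} & \qw{K_D}
\end{quantikz}
\end{equation}
This is a legitimate channel because it is a composition of channels (using that $\Lambda_1 \otimes \Lambda_2$ is a channel on $K_B \treal K_B \to K_C \treal K_D$, which follows from the tensor product structure). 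Then I would verify the two marginal conditions of Definition \ref{def:compatibility-channels-def}. Discarding the $K_D$ leg: since $\Lambda_2$ followed by the unit effect $1_{K_D}$ equals $1_{K_B}$ (because $\Lambda_2^*(1_{K_D}) = 1_{K_B}$ by Proposition \ref{prop:channels-channels-unital}), the $K_D$ branch collapses to discarding the second leg of $\Psi$, which by self-compatibility of $\Phi$ returns $\Phi$; composing with $\Lambda_1$ then gives $\Lambda_1 \circ \Phi = \Phi_1$. Symmetrically, discarding the $K_C$ leg yields $\Lambda_2 \circ \Phi = \Phi_2$. Hence $\Theta$ is a joint channel and $\Phi_1, \Phi_2$ are compatible.

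**Main obstacle.**

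The routine parts are the diagram manipulations; the one genuine subtlety is making sure the ground/discard maps commute past $\Lambda_1$ and $\Lambda_2$ correctly — specifically that "apply $\Lambda_i$ then discard" equals "discard", which is exactly the unitality statement $\Lambda_i^*(1) = 1$ from Proposition \ref{prop:channels-channels-unital}. I would state this explicitly rather than treat it as obvious. A second minor point is that all the tensor products appearing ($K_B \treal K_B$, $K_C \treal K_D$, and whichever $\treal$ is used to state self-compatibility) must be the fixed associative tensor product assumed from Section \ref{sec:tensor} onward, so that $\Lambda_1 \otimes \Lambda_2$ is well-defined as a channel; this is consistent with the blanket assumption already in force. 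Beyond that the argument is a direct diagrammatic computation and there is no real difficulty.
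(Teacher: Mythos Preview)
Your proof is correct. The paper takes a slightly more modular route: it invokes Proposition \ref{prop:compatibility-postProc-compatibility} twice (first to conclude that $\Phi$ and $\Phi_1$ are compatible from self-compatibility of $\Phi$ and $\Phi_1 \prec \Phi$, then again to conclude that $\Phi_1$ and $\Phi_2$ are compatible from $\Phi_2 \prec \Phi$), whereas you build the joint channel $(\Lambda_1 \otimes \Lambda_2)\circ\Psi$ directly in one step. Unwinding the two applications of Proposition \ref{prop:compatibility-postProc-compatibility} yields exactly your construction, so the arguments are the same at the level of the underlying joint channel; the paper's version is terser and reuses an existing lemma, while yours is more explicit and self-contained.
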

\begin{proof}
Since $\Phi \in \chan(K_A, K_B)$ is self-compatible and $\Phi_1 \prec \Phi$, it follows from Proposition \ref{prop:compatibility-postProc-compatibility} that $\Phi$ and $\Phi_1$ are compatible. Repeating the argument for $\Phi_2 \prec \Phi$ we get that $\Phi_1$ and $\Phi_2$ are compatible.
\end{proof}
Using the obtained results, we can prove that a post-processing greatest measurement $m \in \chan(K, S_n)$ exists only if $K$ is a simplex.
\begin{proposition} \label{prop:compatibility-postProc-measurements}
There exists a post-processing greatest measurement $m \in \chan(K, S_n)$ if and only if $K$ is a simplex.
\end{proposition}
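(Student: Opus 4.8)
The plan is to prove the two implications separately; the ``if'' direction is essentially a one-line observation, while the ``only if'' direction is where the content lies, although even there the heavy lifting has already been done by the earlier results on self-compatibility and on incompatibility in non-simplicial state spaces.

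For the ``if'' direction, suppose $K = S_n$. I would check that the identity channel $\id_{S_n} \in \chan(S_n, S_n)$, regarded as an $n$-outcome measurement, is post-processing greatest: for any measurement $m' \in \chan(S_n, S_{n'})$ one has $m' = m' \circ \id_{S_n}$, so $m' \prec \id_{S_n}$ by taking $\Lambda = m'$ in Definition \ref{def:channels-postProc}. No computation is required.

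For the ``only if'' direction, let $m \in \chan(K, S_n)$ be a post-processing greatest measurement; the goal is to force $K$ to be a simplex. First I would recall that every measurement is a measure-and-prepare channel (shown earlier in Section \ref{sec:channels}), so by Proposition \ref{prop:compatibility-MnP} every measurement --- in particular $m$ itself --- is self-compatible. Since $m$ is post-processing greatest, every measurement $m_1 \in \chan(K, S_{n_1})$ satisfies $m_1 \prec m$. Applying Proposition \ref{prop:compatibility-postProc-fromSelfCompat} with the self-compatible channel $m$ and any two measurements $m_1, m_2$ with $m_1 \prec m$ and $m_2 \prec m$, I conclude that $m_1$ and $m_2$ are compatible. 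Hence every pair of measurements on $K$ is compatible. But Theorem \ref{thm:compatibility-noBroadcasting-measurementSimplex} provides a pair of incompatible two-outcome measurements whenever $K$ is not a simplex; this contradiction forces $K = S_n$ for some $n$.

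I do not expect a genuine obstacle here beyond bookkeeping. The one point that must be pinned down is the reading of ``post-processing greatest'': I take it to mean that \emph{every} measurement is a post-processing of $m$ --- the same ``top element'' sense in which $\id_K$ is post-processing greatest among channels via Proposition \ref{prop:channels-postProc-id} --- and it is precisely this property, rather than mere maximality in the preorder, that is invoked in the step ``$m_1 \prec m$ and $m_2 \prec m$'' above. It is also worth stressing that the proof only uses the already-available fact that measurements are self-compatible, so no additional compatibility machinery needs to be developed.
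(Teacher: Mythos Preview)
Your proposal is correct and follows essentially the same route as the paper: use self-compatibility of the greatest measurement (via Proposition \ref{prop:compatibility-MnP}), apply Proposition \ref{prop:compatibility-postProc-fromSelfCompat} to conclude all pairs of measurements are compatible, and then invoke Theorem \ref{thm:compatibility-noBroadcasting-measurementSimplex}. The only difference is that you also spell out the ``if'' direction (identity on $S_n$ as the greatest measurement), which the paper leaves implicit.
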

\begin{proof}
Assume that a post-processing greatest measurement $m \in \chan(K, S_n)$ exists. Then for any two measurements $m_1 \in \chan(K, S_{n_1})$ and $m_2 \in \chan(K, S_{n_2})$ we have $m_1 \prec m$ and $m_2 \prec m$. Since $m$ is measurement, it is self-compatible, see Proposition \ref{prop:compatibility-MnP} . It then follows from Proposition \ref{prop:compatibility-postProc-fromSelfCompat} that $m_1$ and $m_2$ are compatible. Since $m_1$ and $m_2$ were arbitrary, it follows that all measurements $m_1 \in \chan(K, S_{n_1})$ and $m_2 \in \chan(K, S_{n_2})$ are compatible, but then according to Theorem \ref{thm:compatibility-noBroadcasting-measurementSimplex} $K$ is a simplex.
\end{proof}
When comparing Proposition \ref{prop:channels-postProc-id} to Proposition \ref{prop:compatibility-postProc-measurements}, one finds a significant difference between channels and measurements. This difference is going to manifest itself in determining the steerability of a state in the following subsection.

\subsection{Incompatibility witnesses, steering, and Bell non-locality}
It is rather easy to prove that two channels are compatible, we just need to provide the joint channel. But how do we prove that two channels are incompatible? Let
\begin{equation}
\chan^2(K_A; K_B, K_C) = \{ (\Phi_1, \Phi_2) : \Phi_1 \in \chan(K_A, K_B), \Phi_2 \in \chan(K_A, K_B) \}
\end{equation}
be the set of pairs of channels. Since $\chan(K_A, K_B)$ and $\chan(K_A, K_C)$ are convex sets, it is easy to see that $\chan^2(K_A; K_B, K_C)$ is also a convex set, with the convex combination defined as
\begin{equation}
\lambda (\Phi_1, \Phi_2) + (1-\lambda) (\Psi_1, \Psi_2) = (\lambda \Phi_1 + (1-\lambda) \Psi_1, \lambda \Phi_2 + (1-\lambda) \Psi_2),
\end{equation}
where $(\Phi_1, \Phi_2), (\Psi_1, \Psi_2) \in \chan^2(K_A; K_B, K_C)$ and $\lambda \in [0,1]$. Let $\chan \! \chan^2(K_A; K_B, K_C)$ be the set of compatible pairs of channels, i.e., $(\Phi_1, \Phi_2) \in \chan \! \chan^2 (K_A; K_B, K_C)$
only if $\Phi_1$ and $\Phi_2$ are compatible; we clearly have $\chan \! \chan^2(K_A; K_B, K_C) \subset \chan^2(K_A; K_B, K_C)$. One can also easily show that $\chan \! \chan^2(K_A; K_B, K_C)$ is a convex set: let $(\Phi_1, \Phi_2), (\Psi_1, \Psi_2) \in \chan \! \chan^2(K_A; K_B, K_C)$ with joint channels $\Phi, \Psi \in \chan(K_A, K_B \treal K_C)$ respectively. Then it is easy to see that we have
\begin{equation}
\begin{quantikz}[row sep={\the\originsep,between origins}, align equals at=2]
& \gate[3, nwires={1,3}]{\lambda \Phi + (1-\lambda)\Psi} &\ground{} \\
& & \\
& &\qw{}
\end{quantikz}
=
\begin{quantikz}[align equals at=1]
&\lstick{$\lambda$} &\gate{\Phi_1} &\qw{}
\end{quantikz}
+
\begin{quantikz}[align equals at=1]
&\lstick{$(1-\lambda)$} &\gate{\Psi_1} &\qw{}
\end{quantikz}
\end{equation}
and
\begin{equation}
\begin{quantikz}[row sep={\the\originsep,between origins}, align equals at=2]
& \gate[3, nwires={1,3}]{\lambda \Phi + (1-\lambda)\Psi} &\qw{} \\
& & \\
& &\ground{}
\end{quantikz}
=
\begin{quantikz}[align equals at=1]
&\lstick{$\lambda$} &\gate{\Phi_2} &\qw{}
\end{quantikz}
+
\begin{quantikz}[align equals at=1]
&\lstick{$(1-\lambda)$} &\gate{\Psi_2} &\qw{}
\end{quantikz}
\end{equation}
where we have used the linearity of the partial trace, see Example \ref{exm:channels-channels-partialTrace}. We can now use the hyperplane separation theorem \ref{thm:duals-hyperplaneSeparation} to find affine functions $W: \chan (K_A; K_B, K_C) \to \RR$ such that for all pairs of compatible channels $(\Phi_1, \Phi_2) \in \chan \! \chan^2(K_A; K_B, K_C)$ we have $W(\Phi_1, \Phi_2) \geq 0$. Then if $(\Psi_1, \Psi_2) \in \chan^2(K_A; K_B, K_C)$ such that $W(\Psi_1, \Psi_2) < 0$ then $\Psi_1$ and $\Psi_2$ must be incompatible.
\begin{definition}
Let $W: \chan (K_A; K_B, K_C) \to \RR$ be an affine function, such that $W(\Phi_1, \Phi_2) \geq 0$ for all $(\Phi_1, \Phi_2) \in \chan \! \chan^2(K_A; K_B, K_C)$ and such that there exist $(\Psi_1, \Psi_2) \in \chan^2(K_A; K_B, K_C)$ such that $W(\Psi_1, \Psi_2) < 0$. Then we call $W$ the \emph{incompatibility witness} for $(\Psi_1, \Psi_2)$.
\end{definition}
Incompatibility witnesses are an ideal tool to prove incompatibility of pairs fo channels. The only problem is: how does one find the suitable incompatibility witness for a given pair of channels? There are several know constructions: for measurements, one can use a relation between compatible measurements and entanglement breaking channels to obtain an incompatibility witness \cite{Jencova-incomaptibility}, or one can use discrimination tasks with partial immediate information \cite{CarmeliHeinosaariToigo-postMeasStateDiscrimination,CarmeliHeinosaariToigo-incWitness,CarmeliHeinosaariMiyaderaToigo-incWitnessChannels}.

We will present several ideas on how to prove incompatibility of pair of channels. We will not formulate the ideas as incompatibility witnesses, but rather as a more general and operationally motivated strategies. We will roughly follow the ideas presented in \cite{Plavala-channels} and we will comment on how these strategies connect to steering and Bell non-locality. We will start by presenting a construction that does not work, but it introduces the main concept that will be used later on.

Let $\Phi_1 \in \chan(K_A, K_B)$ and $\Phi_2 \in \chan(K_A, K_C)$ be compatible channels and let $\Phi \in \chan(K_A, K_B \treal K_C)$ be the joint channel. Then for every $x_A \in K_A$ there is some $y_{BC} \in K_B \treal K_C$ such that
\begin{equation} \label{eq:compatibility-witness-tr1}
\begin{quantikz}
[row sep={\the\originsep,between origins}, align equals at=2]
&\multiprepareC[3, nwires={1,3}]{y_{BC}} &\ground{} \\
& & \\
& &\qw{}
\end{quantikz}
=
\begin{quantikz}[align equals at=1]
&[\prepfix]\prepareC{x_A} &\gate{\Phi_1} &\qw{}
\end{quantikz}
\end{equation}
and
\begin{equation} \label{eq:compatibility-witness-tr2}
\begin{quantikz}
[row sep={\the\originsep,between origins}, align equals at=2]
&\multiprepareC[3, nwires={1,3}]{y_{BC}} &\qw{} \\
& & \\
& &\ground{}
\end{quantikz}
=
\begin{quantikz}[align equals at=1]
&[\prepfix]\prepareC{x_A} &\gate{\Phi_2} &\qw{}
\end{quantikz}
\end{equation}
This is immediate as one can always take $y_{BC} = \Phi(x_A)$ and then \eqref{eq:compatibility-witness-tr1} and \eqref{eq:compatibility-witness-tr2} follow from Definition \ref{def:compatibility-channels-def}. So then if for some pair of channels $\Phi_1$ and $\Phi_2$ such $y_{BC} \in K_B \treal K_C$ does not exist, then the channels must be incompatible and hence we have obtained a test of incompatibility. Unfortunately this test never works, because we can also take $y_{BC} = \Phi_1(x_A) \otimes \Phi_2(x_A)$ and \eqref{eq:compatibility-witness-tr1} and \eqref{eq:compatibility-witness-tr2} are satisfied even if $\Phi_1$ and $\Phi_2$ are incompatible. We have two opportunities to overcome this problem: we can either use a set of states $\{ x_{1,A}, \ldots, x_{n,A} \} \subset K_A$, or we can use an entangled state $x_{AD} \in K_A \treal K_D$. We can also combine both approaches. We will proceed with formulating the possible strategies to proving the incompatibility of channels: we will formulate our results as necessary conditions for compatibility of channels, violation of these conditions gives proofs of incompatibility of the channels in question. We will also show that using entangled states leads to steering \cite{UolaCostaNguyenGuhne-steering} and Bell non-locality \cite{BrunnerCavalcantiPironioScaraniWehner-BellNonlocality}, two well known phenomena in quantum information theory.

\begin{proposition} \label{prop:compatibility-witness-xi}
Let $\{ x_{1,A}, \ldots, x_{n,A} \} \subset K_A$ be states such that for some $\alpha_i \in \RR$, $\sum_{i=1}^n \alpha_i = 0$ we have $\sum_{i=1}^n \alpha_i x_{i,A} = 0$. Let $\Phi_1 \in \chan(K_A, K_B)$ and $\Phi_2 \in \chan(K_A, K_C)$ be compatible channels, then there are $y_{i,BC} \in K_B \treal K_C$, $i \in \{1, \ldots, n\}$, such that
\begin{equation} \label{eq:compatibility-witness-xi-tr1}
\begin{quantikz}
[row sep={\the\originsep,between origins}, align equals at=2]
&\multiprepareC[3, nwires={1,3}]{y_{i,BC}} &\ground{} \\
& & \\
& &\qw{}
\end{quantikz}
=
\begin{quantikz}[align equals at=1]
&[\prepfix]\prepareC{x_{i,A}} &\gate{\Phi_1} &\qw{}
\end{quantikz}
\end{equation}
and
\begin{equation} \label{eq:compatibility-witness-xi-tr2}
\begin{quantikz}
[row sep={\the\originsep,between origins}, align equals at=2]
&\multiprepareC[3, nwires={1,3}]{y_{i,BC}} &\qw{} \\
& & \\
& &\ground{}
\end{quantikz}
=
\begin{quantikz}[align equals at=1]
&[\prepfix]\prepareC{x_{i,A}} &\gate{\Phi_2} &\qw{}
\end{quantikz}
\end{equation}
for all $i \in \{1, \ldots, n\}$, and
\begin{equation} \label{eq:compatibility-witness-xi-sum0}
\sum_{i=1}^n \alpha_i y_{i,BC} = 0.
\end{equation}
\end{proposition}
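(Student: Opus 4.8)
The plan is to take the joint channel itself as the source of the witnesses. Since $\Phi_1$ and $\Phi_2$ are compatible, Definition \ref{def:compatibility-channels-def} supplies a channel $\Phi \in \chan(K_A, K_B \treal K_C)$ with $(\id_{K_B} \otimes 1_{K_C}) \circ \Phi = \Phi_1$ and $(1_{K_B} \otimes \id_{K_C}) \circ \Phi = \Phi_2$. First I would simply set
\[
y_{i,BC} = \Phi(x_{i,A}), \qquad i \in \{1, \ldots, n\},
\]
which is a legitimate element of $K_B \treal K_C$ because $\Phi$ is a channel and $x_{i,A} \in K_A$.

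With this choice, \eqref{eq:compatibility-witness-xi-tr1} and \eqref{eq:compatibility-witness-xi-tr2} are immediate: grounding the $K_C$-leg of $y_{i,BC} = \Phi(x_{i,A})$ gives $\bigl((\id_{K_B} \otimes 1_{K_C}) \circ \Phi\bigr)(x_{i,A}) = \Phi_1(x_{i,A})$, and grounding the $K_B$-leg gives $\Phi_2(x_{i,A})$ in the same way. Diagrammatically this is just precomposing the two defining identities of the joint channel from Definition \ref{def:compatibility-channels-def} with the preparation $x_{i,A}$.

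It remains to verify \eqref{eq:compatibility-witness-xi-sum0}. Here I would pass to the linear extension $\Phi \colon A(K_A)^* \to A(K_B)^* \otimes A(K_C)^*$ of the channel, as constructed in Section \ref{sec:channels} (using $A(K_B \treal K_C)^* = A(K_B)^* \otimes A(K_C)^*$, which holds by tomographic locality, Lemma \ref{lemma:tensor-bipartite-tensorProd}). Since that extension is linear and agrees with $\Phi$ on $K_A$,
\[
\sum_{i=1}^n \alpha_i y_{i,BC} = \sum_{i=1}^n \alpha_i \Phi(x_{i,A}) = \Phi\!\left( \sum_{i=1}^n \alpha_i x_{i,A} \right) = \Phi(0) = 0,
\]
where the third equality is the hypothesis $\sum_{i=1}^n \alpha_i x_{i,A} = 0$ (which, incidentally, already forces $\sum_{i=1}^n \alpha_i = 0$ on pairing with $1_{K_A}$).

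There is no real obstacle in this argument; the single point that needs explicit care is that $\Phi$ is only affine as a map $K_A \to K_B \treal K_C$, so the step $\sum_i \alpha_i \Phi(x_{i,A}) = \Phi\bigl(\sum_i \alpha_i x_{i,A}\bigr)$ must be justified by invoking the canonical linear extension of $\Phi$ to $A(K_A)^*$ rather than treating $\Phi$ as literally linear on the state space. Everything else is bookkeeping with the partial traces already set up in Section \ref{sec:tensor} and Example \ref{exm:channels-channels-partialTrace}.
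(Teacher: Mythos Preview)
Your proposal is correct and follows exactly the same approach as the paper: set $y_{i,BC} = \Phi(x_{i,A})$ for the joint channel $\Phi$, read off the marginal conditions from Definition \ref{def:compatibility-channels-def}, and obtain \eqref{eq:compatibility-witness-xi-sum0} from linearity of (the extension of) $\Phi$. Your explicit mention of passing to the linear extension on $A(K_A)^*$ is a small clarification the paper leaves implicit, but otherwise the arguments are identical.
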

\begin{proof}
Let $\Phi \in \chan(K_A, K_B \treal K_C)$ be the joint channel of $\Phi_1$ and $\Phi_2$. Let $y_{i,BC} = \Phi(x_{i,A})$, then \eqref{eq:compatibility-witness-xi-tr1} and \eqref{eq:compatibility-witness-xi-tr2} follow. Moreover we have
\begin{equation}
\sum_{i=1}^n \alpha_i y_{i,BC} = \sum_{i=1}^n \alpha_i \Phi(x_{i,A}) = \Phi \left( \sum_{i=1}^n \alpha_i x_{i,A} \right) = \Phi(0) = 0
\end{equation}
so also \eqref{eq:compatibility-witness-xi-sum0} holds.
\end{proof}

The main idea of Proposition \ref{prop:compatibility-witness-xi} is that for a given pair of channels $\Phi_1 \in \chan(K_A, K_B)$ and $\Phi_2 \in \chan(K_A, K_C)$ we can take some test subset $\{ x_{1,A}, \ldots, x_{n,A} \} \subset K_A$ and test whether there is some set $\{ y_{1,BC}, \ldots, y_{n,BC}\} \subset K_B \treal K_C$ such that \eqref{eq:compatibility-witness-xi-tr1}, \eqref{eq:compatibility-witness-xi-tr2} and \eqref{eq:compatibility-witness-xi-sum0} are satisfied.
\begin{definition}
Let $\{ x_{1,A}, \ldots, x_{n,A} \} \subset K_A$ be states and let $\Phi_1 \in \chan(K_A, K_B)$ and $\Phi_2 \in \chan(K_A, K_C)$ be channels. We say that $x_{1,A}, \ldots, x_{n,A}$ \emph{certify incompatibility} of $\Phi_1$ and $\Phi_2$ if there does not exist any set of states $\{y_{1,BC}, \ldots, y_{n,BC} \} \subset K_B \treal K_C$ that would satisfy \eqref{eq:compatibility-witness-xi-tr1}, \eqref{eq:compatibility-witness-xi-tr2} and \eqref{eq:compatibility-witness-xi-sum0} for every set of numbers $\{\alpha_1, \ldots, \alpha_n\} \subset \RR$, such that $\sum_{i=1}^n \alpha_i = 0$ and $\sum_{i=1}^n \alpha_i x_{i,A} = 0$.
\end{definition}

Note that we have already used the idea behind Proposition \ref{prop:compatibility-witness-xi} to prove Theorem \ref{thm:compatibility-noBroadcasting-simplex}. The main idea there is that $\id_K$ is self-compatible only if the set of pure states $\{ x_{1,A}, \ldots, x_{n,A} \}$ is affinely independent. This is because testing incompatibility on affinely independent states is essentially the same as trying to find incompatible channels on simplex. This is formalized as follows:
\begin{proposition}
Let $\{ x_{1,A}, \ldots, x_{n,A} \} \subset K_A$ be affinely independent points and let $\Phi_1 \in \chan(K_A, K_B)$, $\Phi_2 \in \chan(K_A, K_C)$, then $x_{1,A}, \ldots, x_{n,A}$ do not certify the incompatibility of $\Phi_1$ and $\Phi_2$.
\end{proposition}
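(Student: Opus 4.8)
The plan is to exhibit an explicit family $\{y_{1,BC}, \ldots, y_{n,BC}\} \subset K_B \treal K_C$ witnessing that the points do not certify incompatibility. The key observation is that the constraint \eqref{eq:compatibility-witness-xi-sum0} is vacuous in this situation: since $\{x_{1,A}, \ldots, x_{n,A}\}$ are affinely independent, the only coefficients $\alpha_1, \ldots, \alpha_n \in \RR$ with $\sum_{i=1}^n \alpha_i = 0$ and $\sum_{i=1}^n \alpha_i x_{i,A} = 0$ are $\alpha_1 = \cdots = \alpha_n = 0$, and then $\sum_{i=1}^n \alpha_i y_{i,BC} = 0$ holds trivially regardless of how the $y_{i,BC}$ are chosen. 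So it suffices to produce states $y_{i,BC} \in K_B \treal K_C$ satisfying only \eqref{eq:compatibility-witness-xi-tr1} and \eqref{eq:compatibility-witness-xi-tr2}.

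First I would set $y_{i,BC} = \Phi_1(x_{i,A}) \otimes \Phi_2(x_{i,A})$ for each $i \in \{1,\ldots,n\}$. Since $\Phi_1(x_{i,A}) \in K_B$ and $\Phi_2(x_{i,A}) \in K_C$, this is a product state, and in particular $y_{i,BC} \in K_B \tmin K_C \subset K_B \treal K_C$ by Proposition~\ref{prop:tensor-bipartite-inclusions}. Next I would check the two marginal conditions. Applying the unit effect $1_{K_C}$ to the second leg and using $\langle \Phi_2(x_{i,A}), 1_{K_C} \rangle = 1$ (which holds because $\Phi_2(x_{i,A})$ is a state, cf.\ Proposition~\ref{prop:channels-channels-unital}) together with the behaviour of the partial trace on product states as recorded in \eqref{eq:tensor-partial-yAyBfB}, one obtains that the $K_B$-marginal of $y_{i,BC}$ equals $\Phi_1(x_{i,A})$, which is exactly \eqref{eq:compatibility-witness-xi-tr1}; symmetrically, applying $1_{K_B}$ to the first leg and using $\langle \Phi_1(x_{i,A}), 1_{K_B} \rangle = 1$ gives that the $K_C$-marginal equals $\Phi_2(x_{i,A})$, which is \eqref{eq:compatibility-witness-xi-tr2}.

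Having verified that this single family $\{y_{1,BC}, \ldots, y_{n,BC}\}$ satisfies \eqref{eq:compatibility-witness-xi-tr1}, \eqref{eq:compatibility-witness-xi-tr2}, and (vacuously) \eqref{eq:compatibility-witness-xi-sum0} for every admissible choice of $\alpha_1,\ldots,\alpha_n$, it follows directly from the definition of certifying incompatibility that $x_{1,A}, \ldots, x_{n,A}$ do not certify the incompatibility of $\Phi_1$ and $\Phi_2$. I do not expect any genuine obstacle here: the only point worth isolating is that the compatibility constraint \eqref{eq:compatibility-witness-xi-sum0} carries no information when the test set is affinely independent, which is precisely the phenomenon flagged in the paragraph preceding the statement --- on affinely independent states, ``testing incompatibility'' reduces to the trivial problem of finding incompatible channels on a simplex, where none exist.
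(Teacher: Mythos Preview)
Your proposal is correct and follows essentially the same route as the paper: both arguments observe that affine independence forces all admissible $\alpha_i$ to vanish, making \eqref{eq:compatibility-witness-xi-sum0} vacuous, and then take $y_{i,BC} = \Phi_1(x_{i,A}) \otimes \Phi_2(x_{i,A})$ to satisfy the marginal conditions. Your write-up is simply a bit more explicit in verifying the marginals and in noting that the product state lies in $K_B \tmin K_C \subset K_B \treal K_C$.
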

\begin{proof}
Since $\{ x_{1,A}, \ldots, x_{n,A} \}$ are affinely independent, we have that for any $\alpha_i \in \RR$, $i \in \{1, \ldots, n\}$, such that $\sum_{i=1}^n \alpha_i = 0$ and $\sum_{i=1}^n \alpha_i x_i = 0$ we must have $\alpha_i = 0$ for all $i \in \{1, \ldots, n\}$. It then follows that \eqref{eq:compatibility-witness-xi-sum0} holds for any set $\{y_{1,BC}, \ldots, y_{n,BC}\} \subset K_B \treal K_C$. So we can simply take $y_{i,BC} = \Phi_1(x_{i,A}) \otimes \Phi_2(x_{i,A})$ for all $i \in \{1, \ldots, n\}$. We have already argued that \eqref{eq:compatibility-witness-xi-sum0} holds and it is straightforward to check that also \eqref{eq:compatibility-witness-xi-tr1} and \eqref{eq:compatibility-witness-xi-tr2} hold.
\end{proof}

We will also show that for large enough collection of states $\{ x_{1,A}, \ldots, x_{n,A} \} \subset K_A$, the test coming from Proposition \ref{prop:compatibility-witness-xi} must be conclusive, in the sense that it must either prove or disprove compatibility of the two channels.
\begin{proposition} \label{prop:compatibility-witness-xi-sufficient}
Let $\{ x_{1,A}, \ldots, x_{n,A} \} \subset K_A$ contain all pure states, i.e., all extreme points of $K_A$. Let $\Phi_1 \in \chan(K_A, K_B)$ and $\Phi_2 \in \chan(K_A, K_C)$, then $\Phi_1$ and $\Phi_2$ are incompatible if and only if $x_{1,A}, \ldots, x_{n,A}$ certify incompatibility of $\Phi_1$ and $\Phi_2$.
\end{proposition}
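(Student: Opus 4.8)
The statement is an equivalence, and the ``if'' direction is immediate: if $\Phi_1$ and $\Phi_2$ are compatible, Proposition~\ref{prop:compatibility-witness-xi} supplies states $y_{i,BC} \in K_B \treal K_C$ satisfying \eqref{eq:compatibility-witness-xi-tr1}, \eqref{eq:compatibility-witness-xi-tr2} and \eqref{eq:compatibility-witness-xi-sum0} for every admissible choice of the $\alpha_i$, so $x_{1,A},\ldots,x_{n,A}$ do not certify incompatibility; contrapositively, if they certify incompatibility then $\Phi_1$ and $\Phi_2$ are incompatible. The content of the proposition is therefore the converse, which I would prove by contraposition: assuming $x_{1,A},\ldots,x_{n,A}$ do \emph{not} certify incompatibility, the goal is to construct a joint channel for $\Phi_1$ and $\Phi_2$.

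So suppose there are states $y_{i,BC} \in K_B \treal K_C$, $i \in \{1,\ldots,n\}$, satisfying \eqref{eq:compatibility-witness-xi-tr1} and \eqref{eq:compatibility-witness-xi-tr2} and with $\sum_{i=1}^n \alpha_i y_{i,BC} = 0$ whenever $\sum_{i=1}^n \alpha_i = 0$ and $\sum_{i=1}^n \alpha_i x_{i,A} = 0$. The first step is to notice that the constraint $\sum_i \alpha_i = 0$ is redundant: pairing $\sum_i \alpha_i x_{i,A} = 0$ with $1_{K_A}$ and using $\< x_{i,A}, 1_{K_A} \> = 1$ shows that $\sum_i \alpha_i x_{i,A} = 0$ already forces $\sum_i \alpha_i = 0$. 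Hence the hypothesis simplifies to the implication that $\sum_i \alpha_i x_{i,A} = 0$ entails $\sum_i \alpha_i y_{i,BC} = 0$.

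The main step is then a standard factorization of linear maps. Let $L \colon \RR^n \to A(K_A)^*$ and $M \colon \RR^n \to \linspan(K_B \treal K_C)$ be the linear maps with $L(e_i) = x_{i,A}$ and $M(e_i) = y_{i,BC}$. Since $\{x_{1,A},\ldots,x_{n,A}\}$ contains $\ext(K_A)$ and is contained in $K_A$, Theorem~\ref{thm:basic-stateSpace-convHull} gives $K_A = \conv(\{x_{1,A},\ldots,x_{n,A}\})$, so $\linspan(\{x_{1,A},\ldots,x_{n,A}\}) = A(K_A)^*$ and $L$ is surjective. The simplified hypothesis says exactly that $\ker L \subseteq \ker M$, so $M$ factors through $L$: there is a unique linear map $\Phi \colon A(K_A)^* \to \linspan(K_B \treal K_C)$ with $\Phi(x_{i,A}) = y_{i,BC}$ for all $i$. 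Writing an arbitrary $x_A \in K_A$ as a convex combination $x_A = \sum_i \lambda_i x_{i,A}$, linearity of $\Phi$ and convexity of $K_B \treal K_C$ give $\Phi(x_A) = \sum_i \lambda_i y_{i,BC} \in K_B \treal K_C$, so $\Phi \in \chan(K_A, K_B \treal K_C)$.

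It remains to verify that $\Phi$ is a compatibilizer. Applying the partial trace over $K_C$ to $\Phi(x_A) = \sum_i \lambda_i y_{i,BC}$ and using its linearity (Example~\ref{exm:channels-channels-partialTrace}), \eqref{eq:compatibility-witness-xi-tr1}, and the affinity of $\Phi_1$ yields $\sum_i \lambda_i \Phi_1(x_{i,A}) = \Phi_1(x_A)$; the partial trace over $K_B$ gives $\Phi_2(x_A)$ in the same way via \eqref{eq:compatibility-witness-xi-tr2}. Thus $\Phi$ satisfies \eqref{eq:compatibility-def-1} and \eqref{eq:compatibility-def-2}, so $\Phi_1$ and $\Phi_2$ are compatible, which completes the contrapositive. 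The only genuine obstacle in the argument is the well-definedness of $\Phi$, i.e.\ the inclusion $\ker L \subseteq \ker M$; this is precisely what condition \eqref{eq:compatibility-witness-xi-sum0} is designed to provide, once the normalization observation has reduced it to a containment of kernels. Everything else is routine bookkeeping with linear extensions and partial traces.
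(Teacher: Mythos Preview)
Your proof is correct and follows essentially the same approach as the paper: assume non-certification, define $\Phi$ on the $x_{i,A}$ by $\Phi(x_{i,A}) = y_{i,BC}$, check well-definedness using condition \eqref{eq:compatibility-witness-xi-sum0}, and verify the marginal conditions. The only cosmetic difference is that you package the well-definedness step as the linear-algebra fact ``$\ker L \subseteq \ker M$ implies $M$ factors through $L$'', whereas the paper verifies it by hand by comparing two affine decompositions of an arbitrary point; your observation that $\sum_i \alpha_i = 0$ is forced by $\sum_i \alpha_i x_{i,A} = 0$ is exactly what the paper uses when it applies $1_{K_A}$ to two decompositions.
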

\begin{proof}
For simplicity, let $\{ x_{1,A}, \ldots, x_{n,A} \} \subset K_A$ be the set of all pure states and assume that $x_{1,A}, \ldots, x_{n,A}$ do not certify incompatibility of $\Phi_1$ and $\Phi_2$. Then there are $\{ y_{1,BC}, \ldots, y_{n, BC} \} \subset K_B \treal K_C$ that satisfy \eqref{eq:compatibility-witness-xi-tr1}, \eqref{eq:compatibility-witness-xi-tr2} and \eqref{eq:compatibility-witness-xi-sum0} for every set of numbers $\{\alpha_1, \ldots, \alpha_n\} \subset \RR$, such that $\sum_{i=1}^n \alpha_i = 0$ and $\sum_{i=1}^n \alpha_i x_{i,A} = 0$. Define a channel $\Phi \in \chan(K_A, K_B \treal K_C)$ by $\Phi(x_{i,A}) = y_{i,BC}$ and extended to the rest of $K_A$ by convexity. To see that $\Phi$ is well defined and convex, it is sufficient to check that for any $z_A \in K_A$ and any two affine decompositions of $z_A$ given as
\begin{align}
z_A &= \sum_{i=1}^n \beta^1_i x_{i,A} \label{eq:compatibility-witness-xi-sufficient-beta1} \\
z_A &= \sum_{i=1}^n \beta^2_i x_{i,A} \label{eq:compatibility-witness-xi-sufficient-beta2}
\end{align}
we have
\begin{equation} \label{eq:compatibility-witness-xi-sufficient-equal}
\sum_{i=1}^n \beta^1_i \Phi(x_{i,A}) = \sum_{i=1}^n \beta^2_i \Phi(x_{i,A}).
\end{equation}
Applying the unit effect $1_{K_A}$ to \eqref{eq:compatibility-witness-xi-sufficient-beta1} and \eqref{eq:compatibility-witness-xi-sufficient-beta2} yields
$\sum_{i=1}^n \beta^1_i = \sum_{i=1}^n \beta^2_i = 1$. We have $\sum_{i=1}^n (\beta^1_i - \beta^2_i) x_{i,A} = 0$ and so $\beta^1_i - \beta^2_i = \alpha_i$ is a set such that $\sum_{i=1}^n \alpha_i = 0$ and $\sum_{i=1}^n \alpha_i x_{i,A} = 0$. From \eqref{eq:compatibility-witness-xi-sum0} we get $\sum_{i=1}^n (\beta^1_i - \beta^2_i) \Phi(x_{i,A}) = 0$, it follows that \eqref{eq:compatibility-witness-xi-sufficient-equal} holds and that $\Phi$ is well-defined. It follows from \eqref{eq:compatibility-witness-xi-tr1} and \eqref{eq:compatibility-witness-xi-tr2} that $\Phi$ is a joint channel of $\Phi_1$ and $\Phi_2$. Hence $\Phi_1$ and $\Phi_2$ are compatible.
\end{proof}
Under the rug, we have assumed in Proposition \ref{prop:compatibility-witness-xi-sufficient} that $K_A$ has finitely many extreme points, i.e., that $K_A$ is a polytope. But this assumption is not necessary and one can easily extend the result of Proposition \ref{prop:compatibility-witness-xi-sufficient} to all state spaces using Carath\'{e}odory theorem \ref{thm:basic-stateSpace-Caratheodory}. One can also find a relation between certifying incompatibility and post-processing preorder of channels.
\begin{proposition}
Let $\{ x_{1,A}, \ldots, x_{n,A} \} \subset K_A$ and let $\Phi_1 \in \chan(K_A, K_B)$, $\Phi_2 \in \chan(K_A, K_C)$ and $\Phi_3  \in \chan(K_A, K_D)$ be such that $\Phi_3 \prec \Phi_2$. Then if $x_{1,A}, \ldots, x_{n,A}$ certify incompatibility of $\Phi_1$ and $\Phi_3$, then $x_{1,A}, \ldots, x_{n,A}$ also certify incompatibility of $\Phi_1$ and $\Phi_2$.
\end{proposition}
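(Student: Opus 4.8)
The plan is to prove the contrapositive: assuming that $x_{1,A},\ldots,x_{n,A}$ do \emph{not} certify incompatibility of $\Phi_1$ and $\Phi_2$, I will show that they also do not certify incompatibility of $\Phi_1$ and $\Phi_3$. First I would unpack the hypothesis: there is a set $\{y_{1,BC},\ldots,y_{n,BC}\}\subset K_B\treal K_C$ such that for every $i\in\{1,\ldots,n\}$ the reduced state of $y_{i,BC}$ on $K_B$ equals $\Phi_1(x_{i,A})$ and the reduced state of $y_{i,BC}$ on $K_C$ equals $\Phi_2(x_{i,A})$, and such that $\sum_{i=1}^n\alpha_i y_{i,BC}=0$ for every choice of $\{\alpha_1,\ldots,\alpha_n\}\subset\RR$ with $\sum_{i=1}^n\alpha_i=0$ and $\sum_{i=1}^n\alpha_i x_{i,A}=0$.

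Since $\Phi_3\prec\Phi_2$, Definition \ref{def:channels-postProc} provides a channel $\Lambda\in\chan(K_C,K_D)$ with $\Phi_3=\Lambda\circ\Phi_2$. I would then set $z_{i,BD}=(\id_{K_B}\otimes\Lambda)(y_{i,BC})$ for each $i$ --- the same move used in the proof of Proposition \ref{prop:compatibility-postProc-compatibility} --- which yields elements of $K_B\treal K_D$. It remains to verify the three defining conditions for the pair $(\Phi_1,\Phi_3)$. For \eqref{eq:compatibility-witness-xi-tr1}: applying $1_{K_D}$ to the $K_D$ leg of $z_{i,BD}$ and using $\Lambda^*(1_{K_D})=1_{K_C}$ from Proposition \ref{prop:channels-channels-unital}, one gets that the reduced state of $z_{i,BD}$ on $K_B$ coincides with the reduced state of $y_{i,BC}$ on $K_B$, hence equals $\Phi_1(x_{i,A})$. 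For \eqref{eq:compatibility-witness-xi-tr2}: tracing out the $K_B$ leg of $z_{i,BD}$ commutes with applying $\Lambda$ to the other leg, so it equals $\Lambda$ applied to the reduced state of $y_{i,BC}$ on $K_C$, that is $\Lambda(\Phi_2(x_{i,A}))=\Phi_3(x_{i,A})$. For \eqref{eq:compatibility-witness-xi-sum0}: by linearity of $\id_{K_B}\otimes\Lambda$ we have $\sum_{i=1}^n\alpha_i z_{i,BD}=(\id_{K_B}\otimes\Lambda)\big(\sum_{i=1}^n\alpha_i y_{i,BC}\big)=(\id_{K_B}\otimes\Lambda)(0)=0$ whenever $\sum_{i=1}^n\alpha_i=0$ and $\sum_{i=1}^n\alpha_i x_{i,A}=0$. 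Hence $\{z_{1,BD},\ldots,z_{n,BD}\}$ shows that $x_{1,A},\ldots,x_{n,A}$ do not certify incompatibility of $\Phi_1$ and $\Phi_3$, completing the contrapositive.

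The one point that needs care is the claim that $z_{i,BD}=(\id_{K_B}\otimes\Lambda)(y_{i,BC})$ genuinely belongs to the bipartite state space $K_B\treal K_D$, and not merely to $A(K_B)^*\otimes A(K_D)^*$; this is exactly the implicit appeal to complete positivity of $\Lambda$ with respect to $K_B\treal K_C\to K_B\treal K_D$ that already underlies the proof of Proposition \ref{prop:compatibility-postProc-compatibility}, and it is justified here in the same way under the standing assumption that the tensor product $\treal$ is defined whenever needed. Apart from that, the argument is routine bookkeeping with partial traces, and it may read most cleanly if presented diagrammatically, in parallel with the diagrams in the proof of Proposition \ref{prop:compatibility-postProc-compatibility}.
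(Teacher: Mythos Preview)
Your proposal is correct and follows essentially the same approach as the paper: both prove the contrapositive by taking the witness states $\{y_{i,BC}\}$ for the pair $(\Phi_1,\Phi_2)$, applying $\id_{K_B}\otimes\Lambda$ (the paper calls this channel $\Psi$) to obtain $\{z_{i,BD}\}$, and checking the three conditions. Your explicit mention of the complete positivity caveat for $\id_{K_B}\otimes\Lambda$ is a point the paper leaves implicit.
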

\begin{proof}
The proof follows by contradiction: assume that $x_{1,A}, \ldots, x_{n,A}$ do not certify incompatibility of $\Phi_1$ and $\Phi_2$, and let $\{ y_{1,BC}, \ldots, y_{n,BC} \} \subset K_B \treal K_C$ be the corresponding states satisfying \eqref{eq:compatibility-witness-xi-tr1}, \eqref{eq:compatibility-witness-xi-tr2}, and \eqref{eq:compatibility-witness-xi-sum0}. Since $\Phi_3 \prec \Phi_2$, there is a channel $\Psi \in \chan(K_C, K_D)$ such that
\begin{equation}
\begin{quantikz}[align equals at=1]
&\gate{\Phi_3} &\qw{}
\end{quantikz}
=
\begin{quantikz}[align equals at=1]
&\gate{\Phi_2} &\gate{\Psi} &\qw{}
\end{quantikz}
\end{equation}
Now let $\{ z_{1, BD}, \ldots, z_{n,BD} \} \subset K_B \treal K_D$ be defined as
\begin{equation}
\begin{quantikz}
[row sep={\the\originsep,between origins}, align equals at=2]
&\multiprepareC[3, nwires={1,3}]{z_{i,BD}} &\qw{} \\
& & \\
& &\qw{}
\end{quantikz}
=
\begin{quantikz}
[row sep={\the\originsep,between origins}, align equals at=2]
&[\prepfix]\multiprepareC[3, nwires={1,3}]{y_{i,BC}} &\qw{} &\qw{} \\
& & & \\
& &\gate{\Psi} &\qw{}
\end{quantikz}
\end{equation}
Then it is straightforward to check that $\{ z_{1, BD}, \ldots, z_{n,BD} \}$ satisfies \eqref{eq:compatibility-witness-xi-tr1}, \eqref{eq:compatibility-witness-xi-tr2}, and \eqref{eq:compatibility-witness-xi-sum0} for channels $\Phi_1$ and $\Phi_3$, so $x_{1,A}, \ldots, x_{n,A}$ can not certify the incompatibility of $\Phi_1$ and $\Phi_3$.
\end{proof}

Now we will proceed to testing incompatibility of channels using entangled states.
\begin{proposition} \label{prop:compatibility-witness-ent}
Let $x_{AD} \in K_A \treal K_D$ and let $\Phi_1 \in \chan(K_A, K_B)$, $\Phi_2 \in \chan(K_A, K_C)$ be compatible channels. Then there is $y_{BCD} \in K_B \treal K_C \treal K_D$ such that
\begin{equation} \label{eq:compatibility-witness-ent-tr1}
\begin{quantikz}
[row sep=\the\rowsep, align equals at=2]
&\multiprepareC[3]{y_{BCD}} &\qw{K_B} \\
& &\ground{}{K_C} \\
& &\qw{K_D}
\end{quantikz}
=
\begin{quantikz}[row sep=\the\rowsep, align equals at=1.5]
&[\prepfix]\multiprepareC[2]{x_{AD}} &\gate{\Phi_1} \qw{K_A} &\qw{K_B} \\
& &\qw &\qw{K_D}
\end{quantikz}
\end{equation}
and
\begin{equation} \label{eq:compatibility-witness-ent-tr2}
\begin{quantikz}[align equals at=2]
&\multiprepareC[3]{y_{BCD}} &\ground{}{K_B} \\
& &\qw{K_C} \\
& &\qw{K_D}
\end{quantikz}
=
\begin{quantikz}[row sep=\the\rowsep, align equals at=1.5]
&[\prepfix]\multiprepareC[2]{x_{AD}} &\gate{\Phi_2} \qw{K_A} &\qw{K_C} \\
& &\qw &\qw{K_D}
\end{quantikz}
\end{equation}
hold.
\end{proposition}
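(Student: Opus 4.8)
The plan is to mimic the proof of Proposition~\ref{prop:compatibility-witness-xi} and exploit the joint channel directly. Since $\Phi_1$ and $\Phi_2$ are compatible, Definition~\ref{def:compatibility-channels-def} gives a joint channel $\Phi \in \chan(K_A, K_B \treal K_C)$ satisfying \eqref{eq:compatibility-def-1} and \eqref{eq:compatibility-def-2}. I would then simply set
\begin{equation}
y_{BCD} = (\Phi \otimes \id_{K_D})(x_{AD}),
\end{equation}
that is, apply the joint channel to the $K_A$-leg of $x_{AD}$ while leaving the $K_D$-leg untouched.

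Next I would verify \eqref{eq:compatibility-witness-ent-tr1}. Tracing out the $K_C$-leg of $y_{BCD}$ and using that the partial trace is linear and commutes with the identity on $K_D$ (Example~\ref{exm:channels-channels-partialTrace}), one gets that $(\id_{K_B} \otimes 1_{K_C} \otimes \id_{K_D})(y_{BCD})$ equals $\big( (\id_{K_B} \otimes 1_{K_C}) \circ \Phi \big) \otimes \id_{K_D}$ applied to $x_{AD}$; by \eqref{eq:compatibility-def-1} the composite $(\id_{K_B} \otimes 1_{K_C}) \circ \Phi$ is exactly $\Phi_1$, so this is $(\Phi_1 \otimes \id_{K_D})(x_{AD})$, which is the right-hand side of \eqref{eq:compatibility-witness-ent-tr1}. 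The verification of \eqref{eq:compatibility-witness-ent-tr2} is entirely symmetric: trace out the $K_B$-leg and use \eqref{eq:compatibility-def-2}. Both identities can be made fully rigorous by checking equality of the closed diagrams obtained after pairing with arbitrary product effects $g_B \otimes g_C \otimes g_D$ and invoking tomographic locality, exactly as in the proof of Theorem~\ref{thm:tensor-partial-monogamyTrace}.

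The step requiring care — the one I would flag as the main obstacle — is showing that $y_{BCD}$ is a genuine \emph{state}, i.e.\ $y_{BCD} \in K_B \treal K_C \treal K_D$ rather than merely an element of $A(K_B)^* \otimes A(K_C)^* \otimes A(K_D)^*$. This is precisely the assertion that the joint channel $\Phi$ is completely positive with respect to $K_A \treal K_D \to (K_B \treal K_C) \treal K_D$, which by associativity equals $K_B \treal K_C \treal K_D$. Under the standing convention of Section~\ref{sec:channels} that $\treal$ is fixed and complete positivity is taken with respect to $\treal$, applying a channel to one leg of a composite system yields a valid state, so $y_{BCD} \in K_B \treal K_C \treal K_D$ follows; note that this is already presupposed by the right-hand sides of \eqref{eq:compatibility-witness-ent-tr1} and \eqref{eq:compatibility-witness-ent-tr2}, which involve $\Phi_1 \otimes \id_{K_D}$ and $\Phi_2 \otimes \id_{K_D}$. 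If one prefers not to assume this, the statement can instead be read for $\treal = \tmin$ or $\treal = \tmax$, where complete positivity is automatic by Propositions~\ref{prop:channels-CP-minTensor} and~\ref{prop:channels-CP-maxTensor}. Once this point is settled, the remainder is the routine diagrammatic bookkeeping described above.
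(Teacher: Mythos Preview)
Your proof is correct and follows exactly the paper's approach: set $y_{BCD} = (\Phi \otimes \id_{K_D})(x_{AD})$ with $\Phi$ the joint channel, and read off \eqref{eq:compatibility-witness-ent-tr1}--\eqref{eq:compatibility-witness-ent-tr2} from \eqref{eq:compatibility-def-1}--\eqref{eq:compatibility-def-2}. You are in fact more careful than the paper, which simply writes the diagram and says the conclusions ``follow immediately'' without commenting on the complete-positivity point you raise.
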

\begin{proof}
Take
\begin{equation}
\begin{quantikz}[align equals at=2]
&\multiprepareC[3]{y_{BCD}} &\qw{K_B} \\
& &\qw{K_C} \\
& &\qw{K_D}
\end{quantikz}
=
\begin{quantikz}[row sep=\the\rowsep, align equals at=3]
& &\gate[3, nwires={1,3}]{\Phi} &\qw{K_B} \\
&[\prepfix]\multiprepareC[3]{x_{AD}} &\qw{K_A} & \\
& & &\qw{K_C} \\
& &\qw &\qw{K_D}
\end{quantikz}
\end{equation}
where $\Phi \in \chan(K_A, K_B \treal K_C)$ is the joint channel of $\Phi_1$ and $\Phi_2$. Then \eqref{eq:compatibility-witness-ent-tr1} and \eqref{eq:compatibility-witness-ent-tr2} follow immediately.
\end{proof}
We can again use Proposition \ref{prop:compatibility-witness-ent} as a strategy to prove incompatibility of channels $\Phi_1$ and $\Phi_2$. We can simply select a state space $K_D$ and $x_{AD} \in K_A \treal K_D$ and check whether suitable $y_{BCD} \in K_B \treal K_C \treal K_D$ exists. But note that the state $x_{AD}$ can not be chosen randomly, but $x_{AD}$ must be an entangled state for the test to be meaningful.
\begin{proposition}
Let $x_{AD} \in K_A \tmin K_D$ be a separable state. Then for any $\Phi_1 \in \chan(K_A, K_B)$, $\Phi_2 \in \chan(K_A, K_C)$ there exists $y_{BCD} \in K_B \treal K_C \treal K_D$ satisfying \eqref{eq:compatibility-witness-ent-tr1} and \eqref{eq:compatibility-witness-ent-tr2}.
\end{proposition}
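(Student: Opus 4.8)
The plan is to use the separability of $x_{AD}$ to reduce everything to product states and then simply write down the obvious candidate. First I would invoke the definition of the minimal tensor product to express $x_{AD} = \sum_{i=1}^n \lambda_i\, x_{i,A} \otimes x_{i,D}$ with $x_{i,A} \in K_A$, $x_{i,D} \in K_D$, $\lambda_i \in \Rp$ and $\sum_{i=1}^n \lambda_i = 1$. Then I would take as candidate
\begin{equation}
y_{BCD} = \sum_{i=1}^n \lambda_i\, \Phi_1(x_{i,A}) \otimes \Phi_2(x_{i,A}) \otimes x_{i,D},
\end{equation}
which is a convex combination of triple product states and hence lies in $K_B \tmin K_C \tmin K_D$; by Proposition \ref{prop:tensor-bipartite-inclusions} applied iteratively (using associativity of the tensor product) this set is contained in $K_B \treal K_C \treal K_D$, so $y_{BCD}$ is an admissible tripartite state.

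Next I would check the two partial-trace conditions. Tracing out $K_C$, i.e. applying $1_{K_C}$ to the middle leg and using $\< \Phi_2(x_{i,A}), 1_{K_C} \> = 1$ (since $\Phi_2(x_{i,A}) \in K_C$), one obtains $\sum_{i=1}^n \lambda_i \Phi_1(x_{i,A}) \otimes x_{i,D}$, which is precisely $(\Phi_1 \otimes \id_{K_D})(x_{AD})$, the right-hand side of \eqref{eq:compatibility-witness-ent-tr1}. Condition \eqref{eq:compatibility-witness-ent-tr2} follows symmetrically by tracing out $K_B$ and using $\< \Phi_1(x_{i,A}), 1_{K_B} \> = 1$. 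Here I would use that $\Phi_1$ and $\Phi_2$ extend to linear maps $A(K_A)^* \to A(K_B)^*$ and $A(K_A)^* \to A(K_C)^*$, so that the diagrams on the right-hand sides of \eqref{eq:compatibility-witness-ent-tr1} and \eqref{eq:compatibility-witness-ent-tr2} are just $(\Phi_1 \otimes \id_{K_D})(x_{AD})$ and $(\Phi_2 \otimes \id_{K_D})(x_{AD})$.

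There is essentially no obstacle: the entire argument is a direct computation once the right $y_{BCD}$ is written down, and separability is exactly the hypothesis that makes the naive choice — the averaged product of the two images with the original $D$-component — a valid state of the tripartite space. The only point needing a little care is the bookkeeping of the threefold tensor product and confirming that $y_{BCD}$ lies in the chosen tensor product $\treal$ rather than merely in $\tmin$; this is immediate from $K_B \tmin K_C \tmin K_D \subset K_B \treal K_C \treal K_D$.
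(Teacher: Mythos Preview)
Your proposal is correct and is essentially the same argument as the paper's: the paper reduces to the product case $x_{AD} = x_A \otimes z_D$, takes $y_{BCD} = \Phi_1(x_A) \otimes \Phi_2(x_A) \otimes z_D$, and then appeals to linearity for the general separable state, whereas you write out the convex combination explicitly from the start. The candidate state and the verification of the two partial-trace conditions are identical in content.
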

\begin{proof}
It is sufficient to prove that such $y_{BCD}$ exists for product states, i.e., for $x_{AD} = x_A \otimes z_D$. The proof for general separable states follows from the linearity of \eqref{eq:compatibility-witness-ent-tr1} and \eqref{eq:compatibility-witness-ent-tr2}. For $x_{AD} = x_A \otimes z_D$ take $y_{BCD} = \Phi_1(x_A) \otimes \Phi_2(x_A) \otimes z_D$. It is straightforward to show that \eqref{eq:compatibility-witness-ent-tr1} and \eqref{eq:compatibility-witness-ent-tr2} hold.
\end{proof}

Proposition \ref{prop:compatibility-witness-ent} allows us to construct entanglement-assisted incompatibility tests. More specifically, it was shown in \cite{Plavala-channels} that if we would replace channels by measurements, these tests would exactly correspond to steering \cite{UolaCostaNguyenGuhne-steering}. Hence the following definitions:
\begin{definition}
We say that channels $\Phi_1 \in \chan(K_A, K_B)$ and $\Phi_2 \in \chan(K_A, K_C)$ \emph{steer} the state $x_{AD} \in K_A \treal K_D$ if there is no $y_{BCD} \in K_B \treal K_C \treal K_D$ that would satisfy \eqref{eq:compatibility-witness-ent-tr1} and \eqref{eq:compatibility-witness-ent-tr2}.
\end{definition}

\begin{definition}
We say that the state $x_{AD} \in K_A \treal K_D$ is \emph{steerable by channels} if there are channels $\Phi_1 \in \chan(K_A, K_B)$ and $\Phi_2 \in \chan(K_A, K_C)$ that steer $x_{AD}$, i.e., such that there is no $y_{BCD} \in K_B \treal K_C \treal K_D$ that would satisfy \eqref{eq:compatibility-witness-ent-tr1} and \eqref{eq:compatibility-witness-ent-tr2}.
\end{definition}
One can define an equivalent notion of steerability by measurements.
\begin{definition}
We say that the state $x_{AD} \in K_A \treal K_D$ is \emph{steerable by measurements} if there are measurements $m_1 \in \chan(K_A, S_{n_1})$ and $m_2 \in \chan(K_A, S_{n_2})$ that steer $x_{AD}$, i.e., such that there is no $y_{BCD} \in S_{n_1} \treal S_{n_2} \treal K_D$ that would satisfy \eqref{eq:compatibility-witness-ent-tr1} and \eqref{eq:compatibility-witness-ent-tr2}.
\end{definition}

Is is known that in quantum theory two channels are incompatible if and only if they steer some state; it is an open question whether the same also holds for all GPTs. We will again get a relation between steering and post-processing preorder of channels.
\begin{proposition} \label{prop:compatibility-witness-ent-preorder}
Let $\Phi_1 \in \chan(K_A, K_B)$, $\Phi_2 \in \chan(K_A, K_C)$, and $\Phi_3 \in \chan(K_A, K_D)$ be channels such that $\Phi_3 \prec \Phi_2$. If $\Phi_1$ and $\Phi_2$ do not steer $x_{AE} \in K_A \treal K_E$, then also $\Phi_1$ and $\Phi_3$ do not steer $x_{AE}$.
\end{proposition}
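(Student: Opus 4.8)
The plan is to follow the template of the proof of Proposition~\ref{prop:compatibility-postProc-compatibility}: convert a witness of non-steering for the pair $(\Phi_1,\Phi_2)$ into one for $(\Phi_1,\Phi_3)$ by post-composing with the channel that implements $\Phi_3\prec\Phi_2$. First I would unpack the hypothesis: since $\Phi_1$ and $\Phi_2$ do not steer $x_{AE}$, there is a state $y_{BCE}\in K_B\treal K_C\treal K_E$ satisfying the analogues of \eqref{eq:compatibility-witness-ent-tr1} and \eqref{eq:compatibility-witness-ent-tr2} with $K_D$ replaced by $K_E$; explicitly, discarding the $K_C$ leg of $y_{BCE}$ reproduces $\Phi_1$ applied to the $K_A$ leg of $x_{AE}$, and discarding the $K_B$ leg reproduces $\Phi_2$ applied to the $K_A$ leg of $x_{AE}$. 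By Definition~\ref{def:channels-postProc}, $\Phi_3\prec\Phi_2$ gives a channel $\Lambda\in\chan(K_C,K_D)$ with $\Phi_3=\Lambda\circ\Phi_2$.

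Next I would define the candidate witness $z_{BDE}=(\id_{K_B}\otimes\Lambda\otimes\id_{K_E})(y_{BCE})\in K_B\treal K_D\treal K_E$, exactly as one does in the proof of Proposition~\ref{prop:compatibility-postProc-compatibility} when applying $\Lambda$ to one leg of a multipartite state. It then remains to verify the two marginal conditions. For the $\Phi_3$-marginal: discarding the $K_B$ leg of $z_{BDE}$ equals $\Lambda$ applied to the middle leg of the $K_B$-marginal of $y_{BCE}$, which by hypothesis is $\Phi_2$ applied to the $K_A$ leg of $x_{AE}$; composing with $\Lambda$ and using $\Phi_3=\Lambda\circ\Phi_2$ yields exactly $\Phi_3$ applied to the $K_A$ leg of $x_{AE}$. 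For the $\Phi_1$-marginal: discarding the $K_D$ leg of $z_{BDE}$ amounts to applying $1_{K_D}$ after $\Lambda$ on the middle leg, and since $\Lambda$ is a channel its adjoint is unital by Proposition~\ref{prop:channels-channels-unital}, i.e.\ $\Lambda^*(1_{K_D})=1_{K_C}$; hence discarding after $\Lambda$ is the same as discarding directly, so this marginal equals discarding the $K_C$ leg of $y_{BCE}$, which by hypothesis is $\Phi_1$ applied to the $K_A$ leg of $x_{AE}$. Thus $z_{BDE}$ witnesses that $\Phi_1$ and $\Phi_3$ do not steer $x_{AE}$.

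I do not expect a genuine obstacle; the computation is essentially the diagrammatic bookkeeping already carried out in Proposition~\ref{prop:compatibility-postProc-compatibility}. The one point that needs a moment's care is the $\Phi_1$-marginal, where one must notice that post-processing one tensor factor and then discarding it is the same as discarding it outright --- precisely unitality of the adjoint channel --- while keeping the three tensor legs $K_B$, $K_C/K_D$, $K_E$ straight throughout. As is implicit throughout Section~\ref{sec:channels}, I am assuming $\treal$ is such that $\id_{K_B}\otimes\Lambda\otimes\id_{K_E}$ is a valid channel, i.e.\ that $\Lambda$ is completely positive with respect to the relevant tensor products.
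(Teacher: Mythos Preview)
Your proposal is correct and follows essentially the same approach as the paper: take the non-steering witness $y_{BCE}$, apply the post-processing channel to the $K_C$ leg to obtain $z_{BDE}$, and check the two marginals. The paper simply says the verification is ``straightforward,'' whereas you spell out the unitality argument for the $\Phi_1$-marginal and flag the implicit complete-positivity assumption --- both fair elaborations, but the underlying argument is identical.
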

\begin{proof}
If $\Phi_1$ and $\Phi_2$ do not steer $x_{AE} \in K_A \treal K_E$, then there is a state $y_{BCE} \in K_B \treal K_C \treal K_E$ such that
\begin{equation}
\begin{quantikz}
[row sep=\the\rowsep, align equals at=2]
&\multiprepareC[3]{y_{BCE}} &\qw{K_B} \\
& &\ground{}{K_C} \\
& &\qw{K_E}
\end{quantikz}
=
\begin{quantikz}[row sep=\the\rowsep, align equals at=1.5]
&[\prepfix]\multiprepareC[2]{x_{AE}} &\gate{\Phi_1} \qw{K_A} &\qw{K_B} \\
& &\qw{} &\qw{K_E}
\end{quantikz}
\end{equation}
and
\begin{equation}
\begin{quantikz}
[row sep=\the\rowsep, align equals at=2]
&\multiprepareC[3]{y_{BCE}} &\ground{}{K_B} \\
& &\qw{K_C} \\
& &\qw{K_E}
\end{quantikz}
=
\begin{quantikz}[row sep=\the\rowsep, align equals at=1.5]
&[\prepfix]\multiprepareC[2]{x_{AE}} &\gate{\Phi_2} \qw{K_A} &\qw{K_C} \\
& &\qw{} &\qw{K_E}
\end{quantikz}
\end{equation}
Since $\Phi_3 \prec \Phi_2$, there is a channel $\Psi \in \chan(K_C, K_D)$ such that
\begin{equation}
\begin{quantikz}[align equals at=1]
&\gate{\Phi_3} &\qw{}
\end{quantikz}
=
\begin{quantikz}[align equals at=1]
&\gate{\Phi_2} &\gate{\Psi} &\qw{}
\end{quantikz}
\end{equation}
Take $z_{BDE} \in K_B \treal K_D \treal K_E$ given as
\begin{equation}
\begin{quantikz}
[row sep=\the\rowsep, align equals at=2]
&\multiprepareC[3]{z_{BDE}} &\qw{K_B} & &[-7.5pt]\multiprepareC[3]{y_{BCE}} &\qw{} &\qw{K_B} \\
& &\qw{K_D} &\lstick{$=$} & &\gate{\Psi} \qw{K_C} &\qw{K_D} \\
& &\qw{K_E} & & &\qw{} &\qw{K_E}
\end{quantikz}
\end{equation}
It is straightforward to show that $z_{BCE}$ satisfies \eqref{eq:compatibility-witness-ent-tr1} and \eqref{eq:compatibility-witness-ent-tr2} and so $\Phi_1$ and $\Phi_3$ do not steer $x_{AE}$.
\end{proof}

\begin{corollary}
Let $\Phi_1 \in \chan(K_A, K_B)$, $\Phi_2 \in \chan(K_A, K_C)$, and $\Phi_3 \in \chan(K_A, K_D)$ be channels such that $\Phi_3 \prec \Phi_2$. If $\Phi_1$ and $\Phi_3$ steer $x_{AE} \in K_A \treal K_E$, then also $\Phi_1$ and $\Phi_2$ steer $x_{AE}$.
\end{corollary}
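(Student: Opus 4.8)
The plan is simply to observe that this corollary is the logical contrapositive of Proposition \ref{prop:compatibility-witness-ent-preorder}. That proposition says: for channels with $\Phi_3 \prec \Phi_2$, if $\Phi_1$ and $\Phi_2$ do \emph{not} steer $x_{AE} \in K_A \treal K_E$, then $\Phi_1$ and $\Phi_3$ do \emph{not} steer $x_{AE}$ either. Negating both sides of this implication and swapping their order yields precisely the statement to be proved: if $\Phi_1$ and $\Phi_3$ \emph{do} steer $x_{AE}$, then $\Phi_1$ and $\Phi_2$ \emph{do} steer $x_{AE}$.

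So the only step is to invoke Proposition \ref{prop:compatibility-witness-ent-preorder} and take the contrapositive; there is no computation to carry out and no obstacle. For completeness one could alternatively re-run the short argument directly: assuming (for contradiction) that $\Phi_1$ and $\Phi_2$ do not steer $x_{AE}$, there exists $y_{BCE} \in K_B \treal K_C \treal K_E$ realizing the marginals \eqref{eq:compatibility-witness-ent-tr1}, \eqref{eq:compatibility-witness-ent-tr2}; composing the $K_C$-leg with the post-processing channel $\Psi \in \chan(K_C, K_D)$ witnessing $\Phi_3 \prec \Phi_2$ produces a $z_{BDE} \in K_B \treal K_D \treal K_E$ that realizes the corresponding marginals for $\Phi_1$ and $\Phi_3$, contradicting that $\Phi_1$ and $\Phi_3$ steer $x_{AE}$. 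But since this is verbatim the proof of Proposition \ref{prop:compatibility-witness-ent-preorder}, citing that proposition is the cleanest route.

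\begin{proof}
This is the contrapositive of Proposition \ref{prop:compatibility-witness-ent-preorder}.
\end{proof}
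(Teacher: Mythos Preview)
Your proof is correct and matches the paper's own proof: both simply observe that the corollary is the contrapositive of Proposition \ref{prop:compatibility-witness-ent-preorder}.
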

\begin{proof}
The result follows from Proposition \ref{prop:compatibility-witness-ent-preorder}. If $\Phi_1$ and $\Phi_2$ do not steer $x_{AE}$, then also $\Phi_1$ and $\Phi_3$ do not steer $x_{AE}$. So if $\Phi_1$ and $\Phi_3$ steer $x_{AE}$, then also $\Phi_1$ and $\Phi_2$ must steer $x_{AE}$.
\end{proof}

Let $\Phi_1 \in \chan(K_A, K_B)$ and $\Phi_2 \in \chan(K_A, K_C)$ be channels, then we already know that $\Phi_1 \prec \id_{K_A}$ and $\Phi_2 \prec \id_{K_A}$, where $\id_{K_A} \in \chan(K_A)$ is the identity channel. As a consequence we have the following.
\begin{proposition}
A state $x_{AD}$ is steerable by channels if and only if two copies of the identity channel $\id_{K_A} \in \chan(K_A)$ steer $x_{AD}$.
\end{proposition}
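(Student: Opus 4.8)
The plan is to prove the two implications separately. The ``if'' direction is immediate: if the two copies of $\id_{K_A}$ steer $x_{AD}$, then taking $\Phi_1 = \Phi_2 = \id_{K_A}$ in the definition of steerability by channels shows at once that $x_{AD}$ is steerable by channels.

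For the ``only if'' direction I would argue by contraposition. Assume the two copies of $\id_{K_A}$ do \emph{not} steer $x_{AD}$, and let $\Phi_1 \in \chan(K_A, K_B)$, $\Phi_2 \in \chan(K_A, K_C)$ be arbitrary channels; I want to show they do not steer $x_{AD}$ either. By Proposition \ref{prop:channels-postProc-id} we have $\Phi_1 \prec \id_{K_A}$ and $\Phi_2 \prec \id_{K_A}$. Starting from the hypothesis, one application of Proposition \ref{prop:compatibility-witness-ent-preorder}, with the second $\id_{K_A}$ post-processed down to $\Phi_2$, gives that $\id_{K_A}$ and $\Phi_2$ do not steer $x_{AD}$; a second application, with the remaining $\id_{K_A}$ post-processed down to $\Phi_1$, gives that $\Phi_1$ and $\Phi_2$ do not steer $x_{AD}$. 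Since $\Phi_1$, $\Phi_2$ were arbitrary, this is the contrapositive of the claim.

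The one subtlety is that Proposition \ref{prop:compatibility-witness-ent-preorder} post-processes only the \emph{second} of the two channels, whereas the second application above needs to post-process the first one. I would handle this by observing that steering is symmetric under the exchange $\Phi_1 \leftrightarrow \Phi_2$: swapping the two channels merely swaps the labels $K_B \leftrightarrow K_C$ of the legs of the witness state $y_{BCD}$ in \eqref{eq:compatibility-witness-ent-tr1}--\eqref{eq:compatibility-witness-ent-tr2}, which is accommodated by the symmetry of the tensor product $\treal$; so one relabels to bring the identity into the second slot, applies the proposition, and relabels back. I expect this bookkeeping to be the only real obstacle, everything else being a direct appeal to results already established. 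As an alternative one could give a hands-on argument: if $y_{BCD}$, with $K_B = K_C = K_A$, witnesses non-steering for the two identities, then $(\Phi_1 \otimes \Phi_2 \otimes \id_{K_D})(y_{BCD})$ witnesses non-steering for $\Phi_1, \Phi_2$, checking \eqref{eq:compatibility-witness-ent-tr1}--\eqref{eq:compatibility-witness-ent-tr2} via the unitality $\Phi_i^*(1) = 1$ of Proposition \ref{prop:channels-channels-unital} and using that $\Phi_1 \otimes \Phi_2 \otimes \id_{K_D}$ is a channel by the monoidal structure of $\treal$.
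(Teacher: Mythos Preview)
Your proposal is correct and follows essentially the same approach as the paper: the ``if'' direction is immediate, and the ``only if'' direction is the contrapositive using $\Phi_i \prec \id_{K_A}$ together with Proposition~\ref{prop:compatibility-witness-ent-preorder}. You are in fact more careful than the paper, which glosses over the point that Proposition~\ref{prop:compatibility-witness-ent-preorder} only post-processes the second slot; your treatment of the $\Phi_1 \leftrightarrow \Phi_2$ symmetry (and the alternative direct construction via $(\Phi_1 \otimes \Phi_2 \otimes \id_{K_D})(y_{BCD})$) fills that gap explicitly.
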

\begin{proof}
If two copies of the identity channel $\id_{K_A} \in \chan(K_A)$ steer $x_{AD}$, then $x_{AD}$ is steerable. So assume now that two copies of $\id_{K_A}$ do not steer $x_{AD}$. Let $\Phi_1 \in \chan(K_A, K_B)$ and $\Phi_2 \in \chan(K_A, K_C)$ be channels, then $\Phi_1 \prec \id_{K_A}$ and $\Phi_2 \prec \id_{K_A}$ and according to Proposition \ref{prop:compatibility-witness-ent-preorder} $\Phi_1$ and $\Phi_2$ can not steer $x_{AD}$.
\end{proof}
It is now easy to see the difference between measurements and channels. Since for non-classical state space $K_A$ there does not exist post-processing greatest measurement $m$, we can not easily determine whether a state $x_{AD} \in K_A \treal K_D$ is steerable by some measurements $m_1$, $m_2$, because we would have to check for all possible pairs of measurements. In fact, one only needs to check for measurements $m_1$ and $m_2$ such that if $m_1 \prec m'_1$, then also $m'_1 \prec m_1$ and if $m_2 \prec m'_2$, then also $m'_2 \prec m_2$, i.e., we only need to consider post-processing maximal measurement $m_1$ and $m_2$, see \cite{HeinosaariMiyaderaZiman-compatibility}. If $K_A$ is a polytope, this reduces to finite number of pairs measurements.

Since we have discussed steering, it is natural to expect that we can formalize Bell non-locality \cite{BrunnerCavalcantiPironioScaraniWehner-BellNonlocality} in this fashion. The main idea is that in steering, we are applying channels $\Phi_1 \in \chan(K_A, K_B)$ and $\Phi_2 \in \chan(K_A, K_C)$ to only the one leg of $x_{AD} \in K_A \treal K_D$. Given another pair of channel $\Psi_1 \in \chan(K_D, K_E)$ and $\Psi_2 \in \chan(K_D, K_F)$, we can apply them to the other leg of $x_{AD}$.
\begin{proposition} \label{prop:compatibility-witness-Bell}
Let $x_{AD} \in K_A \treal K_D$ be a state and let $\Phi_1 \in \chan(K_A, K_B)$, $\Phi_2 \in \chan(K_A, K_C)$, $\Psi_1 \in \chan(K_D, K_E)$ and $\Psi_2 \in \chan(K_D, K_F)$ be channels. Assume that $\Phi_1$, $\Phi_2$ are compatible and that $\Psi_1$, $\Psi_2$ are compatible. Then there is a state $y_{BCEF} \in K_B \treal K_C \treal K_E \treal K_F$ such that
\begin{equation} \label{eq:compatibility-witness-Bell-tr11}
\begin{quantikz}
[row sep=\the\rowsep, align equals at=2.5]
&\multiprepareC[4]{y_{BCEF}} &\qw{K_B} \\
& &\ground{}{K_C} \\
& &\qw{K_E} \\
& &\ground{}{K_F}
\end{quantikz}
=
\begin{quantikz}[row sep=\the\rowsep, align equals at=1.5]
&[\prepfix]\multiprepareC[2]{x_{AD}} &\gate{\Phi_1} \qw{K_A} &\qw{K_B} \\
& &\gate{\Psi_1} \qw{K_D} &\qw{K_E}
\end{quantikz}
\end{equation}
and
\begin{equation} \label{eq:compatibility-witness-Bell-tr12}
\begin{quantikz}
[row sep=\the\rowsep, align equals at=2.5]
&\multiprepareC[4]{y_{BCEF}} &\qw{K_B} \\
& &\ground{}{K_C} \\
& &\ground{}{K_E} \\
& &\qw{K_F}
\end{quantikz}
=
\begin{quantikz}[row sep=\the\rowsep, align equals at=1.5]
&[\prepfix]\multiprepareC[2]{x_{AD}} &\gate{\Phi_1} \qw{K_A} &\qw{K_B} \\
& &\gate{\Psi_2} \qw{K_D} &\qw{K_F}
\end{quantikz}
\end{equation}
and
\begin{equation} \label{eq:compatibility-witness-Bell-tr21}
\begin{quantikz}
[align equals at=2.5]
&\multiprepareC[4]{y_{BCEF}} &\ground{}{K_B} \\
& &\qw{K_C} \\
& &\qw{K_E} \\
& &\ground{}{K_F}
\end{quantikz}
=
\begin{quantikz}[row sep=\the\rowsep, align equals at=1.5]
&[\prepfix]\multiprepareC[2]{x_{AD}} &\gate{\Phi_2} \qw{K_A} &\qw{K_C} \\
& &\gate{\Psi_1} \qw{K_D} &\qw{K_E}
\end{quantikz}
\end{equation}
and
\begin{equation} \label{eq:compatibility-witness-Bell-tr22}
\begin{quantikz}
[row sep=\the\rowsep, align equals at=2.5]
&\multiprepareC[4]{y_{BCEF}} &\ground{}{K_B} \\
& &\qw{K_C} \\
& &\ground{}{K_E} \\
& &\qw{K_F}
\end{quantikz}
=
\begin{quantikz}[row sep=\the\rowsep, align equals at=1.5]
&[\prepfix]\multiprepareC[2]{x_{AD}} &\gate{\Phi_2} \qw{K_A} &\qw{K_B} \\
& &\gate{\Psi_2} \qw{K_D} &\qw{K_F}
\end{quantikz}
\end{equation}
hold.
\end{proposition}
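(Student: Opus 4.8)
The plan is to glue together the two joint channels provided by the hypotheses. Since $\Phi_1$ and $\Phi_2$ are compatible, Definition \ref{def:compatibility-channels-def} supplies a joint channel $\Phi \in \chan(K_A, K_B \treal K_C)$ obeying \eqref{eq:compatibility-def-1} and \eqref{eq:compatibility-def-2}; likewise, compatibility of $\Psi_1$ and $\Psi_2$ supplies a joint channel $\Psi \in \chan(K_D, K_E \treal K_F)$ obeying the analogous identities with $(K_B, K_C, \Phi_1, \Phi_2)$ replaced by $(K_E, K_F, \Psi_1, \Psi_2)$. I would then take $y_{BCEF}$ to be the state obtained by feeding the $K_A$-leg of $x_{AD}$ into $\Phi$ and the $K_D$-leg into $\Psi$, i.e.\ $y_{BCEF} = (\Phi \otimes \Psi)(x_{AD})$. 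Since the collection of state spaces under $\treal$ is a monoidal category (Proposition \ref{prop:tensor-multi-pentagon}), $\Phi \otimes \Psi$ is a channel from $K_A \treal K_D$ to $(K_B \treal K_C) \treal (K_E \treal K_F) = K_B \treal K_C \treal K_E \treal K_F$, so $y_{BCEF}$ is genuinely a state of the required space.

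It then remains to verify the four marginal conditions \eqref{eq:compatibility-witness-Bell-tr11}--\eqref{eq:compatibility-witness-Bell-tr22}, each of which asks that tracing out a specified pair among the legs $K_B, K_C, K_E, K_F$ of $y_{BCEF}$ reproduces the picture on the corresponding right-hand side. Here I would use bifunctoriality of $\treal$, namely $\Phi \otimes \Psi = (\Phi \otimes \id_{K_D}) \circ (\id_{K_A} \otimes \Psi)$, together with the fact that tracing out one leg commutes with the application of a channel to a disjoint leg. For \eqref{eq:compatibility-witness-Bell-tr11}, tracing out $K_C$ and $K_F$ turns the $\Phi$-block into $\Phi_1$ by \eqref{eq:compatibility-def-1} and the $\Psi$-block into $\Psi_1$ by its analogue, leaving exactly the state obtained by applying $\Phi_1$ to the $K_A$-leg and $\Psi_1$ to the $K_D$-leg of $x_{AD}$, which is the right-hand side of \eqref{eq:compatibility-witness-Bell-tr11}. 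The identities \eqref{eq:compatibility-witness-Bell-tr12}, \eqref{eq:compatibility-witness-Bell-tr21} and \eqref{eq:compatibility-witness-Bell-tr22} follow in the same way, using the pairs $(\Phi_1,\Psi_2)$, $(\Phi_2,\Psi_1)$ and $(\Phi_2,\Psi_2)$ respectively and the appropriate instances of \eqref{eq:compatibility-def-1}--\eqref{eq:compatibility-def-2} and their $\Psi$-analogues.

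This is essentially bookkeeping, so there is no deep obstacle; the one point that deserves care is exactly the claim that $\Phi \otimes \Psi$ is a legitimate channel into the chosen tensor product and that partial traces on one tensor factor slide past channels acting on the complementary factor --- in other words the monoidal (associativity plus bifunctoriality) structure of the family of state spaces, which has been assumed in force whenever needed. For the concrete choices $\treal = \tmin$ or $\treal = \tmax$ one can instead argue by hand: decompose $x_{AD} = \sum_i \alpha_i y_{i,A} \otimes y_{i,D}$ with $y_{i,A} \in K_A$, $y_{i,D} \in K_D$, $\alpha_i \in \RR$, set $y_{BCEF} = \sum_i \alpha_i \Phi(y_{i,A}) \otimes \Psi(y_{i,D})$, verify the four identities by linearity, and deduce membership in $K_B \treal K_C \treal K_E \treal K_F$ from the positivity argument used in the proof of Proposition \ref{prop:tensor-partial-multipleOfState}.
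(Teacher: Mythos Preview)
Your proposal is correct and follows exactly the same route as the paper: take the joint channels $\Phi$ and $\Psi$ furnished by compatibility, set $y_{BCEF} = (\Phi \otimes \Psi)(x_{AD})$, and read off the four marginals. The paper's proof is terser (it simply declares the verification ``straightforward''), so your explicit appeal to the monoidal structure and the commutation of partial traces with channels on disjoint legs is, if anything, more careful than what the paper provides.
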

\begin{proof}
Let $\Phi \in \chan(K_A, K_B \treal K_C)$ and $\Psi \in \chan(K_D, K_E \treal K_F)$ be the joint channels of $\Phi_1$, $\Phi_2$ and $\Psi_1$, $\Psi_2$ respectively. Take
\begin{equation}
\begin{quantikz}[align equals at=2.5]
&\multiprepareC[4]{y_{BCEF}} &\qw{K_B} \\
& &\qw{K_C} \\
& &\qw{K_E} \\
& &\qw{K_F}
\end{quantikz}
=
\begin{quantikz}[row sep=\the\rowsep, align equals at=3.5]
& &\gate[3, nwires={1,3}]{\Phi} &\qw{K_B} \\
&[\prepfix]\multiprepareC[4]{x_{AD}} &\qw{K_A} & \\
& & &\qw{K_C} \\
& &\gate[3, nwires={1,3}]{\Psi} &\qw{K_E} \\
& &\qw{K_D} & \\
& & &\qw{K_F}
\end{quantikz}
\end{equation}
It is straightforward to verify that \eqref{eq:compatibility-witness-Bell-tr11} - \eqref{eq:compatibility-witness-Bell-tr22} hold.
\end{proof}

One can again show that if we would replace channels by measurements, we would simply get the standard definition of Bell non-locality \cite{Plavala-channels}. Hence the following definitions:
\begin{definition}
We say that $x_{AD}$ is \emph{Bell non-local with respect to $\Phi_1 \in \chan(K_A, K_B)$, $\Phi_2 \in \chan(K_A, K_C)$, $\Psi_1 \in \chan(K_D, K_E)$ and $\Psi_2 \in \chan(K_D, K_F)$} if there is no $y_{BCEF} \in K_B \treal K_C \treal K_E \treal K_F$ such that \eqref{eq:compatibility-witness-Bell-tr11} - \eqref{eq:compatibility-witness-Bell-tr22} are satisfied.
\end{definition}

\begin{definition}
We say that $x_{AD}$ is \emph{Bell non-local state} if there are channel $\Phi_1 \in \chan(K_A, K_B)$, $\Phi_2 \in \chan(K_A, K_C)$, $\Psi_1 \in \chan(K_D, K_E)$ and $\Psi_2 \in \chan(K_D, K_F)$ with respect to which $x_{AD}$ is Bell non-local.
\end{definition}

One can again prove that we need entanglement to get Bell non-locality.
\begin{proposition}
Let $x_{AD} \in K_A \tmin K_D$ be a separable state, then $x_{AD}$ is Bell local, i.e., $x_{AD}$ is not Bell non-local.
\end{proposition}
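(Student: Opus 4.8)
The plan is to exploit linearity of the four defining equations \eqref{eq:compatibility-witness-Bell-tr11}--\eqref{eq:compatibility-witness-Bell-tr22} in the state $x_{AD}$, exactly in the spirit of the proof of the analogous steering statement. First I would reduce to the case of a product state: if $x_{AD} = \sum_{i=1}^N \lambda_i x_{i,A} \otimes z_{i,D}$ with $\lambda_i \in [0,1]$, $\sum_{i=1}^N \lambda_i = 1$, $x_{i,A} \in K_A$, $z_{i,D} \in K_D$, then it suffices to exhibit, for each $i$, a witness $y_{i,BCEF}$ for the product state $x_{i,A} \otimes z_{i,D}$ and to set $y_{BCEF} = \sum_{i=1}^N \lambda_i y_{i,BCEF}$; since the partial traces and the four channels $\Phi_1, \Phi_2, \Psi_1, \Psi_2$ are all linear maps, every one of the four equations is linear in $x_{AD}$, so this $y_{BCEF}$ will work for $x_{AD}$ as soon as each $y_{i,BCEF}$ works for $x_{i,A}\otimes z_{i,D}$.

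For a product state $x_A \otimes z_D$ the natural choice is
\[
y_{BCEF} = \Phi_1(x_A) \otimes \Phi_2(x_A) \otimes \Psi_1(z_D) \otimes \Psi_2(z_D).
\]
This is a product of four states, hence lies in $K_B \tmin K_C \tmin K_E \tmin K_F \subset K_B \treal K_C \treal K_E \treal K_F$, so it is a legitimate state of the four-partite system; associativity of $\treal$, Proposition \ref{prop:tensor-multi-pentagon}, makes the bracketing irrelevant. Checking \eqref{eq:compatibility-witness-Bell-tr11}: discarding the $K_C$ and $K_F$ legs of $y_{BCEF}$ leaves $\Phi_1(x_A) \otimes \Psi_1(z_D)$, which is precisely the result of applying $\Phi_1$ to the $K_A$ leg and $\Psi_1$ to the $K_D$ leg of $x_A\otimes z_D$, i.e. the right-hand side. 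Equations \eqref{eq:compatibility-witness-Bell-tr12}--\eqref{eq:compatibility-witness-Bell-tr22} are verified in the same way, each time discarding exactly the two legs carrying the channels that do not appear on the corresponding right-hand side.

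Returning to the general separable case, the witness is then $y_{BCEF} = \sum_{i=1}^N \lambda_i\, \Phi_1(x_{i,A}) \otimes \Phi_2(x_{i,A}) \otimes \Psi_1(z_{i,D}) \otimes \Psi_2(z_{i,D})$, a convex combination of product states and therefore again an element of $K_B \tmin K_C \tmin K_E \tmin K_F$, a fortiori of $K_B \treal K_C \treal K_E \treal K_F$; linearity of the partial traces and of the four channels then gives \eqref{eq:compatibility-witness-Bell-tr11}--\eqref{eq:compatibility-witness-Bell-tr22}. Since $\Phi_1,\Phi_2,\Psi_1,\Psi_2$ were arbitrary, no quadruple of channels can witness Bell non-locality of $x_{AD}$, so $x_{AD}$ is Bell local. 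There is no genuinely hard step here: the only points requiring a little care are that the candidate $y_{BCEF}$ is a bona fide state — which is immediate because it is separable — and that the bookkeeping of which two legs are traced out in each of the four equations is carried out consistently.
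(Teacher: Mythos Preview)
Your proof is correct and follows essentially the same approach as the paper: reduce to product states by linearity of the four conditions, and for a product state $x_A \otimes z_D$ take the fully product witness $y_{BCEF} = \Phi_1(x_A) \otimes \Phi_2(x_A) \otimes \Psi_1(z_D) \otimes \Psi_2(z_D)$. The paper's proof is slightly terser, merely stating that the convex-combination extension is immediate, whereas you spell out the mixed witness explicitly; but the argument is the same.
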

\begin{proof}
Since the conditions \eqref{eq:compatibility-witness-Bell-tr11} - \eqref{eq:compatibility-witness-Bell-tr22} are linear, it is sufficient to take $x_{AD} = z_A \otimes w_D$ where $z_A \in K_A$ and $w_D \in K_D$, for a general state the result follow by taking convex combinations. Let $\Phi_1 \in \chan(K_A, K_B)$, $\Phi_2 \in \chan(K_A, K_C)$, $\Psi_1 \in \chan(K_D, K_E)$ and $\Psi_2 \in \chan(K_D, K_F)$ be channels, then we can take
\begin{equation}
\begin{quantikz}[align equals at=2.5]
&\multiprepareC[4]{y_{BCEF}} &\qw{K_B} \\
& &\qw{K_C} \\
& &\qw{K_E} \\
& &\qw{K_F}
\end{quantikz}
=
\begin{quantikz}[row sep=\the\rowsep,align equals at=2.5]
&[\prepfix]\prepareC{z_A} &\gate{\Phi_1} \qw{K_A} &\qw{K_B} \\
&[\prepfix]\prepareC{z_A} &\gate{\Phi_2} \qw{K_A} &\qw{K_C} \\
&[\prepfix]\prepareC{w_D} &\gate{\Psi_1} \qw{K_D} &\qw{K_E} \\
&[\prepfix]\prepareC{w_D} &\gate{\Psi_2} \qw{K_D} &\qw{K_F} 
\end{quantikz}
\end{equation}
It is straightforward to show that \eqref{eq:compatibility-witness-Bell-tr11} - \eqref{eq:compatibility-witness-Bell-tr22} are satisfied.
\end{proof}
One can again prove relations between post-processing preorder of channels and Bell non-locality. We will not do so, since they are straightforward to formulate. At last, we would want to comment on the connection between steering and Bell non-locality. For measurements, it is easy to show that steering is necessary for Bell non-locality, but for channels this is not so, see \cite{Plavala-channels} for a counter-example. One can also combine the approaches and consider scenarios with sets of entangled states $\{ x_{1,AD}, \ldots, x_{n,AD}\} \subset K_A \treal K_D$; the generalization is straightforward and we will not investigate it.

\section{Example: quantum theory} \label{sec:QT}

%
%
In this section we will review quantum theory as an example of a GPT. We will be brief as most of the things we will cover are considered basic knowledge in quantum information theory. If the reader is not familiar with quantum theory, we recommend \cite{HeinosaariZiman-MLQT}.

Let $\Ha$ be a finite-dimensional complex Hilbert space. We will use the bra–ket notation to denote the vectors as $\ket{\psi}$, the inner product of $\ket{\psi}, \ket{\varphi} \in \Ha$ is then denoted $\braket{\varphi|\psi}$ and it is linear in the second argument, i.e., $\braket{\varphi|\psi}$ is linear in $\ket{\psi}$. $\lin(\Ha)$ will denote the complex vector space of operators $X: \Ha \to \Ha$, $\I$ will denote the identity operator. $\bound_H(\Ha)$ will denote the real vector space of self-adjoint operators. Let $X \in \bound_H(\Ha)$, then $\Tr(X)$ will denote the trace of $X$. We say that $X$ is positive semi-definite and we write $X \geq 0$ if for all $\ket{\psi} \in \Ha$ we have $\bra{\psi}X\ket{\psi} \geq 0$. $\bound_H^+(\Ha)$ will denote the set of all positive semi-definite operators; note that $\bound_H^+(\Ha)$ is a convex, pointed and generating cone.

\subsection{State space and effect algebra}
The state space in quantum theory is the set of density operators
\begin{equation}
\dens(\Ha) = \{ \rho \in \bound_H^+(\Ha) : \Tr(\rho) = 1 \}.
\end{equation}
The pure states are rank-1 projectors, i.e., the pure states of $\dens(\Ha)$ are projectors $\ketbra{\psi}$ for $\ket{\psi} \in \Ha$, $\norm{\psi}^2 = \braket{\psi|\psi} = 1$. The vector space of affine functions $A(\dens(\Ha))$ is isomorphic to $\bound_H(\Ha)$, for $X \in \bound_H(\Ha)$ the corresponding function $f_X \in A(\dens(\Ha))$ is given as $f_X(\rho) = \Tr(\rho X)$. From now on we will omit the isomorphism between $f_X$ and $X$ and use $X$ to refer to the function $f_X$ and we will write
\begin{equation}
A(\dens(\Ha)) = \bound_H(\Ha).
\end{equation}
The cone of positive functions  $A(\dens(\Ha))^+$ is isomorphic to $\bound_H^+(\Ha)$, because let $X \in \bound_H(\Ha)$, then $\Tr(\rho X) \geq 0$ if and only if $\Tr( \ketbra{\psi} X) = \bra{\psi} X \ket{\psi} \geq 0$ for all $\ket{\psi} \in \Ha$, $\norm{\psi} = 1$. It follows that $X \in A(\dens(\Ha))^+$ if and only if $X \geq 0$, and so
\begin{equation}
A(\dens(\Ha))^+ = \bound_H^+(\Ha),
\end{equation}
again omitting the isomorphism between $A(\dens(\Ha))$ and $\bound_H(\Ha)$. Now we will characterize the effect algebra $E(\dens(\Ha))$. Since for every $\rho \in \dens(\Ha)$ we have $\Tr(\rho) = 1$, we have $\Tr(\rho X) \leq 1$ if and only if $\Tr(\rho (\I - X)) \geq 0$, which is equivalent to $\I - X \geq 0$. It follows that $0 \leq \Tr(\rho X) \leq 1$ if and only if $0 \leq X \leq \I$. We have
\begin{equation}
E(\dens(\Ha)) = \effect(\Ha) = \{ X \in \bound_H(\Ha): 0 \leq X \leq \I \}
\end{equation}
up to the isomorphism. It is well-known that $\bound_H(\Ha)$ is a Hilbert space with the Hilbert-Schmidt inner product given as $\Tr(XY)$ for $X,Y \in \bound_H(\Ha)$. It follows that the dual of $A(\dens(\Ha)) = \bound_H(\Ha)$ is again going to be $\bound_H(\Ha)$, since Hilbert spaces are self-dual. So we have
\begin{equation}
A(\dens(\Ha))^* = \bound_H(\Ha).
\end{equation}
This is the reason why in the standard approach to quantum information theory we do not distinguish between $A(\dens(\Ha))$ and $A(\dens(\Ha))^*$, because they are isomorphic. The isomorphism between $A(\dens(\Ha))$ and $A(\dens(\Ha))^*$ (and as we will shortly see, also between $A(\dens(\Ha))^+$ and $A(\dens(\Ha))^{*+}$) is a very important aspect of quantum theory. One can again use simple arguments based on rank-1 projectors to show that
\begin{equation}
A(\dens(\Ha))^{*+} = \bound_H^+(\Ha).
\end{equation}
It is also straightforward to see that $\dens(\Ha)$ is base of the cone $\bound_H^+(\Ha)$, as it should be.

The base norm corresponds to the trace norm given as $\Tr(\abs{X})$, where $\abs{X} = \sqrt{X^2}$. The order unit norm corresponds to the operator norm $\norm{X}$. Also note that the constructed theory satisfies no-restriction hypothesis.

\subsection{Tensor product}
In quantum theory, tensor products of state spaces are induced by the tensor products of the underlying Hilbert spaces, i.e., let $\Ha_A$ and $\Ha_B$ be Hilbert spaces, then we define
\begin{equation} \label{eq:QT-tensor-prod}
\dens(\Ha_A) \treal \dens(\Ha_B) = \dens(\Ha_A \otimes \Ha_B).
\end{equation}
We will now show that \eqref{eq:QT-tensor-prod} defines a valid tensor product of state spaces. Note that we have $\bound_H(\Ha_A) \otimes \bound_H(\Ha_B) = \bound_H(\Ha_A \otimes \Ha_B)$ so
\begin{equation}
\linspan(\dens(\Ha_A) \treal \dens(\Ha_B)) = \bound_H(\Ha_A \otimes \Ha_B) = \bound_H(\Ha_A) \otimes \bound_H(\Ha_B) = A(\dens(\Ha_A))^* \otimes A(\dens(\Ha_B))^*
\end{equation}
as we should have. It remains to show that
\begin{equation} \label{eq:QT-tensor-inclusions}
\dens(\Ha_A) \tmin \dens(\Ha_B) \subset \dens(\Ha_A \otimes \Ha_B) \subset \dens(\Ha_A) \tmax \dens(\Ha_B).
\end{equation}
Let $\rho_A \in \dens(\Ha_A)$ and $\rho_B \in \dens(\Ha_B)$, then $\rho_A \otimes \rho_B \geq 0$, i.e., $\rho_A \otimes \rho_B$ is s positive semi-definite operator. It follows that $\rho_A \otimes \rho_B \in \dens(\Ha_A \otimes \Ha_B)$ and we get $\dens(\Ha_A) \tmin \dens(\Ha_B) \subset \dens(\Ha_A \otimes \Ha_B)$. Now let $\rho_{AB} \in \dens(\Ha_A \otimes \Ha_B)$ and let $E_A \in \effect(\Ha_A)$, $E_B \in \effect(\Ha_B)$, then we have $E_A \otimes E_B \geq 0$ and $\Tr(\rho_{AB} (E_A \otimes E_B)) \geq 0$. It follows that $\rho_{AB} \in \dens(\Ha_A) \tmax \dens(\Ha_B)$ and we get $\dens(\Ha_A \otimes \Ha_B) \subset \dens(\Ha_A) \tmax \dens(\Ha_B)$. Thus we have proved both inclusion in \eqref{eq:QT-tensor-inclusions} and so $\dens(\Ha_A) \treal \dens(\Ha_B)$ is a well-defined bipartite state space. Note that both of the inclusions are strict. Let for simplicity $\Ha_A = \Ha_B = \Ha$, then it is well-known that entangled states, such as the maximally entangled state $\ketbra{\phi^+}$, $\ket{\phi^+} = \frac{1}{\sqrt{\dH}} \sum_{i=1}^{\dH} \ket{ii}$, exist, so we have $\dens(\Ha_A) \tmin \dens(\Ha_B) \neq \dens(\Ha_A \otimes \Ha_B)$. It is also well-known that the partial transpose of $\ketbra{\phi^+}$, denoted $\ketbra{\phi^+}^\Gamma$, is not positive semi-definite, hence $\ketbra{\phi^+}^\Gamma \notin \dens(\Ha_A \otimes \Ha_B)$. But we know from Proposition \ref{prop:channels-CP-maxTensor} that every positive map is completely positive with respect to the maximal tensor product, so we must have $\ketbra{\phi^+}^\Gamma \in \dens(\Ha_A) \tmax \dens(\Ha_B)$. Therefore we have $\dens(\Ha_A \otimes \Ha_B) \neq \dens(\Ha_A) \tmax \dens(\Ha_B)$.

We will now show that the tensor product defined in \eqref{eq:QT-tensor-prod} is associative. Let $\Ha_A, \Ha_B, \Ha_C$ be finite-dimensional complex Hilbert spaces, then we have
\begin{align}
(\dens(\Ha_A) \treal \dens(\Ha_B)) \treal \dens(\Ha_C) &= \dens(\Ha_A \otimes \Ha_B) \treal \dens(\Ha_C) = \dens(\Ha_A \otimes \Ha_B \otimes \Ha_C) \\
&= \dens(\Ha_A) \treal \dens(\Ha_B \otimes \Ha_C) = \dens(\Ha_A) \treal (\dens(\Ha_B) \treal \dens(\Ha_C))
\end{align}
as a result of associativity of the tensor product of Hilbert spaces.

Let $\Ha_A, \Ha_B$ be Hilbert spaces, then the partial trace is defined as the unique map $\Tr_A : \bound_H(\Ha_A \otimes \Ha_B) \to \bound_H(\Ha_B)$ such that for $X_{AB} \in \bound_H(\Ha_A \otimes \Ha_B)$ and all $Y_B \in \bound_H(\Ha_B)$ we have
\begin{equation}
\Tr(\Tr_A(X_{AB}) Y_B) = \Tr( X_{AB} (\I_A \otimes Y_B)).
\end{equation}
This exactly corresponds to the definition of partial trace in Example \ref{exm:channels-channels-partialTrace}.

\subsection{Channels}
Since we have a well-defined tensor product in quantum theory, we usually work only with completely-positive channels in quantum theory. It is well-known that the set of completely positive channels $\Phi \in \chan(\dens(\Ha_A), \dens(\Ha_B)$ is isomorphic to the set of Choi matrices $\J(\Ha_A, \Ha_B)$ given as
\begin{equation}
\J(\Ha_A, \Ha_B) = \left\lbrace X \in \bound_H^+(\Ha_A \otimes \Ha_B) : \Tr_B(X) = \frac{\I_A}{\dim(\Ha_A)} \right\rbrace.
\end{equation}
We have normalized the trace of the Choi matrices to $1$, so we have $\J(\Ha_A, \Ha_B) \subset \dens(\Ha_A \otimes \Ha_B)$. This is to be compared to the characterization of all positive channels provided in Proposition \ref{prop:channels-channels-tensorSubset}, as one can clearly see that the set of completely positive channels is strictly smaller than the set of positive channels. Note that all measurements are automatically completely positive, because measurements are completely positive with respect to any tensor product, see Proposition \ref{prop:channels-CP-measurements}. Hence the characterization of measurements derived in Proposition \ref{prop:channels-measurements-effects} still holds without any modifications.

\subsection{Compatibility of channels}
In quantum theory, the channels $\Phi_1 \in \J(\Ha_A, \Ha_B)$ and $\Phi_2 \in \J(\Ha_A, \Ha_C)$ are said to be compatible if there is a joint channel $\Phi \in \J(\Ha_A, \Ha_B \otimes \Ha_C)$ such that for all $\rho_A \in \dens(\Ha_A)$ we have
\begin{align}
&\Tr_C ( \Phi(\rho_A) ) = \Phi_1 (\rho_A),
&&\Tr_B ( \Phi(\rho_A) ) = \Phi_2 (\rho_A).
\end{align}
This is exactly the same definition as Definition \ref{def:compatibility-channels-def}, except that positive channels are replaced by completely positive channels. Note that all of the results we have proved for compatibility of positive channels are easily generalizable to completely positive channels.

\section{Example: boxworld theory} \label{sec:boxworld}

%
%
In this section we will review a theory usually refer to as boxworld. Boxworld was introduced in \cite{Barrett-GPTinformation} to describe a theory of black boxes that have finite number of inputs and finite number of outputs. We will investigate the case of boxes with one input bit and one output bit. There are going to be four extreme points, denoted $s_{ij}$, $i,j \in \{0,1\}$, corresponding to one of the four possible scenarios: the box $s_{00}$, $s_{11}$ always outputs $0$, $1$, respectively, no matter the input, the box $s_{01}$ outputs its input unchanged, and the box $s_{10}$ outputs $1$ if the input is $0$ and outputs $0$ if the input is $1$, see also Table \ref{table:boxworld-inOut}.

\begin{table}[ht]
\centering
\begin{tabular}{c|c|c}
& $1 \mapsto 0$ & $1 \mapsto 1$ \\ 
\hline 
$0 \mapsto 0$ & $s_{00}$ & $s_{01}$ \\ 
\hline 
$0 \mapsto 1$ & $s_{10}$ & $s_{11}$ \\ 
\end{tabular}
\caption{The extreme points of the simplest boxworld state space described in terms of how they handle the input $0$ and how they handle the input $1$. $s_{ij}$ maps $0$ to $i$ and $1$ to $j$, where $i,j \in \{0, 1\}$. \label{table:boxworld-inOut}}
\end{table}

The four extreme points $s_{ij}$ are not affinely independent, but we have
\begin{equation} \label{eq:boxworld-cross}
\dfrac{1}{2} ( s_{00} + s_{11} ) = \dfrac{1}{2} ( s_{10} +s_{01} ),
\end{equation}
which is easy to check for both possible inputs. It follows that the state space we are dealing with is a square.

\subsection{State space and effect algebra}
Let
\begin{align}
&s_{00} =
\begin{pmatrix}
0 \\
0 \\
1
\end{pmatrix},
&&s_{10} =
\begin{pmatrix}
1 \\
0 \\
1
\end{pmatrix},
&&s_{01} =
\begin{pmatrix}
0 \\
1 \\
1
\end{pmatrix},
&&s_{11} =
\begin{pmatrix}
1 \\
1 \\
1
\end{pmatrix},
\end{align}
and $S = \conv(\{ s_{00}, s_{10}, s_{01}, s_{11} \})$. $S$ is the square state space that we will be investigating. Note that we have
\begin{equation} \label{eq:boxworld-stateSpace-s11}
s_{11} = s_{10} + s_{01} - s_{00}
\end{equation}
which one can prove from \eqref{eq:boxworld-cross} or from the definition of the pure states. The vector space of affine functions is $A(S)$ and we have $\dim(A(S)) = 3$. Let
\begin{align}
&f_x =
\begin{pmatrix}
1 \\
0 \\
0
\end{pmatrix},
&&1_S - f_x =
\begin{pmatrix}
-1 \\
0 \\
1
\end{pmatrix},
&&f_y =
\begin{pmatrix}
0 \\
1 \\
0
\end{pmatrix},
&&1_S - f_y =
\begin{pmatrix}
0 \\
-1 \\
1
\end{pmatrix},
&&1_S =
\begin{pmatrix}
0 \\
0 \\
1
\end{pmatrix}
\end{align}
then $f_x, 1_S - f_x, f_y, 1_S - f_y, 1_S \in A(S)$, where the pairing is given by the usual Euclidean inner product. The cone of the positive functions $A(S)^+$ and the effect algebra $E(S)$ are generated by $f_x, 1_S - f_x, f_y, 1_S - f_y$. We then have
\begin{equation}
E(S) = \conv( \{ 0, f_x, 1_S - f_x, f_y, 1_S - f_y, 1_S \} ).
\end{equation}
The state space $S$ together with the positive cone $A(S)^{*+}$ is depicted in Figure \ref{fig:boxworld-stateSpace} and the effect algebra $E(S)$ together with the positive cone $A(S)^+$ is depicted in Figure \ref{fig:boxworld-effectAlgebra}.

\begin{figure}[t]
\centering
\begin{subfigure}[t]{0.475\textwidth}
\centering
\includegraphics[width=\textwidth]{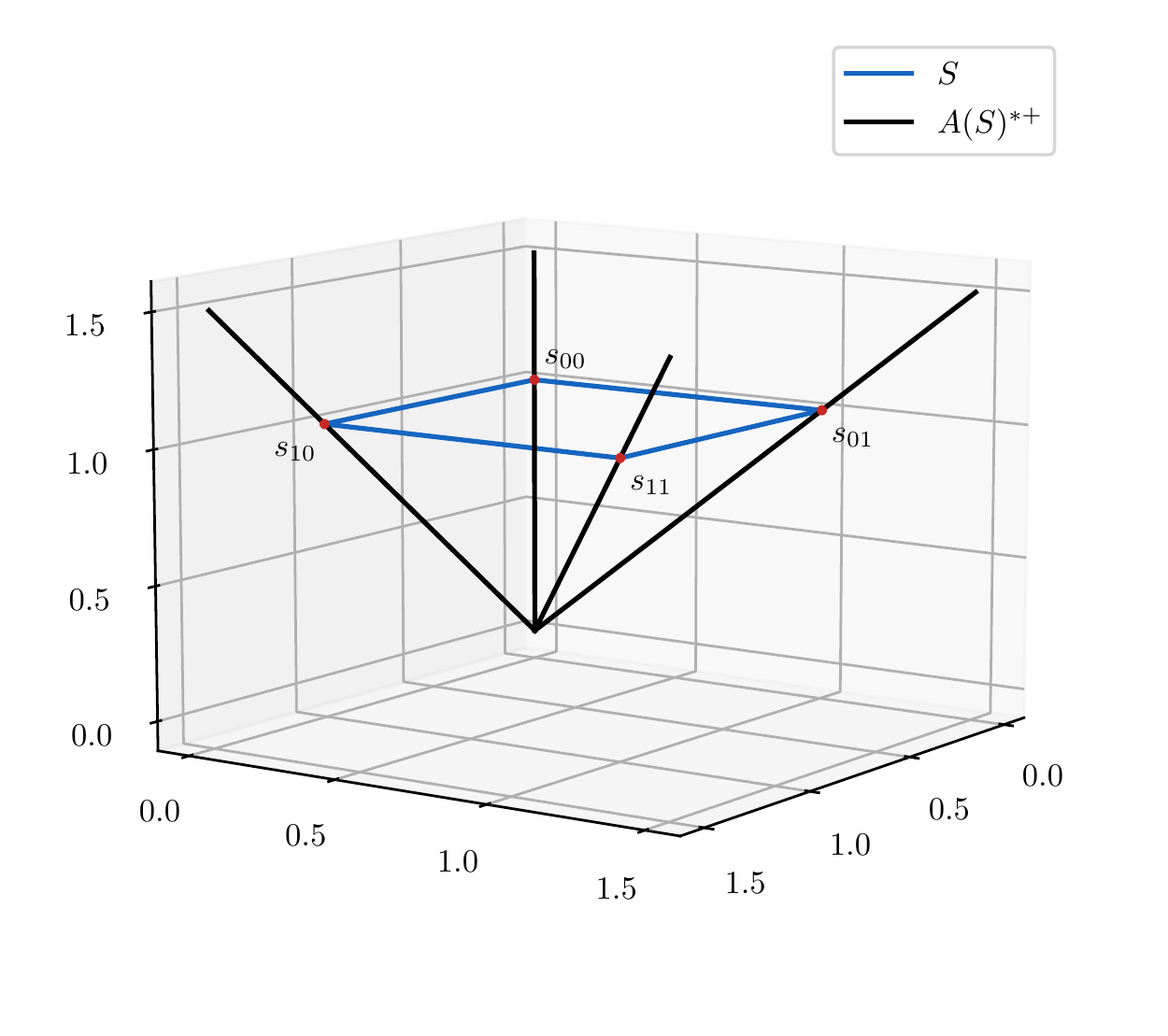}
\caption{Picture of the state space $S$ as a subset of $A(S)^*$. The red points are the pure states $s_{00}$, $s_{10}$, $s_{01}$, and $s_{11}$, the blue lines are the edges of the state space $S$, and the black lines are the edges of the positive cone $A(S)^{*+}$.}
\label{fig:boxworld-stateSpace}
\end{subfigure}
\hfill
\begin{subfigure}[t]{0.475\textwidth}
\centering
\includegraphics[width=\textwidth]{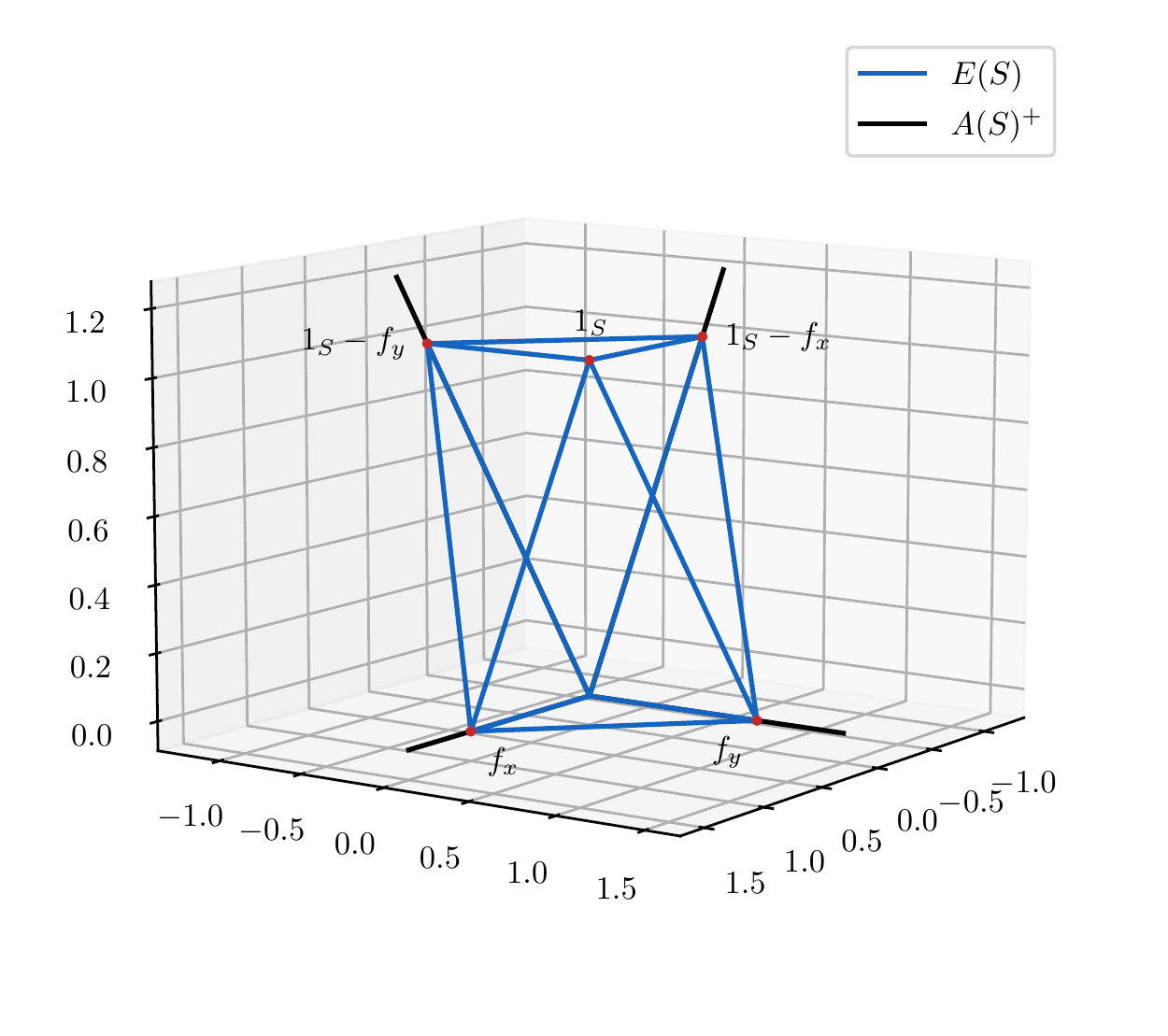}
\caption{Picture of the effect algebra $E(S)$ as a subset of $A(S)$. The red points are the effects $f_x$, $1_S - f_x$, $f_y$, $1_S - f_y$, and $1_S$, the blue lines are the edges of the effect algebra $E(S)$, and the black lines are the edges of the positive cone $A(S)^+$.}
\label{fig:boxworld-effectAlgebra}
\end{subfigure}
\caption{Pictures of the state space $S$ and effect algebra $E(S)$.}
\end{figure}

\subsection{Tensor product}
Since boxworld theory is a hypothetical theory, there is no physical principle that would select a specific tensor product. Therefore, we will investigate the minimal and maximal tensor products. The minimal tensor product is $S \tmin S$ and it contains $16$ pure states; it is given as
\begin{equation}
S \tmin S = \conv( \{ s_{ij} \otimes s_{kl} : i,j,k,l \in \{0,1 \} \} ).
\end{equation}
Let us characterize the maximal tensor product. Since $S \tmax S \subset A(S)^* \otimes A(S)^*$ and since $s_{00}, s_{10}, s_{01}$ is a basis of $A(S)^*$, we can express any state $x \in S \tmax S$ as
\begin{equation}
x = \sum_{I,J \in \{00, 10, 01\}} \alpha_{IJ} s_I \otimes s_J,
\end{equation}
where $I,J$ are multi-indexes. Since $\< x, 1_S \otimes 1_S \> = 1$ we must have $\sum_{I,J \in \{00, 10, 01\}} \alpha_{IJ} = 1$. We know from Definition \ref{def:tensor-bipartite-maxProd} that $x \in S \tmax S$ if and only if we have $\< x, g_A \otimes g_B \> \geq 0$ for all $g_A, g_B \in E(S)$. But since $E(S)$ is generated by $f_x$, $1_S - f_x$, $f_y$, and $1_S - f_y$, it is sufficient to check only for $g_A, g_B \in \{ f_x, 1_S - f_x, f_y, 1_S - f_y \}$, this yields $16$ conditions. One can explicitly write down all of the $16$ conditions and find the most general form of the coefficients $\alpha_{IJ}$. This can be carried out numerically and one can find that the pure entangled states can be characterized in terms of correlations between Alice and Bob \cite{BarrettLindenMassarPironioPopescuDavid-nonlocal,JanottaGogolinBarrettBrunner-nonlocal,BrunnerKaplanLeverrierSkrypczyk-dimensions} and that they maximally violate the CHSH inequality, see also \cite{JencovaPlavala-PRbox} for the construction of such states.

Another option is to express $x \in S \tmax S$ as
\begin{equation} \label{eq:boxworld-tensor-xDef}
x = s_{00} \otimes v_0 + s_{10} \otimes v_1 + s_{01} \otimes v_2.
\end{equation}
From $\<x, 1_S \otimes 1_S\> = 1$ we get
\begin{equation} \label{eq:boxworld-tensor-trace}
\< v_0 + v_1 + v_2, 1_S \> = 1.
\end{equation}
For $g \in E(S)$ we get
\begin{align}
\< x, f_x \otimes g \> &= \< v_1, g\>, \label{eq:boxworld-tensor-maxfx} \\
\< x, (1_S - f_x) \otimes g \> &= \< v_0 + v_2, g\>, \label{eq:boxworld-tensor-max1-fx} \\
\< x, f_y \otimes g \> &= \< v_2, g\>, \label{eq:boxworld-tensor-maxfy} \\
\< x, (1_S - f_x) \otimes g \> &= \< v_0 + v_1, g\>. \label{eq:boxworld-tensor-max1-fy}
\end{align}
We can now express the positivity conditions $\< x, g_A \otimes g_B \> \geq 0$ in terms of $v_0$, $v_1$, and $v_2$. \eqref{eq:boxworld-tensor-maxfx} and \eqref{eq:boxworld-tensor-maxfy} imply $v_1, v_2 \in A(S)^{*+}$. \eqref{eq:boxworld-tensor-max1-fx} and \eqref{eq:boxworld-tensor-max1-fy} imply $v_0 + v_2 \in A(S)^{*+}$ and $v_0 + v_1 \in A(S)^{*+}$. Note that $v_0$ does not have to be an element of the positive cone $A(S)^{*+}$, this was not implied by any of the positivity conditions and, as we will see, it will not be.

Since $v_1, v_2 \in A(S)^{*+}$, there must be $\lambda_1, \lambda_2 \in \Rp$ and $y_1, y_2 \in S$ such that $v_1 = \lambda_1 y_1$, $v_2 = \lambda_2 y_2$. Since $v_0 \in A(S)$, there must be $\mu, \mu' \in \Rp$ and $z, z' \in S$ such that $v_0 = \mu z - \mu' z'$. The positivity conditions \eqref{eq:boxworld-tensor-max1-fx} and \eqref{eq:boxworld-tensor-max1-fy} then become
\begin{align}
\lambda_1 y_1 + \mu z - \mu' z' &\geq 0, \label{eq:boxworld-tensor-y1Cond} \\
\lambda_2 y_2 + \mu z - \mu' z' &\geq 0, \label{eq:boxworld-tensor-y2Cond}
\end{align}
and as a result of \eqref{eq:boxworld-tensor-trace} we must have
\begin{equation} \label{eq:boxworld-tensor-sumTrace}
\lambda_1 + \lambda_2 + \mu - \mu' = 1.
\end{equation}
Now we can prove the following lemmata:
\begin{lemma} \label{lemma:boxworld-tensor-ineqsToSep}
Let $x \in S \tmax S$ be a state given by \eqref{eq:boxworld-tensor-xDef}, i.e.,
\begin{equation}
x = s_{00} \otimes (\mu z - \mu' z') + \lambda_1 s_{10} \otimes y_1 + \lambda_2 s_{01} \otimes y_2,
\end{equation}
where $y_1, y_2, z, z' \in S$ and $\lambda_1, \lambda_2, \mu, \mu' \in \Rp$. If $\lambda_1 y_1 \geq \mu' z'$ and $\lambda_2 y_2 \geq \mu' z'$, then $x$ is separable, i.e., $x \in S \tmin S$.
\end{lemma}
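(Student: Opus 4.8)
The plan is to show directly that $x$ can be written as a convex combination of product states by isolating the ``troublesome'' negative part $-\mu' z'$ and redistributing it among the positive parts. The key observation is that $s_{11} = s_{10} + s_{01} - s_{00}$ by \eqref{eq:boxworld-stateSpace-s11}, which means that the vector $-s_{00} \otimes w$ can be converted into $s_{10} \otimes w + s_{01} \otimes w - s_{11} \otimes w$ for any $w$; in other words, the product state $s_{11} \otimes w$ carries a built-in $-s_{00}$ component that we can use to cancel the negative contribution coming from $v_0 = \mu z - \mu' z'$.

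First I would rewrite the term $s_{00} \otimes (\mu z - \mu' z')$. Using \eqref{eq:boxworld-stateSpace-s11} in the form $-s_{00} = s_{11} - s_{10} - s_{01}$, I get
\begin{equation}
- \mu' s_{00} \otimes z' = \mu' s_{11} \otimes z' - \mu' s_{10} \otimes z' - \mu' s_{01} \otimes z',
\end{equation}
so that
\begin{equation}
x = \mu\, s_{00} \otimes z + s_{10} \otimes (\lambda_1 y_1 - \mu' z') + s_{01} \otimes (\lambda_2 y_2 - \mu' z') + \mu'\, s_{11} \otimes z'.
\end{equation}
Now the hypotheses $\lambda_1 y_1 \geq \mu' z'$ and $\lambda_2 y_2 \geq \mu' z'$ enter: they say precisely that $\lambda_1 y_1 - \mu' z' \in A(S)^{*+}$ and $\lambda_2 y_2 - \mu' z' \in A(S)^{*+}$, so by Lemma \ref{lemma:basic-results-base} these are non-negative multiples of states in $S$, say $\lambda_1 y_1 - \mu' z' = \kappa_1 w_1$ and $\lambda_2 y_2 - \mu' z' = \kappa_2 w_2$ with $\kappa_1, \kappa_2 \in \Rp$ and $w_1, w_2 \in S$. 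Each of the four terms is now a non-negative multiple of a product state $s_{ij} \otimes (\text{state})$, hence lies in $\cone(\{x_A \otimes x_B : x_A \in S, x_B \in S\})$, and therefore $x \in \cone(\ldots)$; since also $\langle x, 1_S \otimes 1_S\rangle = 1$ by \eqref{eq:boxworld-tensor-trace}, we conclude $x$ is a convex combination of product states, i.e. $x \in S \tmin S$.

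The one point that needs a small check is that the total weight works out: applying $1_S \otimes 1_S$ to the rewritten expression and using \eqref{eq:boxworld-tensor-trace} (equivalently \eqref{eq:boxworld-tensor-sumTrace}) should give $\mu + \kappa_1 + \kappa_2 + \mu' = 1$, which confirms we indeed have a genuine convex combination rather than just an element of the cone. I expect the main (though still modest) obstacle to be bookkeeping: keeping track of the coefficients through the substitution and making sure no negativity sneaks back in — in particular verifying that $\mu \geq 0$ (given) and that $z, z', w_1, w_2 \in S$ so that every term is a legitimate scaled product state. There is no deep difficulty here; the substitution $s_{11} \otimes z' \leftrightarrow (s_{10} + s_{01} - s_{00}) \otimes z'$ is the whole idea, and the inequality hypotheses are exactly what makes the two ``mixed'' coefficients land in the positive cone.
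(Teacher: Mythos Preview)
Your proof is correct and follows exactly the same approach as the paper: the key step is the substitution coming from \eqref{eq:boxworld-stateSpace-s11}, yielding the decomposition
\[
x = \mu\, s_{00} \otimes z + s_{10} \otimes (\lambda_1 y_1 - \mu' z') + s_{01} \otimes (\lambda_2 y_2 - \mu' z') + \mu'\, s_{11} \otimes z',
\]
after which the hypotheses make every term a non-negative multiple of a product state. The paper records this identity and simply says ``from which the result easily follows''; your additional remarks on normalization and Lemma~\ref{lemma:basic-results-base} just spell out that last step.
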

\begin{proof}
Using \eqref{eq:boxworld-stateSpace-s11} we get
\begin{equation}
x = \mu s_{00} \otimes z + s_{10} \otimes (\lambda_1 y_1 - \mu' z') + s_{01} \otimes (\lambda_2 y_2 - \mu' z') + \mu'  s_{11} \otimes z'
\end{equation}
from which the result easily follows.
\end{proof}

\begin{lemma}
Let $x \in S \tmax S$ be a state given by \eqref{eq:boxworld-tensor-xDef}, i.e.,
\begin{equation}
x = s_{00} \otimes (\mu z - \mu' z') + \lambda_1 s_{10} \otimes y_1 + \lambda_2 s_{01} \otimes y_2,
\end{equation}
where $y_1, y_2, z, z' \in S$ and $\lambda_1, \lambda_2, \mu, \mu' \in \Rp$. $x$ is entangled, i.e., $x \notin S \tmin S$, only if the coefficients $\lambda_1$, $\lambda_2$, $\mu$, $\mu'$ are all non-zero.
\end{lemma}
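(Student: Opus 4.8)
The approach is to prove the contrapositive: if one of $\lambda_1$, $\lambda_2$, $\mu$, $\mu'$ equals zero, then $x \in S \tmin S$. The common starting point is that, because $x \in S \tmax S$, the data $y_1, y_2, z, z', \lambda_1, \lambda_2, \mu, \mu'$ satisfy the positivity constraints \eqref{eq:boxworld-tensor-y1Cond}, \eqref{eq:boxworld-tensor-y2Cond} and the normalization \eqref{eq:boxworld-tensor-sumTrace}, and moreover $v_1 = \lambda_1 y_1 \geq 0$ and $v_2 = \lambda_2 y_2 \geq 0$ by \eqref{eq:boxworld-tensor-maxfx} and \eqref{eq:boxworld-tensor-maxfy}. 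I would then split into four cases according to which coefficient vanishes.

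If $\mu = 0$, then \eqref{eq:boxworld-tensor-y1Cond} and \eqref{eq:boxworld-tensor-y2Cond} read $\lambda_1 y_1 \geq \mu' z'$ and $\lambda_2 y_2 \geq \mu' z'$, which are precisely the hypotheses of Lemma \ref{lemma:boxworld-tensor-ineqsToSep}, so $x$ is separable. If $\mu' = 0$, then $v_0 = \mu z \geq 0$, so all of $v_0, v_1, v_2$ lie in $A(S)^{*+}$ and \eqref{eq:boxworld-tensor-sumTrace} becomes $\mu + \lambda_1 + \lambda_2 = 1$; hence $x = \mu\,(s_{00} \otimes z) + \lambda_1\,(s_{10} \otimes y_1) + \lambda_2\,(s_{01} \otimes y_2)$ is a convex combination of product states and $x \in S \tmin S$.

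If $\lambda_1 = 0$ (the case $\lambda_2 = 0$ being symmetric under exchanging the roles of $s_{10}$ and $s_{01}$), then $v_1 = 0$, and \eqref{eq:boxworld-tensor-y1Cond} collapses to $\mu z - \mu' z' \geq 0$, i.e.\ $v_0 \in A(S)^{*+}$. Writing $v_0 = \nu w$ with $\nu \in \Rp$ and $w \in S$ via Lemma \ref{lemma:basic-results-base}, and using \eqref{eq:boxworld-tensor-trace} to get $\nu + \lambda_2 = 1$, one obtains $x = \nu\,(s_{00} \otimes w) + \lambda_2\,(s_{01} \otimes y_2) \in S \tmin S$. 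Collecting the four cases, an entangled $x$ cannot have any of $\lambda_1, \lambda_2, \mu, \mu'$ equal to zero.

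The argument is essentially bookkeeping and I do not expect a genuine obstacle. The one point deserving attention is that $v_0 = \mu z - \mu' z'$ need not be positive in general — this is exactly what allows $x$ to be entangled — so one cannot simply invoke positivity of all three $v_i$; it is the degeneracy $\mu' = 0$ (making $v_0$ positive outright), or $\lambda_1 = 0$ / $\lambda_2 = 0$ (making the surviving inequality force $v_0 \geq 0$), or $\mu = 0$ (triggering the previous lemma), that restores separability in each case.
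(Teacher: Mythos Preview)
Your argument is correct and mirrors the paper's proof essentially step for step: both prove the contrapositive by a four-case split, handling $\mu'=0$ by direct positivity of $v_0$, $\lambda_1=0$ (resp.\ $\lambda_2=0$) by noting that \eqref{eq:boxworld-tensor-y1Cond} (resp.\ \eqref{eq:boxworld-tensor-y2Cond}) forces $\mu z-\mu'z'\in A(S)^{*+}$, and $\mu=0$ by invoking Lemma~\ref{lemma:boxworld-tensor-ineqsToSep}. Your write-up is in fact slightly more detailed than the paper's, which dispatches each case in a single sentence.
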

\begin{proof}
If $\mu' = 0$ then $x$ is obviously separable. If $\lambda_1 = 0$ (or $\lambda_2 = 0$), then it follows from \eqref{eq:boxworld-tensor-y1Cond} (resp. from \eqref{eq:boxworld-tensor-y2Cond}) that $\mu z - \mu' z' \in A(S)^{*+}$ and so $x$ is separable. If $\mu = 0$, then then it follows from \eqref{eq:boxworld-tensor-y1Cond} and \eqref{eq:boxworld-tensor-y2Cond} that we have $\lambda_1 y_1 \geq \mu' z'$ and $\lambda_2 y_2 \geq \mu' z'$. The result follows from Lemma \ref{lemma:boxworld-tensor-ineqsToSep}.
\end{proof}

\begin{lemma}
Let $x \in S \tmax S$ be a state given by \eqref{eq:boxworld-tensor-xDef}, i.e.,
\begin{equation}
x = s_{00} \otimes (\mu z - \mu' z') + \lambda_1 s_{10} \otimes y_1 + \lambda_2 s_{01} \otimes y_2,
\end{equation}
where $y_1, y_2, z, z' \in S$ and $\lambda_1, \lambda_2, \mu, \mu' \in \Rp$. If $y_1 = y_2$, then $x$ is separable, i.e., $x \in S \tmin S$.
\end{lemma}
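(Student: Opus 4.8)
Write $y := y_1 = y_2$ and $c := \min(\lambda_1,\lambda_2)$. The plan is to produce an explicit convex decomposition of $x$ into product states, using as the one algebraic ingredient the exchange relation $s_{00} = s_{10} + s_{01} - s_{11}$, which is just \eqref{eq:boxworld-stateSpace-s11} rearranged. This relation lets us trade a (possibly negative) contribution sitting in the $s_{00}$ slot for a nonnegative one in the $s_{11}$ slot, and it is precisely the non-simpliciality of $S$ encoded in \eqref{eq:boxworld-cross} that makes this possible.

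First I would shift the second tensor factor of the $s_{00}$ term: since $s_{00}\otimes(\mu z-\mu'z') = s_{00}\otimes(\mu z-\mu'z'+cy) - c\,s_{00}\otimes y$, we get
\begin{equation}
x = s_{00}\otimes(\mu z-\mu'z'+cy) + \big(\lambda_1 s_{10} + \lambda_2 s_{01} - c\,s_{00}\big)\otimes y .
\end{equation}
Applying the exchange relation to the last coefficient gives $\lambda_1 s_{10} + \lambda_2 s_{01} - c\,s_{00} = (\lambda_1-c)s_{10} + (\lambda_2-c)s_{01} + c\,s_{11}$, which is a nonnegative combination of pure states because $c=\min(\lambda_1,\lambda_2)$; hence it equals $(\lambda_1+\lambda_2-c)\,u$ for some $u\in S$ (and is $0$ when the prefactor vanishes).

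Next I would check that $\mu z-\mu'z'+cy\in A(S)^{*+}$. From the positivity conditions \eqref{eq:boxworld-tensor-y1Cond} and \eqref{eq:boxworld-tensor-y2Cond}, specialised to $y_1=y_2=y$, both $\mu z-\mu'z'+\lambda_1 y$ and $\mu z-\mu'z'+\lambda_2 y$ lie in $A(S)^{*+}$; since $y\in S\subset A(S)^{*+}$ and $A(S)^{*+}$ is stable under adding positive elements, $\mu z-\mu'z'+c'y\geq 0$ for every $c'\geq\min(\lambda_1,\lambda_2)$, in particular for $c'=c$. By Lemma \ref{lemma:basic-results-base} we may then write $\mu z-\mu'z'+cy = \kappa\,v$ with $v\in S$ and $\kappa\in\Rp$. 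Substituting, $x = \kappa\,s_{00}\otimes v + (\lambda_1+\lambda_2-c)\,u\otimes y$, a nonnegative combination of product states; pairing with $1_S\otimes 1_S$ and using $\langle x,1_S\otimes 1_S\rangle=1$ together with $\langle s_{ij},1_S\rangle=1$ forces $\kappa+(\lambda_1+\lambda_2-c)=1$, so this is a convex combination and $x\in S\tmin S$.

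The only real obstacle is the sign bookkeeping: one must choose a single parameter $c$ that simultaneously (i) keeps $\mu z-\mu'z'+cy$ in the positive cone and (ii) keeps $\lambda_1 s_{10}+\lambda_2 s_{01}-c\,s_{00}$ a nonnegative combination of pure states after the exchange substitution. The value $c=\min(\lambda_1,\lambda_2)$ is exactly the choice that makes both hold, and (i) works only because $y$ is a genuine state, hence positive — which is precisely where the hypothesis $y_1=y_2$ is used, since for $y_1\neq y_2$ the two inequalities \eqref{eq:boxworld-tensor-y1Cond}–\eqref{eq:boxworld-tensor-y2Cond} involve different vectors and cannot be merged this way. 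The degenerate situations ($\lambda_1+\lambda_2-c=0$, or $\kappa=0$) are subsumed by the same formula read as a one-term convex combination, so no separate case analysis is needed.
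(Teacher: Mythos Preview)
Your argument is correct and is essentially identical to the paper's proof: the paper assumes without loss of generality $\lambda_1 \leq \lambda_2$ (so your $c=\min(\lambda_1,\lambda_2)$ becomes $\lambda_1$), applies the same exchange relation $s_{10}+s_{01}=s_{00}+s_{11}$ to move a $\lambda_1 y$ into the $s_{00}$ slot, and then invokes \eqref{eq:boxworld-tensor-y1Cond} to conclude that $s_{00}\otimes(\lambda_1 y+\mu z-\mu'z')+\lambda_1\,s_{11}\otimes y+(\lambda_2-\lambda_1)\,s_{01}\otimes y$ is a nonnegative (hence convex) combination of product states. Your version merely keeps the symmetry by writing $c=\min(\lambda_1,\lambda_2)$ and condenses the last two terms into a single $u\otimes y$, but the decomposition and the key ingredients are the same.
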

\begin{proof}
Let $y_1 = y_2 = y$ and without the loss of generality assume that $\lambda_1 \leq \lambda_2$. We have
\begin{align}
x &= s_{00} \otimes (\mu z - \mu' z') + \lambda_1 (s_{10} + s_{01}) \otimes y + (\lambda_2 - \lambda_1) s_{01} \otimes y \\
&= s_{00} \otimes (\lambda_1 y + \mu z - \mu' z') + \lambda_1 s_{11} \otimes y + (\lambda_2 - \lambda_1) s_{01} \otimes y.
\end{align}
It follows from \eqref{eq:boxworld-tensor-y1Cond} that $x$ is separable.
\end{proof}

One can reduce the positivity conditions \eqref{eq:boxworld-tensor-y1Cond} and \eqref{eq:boxworld-tensor-y2Cond} to just one condition. Take \eqref{eq:boxworld-tensor-y1Cond} and denote $\lambda_3 y_3 = \lambda_1 y_1 + \mu z - \mu' z'$, then we get
\begin{equation}
x = s_{00} \otimes (\lambda_3 y_3 - \lambda_1 y_1) + \lambda_1 s_{10} \otimes y_1 + \lambda_2 s_{01} \otimes y_2
\end{equation}
with the positivity condition $\lambda_2 y_2 + \lambda_3 y_3 - \lambda_1 y_1 \geq 0$. Note that the other positivity condition is trivial as we have $\lambda_1 y_1 + \lambda_3 y_3 - \lambda_1 y_1 = \lambda_3 y_3 \geq 0$. Moreover it follows from the normalization condition \eqref{eq:boxworld-tensor-sumTrace} that $\lambda_3 = 1-\lambda_2$. Thus we obtain the following:
\begin{proposition}
Every bipartite state $x \in S \tmax S$ is characterized by states $y_1, y_2, y_3 \in S$ and numbers $\lambda_1, \lambda_2 \in [0,1]$ such that
\begin{equation} \label{eq:boxworld-tensor-y3Cond}
\lambda_2 y_2 + (1-\lambda_2) y_3 - \lambda_1 y_1 \geq 0
\end{equation}
and the state is given as
\begin{equation} \label{eq:boxworld-tensor-xYform}
x = s_{00} \otimes ((1-\lambda_2) y_3 - \lambda_1 y_1) + \lambda_1 s_{10} \otimes y_1 + \lambda_2 s_{01} \otimes y_2.
\end{equation}
\end{proposition}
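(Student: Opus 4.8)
The plan is to assemble the characterization directly from the reductions already carried out above, so the proof will be quite short. First I would restate what the preceding paragraphs established: combining \eqref{eq:boxworld-tensor-maxfx}--\eqref{eq:boxworld-tensor-max1-fy} with the normalization \eqref{eq:boxworld-tensor-trace}, a vector written in the form \eqref{eq:boxworld-tensor-xDef}, i.e. $x = s_{00} \otimes v_0 + s_{10} \otimes v_1 + s_{01} \otimes v_2$ with $v_0, v_1, v_2 \in A(S)^*$, belongs to $S \tmax S$ if and only if $v_1 \geq 0$, $v_2 \geq 0$, $v_0 + v_1 \geq 0$, $v_0 + v_2 \geq 0$, and $\< v_0 + v_1 + v_2, 1_S \> = 1$. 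The entire proof then amounts to re-expressing these conditions in terms of a triple of states and two scalars, and checking the converse.

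Second I would introduce the parameters. Since $S$ is a base of the cone $A(S)^{*+}$, Lemma \ref{lemma:basic-results-base} lets us write $v_1 = \lambda_1 y_1$ and $v_2 = \lambda_2 y_2$ for some $y_1, y_2 \in S$ and $\lambda_1, \lambda_2 \in \Rp$ (choosing $y_i$ arbitrarily in the degenerate case $\lambda_i = 0$). Setting $w := v_0 + v_1 \in A(S)^{*+}$ and applying Lemma \ref{lemma:basic-results-base} again gives $w = \lambda_3 y_3$ with $y_3 \in S$ and $\lambda_3 \in \Rp$, so that $v_0 = \lambda_3 y_3 - \lambda_1 y_1$. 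Substituting these expressions into \eqref{eq:boxworld-tensor-xDef} already produces the shape \eqref{eq:boxworld-tensor-xYform}, once we identify $\lambda_3$.

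Third I would pin down the scalar ranges and the surviving inequality. Pairing the identity $v_0 + v_1 + v_2 = \lambda_3 y_3 + \lambda_2 y_2$ with $1_S$ and using $\< y_i, 1_S \> = 1$ yields $\lambda_3 + \lambda_2 = 1$, hence $\lambda_3 = 1 - \lambda_2$ and $\lambda_2 \in [0,1]$. The constraints $v_1 \geq 0$ and $v_0 + v_1 = \lambda_3 y_3 \geq 0$ are then automatic, and the only remaining condition $v_0 + v_2 \geq 0$ becomes exactly \eqref{eq:boxworld-tensor-y3Cond}; pairing it with $1_S$ gives $1 - \lambda_1 \geq 0$, so $\lambda_1 \in [0,1]$ as well. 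For the converse I would run the computation backwards: given $y_1, y_2, y_3 \in S$ and $\lambda_1, \lambda_2 \in [0,1]$ satisfying \eqref{eq:boxworld-tensor-y3Cond}, set $v_1 = \lambda_1 y_1$, $v_2 = \lambda_2 y_2$, $v_0 = (1-\lambda_2) y_3 - \lambda_1 y_1$, and verify the five membership conditions recalled in the first step, concluding $x \in S \tmax S$.

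There is no genuinely hard step here; the argument is bookkeeping on top of the reductions already made. The one point needing a little care is that the closed interval $[0,1]$ for both $\lambda_1$ and $\lambda_2$ is not imposed by the construction but is deduced by applying the functional $1_S$ to the normalization identity and to the inequality \eqref{eq:boxworld-tensor-y3Cond}; one should also remark that the degenerate cases $\lambda_i = 0$ cause no problem, since then the corresponding $y_i$ is simply chosen freely in $S$.
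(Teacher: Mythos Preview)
Your proposal is correct and follows essentially the same approach as the paper: the paper already carries out the reduction in the paragraph immediately preceding the proposition (defining $\lambda_3 y_3 = v_0 + v_1$, noting the other positivity condition becomes trivial, and using normalization to get $\lambda_3 = 1-\lambda_2$), and then the proof proper is just the one-line observation that both directions are now straightforward. Your write-up is in fact more explicit than the paper's about deducing $\lambda_1,\lambda_2 \in [0,1]$ by pairing with $1_S$ and about handling the degenerate cases $\lambda_i=0$, which is a welcome clarification.
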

\begin{proof}
We have already showed that every state $x \in S \tmax S$ is of this form. Going in the other direction, it is straightforward to check that for any states $y_1, y_2, y_3 \in S$ and numbers $\lambda_1, \lambda_2 \in [0,1]$ satisfying \eqref{eq:boxworld-tensor-y3Cond}, \eqref{eq:boxworld-tensor-xYform} gives a valid bipartite state.
\end{proof}

\eqref{eq:boxworld-tensor-y3Cond} implies that there is $y_4 \in S$ such that $\lambda_2 y_2 + (1-\lambda_2) y_3 = \lambda_1 y_1 + (1-\lambda_1) y_4$. The states $y_i$, $i \in \{1, \ldots, 4\}$ form a tetragon (polygon with four vertexes) inside $S$. These polygons corresponds to positive maps $\Psi: A(S)^{*+} \to A(S)^{*+}$ defined as
\begin{align}
&\dfrac{1}{2} \Psi(s_{00}) = \lambda_1 y_1,
&&\dfrac{1}{2} \Psi(s_{10}) = \lambda_2 y_2,
&&\dfrac{1}{2} \Psi(s_{01}) = (1-\lambda_2) y_3.
\end{align}
Note that $\Psi$ is in general not a channel, because for $s \in S$, in general $\< \Psi(s), 1_S \> \neq 1$. Instead of that we have a weaker condition $\dfrac{1}{2} \< \Psi(s_{10}) + \Psi(s_{01}), 1_S \> = 1$. Also note that using \eqref{eq:boxworld-cross} we get
\begin{equation}
\dfrac{1}{2} \Psi(s_{11}) = \dfrac{1}{2} \left( \Psi(s_{10}) + \Psi(s_{01}) - \Psi(s_{00}) \right) = \lambda_2 y_2 + (1-\lambda_2) y_3 - \lambda_1 y_1 = (1-\lambda_1) y_4.
\end{equation}
Let $x_0 \in S \tmax S$ be given as
\begin{equation} \label{eq:boxworld-tensor-x0}
x_0 = \dfrac{1}{2} \left( s_{00} \otimes (s_{01} - s_{00}) + s_{10} \otimes s_{00} + s_{01} \otimes s_{10} \right)
\end{equation}
then we have
\begin{equation}
x = (\id_S \otimes \Psi)(x_0) = \dfrac{1}{2} \left( s_{00} \otimes ( \Psi(s_{01}) - \Psi(s_{00})) + s_{10} \otimes \Psi(s_{00}) + s_{01} \otimes \Psi(s_{10}) \right)
\end{equation}
or in diagrams
\begin{equation} \label{eq:boxworld-tensor-xPsi}
\begin{quantikz}[row sep=\the\rowsep, align equals at=1.5]
&\multiprepareC[2]{x} &\qw \\
& &\qw
\end{quantikz}
=
\begin{quantikz}[row sep=\the\rowsep, align equals at=1.5]
&[\prepfix]\multiprepareC[2]{x_0} &\qw &\qw \\
& &\gate{\Psi} &\qw
\end{quantikz}
\end{equation}
We will now formalize our results.
\begin{proposition} \label{prop:boxworld-tensor-PsiRel}
For every state $x \in S \tmax S$, there is a positive map $\Psi: A(S)^{*+} \to A(S)^{*+}$ such that
\begin{equation} \label{eq:boxworld-tensor-PsiCond}
\dfrac{1}{2} \< \Psi(s_{10}) + \Psi(s_{01}), 1_S \> = 1.
\end{equation}
such that \eqref{eq:boxworld-tensor-xPsi} holds, and, vice-versa, for every positive map $\Psi: A(S)^{*+} \to A(S)^{*+}$ satisfying \eqref{eq:boxworld-tensor-PsiCond} there is a state $x \in S \tmax S$ such that \eqref{eq:boxworld-tensor-xPsi} holds.
\end{proposition}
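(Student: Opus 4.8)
The plan is to read the proposition as a dictionary between the representation \eqref{eq:boxworld-tensor-xYform} of states in $S \tmax S$ and the maps $\Psi$, using the identification already set up in the discussion above: to the data $y_1, y_2, y_3 \in S$ and $\lambda_1, \lambda_2 \in [0,1]$ we associate the linear map $\Psi$ on $A(S)^*$ determined by $\tfrac{1}{2}\Psi(s_{00}) = \lambda_1 y_1$, $\tfrac{1}{2}\Psi(s_{10}) = \lambda_2 y_2$, $\tfrac{1}{2}\Psi(s_{01}) = (1-\lambda_2) y_3$; this prescription makes sense and is unique because $\{s_{00}, s_{10}, s_{01}\}$ is a basis of $A(S)^*$. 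So the proof is mostly a matter of checking that the two conditions on $\Psi$ (positivity and \eqref{eq:boxworld-tensor-PsiCond}) correspond exactly to the conditions already known on the data, and that \eqref{eq:boxworld-tensor-xPsi} holds.

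For the direction ``state $\Rightarrow$ map'', I would start from $x \in S \tmax S$, invoke the previous proposition to put it in the form \eqref{eq:boxworld-tensor-xYform} with $\lambda_1, \lambda_2 \in [0,1]$, $y_1, y_2, y_3 \in S$ satisfying \eqref{eq:boxworld-tensor-y3Cond}, and define $\Psi$ as above. Positivity of $\Psi$ is checked on the extreme rays of the cone $A(S)^{*+}$, namely those through $s_{00}, s_{10}, s_{01}, s_{11}$ (this suffices since $A(S)^{*+}$ is generated by them): for the first three the images are $2\lambda_1 y_1, 2\lambda_2 y_2, 2(1-\lambda_2)y_3 \in A(S)^{*+}$, and using $s_{11} = s_{10} + s_{01} - s_{00}$ we get $\tfrac{1}{2}\Psi(s_{11}) = \lambda_2 y_2 + (1-\lambda_2) y_3 - \lambda_1 y_1$, which lies in $A(S)^{*+}$ precisely by \eqref{eq:boxworld-tensor-y3Cond}. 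Condition \eqref{eq:boxworld-tensor-PsiCond} is then $\< \lambda_2 y_2 + (1-\lambda_2) y_3, 1_S \> = \lambda_2 + (1-\lambda_2) = 1$, using $\< y_i, 1_S \> = 1$. Finally, applying $\id_S \otimes \Psi$ to the fixed state $x_0$ of \eqref{eq:boxworld-tensor-x0} and substituting the values of $\tfrac{1}{2}\Psi(s_{ij})$ reproduces \eqref{eq:boxworld-tensor-xYform} term by term, which is exactly \eqref{eq:boxworld-tensor-xPsi}.

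For the converse, given a positive map $\Psi : A(S)^{*+} \to A(S)^{*+}$ satisfying \eqref{eq:boxworld-tensor-PsiCond}, I would use Lemma \ref{lemma:basic-results-base} to write $\tfrac{1}{2}\Psi(s_{00}) = \lambda_1 y_1$, $\tfrac{1}{2}\Psi(s_{10}) = \lambda_2 y_2$, $\tfrac{1}{2}\Psi(s_{01}) = \lambda_3 y_3$ with $\lambda_i \in \Rp$ and $y_i \in S$ (taking $y_i$ arbitrary when the image is $0$). Then \eqref{eq:boxworld-tensor-PsiCond} forces $\lambda_2 + \lambda_3 = 1$, so $\lambda_2 \in [0,1]$ and $\lambda_3 = 1-\lambda_2$; positivity of $\Psi$ at $s_{11}$ gives $\tfrac{1}{2}\Psi(s_{11}) = \lambda_2 y_2 + (1-\lambda_2)y_3 - \lambda_1 y_1 \geq 0$, which is \eqref{eq:boxworld-tensor-y3Cond}, and pairing this inequality with $1_S$ yields $1 - \lambda_1 \geq 0$, so $\lambda_1 \in [0,1]$ as well. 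Defining $x := (\id_S \otimes \Psi)(x_0)$ and performing the same substitution puts $x$ in the form \eqref{eq:boxworld-tensor-xYform}, so the previous proposition gives $x \in S \tmax S$, while \eqref{eq:boxworld-tensor-xPsi} holds by construction. I do not anticipate a real obstacle here; the only point needing a little care is this last bookkeeping step, i.e.\ deducing $\lambda_1, \lambda_2 \in [0,1]$ from positivity of $\Psi$ and \eqref{eq:boxworld-tensor-PsiCond} rather than assuming it, together with the routine remark that positivity of a linear map on a finitely generated cone need only be verified on the generators.
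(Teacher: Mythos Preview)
Your argument is correct. The forward direction (state $\Rightarrow$ map) is essentially what the paper does in the discussion immediately preceding the proposition, and the paper's proof simply refers back to that discussion.

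For the converse, your route differs from the paper's. You extract the data $\lambda_i, y_i$ from $\Psi$, verify that they satisfy the hypotheses of the preceding proposition (including the bookkeeping that $\lambda_1, \lambda_2 \in [0,1]$), and then invoke that proposition to conclude $x \in S \tmax S$. The paper instead appeals directly to Proposition~\ref{prop:channels-CP-maxTensor}: since $\Psi$ is positive, $\id_S \otimes \Psi$ maps $S \tmax S$ into the positive cone of $A(S \tmax S)^*$, so $(\id_S \otimes \Psi)(x_0)$ is automatically positive on all product effects; only the normalization $\langle (\id_S \otimes \Psi)(x_0), 1_S \otimes 1_S \rangle = 1$ needs to be checked, and that is exactly \eqref{eq:boxworld-tensor-PsiCond} after a one-line partial-trace computation. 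Your approach is more hands-on and self-contained, reusing the explicit parametrisation; the paper's is shorter and makes the point that the result is really an instance of the general fact that positivity is the same as complete positivity with respect to $\tmax$. In particular, the careful step you flag (deducing $\lambda_1 \in [0,1]$) becomes unnecessary in the paper's argument.
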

\begin{proof}
We already know that to every $x \in S \tmax S$ we can find the corresponding map $\Psi: A(S)^{*+} \to A(S)^{*+}$ such that \eqref{eq:boxworld-tensor-xPsi} holds. So let $\Psi: A(S)^{*+} \to A(S)^{*+}$ be a positive map such that \eqref{eq:boxworld-tensor-PsiCond} holds. Since $\Psi$ is positive, it follows from Proposition \ref{prop:channels-CP-maxTensor} that $(\id_S \otimes \Psi)(x_0) \in A(S \tmax S)^{*+}$ and we only need to check the normalization. We have
\begin{equation}
\begin{quantikz}[row sep=\the\rowsep, align equals at=1.5]
&\multiprepareC[2]{x_0} &\qw &\ground{} \\
& &\gate{\Psi} &\qw
\end{quantikz}
= \dfrac{1}{2} ( \Psi(s_{01}) - \Psi(s_{00}) + \Psi(s_{00}) + \Psi(s_{10}) ) = \dfrac{1}{2} ( \Psi(s_{01}) + \Psi(s_{10}) )
\end{equation}
and the normalization of $(\id_S \otimes \Psi)(x_0)$ follows from \eqref{eq:boxworld-tensor-PsiCond}. Therefore we have $(\id_S \otimes \Psi)(x_0) \in S \tmax S$.
\end{proof}
One can spot certain similarity between the state $x_0 \in S \tmax S$ and the maximally entangled state $\ketbra{\phi^+} \in \dens(\Ha \otimes \Ha)$ in the sense that both states are used to construct a correspondence between entangled states and positive maps, or completely positive channels. In fact, this similarity is not a coincidence, but it stems from a shared property of both $S$ and $\dens(\Ha)$: isomorphism between $A(K)^+$ and $A(K)^{*+}$. We have already argued that the cones $A(\dens(\Ha))^+$ and $A(\dens(\Ha))^{*+}$ are isomorphic in Section \ref{sec:QT}. One can construct similar isomorphism between $A(S)^+$ and $A(S)^{*+}$ as follows: let $\iota: A(S) \to A(S)^*$ be defined as
\begin{align}
&\iota(f_x) = s_{00},
&&\iota(f_y) = s_{10},
&&\iota(1_S - f_y) = s_{01}.
\end{align}
Then clearly $\iota: A(S)^+ \to A(S)^{*+}$, i.e., $\iota$ is a positive map and it is also straightforward to show that $\iota$ is invertible. Hence the cones $A(S)^+$ and $A(S)^{*+}$ are isomorphic. Then one can use the result on the structure of channels from Proposition \ref{prop:channels-channels-tensorSubset} together with the isomorphism between $A(K)^+$ and $A(K)^{*+}$ to construct the correspondence.

\subsection{Channels}
Since we use either minimal or maximal tensor product, we know from Propositions \ref{prop:channels-CP-minTensor} and \ref{prop:channels-CP-maxTensor} that all positive channels are completely positive. The channels that are often used are the isomorphisms of the state space. These are the channels that are invertible and the inverse map is a channel as well. The isomorphisms correspond to rotations and reflections of the state space. For example, consider the channel $R: S \to S$ given as
\begin{align}
&R(s_{00}) = s_{10},
&&R(s_{10}) = s_{11},
&&R(s_{01}) = s_{00}.
\end{align}
It then follows that
\begin{equation}
R(s_{11}) = R(s_{10}) + R(s_{01}) - R(s_{00}) = s_{11} + s_{00} - s_{10} = s_{01}.
\end{equation}
It easily follows that $R^4 = \id_S$ and so we get that the channel $R$ is an isomorphism. Another such isomorphism is $M:S \to S$ given as
\begin{align}
&M(s_{00}) = s_{11},
&&M(s_{10}) = s_{10},
&&M(s_{01}) = s_{01}.
\end{align}
Then we have $M(s_{11}) = s_{00}$. We again have $M^2 = \id_S$. These isomorphism generate the whole group of isomorphisms of $S$. One can also relate the isomorphisms of $S$ to the pure entangled states in $S \tmax S$ by using the result of Proposition \ref{prop:boxworld-tensor-PsiRel}.

\subsection{Compatibility of channels}
Compatibility of the measurements on the square state space was investigated before \cite{BuschHeinosaariSchultzStevens-compatibility, JencovaPlavala-maxInc} and one can show that the two-outcome measurements corresponding to the effects $f_x$ and $f_y$ are maximally incompatible, meaning that they are as incompatible as mathematically possible. This is closely related to the maximal violations of CHSH inequality, see \cite{PlavalaZiman-PRbox,JencovaPlavala-PRbox}.

\section*{Acknowledgement}
The author is thankful to Teiko Heinosaari and Matthias Kleinmann for comments on the early version of the manuscript and to Thomas Bullock and Peter Morgan for helpful discussions. The author acknowledges the support by the Deutsche Forschungsgemeinschaft (DFG, GermanResearch Foundation - 447948357) and the ERC (Consolidator Grant 683107/TempoQ).

{
	\interlinepenalty=10000
	\bibliographystyle{elsarticle-num}
	\bibliography{citations}
}

\appendix

\makeatletter
\gdef\theproposition{\@Alph\c@section.\arabic{proposition}}
\makeatother

\section{Convex cones and ordered vector spaces} \label{appendix:cones}

%
%
\begin{definition} \label{def:cones-cone}
Let $V$ denote a real, finite-dimensional vector space. A \emph{cone} $C \subset V$ is a set such that for any $v \in C$ and $\lambda \in \Rp$ we have $\lambda v \in C$. Let $X \subset V$, then $\cone(X)$ is the smallest cone containing $X$, i.e., $\cone(X) = \{ \lambda v : v \in X, \lambda \in \Rp \}$.
\end{definition}
A cone is a subset of $V$ that is invariant to scaling, i.e., to multiplication by $\lambda \in \Rp$. The following is a simple lemma about the interplay between linear hulls and conic hulls.
\begin{lemma} \label{lemma:cones-coneVsSpan}
Let $X \subset V$, then $\linspan( \cone(X) ) = \cone( \linspan(X) ) = \linspan(X)$.
\end{lemma}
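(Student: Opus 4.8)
The plan is to prove the two equalities $\linspan(\cone(X)) = \cone(\linspan(X))$ and $\cone(\linspan(X)) = \linspan(X)$ separately, the second one being almost trivial and the first one following from the inclusion $X \subseteq \cone(X) \subseteq \linspan(X)$ combined with monotonicity of $\linspan$.

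First I would observe that $\linspan(X) = \cone(\linspan(X))$: a linear subspace $W$ is automatically a cone, since for $v \in W$ and $\lambda \in \Rp$ we have $\lambda v \in W$ by closure under scalar multiplication; hence $\cone(W) \subseteq W$, and the reverse inclusion $W \subseteq \cone(W)$ holds because $\cone(Y)$ always contains $Y$ (take $\lambda = 1$ in Definition \ref{def:cones-cone}). Applying this with $W = \linspan(X)$ gives the rightmost equality.

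Next I would establish $\linspan(\cone(X)) = \linspan(X)$. For the inclusion $\linspan(X) \subseteq \linspan(\cone(X))$, note $X \subseteq \cone(X)$ (again $\lambda = 1$), and $\linspan$ is monotone, so $\linspan(X) \subseteq \linspan(\cone(X))$. For the reverse inclusion, $\cone(X) = \{\lambda v : v \in X, \lambda \in \Rp\}$ by Definition \ref{def:cones-cone}, and each such $\lambda v$ is a scalar multiple of an element of $X$, hence lies in $\linspan(X)$; therefore $\cone(X) \subseteq \linspan(X)$, and by monotonicity $\linspan(\cone(X)) \subseteq \linspan(\linspan(X)) = \linspan(X)$, where the last step uses idempotence of $\linspan$ (the span of a subspace is itself). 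Chaining the two established equalities yields $\linspan(\cone(X)) = \linspan(X) = \cone(\linspan(X))$, which is the claim.

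There is no real obstacle here; the only thing to be slightly careful about is the degenerate case $X = \emptyset$ or $X = \{0\}$, where $\cone(X) = \{0\}$ and $\linspan(X) = \{0\}$, and all three sets coincide trivially — so the argument above goes through unchanged. The proof is entirely a matter of unwinding the definitions of $\cone$ and $\linspan$ and using monotonicity and idempotence of the hull operators.
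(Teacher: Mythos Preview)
Your proof is correct and follows essentially the same approach as the paper's. The paper proves the two equalities in the opposite order and writes out the inclusion $\linspan(\cone(X)) \subseteq \linspan(X)$ by explicit element-chasing (taking $v = \sum_i \alpha_i \lambda_i x_i$), whereas you phrase the same step via the inclusion $\cone(X) \subseteq \linspan(X)$ together with monotonicity and idempotence of $\linspan$; these are the same argument in different clothing.
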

\begin{proof}
The proof is straightforward. Since $X \subset \cone(X)$, it follows that $\linspan(X) \subset \linspan(\cone(X))$. So let $v \in \linspan(\cone(X))$, then there are $w_i \in \cone(X)$ and $\alpha_i \in \RR$ for $i \in \{1, \ldots, n\}$ such that $v = \sum_{i=1}^n \alpha_i w_i$. But since $w_i \in \cone(X)$, there are $\lambda_i \in \Rp$ and $x_i \in X$ such that $w_i = \lambda_i x_i$ for all $i \in \{1, \ldots, n\}$, so we get $v = \sum_{i=1}^n \alpha_i \lambda_i x_i$, which implies $v \in \linspan(X)$ and $\linspan(\cone(X)) \subset \linspan(X)$ follows. So we have $\linspan(\cone(X)) = \linspan(X)$. To show that $\cone(\linspan(X)) = \linspan(X)$, simply observe that for any $v \in \linspan(X)$ and $\lambda \in \Rp$ we must have $\lambda v \in \linspan(X)$, so $\linspan(X)$ already is a cone.
\end{proof}

\begin{definition}
Let $V$ be a real, finite-dimensional vector space equipped with the Euclidean topology and let $C \subset V$ be a cone. We say that:
\begin{itemize}
\item $C$ is \emph{convex} if $C$ is a convex set, i.e., $\conv(C) = C$;
\item $C$ is \emph{closed} if $C$ is a closed set in the Euclidean topology on $V$;
\item $C$ is \emph{pointed} if $C \cap -C = \{ 0 \}$;
\item $C$ is \emph{generating} if $\linspan(C) = V$, i.e., if $C - C = V$.
\end{itemize}
\end{definition}
Some authors refer to convex, closed, pointed, generating cones as proper cones. We are interested in cones because a suitable cone $C \subset V$ gives the structure of ordered vector space to $V$.
\begin{definition} \label{def:cones-OVS}
\emph{Ordered vector space} is a vector space $V$ equipped with a binary relation $\leq$ such that for all $v, w, x \in V$, $\lambda \in \Rp$ we have that
\begin{enumerate}[label=(OVS\arabic*), leftmargin=*]
\item\label{item:cones-OVS-reflexive} $\leq$ is reflexive, i.e., $v \leq v$;
\item\label{item:cones-OVS-antiSymmetric} $\leq$ is anti-symmetric, i.e., $v \leq w$ and $w \leq v$ implies $v = w$;
\item\label{item:cones-OVS-transitive} $\leq$ is transitive, i.e., $v \leq w$ and $w \leq x$ implies $v \leq x$;
\item\label{item:cones-OVS-cancelative} $\leq$ respects the addition on $V$, i.e., $v \leq w$ implies $v + x \leq w + x$;
\item\label{item:cones-OVS-multiplicative} $\leq$ respects the multiplication by positive scalars, i.e., $v \leq w$ implies $\lambda v \leq \lambda w$.
\end{enumerate}
\end{definition}

\begin{definition}
Let $(V, \leq)$ be an ordered vector space. We say that $V$ is a \emph{directed set} under the ordering $\leq$ if for every $v_1, v_2 \in V$ there is $w \in V$ such that $v_1 \leq w$ and $v_2 \leq w$.
\end{definition}
We will show that we can construct a natural cone from the order $\leq$ and that we can construct an order on $V$ given a suitable cone $C \subset V$. Let $V$ be a real, finite-dimensional vector space, equipped with the relation $\leq$, so that $(V, \leq)$ is an ordered vector space. Let
\begin{equation} \label{eq:cones-positiveCone}
C = \{ v \in V : 0 \leq v \}
\end{equation}
be the set of positive elements, we will show that $C$ is a convex, pointed, generating cone. Note that we will use $\geq$ instead of $\leq$ when better suited, we have $v \geq w$ whenever $w \leq v$.
\begin{proposition}
Let $V$ be an ordered vector space and let $C \subset V$ be a positive cone as given by \eqref{eq:cones-positiveCone}, then $C$ is a convex and pointed cone.
\end{proposition}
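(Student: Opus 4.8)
The plan is to verify the three cone properties directly from the order axioms \ref{item:cones-OVS-reflexive}--\ref{item:cones-OVS-multiplicative}, since the statement only claims $C$ is a cone, convex, and pointed. First I would check that $C$ is a cone: if $v \in C$, i.e.\ $0 \leq v$, and $\lambda \in \Rp$, then \ref{item:cones-OVS-multiplicative} applied to $0 \leq v$ gives $\lambda \cdot 0 \leq \lambda v$, and since $\lambda \cdot 0 = 0$ we get $0 \leq \lambda v$, so $\lambda v \in C$. This is the defining property of a cone from Definition \ref{def:cones-cone}.

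Next I would show convexity. Let $v, w \in C$ and $p \in [0,1]$; I want $pv + (1-p)w \in C$. Since $C$ is a cone, both $pv \in C$ and $(1-p)w \in C$, so it suffices to show $C$ is closed under addition. Given $0 \leq v$ and $0 \leq w$, apply \ref{item:cones-OVS-cancelative} with $x = w$ to $0 \leq v$ to get $w \leq v + w$; combined with $0 \leq w$ and transitivity \ref{item:cones-OVS-transitive} this yields $0 \leq v + w$, hence $v + w \in C$. Therefore $pv + (1-p)w \in C$, establishing $\conv(C) = C$.

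Finally, pointedness: suppose $v \in C \cap (-C)$. Then $0 \leq v$ and $0 \leq -v$. From $0 \leq -v$, applying \ref{item:cones-OVS-cancelative} with $x = v$ gives $v \leq 0$. Now $0 \leq v$ and $v \leq 0$ together with anti-symmetry \ref{item:cones-OVS-antiSymmetric} force $v = 0$, so $C \cap (-C) = \{0\}$.

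There is no real obstacle here — every step is a one-line deduction from a single order axiom, and the only mild subtlety is being careful to invoke $\lambda \cdot 0 = 0$ and $0 + x = x$ as vector-space identities rather than order facts. (Note that the claim in the excerpt stops at "convex and pointed"; the generating property, which would require the additional hypothesis that $V$ is directed, is presumably handled in a separate statement, so I would not attempt it here.)
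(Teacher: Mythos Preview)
Your proof is correct and follows essentially the same approach as the paper: cone via \ref{item:cones-OVS-multiplicative}, convexity via \ref{item:cones-OVS-multiplicative} followed by \ref{item:cones-OVS-cancelative} and \ref{item:cones-OVS-transitive}, and pointedness via \ref{item:cones-OVS-cancelative} and \ref{item:cones-OVS-antiSymmetric}. The only cosmetic difference is that you factor the convexity argument through ``$C$ is closed under addition,'' whereas the paper applies the same steps directly to $\lambda x$ and $(1-\lambda)y$; your parenthetical about the generating property being handled separately is also correct.
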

\begin{proof}
Let $x \in C$ and $\lambda \in \Rp$, then $0 \leq x$ and $0 \leq \lambda x$ follows from \ref{item:cones-OVS-multiplicative} so $C$ is a cone. Let $x, y \in C$ and $\lambda \in [0,1]$, then we have since $0 \leq x$ and $0 \leq y$. From \ref{item:cones-OVS-multiplicative} we get $0 \leq \lambda x$ and $0 \leq (1-\lambda) y$ and we get $0 \leq \lambda x + (1-\lambda) y$ from \ref{item:cones-OVS-cancelative} and \ref{item:cones-OVS-transitive}. It follows that $C$ is convex. Assume that $x \in C$ and $x \in -C$, i.e., that we have $0 \leq x$ and $0 \leq -x$. Using \ref{item:cones-OVS-cancelative} we get $x \leq 0$ and then from \ref{item:cones-OVS-antiSymmetric} we get $x = 0$. It follows that $C \cap -C = \{ 0 \}$, so $C$ is pointed.
\end{proof}

\begin{proposition}
Let $(V, \leq)$ be an ordered vector space such that $V$ is a directed set under the ordering $\leq$. Then the positive cone $C \subset V$ given by \eqref{eq:cones-positiveCone} is generating.
\end{proposition}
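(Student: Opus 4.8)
The plan is to show directly that $C - C = V$, which is exactly the statement that $C$ is generating (equivalently $\linspan(C) = V$). So I would fix an arbitrary $v \in V$ and aim to write it as a difference of two elements of $C$.

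First I would invoke the directedness hypothesis applied to the pair $v, 0 \in V$: there exists $w \in V$ with $v \leq w$ and $0 \leq w$. The second inequality says immediately that $w \in C$. For the first, I would add $-v$ to both sides of $v \leq w$, using \ref{item:cones-OVS-cancelative}, to obtain $0 \leq w - v$, i.e.\ $w - v \in C$. Then $v = w - (w - v)$ exhibits $v$ as a difference of two elements of $C$, so $v \in C - C$. Since $v$ was arbitrary, $V \subseteq C - C$; the reverse inclusion $C - C \subseteq V$ is trivial because $V$ is a vector space, hence $C - C = V$ and $C$ is generating.

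There is essentially no obstacle here — the only subtlety worth stating carefully is the passage from $v \leq w$ to $0 \leq w - v$, which is a one-line consequence of \ref{item:cones-OVS-cancelative} (translation invariance of the order), together with the observation that the set $C$ defined by \eqref{eq:cones-positiveCone} is precisely $\{\,v \in V : 0 \leq v\,\}$. If desired, one can close the loop with the earlier convention that "generating" means $\linspan(C) = V$ by noting $C - C \subseteq \linspan(C) \subseteq V$, so all three sets coincide.
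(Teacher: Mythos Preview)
Your proof is correct and follows essentially the same approach as the paper: apply directedness to obtain an upper bound $w$, then decompose $v$ as a difference of two elements of $C$. The only cosmetic difference is that you apply directedness to the pair $(v,0)$, giving $v = w - (w-v)$, whereas the paper applies it to $(v,-v)$ and writes $v = \tfrac{1}{2}(w+v) - \tfrac{1}{2}(w-v)$; your choice avoids the factor of $\tfrac{1}{2}$ and is marginally cleaner.
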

\begin{proof}
Let $v \in V$, then for the pair of elements $v, -v$ there must be $w \in V$ such that $v \leq w$ and $-v \leq w$. This implies that $0 \leq w - v$ and $0 \leq w + v$, so $w - v \in C$ and $w + v \in C$. Since we have
\begin{equation}
v = \dfrac{w + v}{2} - \dfrac{w - v}{2},
\end{equation}
it follows that $v \in \linspan(C)$.
\end{proof}
Thus we have showed that an ordered vector space $(V, \leq)$ contains  the positive cone $C$. Now, we will start by assuming that we have a suitable cone $C \subset V$ and we will show that then we can construct the ordering $\leq$. Therefore we will show complete equivalence between ordered vector spaces and vector spaces with cones.

Let $C \subset V$ be a cone, then we can invert the logic of \eqref{eq:cones-positiveCone} and say that for $v \in V$ we have $0 \leq v$ if and only if $v \in C$, i.e., we can simply say that $C$ is the positive cone given by some order $\leq$. For $v, w \in V$ we then have $v \leq w$ if and only if $0 \leq w - v$, so in principle we can construct the order $\leq$ from the cone $C$. But the order $\leq$ satisfies \ref{item:cones-OVS-reflexive} - \ref{item:cones-OVS-multiplicative} only if the cone $C$ is convex and pointed.
\begin{proposition} \label{prop:cones-orderFromCone}
Let $C \subset V$ be a convex and pointed cone and let $\leq$ be given for $v, w \in V$ as
\begin{equation} \label{eq:cones-orderFromCone}
v \leq w \quad \Leftrightarrow \quad w-v \in C,
\end{equation}
then $(V, \leq)$ is an ordered vector space.
\end{proposition}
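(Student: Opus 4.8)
The plan is simply to verify the five axioms \ref{item:cones-OVS-reflexive}--\ref{item:cones-OVS-multiplicative} of Definition \ref{def:cones-OVS} one at a time, using \eqref{eq:cones-orderFromCone} to translate each into a statement about membership in $C$. Before starting, I would record two elementary facts about a convex cone $C$: first, $0 \in C$ (take any $v \in C$ and $\lambda = 0$ in the definition of a cone, Definition \ref{def:cones-cone}), and second, $C$ is closed under addition, i.e. $a, b \in C \implies a + b \in C$ (by convexity $\tfrac{1}{2}(a+b) \in C$, and then $a + b = 2 \cdot \tfrac{1}{2}(a+b) \in C$ because $C$ is a cone). These two facts, together with pointedness, are all that is needed.

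The verification then proceeds as follows. For \ref{item:cones-OVS-reflexive}, $v \leq v$ because $v - v = 0 \in C$. For \ref{item:cones-OVS-antiSymmetric}, if $v \leq w$ and $w \leq v$ then $w - v \in C$ and $v - w = -(w-v) \in C$, so $w - v \in C \cap (-C) = \{0\}$ by pointedness, hence $v = w$. For \ref{item:cones-OVS-transitive}, if $v \leq w$ and $w \leq x$ then $w - v \in C$ and $x - w \in C$, and adding these (using that $C$ is closed under addition) gives $x - v = (x - w) + (w - v) \in C$, i.e. $v \leq x$. For \ref{item:cones-OVS-cancelative}, if $v \leq w$ then $(w + x) - (v + x) = w - v \in C$, so $v + x \leq w + x$. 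For \ref{item:cones-OVS-multiplicative}, if $v \leq w$ and $\lambda \in \Rp$ then $\lambda w - \lambda v = \lambda(w - v) \in C$ since $C$ is a cone, so $\lambda v \leq \lambda w$.

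I do not expect any genuine obstacle here; the only place requiring a moment's thought is the transitivity step, where one must invoke closure of $C$ under addition, which is exactly where convexity of the cone (as opposed to merely being a cone) enters. The anti-symmetry step is the unique point where pointedness is used. Everything else is a one-line unwinding of the definition \eqref{eq:cones-orderFromCone}. This mirrors, in the reverse direction, the arguments already given for the two preceding propositions showing that the positive cone of an ordered vector space is convex and pointed.
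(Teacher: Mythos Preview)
Your proposal is correct and follows essentially the same route as the paper's proof: both verify the five axioms directly, using $0\in C$ for reflexivity, pointedness for anti-symmetry, closure under addition (from convexity plus the cone property) for transitivity, and the cone property alone for the last two. Your version is in fact slightly more explicit than the paper's, since you spell out why $0\in C$ and why a convex cone is closed under addition, whereas the paper just invokes ``$C$ is a convex cone'' at the transitivity step.
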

\begin{proof}
The proof is rather straightforward. Let $v,w,x \in V$ and $\lambda \in \Rp$, then we have $v - v = 0 \in C$, so $v \leq v$ and \ref{item:cones-OVS-reflexive} holds. Let $v - w \in C$ and $w - v \in C$, then since $w - v = -(v-w)$ and since $C$ is pointed, we have $v-w = 0$, so we get $v=w$ and \ref{item:cones-OVS-antiSymmetric} holds. Let $w-v \in C$ and $x-w \in C$, then we have $x - v = (x-w) + (w-v) \in C$ because $C$ is convex cone, so \ref{item:cones-OVS-transitive} holds. Let $w-v \in C$, then we have $(w+x) - (v+x) = w-v \in C$ and so \ref{item:cones-OVS-cancelative} holds. We also have $\lambda w - \lambda v = \lambda (w-v) \in C$ since $C$ is a cone, so \ref{item:cones-OVS-multiplicative} holds as well.
\end{proof}

\begin{proposition}
Let $C \subset V$ be a convex, pointed cone and let $\leq$ be the order constructed from $C$ as in \eqref{eq:cones-orderFromCone}. Let $C$ be a generating cone, then $V$ is directed set under $\leq$.
\end{proposition}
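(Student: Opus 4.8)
The plan is to exploit the hypothesis that $C$ is generating, but in the more convenient form $C - C = V$. First I would record why $\linspan(C) = V$ is equivalent to $C - C = V$: the inclusion $C - C \subset \linspan(C) = V$ is trivial, and conversely any $v \in \linspan(C)$ is a finite linear combination $v = \sum_i \alpha_i c_i$ with $c_i \in C$; collecting the terms with $\alpha_i \geq 0$ into one element and the terms with $\alpha_i < 0$ into another, and using that a convex cone is closed under addition (if $a, b \in C$ then $\frac{1}{2}(a+b) \in C$ by convexity, hence $a + b \in C$ by the cone property) and under nonnegative scaling, one obtains $v = a - b$ with $a, b \in C$. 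So $V = C - C$; this is exactly the kind of manipulation already used in Lemma~\ref{lemma:cones-coneVsSpan}.

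The directedness then follows immediately. Given $v_1, v_2 \in V$, I would apply $V = C - C$ to the vector $v_1 - v_2$, writing $v_1 - v_2 = c - d$ with $c, d \in C$. Rearranging gives $v_1 + d = v_2 + c =: w$. Then $w - v_1 = d \in C$ and $w - v_2 = c \in C$, so by the definition \eqref{eq:cones-orderFromCone} of the order we have $v_1 \leq w$ and $v_2 \leq w$. Hence for every pair $v_1, v_2 \in V$ there is an upper bound $w$, which is exactly the statement that $(V, \leq)$ is a directed set under $\leq$.

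There is no real obstacle here; the proposition is essentially a restatement of $C - C = V$ through the dictionary \eqref{eq:cones-orderFromCone}. The only point requiring a line of care is the reduction from $\linspan(C) = V$ to $C - C = V$, i.e.\ the fact that sums of elements of $C$ remain in $C$, which uses both convexity and scale-invariance of the cone. Everything else is a one-line rearrangement, and one could alternatively skip the $v_1 - v_2$ trick entirely by writing $v_1 = a_1 - b_1$, $v_2 = a_2 - b_2$ with $a_i, b_i \in C$ and taking $w = a_1 + a_2$, for which $w - v_1 = a_2 + b_1 \in C$ and $w - v_2 = a_1 + b_2 \in C$.
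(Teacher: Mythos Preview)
Your proof is correct, and your alternative at the end (write $v_1 = a_1 - b_1$, $v_2 = a_2 - b_2$ and take $w = a_1 + a_2$) is exactly the paper's argument, up to notation. Your main route via $v_1 - v_2 = c - d$ is a harmless variant of the same idea; the only extra content in your write-up is the explicit justification that a convex cone is closed under addition, which the paper uses implicitly.
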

\begin{proof}
Let $v_1, v_2 \in V$, then since $C$ is generating, there are $x_1, x_2, y_1, y_2 \in C$ such that $v_1 = x_1 - y_1$ and $v_2 = x_2 - y_2$. Let $x = x_1 + x_2$, then we have $x - v_1 = x_2 + y_1 \in C$ and $x - v_2 = x_1 + y_2 \in C$, i.e., we have $v_1 \leq x$ and $v_2 \leq x$.
\end{proof}
Thus we have proved that convex, pointed cones are in one-to-one correspondence with ordered vector spaces. Moreover the cone is generating if and only if $V$ is a directed set.

We have not included the closeness of $C$ into the discussion, but one can easily show the following: let $\{ v_n \} \subset V$ be a Cauchy sequence and let $w \in V$, then $C$ is closed if and only if $w \leq v_n$ implies $w \leq \lim_{n \to \infty} v_n$.

\section{Functionals, duals and hyperplane separation theorems} \label{appendix:duals}

%
%
Let $V$ be a real finite-dimensional vector space. Functional $\psi: V \to \RR$ is a linear map from $V$ to $\RR$, i.e., for $v,w \in A(K)$ and $\alpha, \beta \in \RR$ we have $\psi ( \alpha v + \beta w ) = \alpha \psi(v) + \beta \psi(w)$. It is straightforward to define a linear combination of functionals, let $\psi, \varphi$ be linear functional on $V$, $\alpha, \beta \in \RR$ and $v \in V$, then we define $(\alpha \psi + \beta \varphi)(v) = \alpha \psi(v) + \beta \varphi(v)$. It follows that the set of all functionals $\psi: V \to \RR$ is a vector space.
\begin{definition}
Let $V$ be a real, finite-dimensional vectors space. The \emph{dual vector space} $V^*$ is the vector space of all functionals $\psi: V \to \RR$.
\end{definition}
Given a basis $\{v_1, \ldots, v_n\} \subset V$ we can define corresponding basis in $V^*$.
\begin{definition}
Let $\{v_1, \ldots, v_n\} \subset V$ be a basis of $V$, then the \emph{dual basis} is a basis $\{\psi_1, \ldots, \psi_n\} \subset V^*$ such that
\begin{equation}
\psi_i(v_j) = \delta_{ij},
\end{equation}
where $\delta_{ij}$ is the Kronecker delta,
\begin{equation} \label{eq:duals-KroneckerDelta}
\delta_{ij} =
\begin{cases}
1 & i=j \\
0 & i \neq j
\end{cases}
\end{equation}
\end{definition}
One can show that a dual basis always exists, see \cite{MacLaneBirkhoff-algebra}. The proof is rather simple, one can start with any basis of $V^*$ and solve a series of linear equations to construct the dual basis. Another option is to realize that given a real, finite-dimensional vector space $V$, we can always introduce the Euclidean inner product and use that inner product to construct the dual basis.

A natural question arises: what is the dual of the dual? For a general vector space, this is a non-trivial question, but since we are working with finite-dimensional vector spaces, the question considerably simplifies. Before we proceed, note that we can naturally identify vectors $v \in V$ with the functionals on functionals, $\xi \in V^{**}$, $\xi: V^* \to \RR$.
\begin{proposition}
Let $v \in V$, then we can identify $v$ with a functional on functionals $\xi_v \in V^{**}$.
\end{proposition}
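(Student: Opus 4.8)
The plan is to define the evaluation map explicitly and then verify the two things that make it an identification: it lands in $V^{**}$ (i.e.\ it really is a linear functional on $V^*$), and it is injective (so that $V$ embeds into $V^{**}$). Since everything is finite-dimensional, once injectivity is established, a dimension count upgrades the embedding to an isomorphism, which is the stronger statement one actually wants.

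First I would fix $v \in V$ and define $\xi_v : V^* \to \RR$ by $\xi_v(\psi) = \psi(v)$ for all $\psi \in V^*$. The first step is to check that $\xi_v$ is linear in $\psi$: for $\psi, \varphi \in V^*$ and $\alpha, \beta \in \RR$ we have $\xi_v(\alpha \psi + \beta \varphi) = (\alpha \psi + \beta \varphi)(v) = \alpha \psi(v) + \beta \varphi(v) = \alpha \xi_v(\psi) + \beta \xi_v(\varphi)$, where the middle equality is just the definition of the linear combination of functionals. Hence $\xi_v \in V^{**}$. The second step is to observe that the assignment $v \mapsto \xi_v$ is itself linear in $v$, which is immediate from the definition since $\xi_{\alpha v + \beta w}(\psi) = \psi(\alpha v + \beta w) = \alpha \psi(v) + \beta \psi(w) = (\alpha \xi_v + \beta \xi_w)(\psi)$.

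The third step, and the only one with any content, is injectivity: if $\xi_v = 0$ then $\psi(v) = 0$ for every $\psi \in V^*$, and I must conclude $v = 0$. Here I would invoke the existence of a dual basis (stated just above in the excerpt): pick a basis $\{v_1, \dots, v_n\}$ of $V$ with dual basis $\{\psi_1, \dots, \psi_n\}$, write $v = \sum_{i=1}^n c_i v_i$, and note $0 = \xi_v(\psi_j) = \psi_j(v) = \sum_{i=1}^n c_i \psi_j(v_i) = \sum_{i=1}^n c_i \delta_{ij} = c_j$ for each $j$, so all coordinates vanish and $v = 0$. Finally, since $\dim(V^{**}) = \dim(V^*) = \dim(V)$, the injective linear map $v \mapsto \xi_v$ is a linear isomorphism $V \cong V^{**}$, justifying the identification.

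I do not anticipate a genuine obstacle here; the proof is routine linear algebra. The only point requiring a cited external fact is the existence of the dual basis, which the excerpt has already recorded (with the reference to \cite{MacLaneBirkhoff-algebra}) and which I am therefore free to use. The mild subtlety worth a sentence in the writeup is that "identify" should be read as "via a canonical isomorphism," so it is worth making explicit that the map $v \mapsto \xi_v$ does not depend on any choice of basis even though the injectivity argument used one.
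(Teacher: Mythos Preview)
Your proof is correct and begins with exactly the same construction the paper uses: define $\xi_v(\psi) = \psi(v)$. The paper's own proof of this proposition stops there --- it literally just writes down the evaluation map and declares it a functional on $V^*$, without checking linearity or injectivity. You go considerably further, verifying linearity in both arguments, proving injectivity via a dual basis, and then upgrading to an isomorphism by a dimension count. That extra content is precisely what the paper postpones to the \emph{next} proposition (Proposition~\ref{prop:duals-doubleDual}, $V = V^{**}$), which it proves by a slightly different route (showing that any $\xi \in V^{**}$ can be written as $\sum_i \xi(\psi_i) v_i$, i.e.\ surjectivity rather than injectivity-plus-dimension). So your argument is not wrong, just more complete than the paper's for this particular statement; you have effectively merged the two propositions into one.
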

\begin{proof}
The construction is rater simple, let $\psi \in V^*$ then we define $\xi_v(\psi) = \psi(v)$. In other words, for $\psi \in V^*$ the map $\xi_v: \psi \mapsto \psi(v)$ is a functional on $V^*$.
\end{proof}
We should, in principle, work with an isomorphism $v \mapsto \xi_v$ rather that simply putting $\xi_v = v$, but since this isomorphism is linear, we will omit it.

\begin{proposition} \label{prop:duals-doubleDual}
Let $V$ be a real, finite-dimensional vector space, then $V = V^{**}$.
\end{proposition}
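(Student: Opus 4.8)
The plan is to show that the natural linear map $v \mapsto \xi_v$ from $V$ to $V^{**}$, already introduced in the preceding proposition, is an isomorphism; identifying $V$ with $V^{**}$ then means exactly this. First I would observe that the assignment $v \mapsto \xi_v$ is linear: for $v, w \in V$, $\alpha, \beta \in \RR$ and any $\psi \in V^*$ we have $\xi_{\alpha v + \beta w}(\psi) = \psi(\alpha v + \beta w) = \alpha \psi(v) + \beta \psi(w) = (\alpha \xi_v + \beta \xi_w)(\psi)$, so $\xi_{\alpha v + \beta w} = \alpha \xi_v + \beta \xi_w$.

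Next I would check injectivity. Suppose $\xi_v = 0$, i.e. $\psi(v) = 0$ for all $\psi \in V^*$. Fixing a basis $\{v_1, \ldots, v_n\}$ of $V$ and its dual basis $\{\psi_1, \ldots, \psi_n\}$, write $v = \sum_{i=1}^n c_i v_i$; then $0 = \psi_j(v) = \sum_{i=1}^n c_i \psi_j(v_i) = \sum_{i=1}^n c_i \delta_{ij} = c_j$ for every $j$, so $v = 0$. Hence $v \mapsto \xi_v$ has trivial kernel.

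Finally, surjectivity follows from a dimension count. Since $V$ is finite-dimensional with $\dim(V) = n$, the dual $V^*$ has a dual basis of size $n$, so $\dim(V^*) = n$; applying this again, $\dim(V^{**}) = n$. An injective linear map between vector spaces of the same finite dimension is automatically surjective, so $v \mapsto \xi_v$ is a linear isomorphism $V \xrightarrow{\sim} V^{**}$. Omitting this isomorphism, as the paper does throughout, we write $V = V^{**}$.

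The argument is entirely routine; there is no real obstacle, only the bookkeeping of the dual-basis computation and the invocation of the finite-dimensionality (the statement genuinely fails in infinite dimensions, so that hypothesis must be used, and it enters precisely at the dimension count). The one point to state carefully is that the identification is canonical — it does not depend on the choice of basis — even though the proof of injectivity uses a basis; the map $v \mapsto (\psi \mapsto \psi(v))$ is defined without reference to any basis.
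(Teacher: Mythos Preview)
Your proof is correct and follows the standard textbook route: linearity, injectivity via the dual basis, and surjectivity by dimension count. The paper's proof is a close cousin but argues the other way round: it takes an arbitrary $\xi \in V^{**}$ and explicitly exhibits its preimage as $v = \sum_{i=1}^n \xi(\psi_i)\, v_i$, checking directly that $\xi(\psi) = \psi(v)$ for all $\psi \in V^*$. So the paper proves surjectivity constructively and leaves injectivity implicit, whereas you prove injectivity and deduce surjectivity from the dimension match; both rely on the same dual-basis machinery and are essentially equivalent in content. Your version has the small advantage of making the role of finite-dimensionality completely explicit, while the paper's version gives the inverse map concretely.
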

\begin{proof}
Let $\{v_1, \ldots, v_n\} \subset V$ be a basis of $V$ and let $\{\psi_1, \ldots, \psi_n\} \subset V^*$ be the dual basis. Let $\xi \in V^{**}$, we will show that we have $\xi = \sum_{i=1}^n \xi(\psi_i) v_i$ and so $\xi \in V$. Let $j \in \{1, \ldots, n\}$, then we clearly have $\xi(\psi_j) = \sum_{i=1}^n \xi(\psi_i) \psi_j(v_i)$. It then follows that for every $\psi \in V^*$ we have $\xi(\psi) = \sum_{i=1}^n \xi(\psi_i) \psi(v_i) = (\sum_{i=1}^n \xi(\psi_i) v_i)(\psi)$ and the result follows.
\end{proof}

We will now look at the structure of the dual vector space $V^*$ given that the vector space $V$ is an ordered vector space, see Definition \ref{def:cones-OVS}. So let $V$ be a real, finite-dimensional vector space and let $C \subset V$ be convex cone. $C$ induces a cone $C^* \subset V^*$, $C^*$ is called the dual cone.
\begin{definition} \label{def:duals-dualCone}
Let $V$ be a real, finite-dimensional vector space and let $C \subset V$ be a cone. The \emph{dual cone} $C^* \subset V^*$ is defined as
\begin{equation}
C^* = \{ \psi \in V^* : \psi(x) \geq 0, \forall x \in C \}.
\end{equation}
\end{definition}
In other words, the dual cone is the cone of all functionals $\psi \in V^*$ that are positive on all the elements of $C$, i.e., $\psi$ is positive on all positive vectors. It is straightforward that $C^*$ is a cone.

We will show that the dual cone $C^*$ is always convex and closed, and that $C^*$ is generating if $C$ is pointed and vice-versa.
\begin{proposition} \label{prop:duals-dualConvexClosed}
Let $V$ be a real, finite-dimensional vector space and let $C \subset V$ be a cone. The dual cone $C^*$ is convex and closed in the standard topology given by the Euclidean inner product.
\end{proposition}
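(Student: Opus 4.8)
The plan is to verify the two closure properties directly from the definition $C^* = \{\psi \in V^* : \psi(x) \geq 0, \forall x \in C\}$, using the fact that for each fixed $x \in C$ the evaluation map $\psi \mapsto \psi(x)$ is a linear functional on $V^*$, hence continuous (all linear maps on finite-dimensional spaces are continuous).

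First I would establish convexity. Take $\psi, \varphi \in C^*$ and $\lambda \in [0,1]$, and let $x \in C$ be arbitrary. Then $(\lambda \psi + (1-\lambda)\varphi)(x) = \lambda \psi(x) + (1-\lambda)\varphi(x) \geq 0$, since both $\psi(x) \geq 0$ and $\varphi(x) \geq 0$ and the coefficients are nonnegative. As this holds for every $x \in C$, we get $\lambda \psi + (1-\lambda)\varphi \in C^*$, so $C^*$ is convex.

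Next I would establish closedness. The cleanest way is to write $C^*$ as an intersection of closed half-spaces: for each $x \in C$ define $H_x = \{\psi \in V^* : \psi(x) \geq 0\}$. Since the evaluation functional $\mathrm{ev}_x : \psi \mapsto \psi(x)$ is linear on the finite-dimensional space $V^*$, it is continuous, so $H_x = \mathrm{ev}_x^{-1}([0,\infty))$ is closed. Then $C^* = \bigcap_{x \in C} H_x$ is an intersection of closed sets, hence closed. Alternatively one can argue sequentially: if $\{\psi_n\} \subset C^*$ converges to $\psi \in V^*$, then for every $x \in C$ we have $\psi(x) = \lim_{n\to\infty} \psi_n(x) \geq 0$, so $\psi \in C^*$.

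I do not expect any real obstacle here; the statement is elementary once one notes that evaluation at a fixed point is a continuous linear functional on the (finite-dimensional) dual space. The only mild subtlety worth a sentence in the write-up is that closedness is meant with respect to the Euclidean topology on $V^*$, which is well-defined and independent of the choice of basis precisely because $V^*$ is finite-dimensional; this is why the half-space description works without fuss.
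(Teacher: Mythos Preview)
Your proposal is correct and essentially matches the paper's proof. The convexity argument is identical, and for closedness the paper uses exactly your sequential alternative (if $\psi_n \to \psi$ then $\psi(x) = \lim_n \psi_n(x) \geq 0$); your primary intersection-of-half-spaces formulation is just a repackaging of the same continuity-of-evaluation idea.
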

\begin{proof}
Let $\psi_1, \psi_2 \in C^*$ and $\lambda \in [0,1]$ and let $x \in C$. we have
\begin{equation}
(\lambda \psi_1 + (1-\lambda) \psi_2)(x) = \lambda \psi_1(x) + (1-\lambda) \psi_2(x) \geq 0
\end{equation}
and so $\lambda \psi_1 + (1-\lambda) \psi_2 \in C^*$. Now let $\{ \psi_n \}_{n=1}^\infty \subset C^*$ be a Cauchy sequence, i.e., there is $\psi \in V^*$ such that $\psi = \lim_{n \to \infty} \psi_n$. For every $x \in C$ we have $\psi(x) = \lim_{n \to \infty} \psi_n(x)$, but since $\psi_n(x) \geq 0$ we must also have $\psi(x) \geq 0$ and so $\psi \in C^*$.
\end{proof}

\begin{proposition} \label{prop:duals-generatingToPointed}
Let $V$ be a real, finite-dimensional vector space and let $C \subset V$ be a generating cone. Then the dual cone $C^*$ is pointed.
\end{proposition}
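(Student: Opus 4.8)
The plan is to show directly that $C^* \cap (-C^*) = \{0\}$, which is the definition of pointedness. First I would take an arbitrary $\psi \in C^* \cap (-C^*)$ and unpack what the two membership conditions say: $\psi \in C^*$ gives $\psi(x) \geq 0$ for all $x \in C$, while $\psi \in -C^*$ means $-\psi \in C^*$, i.e. $\psi(x) \leq 0$ for all $x \in C$. Combining these, $\psi(x) = 0$ for every $x \in C$.

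Next I would use the hypothesis that $C$ is generating, i.e. $\linspan(C) = V$, equivalently $C - C = V$. So any $v \in V$ can be written as $v = x - y$ with $x, y \in C$, and by linearity $\psi(v) = \psi(x) - \psi(y) = 0 - 0 = 0$. Since $v$ was arbitrary, $\psi = 0$ as an element of $V^*$. This shows $C^* \cap (-C^*) \subseteq \{0\}$; the reverse inclusion is trivial since $C^*$ is a cone and hence contains $0$, so $C^*$ is pointed.

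There is no real obstacle here — the argument is a two-line consequence of the definitions of dual cone, generating cone, and pointed cone. The only thing worth stating carefully is the identity $\psi \in -C^* \iff -\psi \in C^*$ and the translation of "generating" into $C - C = V$ (which is already recorded in the definition in \ref{appendix:cones}). I would present it as a short self-contained proof rather than anything requiring the hyperplane separation machinery.

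\begin{proof}
Let $\psi \in C^* \cap (-C^*)$. Since $\psi \in C^*$ we have $\psi(x) \geq 0$ for all $x \in C$, and since $-\psi \in C^*$ we also have $-\psi(x) \geq 0$, i.e. $\psi(x) \leq 0$, for all $x \in C$. Hence $\psi(x) = 0$ for every $x \in C$. Because $C$ is generating we have $\linspan(C) = V$, so every $v \in V$ can be written as $v = x - y$ with $x, y \in C$, and therefore $\psi(v) = \psi(x) - \psi(y) = 0$. Thus $\psi = 0$, which shows $C^* \cap (-C^*) = \{0\}$, i.e. $C^*$ is pointed.
\end{proof}
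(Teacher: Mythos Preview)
Your proof is correct and follows essentially the same approach as the paper: take $\psi \in C^* \cap (-C^*)$, conclude $\psi$ vanishes on $C$, and then use $C - C = V$ to get $\psi = 0$. The only cosmetic difference is that you spell out $\psi \in -C^* \iff -\psi \in C^*$ and mention the trivial reverse inclusion explicitly, whereas the paper leaves both implicit.
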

\begin{proof}
Remember that $C$ is generating if $C - C = V$ and $C^*$ is pointed if $C^* \cap (- C^*) = \{ 0 \}$. Let $\psi \in C^* \cap (- C^*)$, then it follows that for any $x \in C$ we must have $\psi(x) \geq 0$ and $\psi(x) \leq 0$ and so $\psi(x) = 0$ follows. Since $C$ is generating, for every $v \in C$ there are $y,y' \in C$ such that $v = y - y'$. We then have $\psi(v) = \psi(y) - \psi(y') = 0$ and so $\psi = 0$.
\end{proof}

\begin{proposition} \label{prop:duals-pointedToGenerating}
Let $V$ be a real, finite-dimensional vector space and let $C \subset V$ be a pointed cone. Then the dual cone $C^*$ is generating.
\end{proposition}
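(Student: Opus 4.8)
The plan is to argue by contradiction, combining the finite-dimensional reflexivity $V^{**} = V$ (Proposition \ref{prop:duals-doubleDual}) with the bipolar identity $C^{**} = C$ (Proposition \ref{prop:duals-doubleDualCone}); note that in every situation in which this result is invoked in the paper the cone $C$ is closed and convex, and that this is exactly what makes $C^{**} = C$ available and what the statement really rests on. So suppose, for contradiction, that $C^*$ is \emph{not} generating, i.e.\ that $W := \linspan(C^*)$ is a proper linear subspace of $V^*$.

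First I would produce a nonzero vector annihilating the whole dual cone. Since $W \subsetneq V^*$ and $V$ is finite-dimensional, the annihilator $\{ \xi \in V^{**} : \langle \xi, \psi \rangle = 0 \ \forall\, \psi \in W \}$ has dimension $\dim(V^*) - \dim(W) > 0$, hence is nonzero; under the identification $V^{**} = V$ this yields a vector $v \in V$ with $v \neq 0$ and $\langle \psi, v \rangle = 0$ for every $\psi \in W$, in particular for every $\psi \in C^*$.

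Next I would feed $v$ back through the dual cone. For every $\psi \in C^*$ we have both $\langle \psi, v \rangle = 0 \geq 0$ and $\langle \psi, -v \rangle = 0 \geq 0$, so by the definition of the dual cone both $v$ and $-v$ lie in $(C^*)^* = C^{**}$. Applying $C^{**} = C$ gives $v \in C$ and $-v \in C$, hence $v \in C \cap (-C) = \{0\}$ because $C$ is pointed, contradicting $v \neq 0$. Therefore $\linspan(C^*) = V^*$, that is, $C^*$ is generating.

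The step carrying the real weight is the implication $v, -v \in C^{**} \Rightarrow v, -v \in C$: this is where closedness and convexity of $C$ are genuinely needed (for a merely scale-invariant pointed cone the conclusion can fail). If one prefers not to cite Proposition \ref{prop:duals-doubleDualCone} here, that step can be replaced by a direct use of the strict hyperplane separation theorem \ref{thm:duals-strictHyperplaneSeparation}: if $v \notin C$, separate $v$ from the closed convex cone $C$ to obtain $\psi$ with $\langle \psi, v \rangle < 0 \leq \langle \psi, c \rangle$ for all $c \in C$, which forces $\psi \in C^*$ and contradicts $\langle \psi, v \rangle = 0$; the same argument applied to $-v$ finishes it. Everything else is routine bookkeeping with dimensions of annihilators.
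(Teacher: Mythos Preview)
Your argument is correct and follows essentially the same route as the paper: argue by contradiction, produce a nonzero $v \in V$ annihilating all of $C^*$, observe $v, -v \in C^{**}$, and derive a contradiction with pointedness. The paper's version is terser (it passes directly from ``$\varphi(v)=0$ for all $\varphi \in C^*$'' to ``$v \in C \cap (-C)$'' without naming the bipolar identity), whereas you make the reliance on $C^{**}=C$---and hence on closedness and convexity of $C$---explicit, which is a genuine improvement in clarity.
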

\begin{proof}
Remember that $C$ is pointed if $C \cap (-C) = \{ 0 \}$ and $C^*$ is generating if $C^* - C^* = V^*$. Assume that $C^*$ is not generating, then there is $\psi \in V^*$, $\psi \neq 0$ such that $\psi \notin C^* - C^*$. It follows that there is $v \in V$ such that $\psi(v) \neq 0$ but for all $\varphi \in C^* - C^*$ we have $\varphi(v) = 0$. One can construct such $v$ using the dual basis of $V^*$. It follows that $0 \neq v \in C \cap (-C)$, which is a contradiction with $C$ being pointed.
\end{proof}

We will now present two variants of an important theorem known as the Hanh-Banach hyperplane separation theorem. Note that affine function is very similar concept to linear functional, but for $\psi \in V^*$ we must have $\psi(0) = 0$, while for an affine function $f: V \to \RR$ we can have $f(0) \neq 0$. If $f: V \to \RR$ is an affine function such that $f(0) = 0$, then $f \in V^*$.
\begin{theorem}[Hyperplane separation theorem] \label{thm:duals-hyperplaneSeparation}
Let $V$ be a real, finite-dimensional vector space and let $X,Y \subset V$ be disjoint convex sets. Then there exists an affine function $f: V \to \RR$, that is a function such that for $v,w \in V$ and $\alpha \in \RR$ we have
\begin{equation}
f(\alpha v + (1-\alpha) w) = \alpha f(v) + (1-\alpha) f(w),
\end{equation}
such that
\begin{equation}
\max_{v \in X} f(x) \leq 0 \leq \max_{w \in Y} f(y).
\end{equation}
In other words, $f$ is non-positive on $X$ and non-negative on $Y$.
\end{theorem}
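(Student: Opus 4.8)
The plan is to derive the separation statement for two disjoint convex sets from the more basic fact that a single point outside a closed convex set can be strictly separated from it, which is the "point-vs-closed-convex-set" version of Hahn–Banach available in finite dimensions. First I would reduce the two-set problem to the one-set problem by the standard trick: set $Z = X - Y = \{x - y : x \in X, y \in Y\}$, observe that $Z$ is convex (a difference of convex sets), and note that $X \cap Y = \emptyset$ is equivalent to $0 \notin Z$. So it suffices to separate the origin from $Z$, i.e.\ to produce a functional $\psi \in V^*$ with $\psi(z) \le 0$ for all $z \in Z$ (or $\ge 0$; the sign is a matter of convention), and then unpack $\psi(x-y) = \psi(x) - \psi(y)$ to get $\sup_{x\in X}\psi(x) \le \inf_{y\in Y}\psi(y)$, after which any constant $c$ in between yields the affine function $f = \psi - c$ with $f \le 0$ on $X$ and $f \ge 0$ on $Y$.

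The main obstacle is that $Z$ need not be closed, so I cannot directly invoke strict separation of $0$ from $\overline{Z}$ unless $0 \notin \overline{Z}$, which can fail (two disjoint convex sets in $\mathbb{R}^2$ can have touching closures). The way around this is to separate $0$ from the relative interior $\ri(Z)$, which is nonempty for any nonempty convex set in a finite-dimensional space, and to use that $0 \notin \ri(Z)$: indeed if $0$ were in $\ri(Z)$ it would be in $Z$ itself. The key supporting fact I would use here is the "supporting hyperplane" consequence of convexity: a point not in the relative interior of a convex set $C$ admits a functional that is $\le$ (its value at that point) on all of $C$ and not identically constant on $C$ — this is precisely the kind of statement proved via the projection argument onto $\overline{\ri(Z)} = \overline{Z}$. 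Concretely, I would take the Euclidean projection $p$ of $0$ onto the closed convex set $\overline{Z}$; if $p \ne 0$ the vector $-p$ (viewed as a functional through the inner product) strictly separates, and if $p = 0$ then $0$ is a boundary point of $\overline{Z}$ and a limiting argument over projections of points $t v_k \to 0$ from outside produces a nonzero supporting functional.

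Assembling the steps in order: (1) form $Z = X - Y$, record convexity and $0 \notin Z$; (2) pass to $\overline{Z}$ and consider the metric projection $\pi(0)$ of the origin; (3) in the case $\pi(0) \ne 0$, define $\psi(v) = \langle \pi(0), v\rangle$ and verify via the obtuse-angle characterization of projection that $\psi(z) \le \langle \pi(0),\pi(0)\rangle$ wait — I would instead normalize so that $\psi(z) \le 0$ for $z \in Z$ using $\langle -\pi(0), z - \pi(0)\rangle \le 0$ and $\langle -\pi(0), -\pi(0)\rangle < 0$; (4) in the boundary case $\pi(0) = 0 \in \partial \overline{Z}$, approximate $0$ by points $u_k \notin \overline{Z}$, separate each $u_k$ from $\overline{Z}$ by a unit functional $\psi_k$, and extract a convergent subsequence $\psi_k \to \psi$ with $\|\psi\| = 1$, so $\psi(z) \le 0$ for all $z \in \overline{Z} \supseteq Z$; (5) translate back: $\psi(x) \le \psi(y)$ for all $x \in X$, $y \in Y$, hence with $c := \sup_{x \in X}\psi(x)$ (finite since bounded above by any $\psi(y)$) the affine map $f(v) = \psi(v) - c$ satisfies $\max_{x \in X} f(x) \le 0 \le \inf_{y\in Y} f(y) \le \max_{y \in Y} f(y)$, which is the claimed inequality (the statement's ``$\max$'' should be read as a supremum when the sets are not compact, and $f$ affine with $f(0) = -c$ is allowed). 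This is the complete skeleton; the only genuinely delicate point is the boundary case in step (4), everything else is bookkeeping with inner products.
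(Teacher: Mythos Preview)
Your proposal is essentially the standard textbook proof and is correct in outline. The paper, however, does not prove this theorem at all: its ``proof'' is the single line ``See \cite[Section 2.5.1]{BoydVandenberghe-convex}.'' So there is nothing to compare at the level of strategy --- you have supplied a genuine argument where the paper only supplies a citation.

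A couple of small remarks on your write-up. The stream-of-consciousness ``wait'' in step~(3) should be cleaned up: the correct inequality from the obtuse-angle characterization of the projection $p = \pi(0)$ onto $\overline{Z}$ is $\langle z - p, -p \rangle \le 0$ for all $z \in \overline{Z}$, i.e.\ $\langle p, z\rangle \ge \|p\|^2 > 0$, so $\psi(v) = -\langle p, v\rangle$ gives $\psi(z) \le -\|p\|^2 < 0$ on $\overline{Z}$. You also correctly flag that the theorem as stated writes ``$\max$'' where ``$\sup$'' is meant (the sets are not assumed compact); your final chain $\sup_{x\in X} f(x) \le 0 \le \inf_{y\in Y} f(y)$ is the right conclusion and matches the verbal summary ``$f$ is non-positive on $X$ and non-negative on $Y$.''
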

\begin{proof}
See \cite[Section 2.5.1]{BoydVandenberghe-convex}.
\end{proof}

\begin{theorem}[Strict hyperplane separation theorem] \label{thm:duals-strictHyperplaneSeparation}
Let $V$ be a real, finite-dimensional vector space equipped with Euclidean topology, let $X \subset V$ be a convex, closed sets and let $y \in V$ such that $y \notin X$. Then there exists an affine function $f: V \to \RR$ such that
\begin{equation}
\max_{v \in X} f(x) < 0 < f(y).
\end{equation}
\end{theorem}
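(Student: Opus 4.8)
The plan is to prove this via the classical projection of $y$ onto the closed convex set $X$. If $X=\emptyset$ the statement is vacuous, so assume $X\neq\emptyset$. First I would fix a Euclidean inner product $(\cdot\,,\cdot)$ on $V$ — one exists since $V$ is a real, finite-dimensional vector space — with induced norm $\norm{\cdot}$, which generates exactly the topology assumed in the hypothesis. Set $\delta=\inf_{x\in X}\norm{x-y}$. Since $X$ is closed and $y\notin X$, we have $\delta>0$ (a sequence in $X$ converging to $y$ would force $y\in X$). To see that the infimum is attained, pick any $x_0\in X$ and pass to $X_0=X\cap\{v\in V:\norm{v-y}\leq\norm{x_0-y}\}$; this set is closed and bounded, hence compact (a closed bounded subset of a finite-dimensional Euclidean space is compact, cf. the proof of Proposition~\ref{prop:what-stateSpace-compact}), and $v\mapsto\norm{v-y}$ is continuous on it, so it attains its minimum at some $p\in X_0\subset X$ with $\norm{p-y}=\delta$. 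Put $u=y-p$, so $u\neq 0$.

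The key step is the variational inequality $(u,x)\leq(u,p)$ for all $x\in X$. To prove it, fix $x\in X$ and $t\in[0,1]$; by convexity $p+t(x-p)\in X$, hence $\norm{p+t(x-p)-y}^2\geq\delta^2=\norm{u}^2$. Expanding the left-hand side gives $\norm{u}^2-2t\,(u,x-p)+t^2\norm{x-p}^2\geq\norm{u}^2$, so $-2t\,(u,x-p)+t^2\norm{x-p}^2\geq 0$; dividing by $t>0$ and letting $t\to0^+$ yields $(u,x-p)\leq0$. On the other hand, $(u,y)=(u,p+u)=(u,p)+\norm{u}^2>(u,p)$. Now define the affine function
\begin{equation}
f(v) = (u,v) - (u,p) - \frac{1}{2}\norm{u}^2 ,
\end{equation}
which is affine because $v\mapsto(u,v)$ is linear and the remaining term is constant, so $f(0)$ need not vanish.

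Then for every $x\in X$ we get $f(x)=(u,x)-(u,p)-\frac{1}{2}\norm{u}^2\leq-\frac{1}{2}\norm{u}^2<0$, hence $\max_{v\in X}f(v)\leq-\frac{1}{2}\norm{u}^2<0$, while $f(y)=(u,y)-(u,p)-\frac{1}{2}\norm{u}^2=\norm{u}^2-\frac{1}{2}\norm{u}^2=\frac{1}{2}\norm{u}^2>0$, which is exactly the claimed strict separation. The main — indeed essentially the only — obstacle is establishing that the distance from $y$ to $X$ is attained, since $X$ need not be bounded; this is precisely what the reduction to the compact set $X_0$ handles, and everything after that is a short computation with the inner product. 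One technical caveat: for unbounded $X$ the supremum $\max_{v\in X}f(v)$ need not literally be attained, but the uniform bound $f(x)\leq-\frac{1}{2}\norm{u}^2$ valid for all $x\in X$ already delivers the strict inequality required here (and in all applications in this paper $X$ is compact, so the supremum is genuinely a maximum).
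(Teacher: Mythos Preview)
Your proof is correct; the nearest-point projection argument and the resulting variational inequality are carried out cleanly, and your caveat about $\sup$ versus $\max$ for unbounded $X$ is apt. The paper itself does not give a proof here at all --- it simply cites \cite[Section 2.5.1]{BoydVandenberghe-convex} --- and the argument in that reference is precisely the projection-onto-a-closed-convex-set argument you have written out, so you have in effect supplied the omitted details rather than taken a different route.
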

\begin{proof}
See \cite[Section 2.5.1]{BoydVandenberghe-convex}.
\end{proof}

We have now all the tools we need to characterize the dual cone of the dual cone, i.e., the cone $C^{**}$ given as
\begin{equation}
C^{**} = \{ \xi \in V^{**} : \xi(\psi) \geq 0, \forall \psi \in C^* \}.
\end{equation}
\begin{proposition} \label{prop:duals-doubleDualCone}
Let $V$ be a real, finite-dimensional vector space and let $C \subset V$ be a convex, closed cone. Then $C^{**} = C$.
\end{proposition}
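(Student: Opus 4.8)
The plan is to prove the two inclusions $C \subset C^{**}$ and $C^{**} \subset C$ separately, using the identification $V = V^{**}$ from Proposition \ref{prop:duals-doubleDual} throughout (so that $\xi \in V^{**}$ is just a vector $v \in V$ acting on functionals by $v(\psi) = \psi(v)$). The first inclusion is immediate from the definitions; the second is where the hyperplane separation machinery is needed.

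First I would show $C \subset C^{**}$. Let $v \in C$. For every $\psi \in C^*$ we have by definition of $C^*$ that $\psi(v) \geq 0$, which under the identification $V = V^{**}$ says exactly that $v$, viewed as an element of $V^{**}$, satisfies $v(\psi) \geq 0$ for all $\psi \in C^*$. Hence $v \in C^{**}$. This step uses only the definitions of $C^*$, $C^{**}$ and Proposition \ref{prop:duals-doubleDual}, and requires no assumptions on $C$ beyond it being a cone.

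Next, the reverse inclusion $C^{**} \subset C$. I would argue by contradiction: suppose $v \in C^{**}$ but $v \notin C$. Since $C$ is convex and closed, the strict hyperplane separation theorem \ref{thm:duals-strictHyperplaneSeparation} applied to the convex closed set $C$ and the point $v$ yields an affine function $f : V \to \RR$ with $\max_{x \in C} f(x) < 0 < f(v)$. The key observation is that because $C$ is a cone (so $0 \in C$ and $\lambda x \in C$ for all $x \in C$, $\lambda \geq 0$), the function $f$ must in fact be linear, i.e. $f \in V^*$: evaluating at $0 \in C$ gives $f(0) \leq \max_{x\in C} f(x) < 0$, but an affine $f$ with $f(0) = 0$ would be linear, so one instead uses the cone structure directly — if $f(x_0) > 0$ for some $x_0 \in C$ this is already excluded, and if $f(x_0) < 0$ for some $x_0 \in C$ then $f(\lambda x_0) = f(0) + \lambda(f(x_0) - f(0)) \to -\infty$ as $\lambda \to \infty$, forcing $f(0) \leq 0$; combined with $f(x) < 0$ on all of $C$ including scalings of any $x$, one deduces the linear part $\psi := f - f(0)$ satisfies $\psi(x) \geq 0$ for all $x \in C$, i.e. $\psi \in C^*$, while $\psi(v) = f(v) - f(0) > 0 - 0 = 0$ is consistent — so I must instead extract from $f$ a genuine linear functional $\psi \in C^*$ with $\psi(v) < 0$. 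The cleanest route: since $f \le 0$ on the cone $C$ and $C$ is closed under positive scaling, for any $x \in C$ and $\lambda>0$ we get $f(\lambda x) \le 0$; letting $\lambda\to 0^+$ gives $f(0)\le 0$, and letting $\lambda\to\infty$ forces the linear part of $f$ to be $\le 0$ on $C$. Setting $\psi = -(f - f(0))$ we then have $\psi \in C^*$, and from $f(v) > 0 \ge f(0)$ we obtain $\psi(v) = -(f(v)-f(0)) < 0$, contradicting $v \in C^{**}$. Hence $v \in C$.

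The main obstacle is precisely this passage from the affine separating functional $f$ to a linear functional $\psi \in C^*$ witnessing $v \notin C^{**}$: one has to exploit that $C$ is a cone to kill the constant term and control the sign of the linear part on all of $C$. Once that is handled, the contradiction with $v \in C^{**}$ is immediate, and combining both inclusions gives $C^{**} = C$. I would present the cone argument carefully but briefly, since it is the only nontrivial point.
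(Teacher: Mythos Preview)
Your proposal is correct and follows essentially the same approach as the paper: both prove $C^{**}\subset C$ by contradiction via strict hyperplane separation, then use the cone structure (scaling $\lambda x$ with $\lambda\to\infty$) to pass from the separating affine function to a linear functional in $C^*$ that is negative on $v$. The only cosmetic difference is that the paper sets $\psi=f-f(0)$ and does a short case split on whether $\min_{x\in C}\psi(x)$ is zero or negative, whereas you argue directly that the linear part is nonpositive on $C$ and then negate; your route is slightly cleaner, and you also spell out the easy inclusion $C\subset C^{**}$, which the paper leaves implicit.
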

\begin{proof}
Since $V = V^{**}$, see Proposition \ref{prop:duals-doubleDual}, we must have $C^{**} \subset V$. Let $v \in C^{**}$ be such that $v \notin C$. According to Theorem \ref{thm:duals-strictHyperplaneSeparation} there is an affine function $f: V \to \RR$ such that
\begin{equation} \label{eq:duals-doubleDualCone-separation}
f(v) < 0 < \min_{x \in C} f(x).
\end{equation}
Now let $\psi \in V^*$ be given for $w \in V$ as $\psi(w) = f(w) - f(0)$.
It is easy to check that $\psi$ is linear: for $w', w'' \in V$ and $\alpha, \beta \in \RR$ we have $\alpha w' + \beta w'' = \alpha w' + \beta w'' + (1-\alpha - \beta) 0$ and so
\begin{align}
\psi(\alpha w' + \beta w'') &= f(\alpha w' + \beta w'' + (1-\alpha - \beta) 0) - f(0) \\
&= \alpha f(w') + \beta f(w'') + (1-\alpha - \beta) f(0) - f(0) \\
&= \alpha f(w') + \beta f(w'')  - (\alpha + \beta) f(0) \\
&= \alpha \psi(w') + \beta \psi(w'').
\end{align}
Using \eqref{eq:duals-doubleDualCone-separation} we get
\begin{equation} \label{eq:duals-doubleDualCone-min}
\psi(v) < \min_{x \in C} \psi(x)
\end{equation}
but since $0 \in C$ and $\psi(0) = 0$, we must have $\min_{x \in C} \psi(x) \leq 0$ and $\psi(v) < 0$. If $\min_{x \in C} \psi(x) = 0$, then $\psi \in C^*$ and $\psi(v) < 0$ is a contradiction with $v \in C^{**}$. So assume that $\min_{x \in C} \psi(x) < 0$, then there is $y \in C$ such that $\psi(y) < 0$. Let
\begin{equation}
\alpha = \dfrac{2 \psi(v)}{\psi(y)}
\end{equation}
then we have $\alpha y \in C$ and
\begin{equation}
\psi(\alpha y) = \dfrac{2 \psi(v)}{\psi(y)} \psi(y) = 2 \psi(v) < \psi(v)
\end{equation}
which is a contradiction with \eqref{eq:duals-doubleDualCone-min}.
\end{proof}

\section{Bilinear forms, linear maps and tensor products} \label{appendix:bilinear}

%
%
We are going to review several basic results and constructions on tensor products of vector spaces. We will be using the same approach as presented in \cite{Ryan-tensorProducts}. We will show how one can relate bilinear functionals, linear maps and tensor products of vector spaces. We will start from bilinear forms:
\begin{definition}
Let $V_A$, $V_B$ be real, finite-dimensional vector spaces. A \emph{bilinear form} is a map $B: V_A \times V_B \to \RR$, where $V_A \times V_B$ denotes the Cartesian product of $V_A$ and $V_B$, such that $B$ is linear in $V_A$ and $V_B$, i.e., such that for $v_A, w_A \in V_A$, $v_B, w_B \in V_B$ and $\alpha, \beta \in \RR$ we have
\begin{align}
B(\alpha v_A + \beta w_B, v_B) &= \alpha B(v_A, v_B) + \beta B(w_A, v_B), \\
B(v_A, \alpha v_B + \beta w_B) &= \alpha B(v_A, v_B) + \beta B(v_A, w_B).
\end{align}
\end{definition}

As first, we will show that every bilinear forms are in one-to-one correspondence with linear maps.
\begin{proposition} \label{prop:bilinear-LBcorrespondence}
Let $V_A$, $V_B$ be real, finite-dimensional vector spaces. Linear maps $L: V_A \to V_B$ are in one-to-one correspondence with bilinear forms $B: V_A \times V^*_B \to \RR$ via
\begin{equation} \label{eq:bilinear-LBcorrespondence}
\psi_B (L(v_A)) = B(v_A, \psi_B),
\end{equation}
where $v_A \in V_A$ and $\psi_B \in V^*_B$.
\end{proposition}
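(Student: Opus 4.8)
The plan is to establish the bijection by constructing the two maps explicitly and showing they are mutual inverses. First I would verify that equation \eqref{eq:bilinear-LBcorrespondence} genuinely defines a bilinear form whenever $L$ is linear: fixing $L: V_A \to V_B$, the right-hand side $B(v_A, \psi_B) := \psi_B(L(v_A))$ is linear in $v_A$ because both $L$ and the pairing are linear, and it is linear in $\psi_B$ because evaluation $\psi_B \mapsto \psi_B(L(v_A))$ is linear by definition of the vector space structure on $V_B^*$. So every linear map yields a bilinear form on $V_A \times V_B^*$.

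For the converse direction, given a bilinear form $B: V_A \times V_B^* \to \RR$, I would define $L: V_A \to V_B$ as follows. For each fixed $v_A \in V_A$, the map $\psi_B \mapsto B(v_A, \psi_B)$ is a linear functional on $V_B^*$, i.e., an element of $V_B^{**}$. Invoking Proposition \ref{prop:duals-doubleDual}, which gives $V_B^{**} = V_B$, this functional corresponds to a unique vector in $V_B$, which we call $L(v_A)$; by construction it satisfies $\psi_B(L(v_A)) = B(v_A, \psi_B)$ for all $\psi_B \in V_B^*$. Linearity of $L$ in $v_A$ then follows from bilinearity of $B$ together with the fact that the identification $V_B^{**} = V_B$ is itself linear: for $v_A, w_A \in V_A$ and $\alpha, \beta \in \RR$, both $L(\alpha v_A + \beta w_A)$ and $\alpha L(v_A) + \beta L(w_A)$ pair with every $\psi_B \in V_B^*$ to give $\alpha B(v_A, \psi_B) + \beta B(w_A, \psi_B)$, hence they are equal.

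Finally I would check that the two constructions are inverse to each other, which is essentially immediate from \eqref{eq:bilinear-LBcorrespondence}: starting from $L$, passing to $B$, and passing back recovers the map $v_A \mapsto$ (the unique vector pairing as $\psi_B \mapsto \psi_B(L(v_A))$), which is $L$ itself; starting from $B$, the reconstructed bilinear form is $(v_A, \psi_B) \mapsto \psi_B(L(v_A)) = B(v_A, \psi_B)$. I do not anticipate a serious obstacle here; the only subtlety worth stating carefully is the appeal to $V_B = V_B^{**}$, since that is precisely what makes ``a linear functional on $V_B^*$'' the same thing as ``a vector in $V_B$'' and is the reason the correspondence lands in linear maps $V_A \to V_B$ rather than something larger. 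Everything else is routine checking of linearity, which I would carry out but not belabor.
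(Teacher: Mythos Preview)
Your proposal is correct and follows essentially the same approach as the paper: both directions are handled identically, with the key step being the identification $V_B^{**} = V_B$ from Proposition~\ref{prop:duals-doubleDual} to turn the functional $\psi_B \mapsto B(v_A,\psi_B)$ into a vector $L(v_A) \in V_B$. Your write-up is in fact a bit more thorough than the paper's, since you explicitly verify linearity of $L$ and check that the two constructions are mutual inverses, whereas the paper leaves these as ``straightforward.''
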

\begin{proof}
It is straightforward to check that given a linear map $L:V_A \to V_B$, we can define a bilinear form $B: V_A \times V^*_B \to \RR$ via \eqref{eq:bilinear-LBcorrespondence}. Now given a bilinear form $B: V_A \times V^*_B \to \RR$, fix $v_A \in V_A$ and define $\xi_B \in V^{**}_B$ as $\xi_B(\psi_B) = B(v_A, \psi_B)$, where $\psi_B \in V^*_B$. It is straightforward to check that $\xi_B$ is linear, so $\xi_B \in V_B^{**} = V_B$, where we used the result of Proposition \ref{prop:duals-doubleDual}. Now we can define a map $L: V_A \to V_B$ as $L(v_A) = \xi_B$. It is again straightforward to check that $L$ is linear. Let $\psi_B \in V^*_B$, then note that we have $\xi_B(\psi_B) = \psi_B(\xi_B)$ as a result of the isomorphism between $V_B$ and $V^{**}_B$. We have $\psi_B (L(v_A)) = \psi_B(\xi_B) = B(v_A, \psi_B)$ and so \eqref{eq:bilinear-LBcorrespondence} holds.
\end{proof}

Similar to linear functionals, also the set of all bilinear forms is a vector space. This is easy to see, let $B_1, B_2: V_A \times V_B \to \RR$ be bilinear forms and let $\alpha, \beta \in \RR$, then we define
\begin{equation}
(\alpha B_1 + \beta B_2)(v_A, v_B) = \alpha B_1(v_A, v_B) + \beta B_2(v_A, v_B),
\end{equation}
where $v_A \in V_A$ and $v_B \in V_B$. We will now look at the functionals on the vector space of bilinear forms, we will see that this leads to tensor products.

Let $v_A \in V_A$, $v_B \in V_B$ and let $B: V_A \times V_B \to \RR$ be a bilinear form, then the map $B \mapsto B(v_A, v_B)$ is a linear functional on the vector space of bilinear forms. We will use $v_A \otimes v_B$ to denote this functional, i.e., we have $(v_A \otimes v_B)(B) = B(v_A, v_B)$. It is straightforward to check that for $v_A, w_A \in V_A$, $v_B, w_B \in V_B$ and $\alpha, \beta \in \RR$ we have
\begin{align}
(\alpha v_A + \beta w_A) \otimes v_B &= \alpha v_A \otimes v_B + \beta w_A \otimes v_B, \\
v_A \otimes (\alpha v_B + \beta w_B) &= \alpha v_A \otimes v_B + \beta v_A \otimes w_B.
\end{align}
But note that in general
\begin{equation}
(v_A + w_A) \otimes (v_B + w_B) \neq v_A \otimes v_B + w_A \otimes w_B
\end{equation}
simply because
\begin{equation}
B(v_A + w_A, v_B + w_B) \neq B(v_A, v_B) + B(w_A, w_B).
\end{equation}

\begin{definition} \label{def:bilinear-tensorProduct}
Let $V_A$, $V_B$ be real, finite-dimensional vector spaces, then their \emph{tensor product} is the vector space
\begin{equation}
V_A \otimes V_B = \linspan(\{ v_A \otimes v_B : v_A \in V_A, v_B \in V_B \}).
\end{equation}
\end{definition}
Before we proceed, we will prove two simple results about the structure of $V_A \otimes V_B$.
\begin{lemma} \label{lemma:bilinear-productBasis}
Let $V_A$, $V_B$ be real, finite-dimensional vector spaces and let $\{ v_{1,A}, \ldots, v_{n_A, A}\} \subset V_A$ and $\{ v_{1,B}, \ldots, v_{n_B, B}\} \subset V_B$ be bases of $V_A$ and $V_B$ respectively. Then $\{ v_{i,A} \otimes v_{j,B} \}_{i,j = 1}^{n_A, n_B} \subset V_A \otimes V_B$ is a basis of $V_A \otimes V_B$.
\end{lemma}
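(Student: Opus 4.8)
The plan is to prove that $\{ v_{i,A} \otimes v_{j,B} \}_{i,j = 1}^{n_A, n_B}$ is both a spanning set and a linearly independent set in $V_A \otimes V_B$. For the spanning part, I would take an arbitrary element $t \in V_A \otimes V_B$; by Definition \ref{def:bilinear-tensorProduct} it is a finite linear combination of simple tensors $w_A \otimes w_B$, and by expanding $w_A = \sum_i \alpha_i v_{i,A}$ and $w_B = \sum_j \beta_j v_{j,B}$ in the given bases and using the bilinearity identities for $\otimes$ stated just before the definition, each $w_A \otimes w_B$ becomes $\sum_{i,j} \alpha_i \beta_j\, v_{i,A} \otimes v_{j,B}$. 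Hence $t$ lies in the span of $\{ v_{i,A} \otimes v_{j,B} \}$.

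For linear independence, suppose $\sum_{i,j} c_{ij}\, v_{i,A} \otimes v_{j,B} = 0$ as a functional on the space of bilinear forms. The key step is to feed this functional a well-chosen bilinear form. Let $\{\psi_{1,A}, \ldots, \psi_{n_A,A}\}$ and $\{\psi_{1,B}, \ldots, \psi_{n_B,B}\}$ be the dual bases of $V_A^*$ and $V_B^*$, and for fixed indices $k, \ell$ define the bilinear form $B_{k\ell}: V_A \times V_B \to \RR$ by $B_{k\ell}(v_A, v_B) = \psi_{k,A}(v_A)\,\psi_{\ell,B}(v_B)$; one checks immediately this is bilinear. Then $(v_{i,A} \otimes v_{j,B})(B_{k\ell}) = B_{k\ell}(v_{i,A}, v_{j,B}) = \delta_{ik}\delta_{j\ell}$, so evaluating the zero functional on $B_{k\ell}$ gives $c_{k\ell} = 0$. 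Since $k, \ell$ were arbitrary, all coefficients vanish, proving linear independence. Combined with the spanning statement, this shows $\{ v_{i,A} \otimes v_{j,B} \}_{i,j = 1}^{n_A, n_B}$ is a basis.

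I do not expect a genuine obstacle here; the only point requiring a little care is making sure the dual bases exist (which is recalled in \ref{appendix:duals}) and that the product form $B_{k\ell} = \psi_{k,A} \cdot \psi_{\ell,B}$ is indeed bilinear, which is routine. A secondary bookkeeping point is that in the spanning argument one should collect terms so that the number of basis tensors is exactly $n_A n_B$ regardless of how many simple tensors appeared in the original representation of $t$; this is automatic once everything is rewritten in the fixed bases. As an immediate corollary, $\dim(V_A \otimes V_B) = n_A n_B = \dim(V_A)\dim(V_B)$, which is worth stating since it is used implicitly elsewhere in the text.
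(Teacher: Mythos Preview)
Your proposal is correct and in fact more complete than the paper's own argument. The paper's proof only addresses spanning: it observes that by Definition \ref{def:bilinear-tensorProduct} every element of $V_A \otimes V_B$ is a linear combination of simple tensors $w_A \otimes w_B$, and that each such simple tensor expands in the $v_{i,A} \otimes v_{j,B}$ by bilinearity --- exactly your spanning argument. The paper does not explicitly verify linear independence here.

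Your linear independence step, evaluating on the bilinear forms $B_{k\ell}(v_A,v_B) = \psi_{k,A}(v_A)\psi_{\ell,B}(v_B)$ built from the dual bases, is the natural way to fill this gap. It also anticipates the construction in Proposition \ref{prop:bilinear-bilinearDual}, where the paper introduces precisely these $B_{ij}$ and shows that $\{v_{i,A} \otimes v_{j,B}\}$ is the dual basis to $\{B_{ij}\}$; duality would then give linear independence a posteriori, but your direct argument is cleaner and avoids any appearance of circularity. The corollary $\dim(V_A \otimes V_B) = n_A n_B$ is a nice addition.
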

\begin{proof}
The result follows easily from Definition \ref{def:bilinear-tensorProduct}. We only need to show that every vector of the form $w_A \otimes w_B$, where $w_A \in V_A$ and $w_B \in V_B$, can be written as linear combination of $\{ v_{i,A} \otimes v_{j,B} \}_{i,j = 1}^{n_A, n_B}$, but this is obvious.
\end{proof}

\begin{lemma} \label{lemma:bilinear-productSum}
Let $u_{AB} \in V_A \otimes V_B$, then there are $\{v_{1,A}, \ldots, v_{n,A}\} \subset V_A$ and $\{w_{1,A}, \ldots, w_{n,B}\} \subset V_B$ such that $u_{AB} = \sum_{i=1}^{n_A} v_{i,A} \otimes w_{i,B}$, where $\{ v_{1,A}, \ldots, v_{n,A} \}$ is a basis of $V_A$.
\end{lemma}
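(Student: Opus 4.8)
The plan is to start from the basis decomposition already provided by Lemma \ref{lemma:bilinear-productBasis} and then regroup the terms. First I would fix a basis $\{ v_{1,A}, \ldots, v_{n_A, A}\}$ of $V_A$ and an arbitrary basis $\{ v_{1,B}, \ldots, v_{n_B, B}\}$ of $V_B$. By Lemma \ref{lemma:bilinear-productBasis}, the family $\{ v_{i,A} \otimes v_{j,B} \}_{i,j=1}^{n_A, n_B}$ is a basis of $V_A \otimes V_B$, so there are unique coefficients $c_{ij} \in \RR$ with
\begin{equation}
u_{AB} = \sum_{i=1}^{n_A} \sum_{j=1}^{n_B} c_{ij}\, v_{i,A} \otimes v_{j,B}.
\end{equation}

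Next I would use bilinearity in the second argument of $\otimes$, i.e.\ the identity $v_A \otimes (\alpha v_B + \beta w_B) = \alpha v_A \otimes v_B + \beta v_A \otimes w_B$, which was established right before Definition \ref{def:bilinear-tensorProduct}. This lets me pull the sum over $j$ inside the tensor product: setting
\begin{equation}
w_{i,B} = \sum_{j=1}^{n_B} c_{ij}\, v_{j,B} \in V_B
\end{equation}
for each $i \in \{1, \ldots, n_A\}$, we obtain $u_{AB} = \sum_{i=1}^{n_A} v_{i,A} \otimes w_{i,B}$, which is exactly the claimed form with $n = n_A$. The set $\{ v_{1,A}, \ldots, v_{n_A,A}\}$ is a basis of $V_A$ by the very choice made at the start, so the extra condition in the statement is automatically satisfied.

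There is no real obstacle here; the only point requiring a moment's care is that the regrouping step genuinely relies on bilinearity (the analogous naive factorization over $i$ and $j$ simultaneously would be false, as the excerpt itself warns with the inequality $(v_A + w_A) \otimes (v_B + w_B) \neq v_A \otimes v_B + w_A \otimes w_B$). One could optionally remark that the $w_{i,B}$ are in general not linearly independent, so this is a decomposition, not a basis expansion in the second factor; but that is not needed for the statement as given.
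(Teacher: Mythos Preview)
Your proof is correct and follows essentially the same approach as the paper: fix bases of $V_A$ and $V_B$, expand $u_{AB}$ in the product basis via Lemma~\ref{lemma:bilinear-productBasis}, and use bilinearity in the second factor to absorb the inner sum into vectors $w_{i,B}\in V_B$. The only difference is cosmetic (you write $c_{ij}$ where the paper writes $\alpha_{ij}$) together with your additional explanatory remarks.
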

\begin{proof}
The result follows from Lemma \ref{lemma:bilinear-productBasis}. Let $\{ v_{1,A}, \ldots, v_{n_A, A}\} \subset V_A$ and $\{ v_{1,B}, \ldots, v_{n_B, B}\} \subset V_B$ be bases of $V_A$ and $V_B$ respectively, then there are numbers $\alpha_{ij} \in \RR$, $i \in \{1, \ldots, n_A\}$ and $j \in \{1, \ldots, n_B\}$ such that
\begin{equation}
u_{AB} = \sum_{i=1}^{n_A} \sum_{j=1}^{n_B} \alpha_{ij} v_{i,A} \otimes v_{j,B}.
\end{equation}
We then have
\begin{equation}
u_{AB} = \sum_{i=1}^{n_A} v_{i,A} \otimes \left( \sum_{j=1}^{n_B} \alpha_{ij} v_{j,B} \right).
\end{equation}
which is the form of $u_{AB}$ we wanted to obtain.
\end{proof}

One can actually show that $V_A \otimes V_B$ is the dual of the vector space of bilinear forms $B: V_A \times V_B \to \RR$ by constructing  a basis of the vector space of bilinear forms and showing that the dual basis is included in $V_A \otimes V_B$.

\begin{proposition} \label{prop:bilinear-bilinearDual}
$V_A \otimes V_B$ is the dual vector space to the vector space of bilinear forms $B: V_A \times V_B \to \RR$.
\end{proposition}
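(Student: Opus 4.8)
The plan is to exhibit an explicit basis of the vector space of bilinear forms whose dual basis consists precisely of the elementary tensors $v_{i,A}\otimes v_{j,B}$, and then to conclude by a dimension count together with Lemma \ref{lemma:bilinear-productBasis}.

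First I would fix bases $\{v_{1,A},\dots,v_{n_A,A}\}\subset V_A$ and $\{v_{1,B},\dots,v_{n_B,B}\}\subset V_B$. For each pair $(i,j)$ I would define a bilinear form $B_{ij}\colon V_A\times V_B\to\RR$ by prescribing its values on pairs of basis vectors, $B_{ij}(v_{k,A},v_{l,B})=\delta_{ik}\delta_{jl}$, and extending bilinearly; one checks that this extension is well defined and unique, since a bilinear form is completely determined by its values on the pairs $(v_{k,A},v_{l,B})$. Conversely, an arbitrary bilinear form $B$ satisfies $B=\sum_{i,j}B(v_{i,A},v_{j,B})\,B_{ij}$, so $\{B_{ij}\}_{i,j}$ spans the space of bilinear forms, and linear independence follows by evaluating a vanishing linear combination at $(v_{k,A},v_{l,B})$. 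Hence $\{B_{ij}\}_{i,j}$ is a basis and the space of bilinear forms has dimension $n_A n_B$, so its dual also has dimension $n_A n_B$.

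Next I would compute the pairing of the elementary tensors with this basis. By the very definition of $v_A\otimes v_B$ as the functional $B\mapsto B(v_A,v_B)$ on the space of bilinear forms, we get $(v_{i,A}\otimes v_{j,B})(B_{kl})=B_{kl}(v_{i,A},v_{j,B})=\delta_{ik}\delta_{jl}$. Thus $\{v_{i,A}\otimes v_{j,B}\}_{i,j}$ is exactly the dual basis of $\{B_{ij}\}_{i,j}$; in particular these $n_A n_B$ functionals are linearly independent, and they span $V_A\otimes V_B$ by Lemma \ref{lemma:bilinear-productBasis}. Therefore $V_A\otimes V_B$ is a subspace of the dual of the space of bilinear forms of dimension $n_A n_B$, which equals the full dimension of that dual; being a subspace of full dimension, it is the whole space. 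Hence $V_A\otimes V_B$ is the dual vector space to the vector space of bilinear forms $B\colon V_A\times V_B\to\RR$.

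The only point requiring genuine care is the claim that a bilinear form can be freely and uniquely prescribed on pairs of basis vectors, i.e.\ that the $B_{ij}$ are well defined and form a basis; everything after that is a routine dimension argument. One could alternatively route through the one-to-one correspondence of Proposition \ref{prop:bilinear-LBcorrespondence} between bilinear forms and linear maps, but the direct computation with $\{B_{ij}\}$ is the cleanest path.
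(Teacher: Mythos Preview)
Your proof is correct and follows essentially the same approach as the paper: you construct the basis $\{B_{ij}\}$ of bilinear forms, verify that $\{v_{i,A}\otimes v_{j,B}\}$ is its dual basis, and conclude via Lemma \ref{lemma:bilinear-productBasis}. The only difference is cosmetic --- you spell out the linear-independence and dimension-count steps a bit more explicitly than the paper does.
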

\begin{proof}
Let $\{ v_{1,A}, \ldots, v_{n_A,A}\} \subset V_A$ and $\{ v_{1,B}, \ldots, v_{n_B,B}\} \subset V_B$ be basis of $V_A$ and $V_B$ respectively. Let $B_{ij}: V_A \times V_B \to \RR$, where $i \in \{1, \ldots, n_A\}$ and $j \in \{1, \ldots, n_B\}$ be bilinear forms given as $B_{ij} (v_{k,A}, v_{\ell, B}) = \delta_{ik} \delta_{j \ell}$, where $\delta_{ik}, \delta_{j \ell}$ are the Kronecker deltas. The set $\{ B_{ij} \}_{i,j=1}^{n_A, n_B}$ is the basis of the vector space of bilinear forms: let $B:V_A \times V_B \to \RR$, then for any $w_A \in V_A$, $w_B \in V_B$ we have
\begin{equation}
B(w_A, w_B) = \sum_{i=1}^{n_a} \sum_{j=1}^{n_B} B(v_{i,A}, v_{j,B}) B_{ij}(w_A, w_B),
\end{equation}
which one can easily verify by writing $w_A$ and $w_B$ as linear combinations of the bases. It is now straightforward to verify that $\{ v_{i,A} \otimes v_{j,B} \}_{i,j=1}^{n_A, n_B}$ is the dual basis to $B_{ij}$, from which the result follows.
\end{proof}
It is a simple corollary of Proposition \ref{prop:bilinear-bilinearDual} that the dual of $V_A \otimes V_B$ is the vector space of the bilinear forms $B: V_A \times V_B \to \RR$. But we can also construct $V^*_A \otimes V^*_B$ and for $\psi_A \in V^*_A$, $\psi_B \in V^*_B$ we can define
\begin{equation}
(\psi_A \otimes \psi_B)(v_A \otimes v_B) = \psi_A(v_A) \psi_B(v_B).
\end{equation}
It follows that (up to an isomorphism that we will omit) that $\psi_A \otimes \psi_B \in (V_A \otimes V_B)^*$ and $V^*_A \otimes V^*_B \subset (V_A \otimes V_B)^*$. We will prove the other inclusion as well.
\begin{proposition} \label{prop:bilinear-dualOfTensor}
$V^*_A \otimes V^*_B = (V_A \otimes V_B)^*$.
\end{proposition}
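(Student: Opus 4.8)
The plan is to prove the two inclusions separately, using dimension counting to upgrade the easy inclusion into equality. The inclusion $V^*_A \otimes V^*_B \subset (V_A \otimes V_B)^*$ has already been established in the text via the formula $(\psi_A \otimes \psi_B)(v_A \otimes v_B) = \psi_A(v_A)\psi_B(v_B)$, extended by linearity; so the only remaining content is the reverse inclusion. Rather than argue the reverse inclusion directly (which would require showing every functional on $V_A \otimes V_B$ is a sum of product functionals), I would instead compare dimensions: since both spaces are finite-dimensional and one contains the other, it suffices to show $\dim(V^*_A \otimes V^*_B) = \dim((V_A \otimes V_B)^*)$.

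First I would recall that for any finite-dimensional $W$ we have $\dim(W^*) = \dim(W)$, which follows from the existence of a dual basis (the dual basis to a basis of $W$ is a basis of $W^*$, see the discussion around the definition of dual basis). Next, by Lemma \ref{lemma:bilinear-productBasis}, if $\{v_{i,A}\}_{i=1}^{n_A}$ and $\{v_{j,B}\}_{j=1}^{n_B}$ are bases of $V_A$ and $V_B$, then $\{v_{i,A} \otimes v_{j,B}\}$ is a basis of $V_A \otimes V_B$, so $\dim(V_A \otimes V_B) = n_A n_B$, and hence $\dim((V_A \otimes V_B)^*) = n_A n_B$. Applying Lemma \ref{lemma:bilinear-productBasis} again to the spaces $V^*_A$ and $V^*_B$ (which have dimensions $n_A$ and $n_B$ respectively, by the dual basis argument) gives $\dim(V^*_A \otimes V^*_B) = n_A n_B$ as well. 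Therefore the two spaces have equal dimension.

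Finally I would combine these: $V^*_A \otimes V^*_B$ is a subspace of $(V_A \otimes V_B)^*$ of the same (finite) dimension, hence they coincide. I do not expect any serious obstacle here; the only thing to be careful about is the implicit isomorphism $V^*_A \otimes V^*_B \hookrightarrow (V_A \otimes V_B)^*$ — one should check it is injective so that the image really is a subspace of full dimension. Injectivity is immediate once one notes that $\{v^*_{i,A} \otimes v^*_{j,B}\}$ (dual bases) maps to the family of functionals dual to $\{v_{i,A} \otimes v_{j,B}\}$, which is linearly independent; this is essentially the content of the dual-basis computation in the proof of Proposition \ref{prop:bilinear-bilinearDual}. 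So the ``hard part'' is really just bookkeeping with bases, and the dimension-count shortcut avoids having to exhibit an explicit decomposition of an arbitrary functional on $V_A \otimes V_B$.
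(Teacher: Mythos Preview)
Your proposal is correct and essentially the same as the paper's proof: the paper also picks bases $\{v_{i,A}\}$, $\{v_{j,B}\}$ and their dual bases $\{\psi_{i,A}\}$, $\{\psi_{j,B}\}$, computes $(\psi_{i,A}\otimes\psi_{j,B})(v_{k,A}\otimes v_{l,B}) = \delta_{ik}\delta_{jl}$, and concludes that $\{\psi_{i,A}\otimes\psi_{j,B}\}$ is the dual basis of $\{v_{i,A}\otimes v_{j,B}\}$, hence spans $(V_A\otimes V_B)^*$. Your dimension-counting framing is a slight repackaging of the same computation --- note that once you verify the dual-basis relation for injectivity, you have already shown surjectivity, so the separate dimension count becomes redundant.
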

\begin{proof}
Let $\{ v_{1,A}, \ldots, v_{n_A, A}\} \subset V_A$, $\{ v_{1,B}, \ldots, v_{n_B, B}\} \subset V_B$ be bases of $V_A$ and $V_B$ respectively and let $\{ \psi_{1,A}, \ldots, \psi_{n_A,A}\} \subset V^*_A$, $\{ \psi_{1,B}, \ldots, \psi_{n_B,B}\} \subset V^*_B$ be the dual bases. The according to Lemma \ref{lemma:bilinear-productBasis} we have that $\{ v_{i,A} \otimes v_{j,B} \}_{i,j=1}^{n_A, n_B}$ and $\{ \psi_{i,A} \otimes \psi_{j,B} \}_{i,j=1}^{n_A, n_B}$ are the bases of $V_A \otimes V_B$ and $V^*_A \otimes V^*_B$. We have
\begin{equation}
(\psi_{i,A} \otimes \psi_{j,B})(v_{k,A} \otimes v_{l,B}) = \psi_{i,A}(v_{k,A}) \psi_{j,B}(v_{l,B}) = \delta_{ik} \delta_{jl}
\end{equation}
and so $\{ \psi_{i,A} \otimes \psi_{j,B} \}_{i,j=1}^{n_A, n_B}$ is the dual basis to $\{ v_{i,A} \otimes v_{j,B} \}_{i,j=1}^{n_A, n_B}$. It follows that the linear hull of $\{ \psi_{i,A} \otimes \psi_{j,B} \}_{i,j=1}^{n_A, n_B}$ must be $(V_A \otimes V_B)^*$ and thus we get $V^*_A \otimes V^*_B = (V_A \otimes V_B)^*$.
\end{proof}

Finally, we can prove the following isomorphisms between the tensor product of vector spaces, the vector space of bilinear forms and the vectors space of linear maps.
\begin{proposition} \label{prop:bilinear-isomorphisms}
Let $V_A$, $V_B$ be real, finite-dimensional vector spaces. Then the following vector spaces are isomorphic:
\begin{itemize}
\item $V_A \otimes V_B$,
\item vector space of bilinear forms $B: V^*_A \times V^*_B \to \RR$,
\item vector space of linear maps $L: V^*_A \to V_B$,
\end{itemize}
\end{proposition}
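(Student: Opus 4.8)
The plan is to chain together the identifications already established in this appendix, so that nothing beyond bookkeeping of dual spaces is required; the backbone is that each of the three spaces in the statement is a dual of something whose dual is already understood.

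First I would establish the isomorphism between $V_A \otimes V_B$ and the space of bilinear forms $B: V_A^* \times V_B^* \to \RR$. Applying Proposition \ref{prop:bilinear-bilinearDual} with $V_A^*$, $V_B^*$ in place of $V_A$, $V_B$ shows that $V_A^* \otimes V_B^*$ is the dual of the space of bilinear forms $B: V_A^* \times V_B^* \to \RR$; equivalently, using that the double dual collapses in finite dimensions (Proposition \ref{prop:duals-doubleDual}, as noted in the text immediately after Proposition \ref{prop:bilinear-bilinearDual}), the space of such bilinear forms is $(V_A^* \otimes V_B^*)^*$. Now Proposition \ref{prop:bilinear-dualOfTensor} gives $(V_A^* \otimes V_B^*)^* = V_A^{**} \otimes V_B^{**}$, and Proposition \ref{prop:duals-doubleDual} gives $V_A^{**} = V_A$, $V_B^{**} = V_B$ via the canonical linear map $v \mapsto \xi_v$. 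Composing, $V_A \otimes V_B$ is isomorphic to the space of bilinear forms on $V_A^* \times V_B^*$, and since each intermediate identification is linear, the composite is a vector space isomorphism, not merely a set bijection.

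Second, I would invoke Proposition \ref{prop:bilinear-LBcorrespondence} with $V_A^*$ in place of $V_A$: linear maps $L: V_A^* \to V_B$ are in one-to-one correspondence with bilinear forms $B: V_A^* \times V_B^* \to \RR$ via $\psi_B(L(v_A)) = B(v_A, \psi_B)$. The one point not stated there explicitly is that this bijection is linear in $L$, which follows at once from the defining relation, since $\psi_B((\alpha L_1 + \beta L_2)(v_A)) = \alpha \psi_B(L_1(v_A)) + \beta \psi_B(L_2(v_A))$ matches $\alpha B_1(v_A,\psi_B) + \beta B_2(v_A,\psi_B)$. Chaining this with the isomorphism from the previous step shows that all three vector spaces are isomorphic, which is the claim.

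I do not expect a genuine obstacle: the argument is a diagram chase through Propositions \ref{prop:duals-doubleDual}, \ref{prop:bilinear-LBcorrespondence}, \ref{prop:bilinear-bilinearDual} and \ref{prop:bilinear-dualOfTensor}. The only care needed is consistency of conventions — Proposition \ref{prop:bilinear-bilinearDual} is phrased as "the tensor product is the dual of the bilinear-form space", so turning it into "the bilinear-form space is the dual of the tensor product" requires the finite-dimensional double-dual identification — together with checking that all the canonical maps in sight are linear, so that the resulting bijections are genuine vector space isomorphisms.
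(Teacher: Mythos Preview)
Your proposal is correct and follows essentially the same route as the paper: both chain Propositions \ref{prop:bilinear-LBcorrespondence}, \ref{prop:bilinear-bilinearDual}, \ref{prop:bilinear-dualOfTensor} and \ref{prop:duals-doubleDual} in the same way, identifying the bilinear-form space with $(V_A^* \otimes V_B^*)^*$ and then collapsing double duals. The only differences are cosmetic---the paper handles the linear-maps/bilinear-forms identification first while you do it second, and you make the linearity of the bijections explicit where the paper leaves it implicit.
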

\begin{proof}
We already know that the vector space of bilinear forms $B: V^*_A \times V^*_B \to \RR$ and the vector space of linear maps $L: V^*_A \to V_B$ are isomorphic as a result of Proposition \ref{prop:bilinear-LBcorrespondence}.

The vector space of bilinear forms $B: V^*_A \times V^*_B \to \RR$ is the dual of $V^*_A \otimes V^*_B$ as a result of Proposition \ref{prop:bilinear-bilinearDual} and $V^*_A \otimes V^*_B = (V_A \otimes V_B)^*$ as a result of Proposition \ref{prop:bilinear-dualOfTensor}. But then using the result of Proposition \ref{prop:duals-doubleDual} that $V^{**}$ is isomorphic to $V$ it follows that the vector space of bilinear forms $B: V^*_A \times V^*_B \to \RR$ is isomorphic to $(V^*_A \otimes V^*_B)^*$. The result follows from Proposition \ref{prop:bilinear-dualOfTensor} as we have $(V^*_A \otimes V^*_B)^* = V_A \otimes V_B$.
\end{proof}

\begin{corollary}
For $v_{AB} \in V_A \otimes V_B$ there is a bilinear form $B_v: V^*_A \times V^*_B \to \RR$ and linear map $L_v: V_A^* \to V_B$ such that for every $\psi_A \in V^*_A$ and $\psi_B \in V^*_B$ we have
\begin{equation}
(\psi_A \otimes \psi_B)(v_{AB}) = B_v(\psi_A, \psi_B) = \psi_B(L(\psi_A)).
\end{equation}
\end{corollary}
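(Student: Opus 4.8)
The plan is to read off both objects directly from the chain of isomorphisms in Proposition \ref{prop:bilinear-isomorphisms}, making the canonical identifications explicit so that the pairing formula falls out immediately; no new computation beyond unwinding definitions is needed.

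First I would define the bilinear form. By Proposition \ref{prop:bilinear-dualOfTensor} we have $(V_A \otimes V_B)^* = V^*_A \otimes V^*_B$, and by Proposition \ref{prop:duals-doubleDual} we have $V_A \otimes V_B = (V_A \otimes V_B)^{**}$, so a vector $v_{AB} \in V_A \otimes V_B$ may be regarded as a linear functional on $V^*_A \otimes V^*_B$. I then set
\[
B_v(\psi_A, \psi_B) = (\psi_A \otimes \psi_B)(v_{AB})
\]
for $\psi_A \in V^*_A$, $\psi_B \in V^*_B$. To check that $B_v$ is bilinear I would use the identities for elementary tensors applied to $V^*_A \otimes V^*_B$, namely $(\alpha \psi_A + \beta \psi'_A) \otimes \psi_B = \alpha\,\psi_A \otimes \psi_B + \beta\,\psi'_A \otimes \psi_B$ and its analogue in the second slot (these are proved exactly as the corresponding identities before Definition \ref{def:bilinear-tensorProduct}), together with linearity of the functional $v_{AB}$. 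With this definition the first equality $(\psi_A \otimes \psi_B)(v_{AB}) = B_v(\psi_A, \psi_B)$ holds by construction, and $B_v$ lives in the space of bilinear forms $B : V^*_A \times V^*_B \to \RR$ named in Proposition \ref{prop:bilinear-isomorphisms}.

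Next I would obtain the linear map by invoking Proposition \ref{prop:bilinear-LBcorrespondence} with $V_A$ replaced by $V^*_A$ (a legitimate instance, since $V^*_A$ is again a real finite-dimensional vector space): it puts linear maps $L : V^*_A \to V_B$ in one-to-one correspondence with bilinear forms $V^*_A \times V^*_B \to \RR$ via $\psi_B(L(\psi_A)) = B(\psi_A, \psi_B)$. Let $L_v : V^*_A \to V_B$ be the map corresponding to $B_v$ under this bijection; then $\psi_B(L_v(\psi_A)) = B_v(\psi_A, \psi_B)$ for all $\psi_A, \psi_B$, which is the second equality. Chaining the two gives $(\psi_A \otimes \psi_B)(v_{AB}) = B_v(\psi_A, \psi_B) = \psi_B(L_v(\psi_A))$, as required. (One could also argue more concretely by writing $v_{AB} = \sum_i v_{i,A} \otimes w_{i,B}$ via Lemma \ref{lemma:bilinear-productSum} and defining $L_v(\psi_A) = \sum_i \psi_A(v_{i,A}) w_{i,B}$, but the isomorphism route is shorter.)

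The bilinearity check for $B_v$ and the verification that substituting $V^*_A$ into Proposition \ref{prop:bilinear-LBcorrespondence} changes nothing are routine. The only point that genuinely needs care — and the main obstacle — is keeping the several canonical identifications ($V_A \otimes V_B \cong (V^*_A \otimes V^*_B)^*$, $V^*_A \otimes V^*_B \cong (V_A \otimes V_B)^*$, and $V^{**}_A \cong V_A$ via Proposition \ref{prop:duals-doubleDual}) mutually consistent, so that "evaluating $v_{AB}$ on $\psi_A \otimes \psi_B$" is literally the operation on the left-hand side of the claimed identity rather than its transpose. Since the corollary only asserts existence of $B_v$ and $L_v$, once these identifications are pinned down the proof is a one-line consequence of Proposition \ref{prop:bilinear-isomorphisms}.
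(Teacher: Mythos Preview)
Your proposal is correct and follows essentially the same approach as the paper: the paper's proof is a two-line remark that $B_v$ and $L_v$ are obtained from $v_{AB}$ via the isomorphisms of Propositions \ref{prop:bilinear-LBcorrespondence} and \ref{prop:bilinear-bilinearDual} (packaged together in Proposition \ref{prop:bilinear-isomorphisms}), and you have simply made those identifications explicit. Your careful tracking of the double-dual identification and the instantiation of Proposition \ref{prop:bilinear-LBcorrespondence} with $V_A^*$ in place of $V_A$ is more than the paper spells out, but the route is the same.
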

\begin{proof}
The result follows from Proposition \ref{prop:bilinear-isomorphisms}. $B_v$ and $L_v$ can be obtained from $v_{AB}$ using the corresponding isomorphisms given by Propositions \ref{prop:bilinear-LBcorrespondence} and \ref{prop:bilinear-bilinearDual}.
\end{proof}

\end{document}